\pdfoutput=1
\documentclass{article}

\usepackage[preprint]{neurips_2025}

\usepackage[utf8]{inputenc} %
\usepackage[T1]{fontenc}    %
\usepackage{hyperref}       %
\usepackage{url}            %
\usepackage{booktabs}       %
\usepackage{amsfonts}       %
\usepackage{nicefrac}       %
\usepackage{microtype}      %
\usepackage{xcolor}         %

\usepackage{bm}
\usepackage{xfrac}
\usepackage{amsfonts}
\usepackage{amsmath}
\usepackage{mathrsfs}
\usepackage{xspace}
\usepackage{bm}
\usepackage{cleveref}

\usepackage{algorithm,algpseudocode}
\usepackage{enumerate}
\usepackage{enumitem}
\usepackage{bbm}
\usepackage{dsfont}

\usepackage{pgfplots, pgfplotstable}
\pgfplotsset{compat=1.3}
\usepgfplotslibrary{fillbetween}
\usepackage{graphics}
\usepackage{tikz}
\usepackage[most]{tcolorbox}

\usepgfplotslibrary{groupplots}
\usetikzlibrary{matrix}

\usepackage{multicol}
\usepackage{longtable}
\usepackage{subcaption}

\definecolor{darkgreen}{rgb}{0.1, 0.6, 0.1}
\newcommand{\coloradjadam}{blue}

\newcommand{\colorzeroadam}{darkgreen}
\newcommand{\colorzerosgd}{orange}
\newcommand{\colorzeroanneal}{cyan}

\newtheorem{innerdesi}{Desideratum}

\newenvironment{desideratum}[1][]{\begin{innerdesi}[#1]}{\end{innerdesi}}

\input{widebar}

\newcommand{\safemath}[2]{\newcommand{#1}{\ensuremath{#2}\xspace}}

\safemath{\bma}{\mathbf{a}}
\safemath{\bmb}{\mathbf{b}}
\safemath{\bmc}{\mathbf{c}}
\safemath{\bmd}{\mathbf{d}}
\safemath{\bme}{\mathbf{e}}
\safemath{\bmf}{\mathbf{f}}
\safemath{\bmg}{\mathbf{g}}
\safemath{\bmh}{\mathbf{h}}
\safemath{\bmi}{\mathbf{i}}
\safemath{\bmj}{\mathbf{j}}
\safemath{\bmk}{\mathbf{k}}
\safemath{\bml}{\mathbf{l}}
\safemath{\bmm}{\mathbf{m}}
\safemath{\bmn}{\mathbf{n}}
\safemath{\bmo}{\mathbf{o}}
\safemath{\bmp}{\mathbf{p}}
\safemath{\bmq}{\mathbf{q}}
\safemath{\bmr}{\mathbf{r}}
\safemath{\bms}{\mathbf{s}}
\safemath{\bmt}{\mathbf{t}}
\safemath{\bmu}{\mathbf{u}}
\safemath{\bmv}{\mathbf{v}}
\safemath{\bmw}{\mathbf{w}}
\safemath{\bmx}{\mathbf{x}}
\safemath{\bmy}{\mathbf{y}}
\safemath{\bmz}{\mathbf{z}}
\safemath{\bmzero}{\mathbf{0}}
\safemath{\bmone}{\mathbf{1}}
\safemath{\bmpi}{\pmb{\pi}}
\safemath{\bmzeta}{\pmb{\zeta}}
\safemath{\bmalpha}{\pmb{\alpha}}
\safemath{\bmrho}{\pmb{\rho}}

\bmdefine{\biad}{a}
\bmdefine{\bibd}{b}
\bmdefine{\bicd}{c}
\bmdefine{\bidd}{d}
\bmdefine{\bied}{e}
\bmdefine{\bifd}{f}
\bmdefine{\bigd}{g}
\bmdefine{\bihd}{h}
\bmdefine{\biid}{i}
\bmdefine{\bijd}{j}
\bmdefine{\bikd}{k}
\bmdefine{\bild}{l}
\bmdefine{\bimd}{m}
\bmdefine{\bind}{n}
\bmdefine{\biod}{o}
\bmdefine{\bipd}{p}
\bmdefine{\biqd}{q}
\bmdefine{\bird}{r}
\bmdefine{\bisd}{s}
\bmdefine{\bitd}{t}
\bmdefine{\biud}{u}
\bmdefine{\bivd}{v}
\bmdefine{\biwd}{w}
\bmdefine{\bixd}{x}
\bmdefine{\biyd}{y}
\bmdefine{\bizd}{z}

\bmdefine{\bixid}{\xi}
\bmdefine{\bilambdad}{\lambda}
\bmdefine{\bimud}{\mu}
\bmdefine{\binud}{\nu}
\bmdefine{\bithetad}{\theta}
\bmdefine{\biomegad}{\omega}
\bmdefine{\biphid}{\phi}
\bmdefine{\biLd}{L}

\safemath{\bmia}{\biad}
\safemath{\bmib}{\bibd}
\safemath{\bmic}{\bicd}
\safemath{\bmid}{\bidd}
\safemath{\bmie}{\bied}
\safemath{\bmif}{\bifd}
\safemath{\bmig}{\bigd}
\safemath{\bmih}{\bihd}
\safemath{\bmii}{\biid}
\safemath{\bmij}{\bijd}
\safemath{\bmik}{\bikd}
\safemath{\bmil}{\bild}
\safemath{\bmim}{\bimd}
\safemath{\bmin}{\bind}
\safemath{\bmio}{\biod}
\safemath{\bmip}{\bipd}
\safemath{\bmiq}{\biqd}
\safemath{\bmir}{\bird}
\safemath{\bmis}{\bisd}
\safemath{\bmit}{\bitd}
\safemath{\bmiu}{\biud}
\safemath{\bmiv}{\bivd}
\safemath{\bmiw}{\biwd}
\safemath{\bmix}{\bixd}
\safemath{\bmiy}{\biyd}
\safemath{\bmiz}{\bizd}

\safemath{\bmxi}{\bixid}
\safemath{\bmlambda}{\bilambdad}
\safemath{\bmmu}{\bimud}
\safemath{\bmnu}{\binud}
\safemath{\bmtheta}{\bithetad}
\safemath{\bmomega}{\biomegad}
\safemath{\bmphi}{\biphid}
\safemath{\bmL}{\biLd}

\safemath{\bA}{\mathbf{A}}
\safemath{\bB}{\mathbf{B}}
\safemath{\bC}{\mathbf{C}}
\safemath{\bD}{\mathbf{D}}
\safemath{\bE}{\mathbf{E}}
\safemath{\bF}{\mathbf{F}}
\safemath{\bG}{\mathbf{G}}
\safemath{\bH}{\mathbf{H}}
\safemath{\bI}{\mathbf{I}}
\safemath{\bJ}{\mathbf{J}}
\safemath{\bK}{\mathbf{K}}
\safemath{\bL}{\mathbf{L}}
\safemath{\bM}{\mathbf{M}}
\safemath{\bN}{\mathbf{N}}
\safemath{\bO}{\mathbf{O}}
\safemath{\bP}{\mathbf{P}}
\safemath{\bQ}{\mathbf{Q}}
\safemath{\bR}{\mathbf{R}}
\safemath{\bS}{\mathbf{S}}
\safemath{\bT}{\mathbf{T}}
\safemath{\bU}{\mathbf{U}}
\safemath{\bV}{\mathbf{V}}
\safemath{\bW}{\mathbf{W}}
\safemath{\bX}{\mathbf{X}}
\safemath{\bY}{\mathbf{Y}}
\safemath{\bZ}{\mathbf{Z}}

\safemath{\bZero}{\mathbf{0}}
\safemath{\bOne}{\mathbf{1}}
\safemath{\bDelta}{\mathbf{\Delta}}
\safemath{\bLambda}{\mathbf{\UpLambda}}
\safemath{\bPhi}{\mathbf{\Upphi}}
\safemath{\bSigma}{\mathbf{\Upsigma}}
\safemath{\bOmega}{\mathbf{\Upomega}}
\safemath{\bTheta}{\mathbf{\Uptheta}}

\bmdefine{\biAd}{A}
\bmdefine{\biBd}{B}
\bmdefine{\biCd}{C}
\bmdefine{\biDd}{D}
\bmdefine{\biEd}{E}
\bmdefine{\biFd}{F}
\bmdefine{\biGd}{G}
\bmdefine{\biHd}{H}
\bmdefine{\biId}{I}
\bmdefine{\biJd}{J}
\bmdefine{\biKd}{K}
\bmdefine{\biLd}{L}
\bmdefine{\biMd}{M}
\bmdefine{\biOd}{N}
\bmdefine{\biPd}{O}
\bmdefine{\biQd}{P}
\bmdefine{\biRd}{R}
\bmdefine{\biSd}{S}
\bmdefine{\biTd}{T}
\bmdefine{\biUd}{U}
\bmdefine{\biVd}{V}
\bmdefine{\biWd}{W}
\bmdefine{\biXd}{X}
\bmdefine{\biYd}{Y}
\bmdefine{\biZd}{Z}

\bmdefine{\biDelta}{\Delta}
\bmdefine{\biLambda}{\Lambda}
\bmdefine{\biPhi}{\Phi}
\bmdefine{\biSigma}{\Sigma}
\bmdefine{\biOmega}{\Omega}
\bmdefine{\biTheta}{\Theta}

\safemath{\bimA}{\biAd}
\safemath{\bimB}{\biBd}
\safemath{\bimC}{\biCd}
\safemath{\bimD}{\biDd}
\safemath{\bimE}{\biEd}
\safemath{\bimF}{\biFd}
\safemath{\bimG}{\biGd}
\safemath{\bimH}{\biHd}
\safemath{\bimI}{\biId}
\safemath{\bimJ}{\biJd}
\safemath{\bimK}{\biKd}
\safemath{\bimL}{\biLd}
\safemath{\bimM}{\biMd}
\safemath{\bimN}{\biNd}
\safemath{\bimO}{\biOd}
\safemath{\bimP}{\biPd}
\safemath{\bimQ}{\biQd}
\safemath{\bimR}{\biRd}
\safemath{\bimS}{\biSd}
\safemath{\bimT}{\biTd}
\safemath{\bimU}{\biUd}
\safemath{\bimV}{\biVd}
\safemath{\bimW}{\biWd}
\safemath{\bimX}{\biXd}
\safemath{\bimY}{\biYd}
\safemath{\bimZ}{\biZd}

\safemath{\bimDelta}{\biDelta}
\safemath{\bimLambda}{\biLambda}
\safemath{\bimPhi}{\biPhi}
\safemath{\bimSigma}{\biSigma}
\safemath{\bimOmega}{\biOmega}
\safemath{\bimTheta}{\biTheta}

\safemath{\setA}{\mathcal{A}}
\safemath{\setB}{\mathcal{B}}
\safemath{\setC}{\mathcal{C}}
\safemath{\setD}{\mathcal{D}}
\safemath{\setE}{\mathcal{E}}
\safemath{\setF}{\mathcal{F}}
\safemath{\setG}{\mathcal{G}}
\safemath{\setH}{\mathcal{H}}
\safemath{\setI}{\mathcal{I}}
\safemath{\setJ}{\mathcal{J}}
\safemath{\setK}{\mathcal{K}}
\safemath{\setL}{\mathcal{L}}
\safemath{\setM}{\mathcal{M}}
\safemath{\setN}{\mathcal{N}}
\safemath{\setO}{\mathcal{O}}
\safemath{\setP}{\mathcal{P}}
\safemath{\setQ}{\mathcal{Q}}
\safemath{\setR}{\mathcal{R}}
\safemath{\setS}{\mathcal{S}}
\safemath{\setT}{\mathcal{T}}
\safemath{\setU}{\mathcal{U}}
\safemath{\setV}{\mathcal{V}}
\safemath{\setW}{\mathcal{W}}
\safemath{\setX}{\mathcal{X}}
\safemath{\setY}{\mathcal{Y}}
\safemath{\setZ}{\mathcal{Z}}
\safemath{\emptySet}{\varnothing}

\safemath{\colA}{\mathscr{A}}
\safemath{\colB}{\mathscr{B}}
\safemath{\colC}{\mathscr{C}}
\safemath{\colD}{\mathscr{D}}
\safemath{\colE}{\mathscr{E}}
\safemath{\colF}{\mathscr{F}}
\safemath{\colG}{\mathscr{G}}
\safemath{\colH}{\mathscr{H}}
\safemath{\colI}{\mathscr{I}}
\safemath{\colJ}{\mathscr{J}}
\safemath{\colK}{\mathscr{K}}
\safemath{\colL}{\mathscr{L}}
\safemath{\colM}{\mathscr{M}}
\safemath{\colN}{\mathscr{N}}
\safemath{\colO}{\mathscr{O}}
\safemath{\colP}{\mathscr{P}}
\safemath{\colQ}{\mathscr{Q}}
\safemath{\colR}{\mathscr{R}}
\safemath{\colS}{\mathscr{S}}
\safemath{\colT}{\mathscr{T}}
\safemath{\colU}{\mathscr{U}}
\safemath{\colV}{\mathscr{V}}
\safemath{\colW}{\mathscr{W}}
\safemath{\colX}{\mathscr{X}}
\safemath{\colY}{\mathscr{Y}}
\safemath{\colZ}{\mathscr{Z}}

\safemath{\opA}{\mathbb{A}}
\safemath{\opB}{\mathbb{B}}
\safemath{\opC}{\mathbb{C}}
\safemath{\opD}{\mathbb{D}}
\safemath{\opE}{\mathbb{E}}
\safemath{\opF}{\mathbb{F}}
\safemath{\opG}{\mathbb{G}}
\safemath{\opH}{\mathbb{H}}
\safemath{\opI}{\mathbb{I}}
\safemath{\opJ}{\mathbb{J}}
\safemath{\opK}{\mathbb{K}}
\safemath{\opL}{\mathbb{L}}
\safemath{\opM}{\mathbb{M}}
\safemath{\opN}{\mathbb{N}}
\safemath{\opO}{\mathbb{O}}
\safemath{\opP}{\mathbb{P}}
\safemath{\opQ}{\mathbb{Q}}
\safemath{\opR}{\mathbb{R}}
\safemath{\opS}{\mathbb{S}}
\safemath{\opT}{\mathbb{T}}
\safemath{\opU}{\mathbb{U}}
\safemath{\opV}{\mathbb{V}}
\safemath{\opW}{\mathbb{W}}
\safemath{\opX}{\mathbb{X}}
\safemath{\opY}{\mathbb{Y}}
\safemath{\opZ}{\mathbb{Z}}
\safemath{\opZero}{\mathbb{O}}
\safemath{\identityop}{\opI}

\safemath{\veca}{\bma}
\safemath{\vecb}{\bmb}
\safemath{\vecc}{\bmc}
\safemath{\vecd}{\bmd}
\safemath{\vece}{\bme}
\safemath{\vecf}{\bmf}
\safemath{\vecg}{\bmg}
\safemath{\vech}{\bmh}
\safemath{\veci}{\bmi}
\safemath{\vecj}{\bmj}
\safemath{\veck}{\bmk}
\safemath{\vecl}{\bml}
\safemath{\vecm}{\bmm}
\safemath{\vecn}{\bmn}
\safemath{\veco}{\bmo}
\safemath{\vecp}{\bmmp}
\safemath{\vecq}{\bmq}
\safemath{\vecr}{\bmr}
\safemath{\vecs}{\bms}
\safemath{\vect}{\bmt}
\safemath{\vecu}{\bmu}
\safemath{\vecv}{\bmv}
\safemath{\vecw}{\bmw}
\safemath{\vecx}{\bmx}
\safemath{\vecy}{\bmy}
\safemath{\vecz}{\bmz}

\safemath{\veczero}{\bmzero}
\safemath{\vecone}{\bmone}
\safemath{\vecxi}{\bmxi}
\safemath{\veclambda}{\bmlambda}
\safemath{\vecmu}{\bmmu}
\safemath{\vecnu}{\bmnu}
\safemath{\vecL}{\bmL}

\safemath{\vecomega}{\bmomega}
\safemath{\vecphi}{\bmphi}
\safemath{\vecpi}{\bmpi}
\safemath{\vecalpha}{\bmalpha}
\safemath{\vecrho}{\bmrho}

\safemath{\matA}{\bA}
\safemath{\matB}{\bB}
\safemath{\matC}{\bC}
\safemath{\matD}{\bD}
\safemath{\matE}{\bE}
\safemath{\matF}{\bF}
\safemath{\matG}{\bG}
\safemath{\matH}{\bH}
\safemath{\matI}{\bI}
\safemath{\matJ}{\bJ}
\safemath{\matK}{\bK}
\safemath{\matL}{\bL}
\safemath{\matM}{\bM}
\safemath{\matN}{\bN}
\safemath{\matO}{\bO}
\safemath{\matP}{\bP}
\safemath{\matQ}{\bQ}
\safemath{\matR}{\bR}
\safemath{\matS}{\bS}
\safemath{\matT}{\bT}
\safemath{\matU}{\bU}
\safemath{\matV}{\bV}
\safemath{\matW}{\bW}
\safemath{\matX}{\bX}
\safemath{\matY}{\bY}
\safemath{\matZ}{\bZ}
\safemath{\matzero}{\bmzero}

\safemath{\matDelta}{\bDelta}
\safemath{\matLambda}{\bLambda}
\safemath{\matPhi}{\bPhi}
\safemath{\matSigma}{\bSigma}
\safemath{\matOmega}{\bOmega}
\safemath{\matTheta}{\bTheta}

\safemath{\matidentity}{\matI}
\safemath{\matone}{\matO}

\safemath{\rnda}{A}
\safemath{\rndb}{B}
\safemath{\rndc}{C}
\safemath{\rndd}{D}
\safemath{\rnde}{E}
\safemath{\rndf}{F}
\safemath{\rndg}{G}
\safemath{\rndh}{H}
\safemath{\rndi}{I}
\safemath{\rndj}{J}
\safemath{\rndk}{K}
\safemath{\rndl}{L}
\safemath{\rndm}{M}
\safemath{\rndn}{N}
\safemath{\rndo}{O}
\safemath{\rndp}{P}
\safemath{\rndq}{Q}
\safemath{\rndr}{R}
\safemath{\rnds}{S}
\safemath{\rndt}{T}
\safemath{\rndu}{U}
\safemath{\rndv}{V}
\safemath{\rndw}{W}
\safemath{\rndx}{X}
\safemath{\rndy}{Y}
\safemath{\rndz}{Z}

\safemath{\rveca}{\bimA}
\safemath{\rvecb}{\bimB}
\safemath{\rvecc}{\bimC}
\safemath{\rvecd}{\bimD}
\safemath{\rvece}{\bimE}
\safemath{\rvecf}{\bimF}
\safemath{\rvecg}{\bimG}
\safemath{\rvech}{\bimH}
\safemath{\rveci}{\bimI}
\safemath{\rvecj}{\bimJ}
\safemath{\rveck}{\bimK}
\safemath{\rvecl}{\bimL}
\safemath{\rvecm}{\bimM}
\safemath{\rvecn}{\bimN}
\safemath{\rveco}{\bomO}
\safemath{\rvecp}{\bimP}
\safemath{\rvecq}{\bimQ}
\safemath{\rvecr}{\bimR}
\safemath{\rvecs}{\bimS}
\safemath{\rvect}{\bimT}
\safemath{\rvecu}{\bimU}
\safemath{\rvecv}{\bimV}
\safemath{\rvecw}{\bimW}
\safemath{\rvecx}{\bimX}
\safemath{\rvecy}{\bimY}
\safemath{\rvecz}{\bimZ}

\safemath{\rvecxi}{\bmxi}
\safemath{\rveclambda}{\bmlambda}
\safemath{\rvecmu}{\bmmu}
\safemath{\rvectheta}{\bmtheta}
\safemath{\rvecphi}{\bmphi}

\safemath{\rmatA}{\bimA}
\safemath{\rmatB}{\bimB}
\safemath{\rmatC}{\bimC}
\safemath{\rmatD}{\bimD}
\safemath{\rmatE}{\bimE}
\safemath{\rmatF}{\bimF}
\safemath{\rmatG}{\bimG}
\safemath{\rmatH}{\bimH}
\safemath{\rmatI}{\bimI}
\safemath{\rmatJ}{\bimJ}
\safemath{\rmatK}{\bimK}
\safemath{\rmatL}{\bimL}
\safemath{\rmatM}{\bimM}
\safemath{\rmatN}{\bimN}
\safemath{\rmatO}{\bimO}
\safemath{\rmatP}{\bimP}
\safemath{\rmatQ}{\bimQ}
\safemath{\rmatR}{\bimR}
\safemath{\rmatS}{\bimS}
\safemath{\rmatT}{\bimT}
\safemath{\rmatU}{\bimU}
\safemath{\rmatV}{\bimV}
\safemath{\rmatW}{\bimW}
\safemath{\rmatX}{\bimX}
\safemath{\rmatY}{\bimY}
\safemath{\rmatZ}{\bimZ}

\safemath{\rmatDelta}{\bimDelta}
\safemath{\rmatLambda}{\bimLambda}
\safemath{\rmatPhi}{\bimPhi}
\safemath{\rmatSigma}{\bimSigma}
\safemath{\rmatOmega}{\bimOmega}
\safemath{\rmatTheta}{\bimTheta}

\newenvironment{textbmatrix}{	\setlength{\arraycolsep}{2.5pt}%
								\big[\begin{matrix}}{\end{matrix}\big]%
								\raisebox{0.08ex}{\vphantom{M}}}

\def\be{\begin{equation}}
\def\ee{\end{equation}}
\def\een{\nonumber \end{equation}}
\def\mat{\begin{bmatrix}}
\def\emat{\end{bmatrix}}
\def\btm{\begin{textbmatrix}}
\def\etm{\end{textbmatrix}}

\def\ba#1\ea{\begin{align}#1\end{align}}
\def\bas#1\eas{\begin{align*}#1\end{align*}}
\def\bs#1\es{\begin{split}#1\end{split}} 
\def\bg#1\eg{\begin{gather}#1\end{gather}} 
\def\bi#1\ei{\begin{itemize}#1\end{itemize}}

\newcommand{\lefto}{\mathopen{}\left}

\DeclareMathOperator*{\argmax}{arg\;max}		%
\DeclareMathOperator{\Prob}{\opP}			%
\DeclareMathOperator{\Exop}{\opE}			%
\DeclareMathOperator{\Varop}{\opV\!\mathrm{ar}} %
\DeclareMathOperator{\grad}{\nabla}			%
\newcommand{\Var}[1]{\ensuremath{\Varop\lefto[#1\right]}} %

\newcommand{\ind}[1]{\mathbbm{1}_{#1}}

\safemath{\dirac}{\delta}					%
\safemath{\krond}{\dirac}					%
\safemath{\upto}{\uparrow}
\safemath{\downto}{\downarrow}
\safemath{\iu}{j}							%
\safemath{\ev}{\lambda}						%
\safemath{\hilseqspace}{l^{2}}				%
\newcommand{\banachfunspace}[1]{\setL^{#1}}	%
\safemath{\hilfunspace}{\banachfunspace{2}}	%

\safemath{\SNR}{\text{\sc snr}} 				%
\safemath{\No}{N_0}							%
\safemath{\Es}{E_s}							%
\safemath{\Eb}{E_b}							%
\safemath{\EbNo}{\frac{\Eb}{\No}}
\safemath{\EsNo}{\frac{\Es}{\No}}

\DeclareMathOperator{\CHop}{\ensuremath{\opH}} %
\safemath{\tvir}{\rndh_{\CHop}}				%
\safemath{\tvtf}{\rndl_{\CHop}}				%
\safemath{\spf}{\rnds_{\CHop}}				%
\safemath{\bff}{H_{\CHop}}					%

\safemath{\ircf}{r_{h}}						%
\safemath{\tftvcf}{r_{s}}					%
\safemath{\tfcf}{r_{l}}						%
\safemath{\bfcf}{r_{H}}						%

\safemath{\tcorr}{c_h}						%
\safemath{\scf}{c_{s}}						%
\safemath{\tfcorr}{c_{l}}					%
\safemath{\fcorr}{c_{H}}						%

\safemath{\mi}{I}							%
\safemath{\capacity}{C}						%

\newcommand{\entropy}{\mathcal{H}}					%
\safemath{\normal}{\mathcal{N}}			%
\safemath{\jpg}{\mathcal{CN}}			%
\safemath{\mchain}{\leftrightarrow}		%

\newtheorem{lemma}{Lemma}
\newtheorem{theorem}{Theorem}
\newtheorem{definition}{Definition}
\newtheorem{corollary}{Corollary}
\newtheorem{remark}{Remark}

\newtheorem{assumption}{Assumption}

\newenvironment{proof}{\paragraph{Proof:}}{\hfill$\square$}
\newenvironment{proofsketch}{\paragraph{Proof sketch:}}{\hfill$\square$}

\newcommand{\myalgname}{\textsc{AMID}}

\newcommand{\Vmfg}{V}
\newcommand{\qmfg}{q}
\newcommand{\Expmfg}{\setE}
\newcommand{\gpop}{\Gamma}
\newcommand{\lpop}{\Lambda}
\newcommand{\initpop}{\mu_0}
\newcommand{\pop}{L}
\newcommand{\vecpop}{\vecL}
\newcommand{\Nash}{\operatorname{Nash}}

\newcommand{\Jdg}{J}
\newcommand{\Expdg}{\setE}

\newcommand{\lpopmu}{L_{\text{pop}, \mu}}

\newcommand{\Jauc}{J_\text{auc}}
\newcommand{\Expauc}{\setE_\text{auc}}
\newcommand{\Vmfa}{V_\text{mfa}}
\newcommand{\Expmfa}{\setE_\text{mfa}}
\newcommand{\lpopmfa}{\lpop_\text{mfa}}

\newcommand{\Gauc}{\setG_\text{auc}}
\newcommand{\Mauc}{\setM_\text{mfa}}

\newcommand{\Pauc}{P^\text{mfa}}
\newcommand{\Rauc}{R^\text{mfa}}

\newcommand{\alphamax}{\alpha_{\textrm{max}}}

\newcommand{\pwin}{p_{\text{win}}}
\newcommand{\bids}{\nu^{-\perp}}

\newcommand{\objrev}{g_{\text{rev}}}

\newcommand{\empc}[1]{\sigma(#1)}

\newcommand{\Fomdpi}{\widebar{F}_{\text{omd}}}
\newcommand{\Fomd}{F_{\text{omd}}}

\newcommand{\Fupdate}{F}

\newcommand{\kldiv}{\operatorname{D}_{\textrm{kl}}}
\newcommand{\softmax}{\operatorname{softmax}}

\safemath{\vectheta}{\theta}
\safemath{\veczeta}{\zeta}

\newcommand{\threshact}{\operatorname{Th}}

\newcommand{\paymentfn}{p}
\newcommand{\utilityfn}{u}
\newcommand{\transitionfn}{w}

\newcommand{\pmarg}{\operatorname{Marg}}

\DeclareRobustCommand
  \compactcdots{\mathinner{\cdotp\mkern-3mu\cdotp\mkern-3mu\cdotp}}

\usepackage{pifont}%
\newcommand{\cmark}{\ding{51}}%
\newcommand{\xmark}{\ding{55}}%

\usepackage{siunitx}
\sisetup{output-exponent-marker=\ensuremath{\mathrm{e}}}

\title{
Scalable Neural Incentive Design with Parameterized Mean-Field Approximation
}

\makeatletter
\newcommand*{\duplicatefootnotemark}{
  \textsuperscript{\@fnsymbol{1}}
}
\makeatother

\author{%
  Nathan Corecco\thanks{Equal contribution.}  \\
  Department of Computer Science \\
  ETH Zurich \\
  \texttt{nathan.corecco@inf.ethz.ch} \\
  \And
  Batuhan Yardim\duplicatefootnotemark{} \\
  Department of Computer Science \\
  ETH Zurich \\
  \texttt{yardima@inf.ethz.ch} \\
  \And
  Vinzenz Thoma \\
  Department of Computer Science \\
  ETH Zurich \& ETH AI Center\\
  \texttt{vinzenz.thoma@ai.ethz.ch} \\
  \And
  Zebang Shen \\
  Department of Computer Science \\
  ETH Zurich \\
  \texttt{zebang.shen@inf.ethz.ch} \\
  \And
  Niao He \\
  Department of Computer Science \\
  ETH Zurich \\
  \texttt{niao.he@inf.ethz.ch}
}

\begin{document}

\maketitle

\begin{abstract}
Designing incentives for a multi-agent system to induce a desirable Nash equilibrium is both a crucial and challenging problem appearing in many decision-making domains, especially for a large number of agents $N$.
Under the exchangeability assumption, we formalize this incentive design (ID) problem as a parameterized mean-field game (PMFG), aiming to reduce complexity via an infinite-population limit.
We first show that when dynamics and rewards are Lipschitz, the finite-$N$ ID objective is approximated by the PMFG at rate $\mathcal{O}(\sfrac{1}{\sqrt{N}})$.
Moreover, beyond the Lipschitz-continuous setting, we prove the same $\mathcal{O}(\sfrac{1}{\sqrt{N}})$ decay for the important special case of sequential auctions, despite discontinuities in dynamics, through a tailored auction-specific analysis.
Built on our novel approximation results, we further introduce our Adjoint Mean-Field Incentive Design (AMID) algorithm, which uses explicit differentiation of iterated equilibrium operators to compute gradients efficiently.
By uniting approximation bounds with optimization guarantees, AMID delivers a powerful, scalable algorithmic tool for many-agent (large $N$) ID.
Across diverse auction settings, the proposed AMID method substantially increases revenue over first-price formats and outperforms existing benchmark methods.
\end{abstract}

\section{Introduction}
\label{sec:introduction}
Setting the right incentives in a game with many participants is a challenging and high-stakes problem.
Policymakers must frequently make choices that affect millions, for instance, planners must design rules for curtailing city traffic \cite{ley2025less},
set pricing to maximize effective bandwidth in telecommunications networks \cite{basar2002revenue}, design spectrum auctions between telecom operators \cite{Klemperer2002How}
or manage supply and demand in energy grids \cite{maharjan2013dependable}.

We study the \textit{incentive design} (ID) problem. 
Given an objective $G$, a parameterized $N$-player game $\setG$, and player strategies $\pi_1, \ldots, \pi_N$, ID solves the equilibrium constrained optimization problem:
\begin{tcolorbox}[colback=gray!10, arc=0pt, boxrule=0pt,colframe=white!, top=1pt,left=2pt,right=2pt,bottom=1pt,before upper=\par\noindent{}]
\begin{align}
    \textbf{Maximize } G(\theta, \pi_1, \ldots, \pi_N) \textbf{ such that } (\pi_1, \ldots, \pi_N) \in \Nash_\setG(\theta)
    \label{problem:n_player_design}
    \tag{ID}
\end{align}
\end{tcolorbox}
where $\Nash_\setG(\theta)$ denotes the set of Nash equilibria (NE) of a game for the incentive parameter $\theta$ (which is to be learned).
\eqref{problem:n_player_design} is also referred to in the literature as mathematical programming with equilibrium constraints (MPEC) \cite{Luo1996Mathematical}.
Despite their relevance, MPECs are notoriously computationally challenging and in the general case NP-hard \cite{Jeroslow1985polynomial, Luo1996Mathematical}.
Simply computing a Nash equilibrium for a fixed incentive parameter $\theta$ is also \textsc{PPAD}-complete \cite{daskalakis2009complexity}, even for 2 players \cite{chen2009settling}.
For games with many players, as in many real-world problems, the so-called \emph{curse of many agents} \cite{wang2020breaking} becomes an added challenge.
In such cases, computing $\Nash_\setG(\theta)$ becomes prohibitively expensive, let alone tackling \eqref{problem:n_player_design}.
\looseness=-1

In this work, we take the approach of mean-field approximation for games with agent exchangeability (i.e., symmetry) to tackle this problem. 
Instead of solving \eqref{problem:n_player_design} directly, we construct an appropriate mean-field game (MFG) approximation $\setM$ to $\setG$ and solve the mean-field ID (MID) problem:
\begin{tcolorbox}[colback=gray!10, arc=0pt, boxrule=0pt,colframe=white!, top=1pt,left=2pt,right=2pt,bottom=1pt,before upper=\par\noindent{}]
\begin{align*}
    \textbf{Maximize } G(\theta, \pi_{\text{MFG}}) \textbf{ such that } \pi_{\text{MFG}} \in \Nash_\setM(\theta). 
    \label{problem:mfg_player_design}
    \tag{MID}
\end{align*}
\end{tcolorbox}
For the mean-field approximation \eqref{problem:mfg_player_design} to be meaningful, its solution should closely match that of the original $N$-player incentive design problem \eqref{problem:n_player_design}, with the approximation error vanishing as $N\to\infty$. We formalize this requirement in the following desideratum.
\begin{desideratum}[Approximation] \label{des_approximation}
    The solution of \eqref{problem:mfg_player_design} should be a good (approximate) solution for \eqref{problem:n_player_design}, in particular when $N$ is large, with explicit guarantees.
\end{desideratum}
MFGs are known to approximate finite-player NEs with explicit bounds for large $N$ \cite{saldi2018markov, carmona2013probabilistic, cui2021approximately, yardim2024meanfield}.
Under Lipschitz continuity, a non-asymptotic bound of $\mathcal{O}(\sfrac{1}{\sqrt{N}})$ in exploitability is obtained with a propagation-of-chaos type argument.
By contrast, our problem \eqref{problem:mfg_player_design} not only contains the classical MFG as a subproblem but also an incentive objective, accordingly, we must show that the approximation error still vanishes as $N\to\infty$.
Establishing this result is our first contribution.

\noindent\textbf{Contribution 1: Lipschitz PMFGs and Approximation.} 
We formalize \eqref{problem:mfg_player_design} as a parameterized MFG (PMFG), and show that for Lipschitz PMFGs the \eqref{problem:mfg_player_design} problem approximates \eqref{problem:n_player_design} with a rate of $\mathcal{O}(\sfrac{1}{\sqrt{N}})$, both in exploitability and optimality of the design objective.

While Lipschitz continuous MFGs cover a broad class of real-world games and are well-studied,
they exclude many important applications, notably large-scale auction design, where the transition dynamics are inherently correlated and non-Lipschitz. 
Motivated by the ubiquity and impact of auctions, we analyze \eqref{des_approximation} beyond Lipschitz PMFGs, which is our second contribution.

\noindent\textbf{Contribution 2: Approximation beyond Lipschitz.} 
We identify a PMFG for sequential batched auctions (BA-MFG) with many bidders, and establish \eqref{des_approximation} with an $\mathcal{O}(\sfrac{1}{\sqrt{N}})$ approximation rate, by identifying a set of “well-behaved” policies that is dense in the set of Nash equilibria.

Although \eqref{des_approximation} guarantees that the mean-field approximation captures the fundamental solution structure of the $N$-player incentive design problem,
realizing practical benefits from this framework requires the following desideratum.
\begin{desideratum}[Optimization] \label{des_optimization}
We would like \eqref{problem:mfg_player_design} to be computationally tractable or easier to solve than \eqref{problem:n_player_design}, specifically, admit an efficient first-order oracle that can be used for optimization.
\end{desideratum}
Our algorithmic contribution is to satisfy this desideratum in both settings: (i) PMFGs under Lipschitz continuity assumptions, and (ii) mean-field auction design.

\noindent\textbf{Contribution 3: Algorithmic Contribution.} 
We formulate our adjoint mean-field incentive design (\myalgname) algorithm for solving \eqref{problem:mfg_player_design} efficiently. 
\myalgname{} is a modification of backpropagation based on the adjoint method for computing (approximate) derivatives with Nash equilibrium constraints. 
While naive autodiff-based approaches incur a large $\mathcal{O}(T)$ memory footprint, our reformulation of the gradient computations reduces the memory footprint to $\mathcal{O}(\sqrt{T})$, with up to 80\% savings in practice. 

\noindent\textbf{Contribution 4: Experimental Contribution.} 
Finally, we use \myalgname{} to (i) solve congestion pricing in the classical beach bar MFG, and (ii) design revenue-optimal mechanisms across a variety of sequential auctions with a neural network parameterization. 
Our method consistently outperforms standard first-price mechanisms used in practice and other optimization approaches to solve \eqref{problem:mfg_player_design}.

\subsection{Related Works}
We present the works most relevant to this paper, complemented by the discussion in \Cref{app:related_works}.
\paragraph{Mathematical Programming with Equilibrium Constraints (MPEC).}
 Several works have studied gradient-based approaches for MPEC \cite{Li2020Enda,Liu2022Inducing,Wang2022Coordinating,Li2023Achievinga,Fiez2020Implicit}. Some of these assume that the equilibrium problem satisfies strong monotonicity to compute the gradients. Others use explicit differentiation, an approach we also follow in this work. Motivated by the success of reinforcement learning (RL), many works have focused on the optimal design of (multi-agent) RL environments \cite{Thoma2024Contextuala,Chen2022Adaptive,Zheng2022AI,Gerstgrasser2023Oracles,Curry2022Finding,Wang2023Differentiable}.  An important instance of designing games with desirable outcomes is \emph{automated mechanism design}, capturing many real-world economic problems \cite{Conitzer2004Self,Curry2024Automated,Likhodedov2004Methods}.

\paragraph{Steering and Equilibrium Selection.}
Complementary to \ref{problem:n_player_design} problems, a related strand of work has focused on steering and equilibrium selection. For mean-field games, \cite{Guo2023MESOB} considers a problem of choosing equilibria with high social welfare. Steering learning dynamics towards desirable equilibria was studied by \cite{Huang2024Learning,Canyakmaz2024Learning,Zhang2024Steering} for Markovian and no-regret learners and extended to MFGs by \cite{Widmer2025Steering}.

\paragraph{Mean-Field Games (MFG).}
MFGs, first formulated in the seminal works of Lasry \& Lions \cite{lasry2007mean} and Huang et al. \cite{huang2006large}, analyze symmetric competitive agents at the many-agent limit.
Recently, many works have studied RL in mean-field settings, 
such as stationary MFGs \cite{guo2019learning, zaman2023oracle, yardim2023policy, cui2021approximately}, monotone MFGs \cite{perrin2020fictitious, perolat2022scaling, perrin2022generalization, yardim2024exploiting}, static MFGs \cite{yardim2025variational}, and mean-field control \cite{carmona2023model, bensoussan2013mean}.
While general MFGs remain a theoretical challenge \cite{yardim2024meanfield}, under structured settings, they have shown empirical and algorithmic efficiency \cite{cui2021approximately,carmona2023model, lauriere2022learning}.
MFGs have also been studied in Stackelberg equilibria, closer to our setting \cite{carmona2022mean, dayanikli2023machine, aurell2022optimal, dayanikli2023machine}.
While these works have similar objectives to ours, rather than letting a leader influence a population through interactions with a static environment, we aim to design parameterized MFGs directly by explicit differentiation.
In this sense, \eqref{problem:mfg_player_design} can be seen to differ in objective from these works.
Moreover, these results do not readily apply to auction design, a foundational problem for incentive design.
In \Cref{table:selected_works}, we provide a comparison with selected works and our results.

\begin{table*}[t]
\centering
\caption{Comparison of selected works on ID in literature. 
  \textbf{Large $N$:} results scale to many-agent (symmetric) $\setG$,
  \textbf{Dynamic: } solves ID on games with dynamics,
  \textbf{Explicit diff.:} explicit differentiation for first-order information, 
  \textbf{Approx.:} finite-agent approximation in $N$, 
  \textbf{Auctions:} applies to auctions.
  }
  \begin{tabular}{lcccccc} \toprule
    \textbf{Work} & \textbf{Model} & \textbf{Large $N$} & \textbf{Dynamic} & \textbf{Approx.}  & \textbf{Explicit diff.}  & \textbf{Auctions} \\ \midrule
    \cite{Li2020Enda} & VI  & \xmark & \xmark & - & \cmark & \xmark \\
    \cite{Liu2022Inducing} & VI  & \cmark & \xmark & - & \xmark & \xmark \\
    \cite{dayanikli2023machine} & Cont. time & \cmark & \cmark & \cmark &  \xmark & \xmark \\
    \cite{moon2018linear} & LQ & \cmark & \cmark & \cmark &  \xmark & \xmark \\
    \cite{sanjari2025incentive} & Maj.-min. & \cmark & \cmark & \cmark & \xmark  & \xmark \\
     \midrule
    \textbf{Ours}  & \textbf{PMFG}  & \cmark & \cmark & \cmark-\textbf{Theorem~\ref{theorem:mainapprox}} & \cmark-\textbf{\Cref{sec:explicitdifferentiation}} & \cmark - \textbf{\Cref{sec:auctions_maintext}} \\
    \bottomrule
  \end{tabular}
  \label{table:selected_works}
\end{table*}

\section{Designing Games for Large Populations: Lipschitz Case}
\label{sec:gamedesign}

We first formalize parameterized $N$-player dynamic games and the corresponding ID problem.

\textbf{Notation.}
We use $\setS, \setA$ to denote (finite) state-actions spaces.
For the horizon $H$, define policy space $\Pi_H := \{\pi: [H] \times \setS \rightarrow \Delta_{\setA} \}$, abbreviate $\pi_h(a|s) := \pi(h,s)(a)$ and also treat $\Pi_H$ as a subspace of $\mathbb{R}^{[H]\times\setS\times\setA}$.
For a finite set $\setX$, define the ``empirical distribution'' $\empc{\vecx}(x') = \sfrac{1}{N}\sum_{i=1}^N \ind{x_i = x'}$.
We also provide a full reference table for our notation in Appendix~\ref{app:notation}.

\begin{definition}[Parameterized Dynamic Games]
\label{def:dg}
    A finite-horizon parameterized dynamic game (PDG) is a tuple $\setG := (N, \setS, \setA, H, \initpop, \Theta, \{\widebar{P}_{h, \theta}\}_{h=0}^{H-1}, \{\widebar{R}_{h, \theta}\}_{h=0}^{H-1})$ of players $N\in\mathbb{N}_{\geq 1}$,
    discrete state actions sets $\setS, \setA$, 
    parameter space $\Theta$,
    parameterized transition dynamics $\widebar{P}_{h, \theta} : \setS^N \times \setA^N \rightarrow \Delta_{\setS^N}$,  
    parameterized reward functions $\widebar{R}_{h, \theta} : \setS^N \times \setA^N\rightarrow [0,1]^N$, 
    starting distribution $\initpop \in \Delta_\setS$, 
    and time horizon $H\in\mathbb{N}_{>0}$.
    For a strategy profile $\vecpi  \in \Pi_H^N$, $\tau\geq 0$ and some $\theta\in\Theta$, the expected (entropy-regularized) sum of rewards of player $i\in[N]$ is defined as
    \begin{align*}
        \Jdg^{\tau,i}_\setG(\vecpi | \theta) := \Exop \left[ \sum_{h=0}^{H-1} \widebar{R}_{h, \theta}^i(\vecs_{h},\veca_h)  + \tau \entropy(\pi_h^i(s_h^i))  \middle| \substack{\forall j : s_0^j \sim \mu_0 , a_h ^j \sim \pi_h^j(s_h^j), \vecs_h := \{s_h^j\}_j, \veca_h := \{a_h^j\}_j,\\ \vecs_{h+1} \sim \widebar{P}_{h, \theta}(\vecs_{h},\veca_h)} \right].
    \end{align*}
\end{definition}
Define $\Expdg^{\tau}_\setG (\vecpi|\theta) := \max_{i\in [N]}\Expdg^{\tau,i}_\setG (\vecpi|\theta)$ where $\Expdg^{\tau,i}_\setG (\vecpi|\theta) :=  \max_{\pi' \in \Pi} \Jdg^{\tau,i}_\setG((\pi',\vecpi^{-i}) |\theta) - \Jdg^{\tau,i}_\setG ( \vecpi|\theta )$, the exploitability.
If $\Expdg^\tau_\setG (\vecpi^*|\theta) = 0$ for $\vecpi^*\in\Pi_H^N$, we call $\vecpi^*$ a Nash equilibrium (DG-NE) with respect to parameter $\theta$.
    The set of all Nash equilibria for $\theta\in\Theta$ is denoted $\Nash^\tau_\setG(\theta)$.
One is typically interested in maximizing a function of the aggregated population behavior (e.g., revenue, negative congestion):
\begin{align}
    G(\theta, \vecpi) := \Exop [ g(\theta, \{\widehat{L}_h\}_{h=0}^{H-1})  | \vecpi, \theta], \text{ where } \widehat{L}_h := \frac{1}{N}\sum_{j=1}^N \vece_{s^j_h, a_h^j},
    \tag{ID Objective}
\end{align}
given by some $g:\Theta\times \Delta_{\setS\times\setA}^H\rightarrow\mathbb{R}$, with the constraint that $\vecpi\in\Pi_H^N$ is an (approximate) Nash equilibrium under $\theta$ (ignoring multiplicities).
The parameter space $\Theta$ and the parameterizations of $\widebar{P}_{h, \theta}, \widebar{R}_{h, \theta}$ will dictate the implicit constraints on the design, such as the available information at time $h$.
For such parameterizations, optimizing $G$ will be nontrivial and incorporate an intractable many-agent NE computation as a subproblem.
In the following, we reduce this problem to a lower-dimensional MFG (i.e. of size independent of $N$) and propose tractable alternatives.

\subsection{Parameterized Mean Field Game Design}

Below, we formalize PMFGs.
\Cref{def:mfg} generalizes the standard definition of MFGs to a parametric family of MFGs, and approximates \Cref{def:dg} on a continuum of infinitely many players.
\begin{definition}[Parameterized Mean-Field Games]
\label{def:mfg}
    A finite-horizon parameterized mean-field game (PMFG) is a tuple $\setM := (\setS, \setA, H, \initpop, \Theta, \{P_{h, \theta}\}_{h=0}^{H-1}, \{R_{h, \theta}\}_{h=0}^{H-1})$ of 
    discrete state actions sets $\setS, \setA$, 
    parameterized transition dynamics $P_{h, \theta} : \setS\times\setA \times \Delta_{\setS \times \setA}\rightarrow \Delta_\setS$,  
    parameterized reward functions $R_{h, \theta} : \setS\times\setA\times \Delta_{\setS\times\setA} \rightarrow [0,1]$, 
    initial distribution $\initpop \in \Delta_\setS$, 
    and horizon $H\in\mathbb{N}_{>0}$.
    Define operators $\gpop_h, \lpop$ as $\gpop_h(\pop, \pi_h| \theta)(s',a') := \sum_{s, a} \pop(s,a) P_{h, \theta}(s'|s,a, \pop ) \pi_h(a'|s')$ and $\lpop(\pi|\theta) := \{\Gamma_{h-1}(\compactcdots \Gamma_1(\Gamma_0(\initpop \cdot \pi_0, \pi_1|\theta)|\theta) \compactcdots, \pi_{h-1})|\theta) \}_{h=0}^{H-1}$, called \emph{population operators}\footnote{Note that we define $\lpop(\pi|\theta)_0 := \initpop \cdot \pi_0$}.
    For $\pi  \in \Pi_H$, $\tau\geq 0$ and $\vecpop = \{ \pop_h \}_{h=0}^{H-1} \in \Delta_{\setS\times\setA}^{H}$, the total expected (entropy regularized) reward is
    \begin{align*}
        \Vmfg^\tau_\setM \left( \vecpop, \pi | \theta \right) & := \Exop \Big[ \sum_{h=0}^{H-1} R_{h, \theta}(s_h, a_h, \pop_h) + \tau \entropy(\pi_h(s_h)) \Big| \substack{s_0 \sim \initpop , \, a_h \sim \pi_h(s_h)\\ s_{h+1} \sim P_{h, \theta}(s_h, a_h, \pop_h)} \Big].
    \end{align*}
    We define mean-field exploitability as $\Expmfg^\tau_\setM (\pi|\theta) := \max_{\pi' \in \Pi} \Vmfg^\tau_\setM ( \lpop (\pi), \pi'|\theta) - \Vmfg^\tau_\setM  ( \lpop ( \pi), \pi|\theta )$.
    If $\Expmfg^\tau_\setM  (\pi^* |\theta) = 0$ for $\pi^* \in \Pi_H$, we call $\pi^*$ a MFG Nash equilibrium (MFG-NE) with respect to parameter $\theta$.
    The set of all Nash equilibria for $\theta\in\Theta$ is denoted $\Nash^\tau_\setM(\theta)$.
\end{definition}
In this section, we will make the following (standard) assumption on $P_\theta(s'|s,a,L), R_\theta(s,a,L)$, which holds in many relevant applications.
\begin{assumption}[Lipschitz continuity]
\label{assumption:lipschitz}
    For all $s,s'\in\setS, a\in\setA$, the functions $P_{h,\theta}(s'|s,a,\pop)$, $R_{h,\theta}(s,a,\pop)$, and $g(\theta, \vecpop)$ are Lipschitz continuous in $\theta, \pop$.
\end{assumption}

\Cref{theorem:mainapprox} below demonstrates that by optimizing the objective $g(\theta, \lpop(\pi^*))$, one can obtain approximation guarantees (up to a bias of $\mathcal{O}(\sfrac{1}{\sqrt{N}})$) on the performance of the PDG that has independent and symmetric state transitions. 
We have therefore established \eqref{des_approximation} for PMFGs.

\begin{theorem}
\label{theorem:mainapprox}
Let $\setM$ be a PMFG, \Cref{assumption:lipschitz} hold, and $\setG$ be the PDG such that 
$\widebar{P}_{h, \theta}(\vecs, \veca) := \bigotimes_{i\in[N]} P_{h, \theta}(s^i, a^i,\empc{\vecs, \veca})$ and $\widebar{R}_{h, \theta}^i(\vecs, \veca) = R_\theta(s^i, a^i, \empc{\vecs, \veca})$ for all $i$.
Let $\pi^*\in\Nash_\setM^\tau(\theta)$ and $\vecpi^*:= (\pi^*, \ldots, \pi^*)\in \Pi_H^N$.
Then:
\begin{enumerate}[topsep=0pt]
    \item $\Expdg^\tau_\setG(\vecpi^*|\theta) \leq \mathcal{O}(\sfrac{1}{\sqrt{N}})$, that is, $\vecpi^*$ is a $\mathcal{O}(\sfrac{1}{\sqrt{N}})$-NE of $\setG$ under $\theta$.
    \item $G(\theta, \pi^*) \geq g(\theta, \lpop(\pi^*|\theta)) - \mathcal{O}(\sfrac{1}{\sqrt{N}}).$
\end{enumerate}
\end{theorem}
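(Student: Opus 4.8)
The plan is to run a standard propagation-of-chaos argument whose technical core is a single concentration lemma controlling the deviation of the random empirical population flow from the deterministic mean-field flow. Let $L := \lpop(\pi^*|\theta)$ denote the mean-field flow with time-$h$ marginal $L_h$, and let $\widehat{L}_h = \frac1N\sum_j \vece_{s_h^j,a_h^j}$ be the random empirical state-action distribution generated in $\setG$ when all players use $\pi^*$. First I would prove, by induction on $h$, that $\Exop\|\widehat{L}_h - L_h\|_1 \le C_h/\sqrt{N}$ for constants $C_h$ depending only on $H$, $|\setS|,|\setA|$, and the Lipschitz constants of $P,R$. The base case $h=0$ holds because the pairs $(s_0^j,a_0^j)$ are i.i.d.\ (states from $\initpop$, actions from $\pi^*_0$), so the empirical distribution of i.i.d.\ draws on the finite set $\setS\times\setA$ concentrates at the usual $\mathcal{O}(\sfrac{1}{\sqrt N})$ rate. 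For the inductive step I would decompose $\widehat{L}_{h+1} - L_{h+1}$ into (i) a sampling-fluctuation term $\widehat{L}_{h+1} - \gpop_h(\widehat{L}_h,\pi^*_{h+1}|\theta)$, which is conditionally mean-zero given $(\vecs_h,\veca_h)$ since the next states are conditionally independent, hence $\mathcal{O}(\sfrac{1}{\sqrt N})$ by a conditional Hoeffding/variance bound; and (ii) a propagation term $\gpop_h(\widehat{L}_h,\pi^*_{h+1}|\theta) - \gpop_h(L_h,\pi^*_{h+1}|\theta)$, bounded by a constant times $\|\widehat{L}_h - L_h\|_1$ via Lipschitz continuity of $P$ (\Cref{assumption:lipschitz}). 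The recursion $C_{h+1}\le \alpha C_h + \beta$ then closes, and since $H$ is finite, $C_h$ stays bounded.

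For Part 1, fix a player $i$ and an arbitrary deviation $\pi'$, with the others playing $\pi^*$. The key observation is that a single deviating agent perturbs the empirical distribution by at most $\mathcal{O}(\sfrac1N)$ per step, so the lemma continues to give $\Exop\|\widehat{L}_h - L_h\|_1 = \mathcal{O}(\sfrac{1}{\sqrt N})$ with constants that do not depend on $\pi'$. Using a telescoping (performance-difference) coupling between the finite-game trajectory of player $i$ and a mean-field trajectory of a single agent playing $\pi'$ against the fixed flow $L$, and invoking Lipschitz continuity of $R$ in the population argument together with boundedness of the entropy bonus ($\entropy(\pi_h(s))\le \log|\setA|$), I would show $|\Jdg^{\tau,i}_\setG((\pi',\vecpi^{*,-i})|\theta) - \Vmfg^\tau_\setM(L,\pi'|\theta)| = \mathcal{O}(\sfrac{1}{\sqrt N})$ uniformly in $\pi'$. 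Taking $\pi'=\pi^*$ gives the analogous bound for $\Jdg^{\tau,i}_\setG(\vecpi^*|\theta)$. Subtracting and maximizing over $\pi'$ yields $\Expdg^{\tau,i}_\setG(\vecpi^*|\theta)\le \Expmfg^\tau_\setM(\pi^*|\theta) + \mathcal{O}(\sfrac{1}{\sqrt N})$, and since $\pi^*\in\Nash_\setM^\tau(\theta)$ the first term vanishes. As the bound is uniform in $i$, it carries over to $\Expdg^\tau_\setG = \max_i \Expdg^{\tau,i}_\setG$.

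Part 2 follows directly from the lemma. Since $g$ is Lipschitz in its population argument, $|G(\theta,\pi^*) - g(\theta,L)| \le \Exop|g(\theta,\{\widehat{L}_h\}) - g(\theta,\{L_h\})| \le \mathrm{Lip}(g)\sum_{h} \Exop\|\widehat{L}_h - L_h\|_1 = \mathcal{O}(\sfrac{1}{\sqrt N})$, which in particular yields the stated lower bound $G(\theta,\pi^*)\ge g(\theta,\lpop(\pi^*|\theta)) - \mathcal{O}(\sfrac{1}{\sqrt N})$.

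The main obstacle I anticipate is the uniform-in-deviation control required for Part 1: the finite-versus-mean-field single-agent value comparison must hold simultaneously for every $\pi'\in\Pi_H$, so the constant in the $\mathcal{O}(\sfrac{1}{\sqrt N})$ bound must not depend on the deviating policy. This is resolved by the fact that the concentration of the ambient population around $L$ is driven solely by the $N-1$ agents playing $\pi^*$ plus a $\mathcal{O}(\sfrac1N)$ perturbation from agent $i$, neither of which depends on $\pi'$; the deviator's value differs from its mean-field counterpart only through the Lipschitz dependence of $R$ and $P$ on $\widehat{L}_h$. A secondary, purely bookkeeping difficulty is keeping the horizon-dependent constants $C_h$ explicit through the inductive recursion, but finiteness of $H$ makes this routine.
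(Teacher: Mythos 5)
Your proposal is correct and follows essentially the same route as the paper: the paper's proof rests on a technical lemma (imported from prior work on mean-field approximation, \cite{yardim2024meanfield}) giving exactly your two ingredients---the $\mathcal{O}(\sfrac{1}{\sqrt{N}})$ bound on $\Exop\|\widehat{L}_h - L_h\|_1$ with an additive $\mathcal{O}(\sfrac{1}{N})$ term for the single deviator, and the coupling bound on the deviating agent's state-action marginals---after which the exploitability and objective bounds follow by the same Lipschitz decomposition and uniform-in-$\pi'$ maximization you describe. The only difference is that you sketch the propagation-of-chaos induction from scratch where the paper cites it, which is a matter of presentation rather than substance.
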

\Cref{theorem:mainapprox} mirrors bounds in MFGs without ID \cite{yardim2024meanfield} and Stackelberg MFGs in other settings \cite{moon2018linear}.
For clarity, the theorem is stated as an approximation result for a PDG $\setG$ that exactly satisfies agent exchangeability, which might not always be the case.
In some applications, finding the mean-field formulation $\setM$ given a PDF $\setG$ might be nontrivial.
The \emph{converse} problem of constructing an appropriate PMFG $\setM$ that approximates a given $\setG$ was studied in \cite{yardim2024exploiting}, and this work can be trivially generalized to this case with an additional approximation bias due to asymmetries in $\setG$.
The case of auction design, which also does not satisfy \Cref{assumption:lipschitz} and the assumption of $\widebar{P}_{h, \theta}$ being a product measure, will require specific treatment in \Cref{sec:auctions_maintext}.

\subsection{Approximating the First Order Derivatives}
\label{sec:explicitdifferentiation}

Having satisfied \eqref{des_approximation} for ID with Lipschitz dynamics in the previous section, we turn to \eqref{des_optimization}--formulating algorithmic methods to solve \eqref{problem:mfg_player_design}.
We state the standard definitions of value and q-functions for PMFGs, which will be important for learning NEs:
\begin{align*}
    \Vmfg^\tau_h (s| \vecpop, \pi, \theta) &:= \Exop \Big[ \sum_{h'=h}^{H-1} R_\theta(s_{h'}, a_{h'}, \pop_{h'}) + \tau \entropy(\pi_{h'}(s_{h'})) \Big| \substack{s_h = s, \, a_{h'} \sim \pi_{h'}(s_{h'})\\ s_{h'+1} \sim P_\theta(s_{h'}, a_{h'}, \pop_{h'})} \Big] \\
    \qmfg^\tau_h (s,a | \vecpop, \pi, \theta) &:= R_\theta(s, a, L_h) + \mathbb{E}\left[ \Vmfg^\tau_{h+1} \left( s_{h+1} |\vecpop, \pi, \theta \right)  \Big| \substack{ s_h = s, a_h = a, \, a_{h'} \sim \pi_{h'}(s_{h'})\\ s_{h'+1} \sim P_\theta(s_{h'}, a_{h'}, \pop_{h'})} \right]
\end{align*}
We define the commonly used online mirror descent update rule $\Fomdpi:\Theta \times \Pi_H\rightarrow \Pi_H$ with
\begin{align*}
    \Fomdpi(\theta, \pi)(h, s) := \argmax_{u \in \Delta_\setA} \langle \qmfg^\tau_h (s,\cdot | \lpop(\pi), \pi, \theta), u\rangle + \tau \entropy(u) - \eta^{-1}(1 - \tau\eta) \kldiv(u | \pi(s)),
\end{align*}
for some given learning rate $\eta>0$ and entropy regularization $\tau\geq 0$.
$\Fomdpi$ has received particular attention in MFG literature due to its theoretical and empirical properties.
Abbreviating $\Fomdpi^{(T)}(\theta,\zeta) := \Fomdpi(\theta,\Fomdpi^{\tau}(\theta, \ldots \Fomdpi(\theta,\zeta) \ldots)),$ i.e., $\Fomdpi(\theta, \cdot)$ applied $T$ times,
the repeated iterations $\Fomdpi^{(T)}$ for $T > 0$ are known to convert to NE for monotone MFGs theoretically \cite{perolat2022scaling, zhang2023learning, isobe2024last}, and empirically find good approximations to NE \cite{cui2021approximately} for general MFGs.
Furthermore, any $\pi^* \in \Nash_\setM^\tau(\theta)$ is guaranteed to be a fixed point of the map $\Fomdpi(\theta, \cdot)$ for some learning rate $\eta$.
We formulate an explicit differentiation scheme for the PMFG using these properties of $\Fomdpi$.
Defining the softmax transform $\softmax : \mathbb{R}^{[H]\times\setS\times\setA} \rightarrow \Pi $ as $\softmax(\veczeta)(h,s,a) := \frac{\exp\{\zeta_{h,s,a}\}}{\sum_{a'} \exp\{\zeta_{h,s,a'}\}}$, the above OMD update rule can be reformulated in terms of log probabilities:
\begin{align*}
    \Fomd(\vectheta, \veczeta)(h, s, a) := (1 - \eta \tau) \zeta_{h, s,a} + \eta  \qmfg^\tau_h (s,a | \lpop(\softmax(\veczeta)), \softmax(\veczeta), \vectheta).
\end{align*}
For fixed $\theta$, the repeated iterations $\Fomd$ will converge (under technical conditions) to $\log \pi^*$ where $\pi^*$ is an NE of the MFG induced by $\theta$.
With this motivation, we reformulate the PMFG design problem \eqref{problem:mfg_player_design} as a maximization of the $T$-step approximate objective
\begin{tcolorbox}[colback=gray!10, arc=0pt, boxrule=0pt,colframe=white!, top=0pt,left=2pt,right=2pt,bottom=1pt,before upper=\par\noindent{}]
\begin{align}
    \tag{$T$-approx.}
    \label{eq:tstepobjfunction}
    G^{T}_{\textrm{approx}}: \vectheta \rightarrow g\left(\vectheta,\, \lpop(\softmax(\Fomd^{(T)}(\vectheta, \veczeta_0))|\theta)\right).
\end{align}
\end{tcolorbox}
$G^{T}_{\textrm{approx}}$ in general is a well-defined differentiable function (see \Cref{lemma:differentiability}, \Cref{app:meanfieldmechanismdesign}).
In particular, standard autograd tools can be used to compute $\nabla G^{T}_{\textrm{approx}}$.
While the behavior of $\nabla G^{T}_{\textrm{approx}}$ when $T\rightarrow\infty$ is not immediate, \Cref{lemma:omd_diff} below shows that under technical conditions $G^{T}_{\textrm{approx}}$ is a meaningful objective function to maximize, and produces low-bias estimates of the derivatives of the true NE with respect to $\theta$ for sufficiently large $T$.
\begin{lemma}[Differentiability of $\Fomd^\infty$]
\label{lemma:omd_diff}
Let $\veczeta \in \mathbb{R}^{[H]\times\setS\times\setA}, \vectheta \in \Theta$ be such that the following hold:
\begin{enumerate}[topsep=0pt, ]
    \item $\Fomd^\infty(\vectheta', \veczeta) := \lim_{T \rightarrow \infty} \Fomd^{(T)} (\vectheta', \veczeta)$ for $\vectheta'\in U$ for a neighborhood $U$ of $\vectheta$, 
     \item For $\veczeta^*$ such that $\veczeta^* = \Fomd^\infty(\vectheta, \veczeta)$, $\qmfg_{h}^\tau$ is $C^1$ in a neighborhood of $(\vectheta, \veczeta^*)$, and $\rho(\partial_{\veczeta} \qmfg_{\cdot}^\tau(\cdot,\cdot|\lpop(\softmax(\veczeta^*)), \softmax(\veczeta^*), \vectheta )) < \tau$ for all $h\in[H]$.
\end{enumerate}
Then, $\softmax(\Fomd^\infty(\vectheta', \veczeta)) \in \Nash^\tau_\setM(\vectheta') $ on $\vectheta'\in U$, $\Fomd^\infty$ is partially differentiable in $\vectheta$ at $(\vectheta, \veczeta)$, and $\lim_{T\rightarrow\infty} \partial_\vectheta \Fomd^{(T)}(\veczeta, \vectheta) = \partial_\vectheta \Fomd^\infty(\veczeta, \vectheta)$.
\end{lemma}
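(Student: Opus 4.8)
The plan is to treat the fixed point $\veczeta^* = \Fomd^\infty(\vectheta, \veczeta)$ as an \emph{attracting} fixed point of the parameterized map $\Fomd(\vectheta', \cdot)$ and to apply the implicit function theorem, where the crucial input is that the second hypothesis forces the $\veczeta$-Jacobian of $\Fomd$ to be a contraction at $\veczeta^*$. Concretely, differentiating $\Fomd(\vectheta,\veczeta) = (1-\eta\tau)\veczeta + \eta\,\qmfg^\tau(\compactcdots)$ gives $\partial_\veczeta \Fomd = (1-\eta\tau) I + \eta\, \partial_\veczeta \qmfg^\tau$; since the eigenvalues of this matrix are exactly $(1-\eta\tau) + \eta\mu$ for eigenvalues $\mu$ of $\partial_\veczeta\qmfg^\tau$, the assumption $\rho(\partial_\veczeta\qmfg^\tau)<\tau$ together with the OMD step-size restriction $\eta\tau < 1$ yields $|(1-\eta\tau)+\eta\mu| \le (1-\eta\tau) + \eta|\mu| < 1$, i.e. $\rho(\partial_\veczeta\Fomd(\vectheta,\veczeta^*)) < 1$. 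The first claim is the easiest and needs no contraction: for any $\vectheta' \in U$ the limit $\Fomd^\infty(\vectheta',\veczeta)$ exists by the first hypothesis, so by continuity of $\Fomd$ it is a fixed point of $\Fomd(\vectheta',\cdot)$; rearranging the fixed-point identity gives $\tau\veczeta^* = \qmfg^\tau(\,\cdot\mid \lpop(\softmax(\veczeta^*)),\softmax(\veczeta^*),\vectheta')$, so $\softmax(\veczeta^*)$ is the Gibbs (entropy-regularized) best response to the population it induces, which is exactly the MFG-NE condition, giving $\softmax(\Fomd^\infty(\vectheta',\veczeta)) \in \Nash^\tau_\setM(\vectheta')$.

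For differentiability I would apply the implicit function theorem to $\Phi(\vectheta',\veczeta) := \veczeta - \Fomd(\vectheta',\veczeta)$, which is $C^1$ near $(\vectheta,\veczeta^*)$ by the $C^1$ assumption on $\qmfg^\tau$ and the smoothness of $\softmax$ and of the population operators $\gpop_h$. Since $\partial_\veczeta\Phi(\vectheta,\veczeta^*) = I - \partial_\veczeta\Fomd(\vectheta,\veczeta^*)$ has no zero eigenvalue (all eigenvalues of $\partial_\veczeta\Fomd$ have modulus $<1$), it is invertible, so there is a $C^1$ branch $\vectheta'\mapsto\widehat{\veczeta}(\vectheta')$ of fixed points with $\widehat{\veczeta}(\vectheta)=\veczeta^*$ and $\partial_\vectheta\widehat{\veczeta}(\vectheta) = (I-\partial_\veczeta\Fomd)^{-1}\partial_\vectheta\Fomd$. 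It then remains to identify $\Fomd^\infty(\vectheta',\veczeta)$ with this branch for $\vectheta'$ near $\vectheta$: because $\rho(\partial_\veczeta\Fomd(\vectheta,\veczeta^*))<1$, after passing to an adapted norm the map is a uniform local contraction toward $\widehat{\veczeta}(\vectheta')$ on a ball $B$ around $\veczeta^*$ for all $\vectheta'$ in a neighborhood; since the iterates $\Fomd^{(T)}(\vectheta,\veczeta)$ enter $B$ after some finite $T_0$ (they converge to $\veczeta^*$), continuity of $\Fomd^{(T_0)}$ in $\vectheta'$ keeps $\Fomd^{(T_0)}(\vectheta',\veczeta)$ in $B$ for nearby $\vectheta'$, and the tail iteration then converges to $\widehat{\veczeta}(\vectheta')$. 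Hence $\Fomd^\infty(\vectheta',\veczeta)=\widehat{\veczeta}(\vectheta')$ is $C^1$ near $\vectheta$.

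For the convergence of derivatives, I would write $\psi_T(\vectheta') := \Fomd^{(T)}(\vectheta',\veczeta)$ and differentiate the recursion $\psi_T = \Fomd(\cdot,\psi_{T-1})$ by the chain rule to obtain $\partial_\vectheta\psi_T(\vectheta) = B_T + A_T\,\partial_\vectheta\psi_{T-1}(\vectheta)$, where $A_T := \partial_\veczeta\Fomd(\vectheta,\psi_{T-1}(\vectheta))$ and $B_T := \partial_\vectheta\Fomd(\vectheta,\psi_{T-1}(\vectheta))$. Since $\psi_{T-1}(\vectheta)\to\veczeta^*$ and $\Fomd$ is $C^1$, we have $A_T\to A := \partial_\veczeta\Fomd(\vectheta,\veczeta^*)$ and $B_T\to B := \partial_\vectheta\Fomd(\vectheta,\veczeta^*)$, with $\rho(A)<1$. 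Setting $e_T := \partial_\vectheta\psi_T(\vectheta) - (I-A)^{-1}B$ and using $(I-A)^{-1}B = B + A(I-A)^{-1}B$ gives $e_T = (B_T - B) + (A_T - A)(I-A)^{-1}B + A_T e_{T-1}$; choosing a norm in which $\|A\|<1$, so that $\|A_T\|\le\kappa<1$ eventually, the first two terms vanish and the contraction forces $e_T\to0$, i.e. $\lim_T\partial_\vectheta\Fomd^{(T)}(\vectheta,\veczeta) = (I-A)^{-1}B = \partial_\vectheta\Fomd^\infty(\vectheta,\veczeta)$.

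The main obstacle is this last step: pointwise convergence of the iterates (the first hypothesis) does not by itself transfer to convergence of the parameter-derivatives, since the chain rule accumulates sensitivities through all $T$ iterations. The contraction $\rho(\partial_\veczeta\Fomd(\vectheta,\veczeta^*))<1$ — which is precisely what the spectral-radius hypothesis buys us — is what tames this accumulation, and it simultaneously powers the implicit function theorem and the identification of $\Fomd^\infty$ with the $C^1$ fixed-point branch. Verifying the uniformity needed to pass from the single trajectory at $\vectheta$ to a full neighborhood of $\vectheta$ (so that the adapted-norm contraction and the entry time $T_0$ can be chosen uniformly) is the most delicate bookkeeping.
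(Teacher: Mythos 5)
Your proposal is correct and follows essentially the same route as the paper's proof: the NE claim via continuity of $\Fomd$ and the fixed-point identity $\veczeta^* = \tau^{-1}\qmfg^\tau(\cdot\,|\,\lpop(\softmax(\veczeta^*)),\softmax(\veczeta^*),\vectheta')$, differentiability via the implicit function theorem once $\rho(\partial_\veczeta \Fomd)<1$ is extracted from $\rho(\partial_\veczeta \qmfg^\tau_\cdot)<\tau$, and convergence of $\partial_\vectheta \Fomd^{(T)}$ via the linearized recursion contracted in an adapted norm (the paper runs the identical recursion in directional form, $x_{t+1} = \partial_\veczeta\Fomd(\vectheta,\veczeta_t)x_t + \partial_\vectheta\Fomd(\vectheta,\veczeta_t)v$, closed with a three-term triangle inequality against $x^* = (I-\partial_\veczeta\Fomd(\vectheta,\veczeta^*))^{-1}\partial_\vectheta\Fomd(\vectheta,\veczeta^*)v$). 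The one step you assert rather than prove is that a policy equal to the softmax of its \emph{own} regularized q-function is in fact the entropy-regularized best response to the population it induces — the paper closes this by backward induction over $h$ (at $h=H-1$ the q-function is policy-independent, so first-order optimality applies, then induct) — while, conversely, your basin-of-attraction identification of $\Fomd^\infty(\vectheta',\veczeta)$ with the IFT branch is spelled out more carefully than in the paper.
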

\Cref{lemma:omd_diff} justifies the use of $G^{T}_{\textrm{approx}}$ (and subsequently the explicit differentiation scheme) as an objective function under mild technical conditions.
If $G$ is also $C^1$, $\nabla G^{T}_{\textrm{approx}}$ converges to the derivative of a map $\theta' \rightarrow G(\theta', \pi^{\theta'})$ where $\pi^{\theta'}$ is a function of $\theta'$ such that $\pi^{\theta'}\in \Nash^\tau_\setM({\theta'})$ locally, that is, for all $\theta'$ in some neighborhood of $\theta$.
Moreover, \Cref{lemma:omd_diff} provides intuition on how to tune the parameters $\eta, T, \tau$ and characterizes their impact on the quality of explicit differentiation.

One major challenge from a computational point of view of backpropagating \eqref{eq:tstepobjfunction} will be the size of the computational graph, growing with $\mathcal{O}(T)$.
In many MFGs, finding a good approximate MFG-NE will require a large $T$, which will incur a large computational overhead.

\begin{multicols}{2}
\Cref{algorithm:adjoint_method}, which we call \textbf{adjoint mean-field incentive design} (AMID), reduces the potentially memory-intensive backpropagation through a complex computational graph to a simple forward and backward pass in $t$.
Crucially, the update operator $\Fupdate$ is typically quite complicated for PMFGs: for instance, $\Fomd$ itself involves solving forward ($\lpop$) and backward equations ($\qmfg_h^\tau$) in $h$.
Therefore, for large $T$, naive autograd will be inefficient due to the storage of many intermediate values and a large graph.

\columnbreak
\begin{algorithm}[H]
      \centering
      \caption{\myalgname{}}\label{algorithm:adjoint_method}
      \footnotesize
      \begin{algorithmic}[1]
          \Require Update rule $\Fupdate$, objective $G$, $T, \eta, \tau, \vectheta, \veczeta_{0}$
            \For{$t \in 0, \ldots, T$} \Comment{Forward pass}
                \State{$\veczeta_{t+1} = (1 - \eta \tau)\veczeta_{t} + \eta \Fupdate(\vectheta, \veczeta_{t})$}
            \EndFor
            \State{$s_{T+1} = \partial_{\vectheta}G(\vectheta, \veczeta_{T+1})$, $\veca_{T} = -\partial_{\veczeta} G(\veczeta_{T+1})$}
            \For{$t \in T, \ldots, 0$} \Comment{Backward pass}
                \State{$a_{t-1} = (1 - \eta \tau)a_{t} + \eta a_t \partial_{\veczeta} \Fupdate(\vectheta, \veczeta_t)$}
                \State{$s_{t-1} = s_t - \eta a_t \partial_{\vectheta} \Fupdate(\vectheta, \veczeta_t)$}
            \EndFor
        \State{\Return{$s_{0}$} }
      \end{algorithmic}
\end{algorithm}
\end{multicols}
\begin{lemma}[Adjoint method]
\label{lemma:adjoint}
Let $\Theta\subset \mathbb{R}^{d'}$ be an open set and $\Fupdate: \Theta \times \mathbb{R}^d \rightarrow \mathbb{R}^d$ and $G: \Theta \times \mathbb{R}^d \rightarrow \mathbb{R}^d$ be differentiable functions.
Assume \Cref{algorithm:adjoint_method} is run with inputs $\Fupdate, G$, and $T\in \mathbb{N}_{>0}, \eta, \tau > 0, \veczeta_{0}\in\mathbb{R}^d$.
Then its return value $s_0$ is equal to $\nabla_{\vectheta} G(\vectheta, \Fupdate^{(T)}(\theta, \veczeta_0))$.
\end{lemma}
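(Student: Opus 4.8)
The plan is to recognize this as the standard correctness statement for reverse-mode automatic differentiation (the discrete adjoint method) applied to an unrolled fixed-point iteration, so the proof is essentially a careful chain-rule bookkeeping argument. First I would fix notation: write $\Psi(\vectheta, \veczeta) := (1-\eta\tau)\veczeta + \eta \Fupdate(\vectheta, \veczeta)$ for the map actually iterated in the forward pass, so that $\veczeta_{t+1} = \Psi(\vectheta, \veczeta_t)$ and the terminal iterate is $\veczeta_{T+1} = \Psi^{(T+1)}(\vectheta, \veczeta_0)$ (this is the object denoted $\Fupdate^{(T)}(\vectheta,\veczeta_0)$ in the statement, up to the loop-index convention). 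Define the scalar-valued composite $\mathcal{G}(\vectheta) := G(\vectheta, \veczeta_{T+1})$, where every $\veczeta_t$ is regarded as a function of $\vectheta$ through the recursion with $\veczeta_0$ held fixed. The goal is to show that the returned value $s_0$ equals $\nabla_{\vectheta}\mathcal{G}(\vectheta)$; the differentiability hypotheses on $\Fupdate$ and $G$ guarantee that all Jacobians below exist and that the chain rule applies.

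Next I would derive the adjoint recursion. Writing $\Psi_t := \Psi(\vectheta, \veczeta_t)$, the forward sensitivities satisfy $\tfrac{d\veczeta_{t+1}}{d\vectheta} = \partial_{\vectheta}\Psi_t + \partial_{\veczeta}\Psi_t \, \tfrac{d\veczeta_t}{d\vectheta}$ with $\tfrac{d\veczeta_0}{d\vectheta} = 0$, and by the chain rule $\nabla_{\vectheta}\mathcal{G} = \partial_{\vectheta}G(\vectheta,\veczeta_{T+1}) + \partial_{\veczeta}G(\vectheta,\veczeta_{T+1})\,\tfrac{d\veczeta_{T+1}}{d\vectheta}$. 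I then introduce adjoint (costate) variables $\lambda_{T+1} := \partial_{\veczeta}G(\vectheta,\veczeta_{T+1})^{\top}$ and $\lambda_t := (\partial_{\veczeta}\Psi_t)^{\top}\lambda_{t+1}$ for $t = T, \dots, 1$. The key identity, which I would prove by downward induction on $t$ (equivalently by telescoping the unrolled product of Jacobians built from $\partial_{\veczeta}\Psi_t$), is $\partial_{\veczeta}G(\vectheta,\veczeta_{T+1})\,\tfrac{d\veczeta_{T+1}}{d\vectheta} = \sum_{t=0}^{T}\lambda_{t+1}^{\top}\,\partial_{\vectheta}\Psi_t$, whence $\nabla_{\vectheta}\mathcal{G} = \partial_{\vectheta}G(\vectheta,\veczeta_{T+1}) + \sum_{t=0}^{T}\lambda_{t+1}^{\top}\partial_{\vectheta}\Psi_t$. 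This telescoping identity is the one genuinely load-bearing computation and is where I would spend the most care.

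Finally I would match the algorithm's iterates to the adjoints. Differentiating $\Psi$ gives $\partial_{\veczeta}\Psi_t = (1-\eta\tau)\bI + \eta\,\partial_{\veczeta}\Fupdate(\vectheta,\veczeta_t)$ and $\partial_{\vectheta}\Psi_t = \eta\,\partial_{\vectheta}\Fupdate(\vectheta,\veczeta_t)$. Setting $a_t := -\lambda_{t+1}$, the initialization $a_T = -\partial_{\veczeta}G(\vectheta,\veczeta_{T+1}) = -\lambda_{T+1}$ matches the algorithm, and substituting the split of $\partial_{\veczeta}\Psi_t$ into $\lambda_t = (\partial_\veczeta\Psi_t)^\top \lambda_{t+1}$ reproduces exactly the update $a_{t-1} = (1-\eta\tau)a_t + \eta\, a_t\,\partial_{\veczeta}\Fupdate(\vectheta,\veczeta_t)$. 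Likewise $s_t$ accumulates the $\vectheta$-contributions: with $s_{T+1} = \partial_{\vectheta}G(\vectheta,\veczeta_{T+1})$ and $s_{t-1} = s_t - \eta\, a_t\,\partial_{\vectheta}\Fupdate(\vectheta,\veczeta_t) = s_t + \lambda_{t+1}^{\top}\partial_{\vectheta}\Psi_t$, the backward sweep telescopes to $s_0 = \partial_{\vectheta}G(\vectheta,\veczeta_{T+1}) + \sum_{t=0}^{T}\lambda_{t+1}^{\top}\partial_{\vectheta}\Psi_t = \nabla_{\vectheta}\mathcal{G}$, which is the claim. The only real obstacle here is bookkeeping: tracking the sign convention $a_t = -\lambda_{t+1}$, the transpose/row-vector conventions in the Jacobian products, and reconciling the off-by-one loop indices in the pseudocode with the composition count in the statement.
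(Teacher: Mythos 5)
Your proposal is correct, and it reaches exactly the recursions in Algorithm~\ref{algorithm:adjoint_method}, but by a different derivation than the paper. You differentiate the unrolled composition directly: define costates $\lambda_{t}$ by the backward vector--Jacobian recursion, prove the telescoping identity $\partial_{\veczeta}G\cdot\tfrac{d\veczeta_{T+1}}{d\vectheta}=\sum_{t=0}^{T}\lambda_{t+1}^{\top}\partial_{\vectheta}\Psi_t$ by downward induction, and then match $a_t=-\lambda_{t+1}$ and the $s_t$ accumulation to the pseudocode. The paper instead uses a Lagrangian (method-of-multipliers) argument: it augments $\ell(\vectheta)=g(\vectheta,\veczeta_{T+1}(\vectheta))$ with terms $\veca_t\cdot(\veczeta_{t+1}-(1-\eta\tau)\veczeta_t-\eta\Fupdate(\vectheta,\veczeta_t))$ that vanish identically along the trajectory, observes that $\partial_{\vectheta}L=\nabla_{\vectheta}\ell$ for \emph{any} choice of multipliers, and then chooses $\widebar{\veca}_t$ precisely so that all coefficients of the forward sensitivities $\nabla_{\vectheta}\veczeta_t$ cancel, which produces the same backward recursion and the same final formula $\nabla_{\vectheta}\ell=-\eta\sum_{t=0}^{T}\veca_t\partial_{\vectheta}\Fupdate(\vectheta,\veczeta_t)+\partial_{\vectheta}g(\vectheta,\veczeta_{T+1})$. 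The two derivations are mathematically equivalent; yours is the more elementary ``correctness of reverse-mode autodiff'' argument, whose only nontrivial step is the induction you correctly identify as load-bearing, while the Lagrangian route avoids ever writing down or unrolling the forward sensitivities (the multipliers are chosen to annihilate them term by term) and generalizes more directly to implicit, constrained, or continuous-time adjoint settings. One shared bookkeeping point: as printed, the backward loop's final write at $t=0$ lands in $s_{-1}$ while the algorithm returns $s_0$, so the $t=0$ contribution $-\eta a_0\partial_{\vectheta}\Fupdate(\vectheta,\veczeta_0)$ is picked up only under the intended reading of the loop; this off-by-one is present in the paper's pseudocode itself (its closed-form gradient sums over $t=0,\dots,T$), and you correctly flag and resolve it rather than introduce an error, just as you correctly reconcile $\veczeta_{T+1}=\Psi^{(T+1)}(\vectheta,\veczeta_0)$ with the statement's $\Fupdate^{(T)}$ notation.
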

\begin{remark}
In \Cref{algorithm:adjoint_method}, $\{\veczeta_t\}_t$ will need to be stored for the backward pass in memory, however, the memory footprint can be reduced to $\mathcal{O}(\sqrt{T})$ by caching every $\mathcal{O}(\sqrt{T})$ timesteps of the forward process and recomputing $\veczeta_t$ as required, maintaining a time complexity of $\mathcal{O}(T)$.
Furthermore, \Cref{algorithm:adjoint_method} can be generalized to arbitrary Bregman divergences, which permits a variety of inner loop operators to be used (\Cref{sec:appendix_adjoint_bregman}).
\end{remark}

\section{Beyond Lipschitz: Mechanism Design for Large-Scale Auctions} 
\label{sec:auctions_maintext}
Auction design is a ubiquitous and well-studied problem of extraordinary economic interest \cite{Krishna2002AuctionTheory}. 
To analyze auctions at the mean-field regime, we move beyond the Lipschitz PMFG assumptions.
Designing auctions with maximum revenue in the resulting equilibrium can be framed as an instance of \eqref{problem:n_player_design} (see our discussion in \Cref{app:auction_eq}), and with an appropriate PMFG, tackled using \myalgname{}.
Specifically, we consider the following sequential batched auction setting motivated by real-world formats used for selling government debt, broadcast licenses, mineral rights, art, fish, timber \cite{Gale2001Sequential}, and including transactions in mined Ethereum blocks \cite{Roughgarden2021Transaction}.

\paragraph{(Parameterized) batched auctions.}
An $H$-round $N$-player batched auction with incentive parameter $\theta$ is a PDG $\Gauc$ with state space
$\setS = \setV \cup \{\perp\}$ (where the value space $\setV:=\{0, ..., \sfrac{(\vert\setV\vert - 1)}{\vert\setV\vert}\}$ represents possible valuations and $\perp$ denotes non-participation in the current round) and action space $\setA  = \{0, \dots, \sfrac{(\vert \setA \vert -1)}{\vert \setA \vert }\}$ (possible bids).
Each bidder $i\in [N]$ at round $h\in[H]$ has a private state $s_h^i \in \setS$ not revealed to the auctioneer or other bidders. 
Overall, the auctioneer sells at most $\lceil \alphamax N \rceil$ goods (for some $\alphamax\in(0,1)$) and chooses $\theta$, parameterizing the allocations and payments as outlined below. 
The auction evolves for $h\in [H]$ as follows:
\begin{enumerate}[leftmargin=*,noitemsep,topsep=0pt]
\item Initial states at $h = 0$ are independently sampled from distribution $\mu_0 \in \Delta_{\setV}$.
    \item At every round $h \in [H]$, bidders for which $s_h^i \neq \perp$ submit their bids $a_h^i \in \setA$. 
    \item Observing the bid distributions $\widehat{\nu}^{-\perp}_{h} := \tfrac{1}{N}\sum_{i \in [N]}\vece_{a_h^i} \ind{s_h^i \neq \perp}$, the mechanism decides on a ratio of goods to be sold this round, $\alpha_h^\theta(\widehat{\nu}^{-\perp}_{h})$.
    Items are allocated to the highest $\lfloor \alpha_h^\theta(\widehat{\nu}^{-\perp}_{h})  N \rfloor$ bidders, with ties broken uniformly at random.
    \item Each bidder $ i $ who receives an item, makes a payment $ p_h^i = \paymentfn_h^\theta(a_h^i, \widehat{\nu}^{-\perp}_{h}) \in \mathbb{R}_{\geq 0}$.
    A winning bidder receives utility $\utilityfn_h(s_h^i, p_h^i) \in \mathbb{R}$ and transitions to state $\perp$, while non-winning and non-participating bidders receive zero utility. 
    \item Before proceeding to round $h+1$, each bidder transitions independently to a new state according to a dynamics function $ \transitionfn_h: \mathcal{S} \times \Delta_{\mathcal{S}} \rightarrow \Delta_{\mathcal{S}} $, which maps the agent’s current state and the empirical population distribution (after the allocation at round $ h $) to a distribution over next states. 
\end{enumerate}
To ensure the mechanism can not sell more goods than are available ($\alphamax$),
we assume that the parameterizations of $\alpha_h^\theta$ are such that $\sum_h \alpha_h^\theta(\widehat{\nu}^{-\perp}_{h}) \leq \alphamax$ almost surely.
This is can be ensured, for example, by parameterizing $\alpha_h^\theta$ as a fraction of \emph{remaining goods} at every $h$.
$\Gauc$ allows for complex valuation dynamics, such as single-minded bidders (who stay in $\perp$), 
time-dependent or population-dependent evolving valuations,
as well as super and subadditive valuations over bundles of goods. 
Under these dynamics, we denote the expected utility of player $i$ and exploitability as $\Jauc^{\tau, i}, \Expauc^{\tau, i}$ respectively, as defined in \Cref{def:dg}.

We note that parameterizing $\alpha_{h}^\theta,p_{h}^\theta$ fully captures the intuition of \emph{reserve prices} in the BA-MFG setting. 
In many auction formats, a reserve price, i.e., a minimum price that bidders have to bid and pay to win, has been shown to increase revenue \cite{Myerson1981Optimal}. 
A reserve price $r_h\in \setA$ at round $h$ can be implemented for example with $\alpha_{h}(\nu)=\sum_{a'\geq r_h} \nu(a')$ and $p_h(a, \nu) = a$.

\subsection{Auctions at the Mean-Field Regime}
From the above description, $\Gauc$ clearly has exchangeable agents. 
However, the corresponding one-step state evolutions are not independent, making \Cref{theorem:mainapprox} inapplicable here.
Motivated by the relevance of large-scale auction design, we show that PMFGs are still relevant and \eqref{des_approximation} holds with a refined analysis of batched auctions in the following.
We begin by defining the correct MFG that characterizes the batch auction at the limit $N\rightarrow\infty$.
While the definition is symbol-laden, we state it for completeness.
\begin{definition}[BA-MFG]
    \label{def:bamfg}
     A Batched-Auction MFG (BA-MFG) is the PMFG $ \Mauc := (\setS, \setA, H, \initpop, \Theta, \{\Pauc_{h, \theta}\}_{h=0}^{H-1}, \{\Rauc_{h, \theta}\}_{h=0}^{H-1})$, where $\Rauc_{\theta, h}, \Pauc_{\theta, h}$ are given by:
     \begin{align*}
         \Rauc_{\theta, h}(s,a,L) &:= \pwin( s, a, L, \alpha_h^\theta(\bids(L)))\utilityfn_h (s, \paymentfn_h(a, \bids(L))) \\
         \Pauc_{\theta, h}(s'|s,a,L) &:= \pwin( s, a, L, \alpha_h^\theta(\bids(L)) ) \transitionfn_h(s'|\perp, \xi^{L,\theta})  \\
            &\quad + (1 - \pwin( s, a, L, \alpha_h^\theta(\bids(L)))) \transitionfn_h(s'|s,\xi^{L,\theta}),
     \end{align*}
     where  $\bids(L) := \sum_{s\neq \perp} L(s,\cdot)$, $\xi^{L,\theta}\in\Delta_\setS$ such that $\xi^{L,\theta}(\perp) = L(\perp) + \langle L, \pwin( \cdot,\cdot , L, \alpha_h^\theta(\bids(L)))  \rangle $ and $\xi(s) = \langle L(s, \cdot), \pwin( s,\cdot , L, \alpha_h^\theta(\bids(L))) \rangle$, and $\pwin$ defined as\footnote{In this definition we take $\sfrac{\varepsilon}{0}=\infty$, for any $\varepsilon > 0$ and $\sfrac{0}{0}=0$, for convenience.}
     \begin{align*}
        \pwin( s, a, L, \alpha) := 
            \ind{s\neq \perp} \max\left\{ 0, \min\left\{1, \frac{\alpha - \sum_{s' \in \mathcal{V},\, a' > a} L(s', a')}{\bids(L)(a)}\right\}\right\}.
     \end{align*}
\end{definition}
We use $\Vmfa^\tau, \Expmfa^\tau$ to denote expected reward and exploitability in $\Mauc$, as in \Cref{def:mfg}.

The intuition behind \Cref{def:bamfg} relies on the fact that the function $\pwin$ approximately characterizes the marginal winning probability of an agent when $N$ is large.
In fact, \Cref{theorem:approximation_auction} below shows that BA-MFG is indeed the correct model for auctions with large $N$.
Existing approximation results (such as \cite{saldi2018markov, yardim2024meanfield} as well as \Cref{theorem:mainapprox}) fundamentally are incompatible with this setting due to (1) the fact that transitions in (finite-player) auctions are not independent, and (2) due to the inherent jump discontinuities in both $P_{\theta, h}^{\text{auc}}, R_{\theta, h}^{\text{auc}}$.
No zero-dominance (NZD) policies, defined below, identify a subset of policy space $\Pi_H$ where this difficulty can be circumvented.
\begin{definition}[No zero-dominance (NZD)]
Let $ \Mauc $ be a BA-MFG and $\theta\in\Theta$. 
$\pi \in \Pi_H$ is said to satisfy the NZD property for $\theta$ if at induced $\vecpop = \{L_h\}_{h=0}^{H-1} = \lpopmfa(\pi|\theta)$ there exist no $a \in \mathcal{A}, h\in[H]$ such that $\sum_{s \in \mathcal{V}} L_h(s, a) = 0$ and $\sum_{s \in \mathcal{V}, {a' > a}} L_h(s, a') = \alpha_h^\theta(\bids(L))$.
\end{definition}
While NZD is a technical condition, it is for instance satisfied by any entropy regularized MFG-NE of $\Mauc$ if $\tau > 0$, therefore, contains $\varepsilon$-NE for arbitrarily small $\varepsilon > 0$.
With this property, BA-MFGs satisfy \eqref{des_approximation} as shown below, making it a relevant model for auction design.
\begin{theorem}[Approximation for BA-MFG]
\label{theorem:approximation_auction}
Let $ \Mauc $ be a BA-MFG with Lipschitz-continuous $\utilityfn_h, \transitionfn_h, \alpha_h^\theta, \paymentfn_h^\theta$ and let $g:\Theta\times\Delta_{\setS\times\setA}^H\rightarrow\mathbb{R}$ be Lipschitz. 
Let $\pi \in \Pi_H$ be a policy that satisfies the no zero-dominance property with respect to $\theta \in \Theta$.
Then, for $\vecpi = (\pi, \dots, \pi) \in \Pi_H^N$,
\begin{enumerate}
    \item $\Expauc^\tau(\vecpi|\theta) \leq \Expmfa^\tau(\pi|\theta) + \mathcal{O}\left( \sfrac{1}{\sqrt{N}} \right),$ for any $\tau \geq 0$,
    \item $\left| g(\theta, \lpopmfa(\pi|\theta)) - G(\theta, \vecpi) \right| \leq \mathcal{O}\left(\sfrac{1}{\sqrt{N}}\right)$.
\end{enumerate}
\end{theorem}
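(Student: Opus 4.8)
The plan is to run a propagation-of-chaos argument adapted to the coupled, discontinuous dynamics of the auction, with the no-zero-dominance (NZD) property playing the role that Lipschitz continuity plays in Theorem~\ref{theorem:mainapprox}. Write $\{L_h\}_{h=0}^{H-1} := \lpopmfa(\pi|\theta)$ for the mean-field flow induced by $\pi$, and let $\widehat{L}_h$ be the empirical state-action distribution of the $N$-player auction when every agent plays $\pi$. The object to control is the per-bid-level \emph{winning fraction}: conditioned on the realized bids at round $h$, the mechanism allocates deterministically to the top $\lfloor \alpha_h^\theta N \rfloor$ bidders, so the fraction of agents bidding a value $a$ that win equals $\min\{1, \max\{0, \alpha_h^\theta - \sum_{a'>a}\widehat{\nu}^{-\perp}_h(a')\} / \widehat{\nu}^{-\perp}_h(a)\}$, the finite-$N$ analogue of $\pwin(\cdot,a,L_h,\alpha_h^\theta)$.

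First I would show, by induction on $h$, that $\mathbb{E}\,\|\widehat{L}_h - L_h\|_1 = \mathcal{O}(1/\sqrt{N})$. The base case is standard $\ell_1$ concentration of the $\initpop$-sampled initial states over the finite set $\setS$. For the inductive step, conditioned on the states the bids are independent across agents (all play $\pi$), so $\widehat{\nu}^{-\perp}_h$ concentrates around $\bids(L_h)$ at rate $1/\sqrt{N}$; conditioned further on the bids and the (deterministic) winner counts, the next states are independent and hence concentrate as well. Agents are coupled only through the allocation, and this is exactly where NZD enters and constitutes the main obstacle: the map from empirical bid counts to the winning fraction is discontinuous precisely at the knife-edge $\sum_{a'>a}\widehat{\nu}^{-\perp}_h(a') = \alpha_h^\theta$ with $\widehat{\nu}^{-\perp}_h(a)\approx 0$ (the $0/0$ case in the definition of $\pwin$). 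NZD guarantees that no bid level $a$ in the limiting flow simultaneously has zero mass and cumulative mass-above equal to $\alpha_h^\theta$, so for $N$ large each bid level sits strictly inside one of the ``all win,'' ``none win,'' or ``positive-mass tie-breaking'' regimes, on each of which the winning fraction is a Lipschitz function of the empirical counts (the floor introduces only an $\mathcal{O}(1/N)$ error). This confines the realized trajectory to a region where the otherwise-discontinuous allocation is locally Lipschitz, which is what lets the induction close.

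Next I would transfer this to single-agent values. For any deviation $\pi'$ played by agent $1$ against $\vecpi^{-1} = (\pi,\ldots,\pi)$, the remaining $N-1$ agents still induce a flow within $\mathcal{O}(1/\sqrt{N})$ of $\{L_h\}$, while agent $1$'s own bid perturbs any bid-level mass by only $\mathcal{O}(1/N)$. The sole place a deviation can probe a discontinuity is a bid level $a$ with $\bids(L_h)(a)=0$; but NZD on $\pi$ forces $\sum_{a'>a}\widehat{\nu}^{-\perp}_h(a')$ to remain bounded away from $\alpha_h^\theta$ there, so agent $1$'s individual winning probability concentrates to $\pwin(s,a,L_h,\alpha_h^\theta)$ for every choice of bid. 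Summing per-round reward gaps over the finite horizon $H$ and invoking Lipschitzness of $\utilityfn_h, \paymentfn_h^\theta, \transitionfn_h, \alpha_h^\theta$ then yields the uniform-in-$\pi'$ estimate $|\Jauc^{\tau,1}((\pi',\vecpi^{-1})|\theta) - \Vmfa^\tau(\{L_h\},\pi'|\theta)| \le \mathcal{O}(1/\sqrt{N})$.

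Finally I would assemble the two claims. For exploitability, apply the uniform value estimate to both the best response and to $\pi$ itself: $\max_{\pi'}\Jauc^{\tau,1}((\pi',\vecpi^{-1})|\theta) \le \max_{\pi'}\Vmfa^\tau(\{L_h\},\pi'|\theta) + \mathcal{O}(1/\sqrt{N})$ and $\Jauc^{\tau,1}(\vecpi|\theta) \ge \Vmfa^\tau(\{L_h\},\pi|\theta) - \mathcal{O}(1/\sqrt{N})$; subtracting gives $\Expauc^{\tau,1}(\vecpi|\theta) \le \Expmfa^\tau(\pi|\theta) + \mathcal{O}(1/\sqrt{N})$, and since all $i$ are exchangeable this upgrades to $\Expauc^\tau(\vecpi|\theta) = \max_i \Expauc^{\tau,i}(\vecpi|\theta) \le \Expmfa^\tau(\pi|\theta) + \mathcal{O}(1/\sqrt{N})$, establishing claim~1. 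For the objective, $G(\theta,\vecpi) = \mathbb{E}[g(\theta,\{\widehat{L}_h\})]$, so Lipschitzness of $g$ combined with $\mathbb{E}\,\|\widehat{L}_h - L_h\|_1 = \mathcal{O}(1/\sqrt{N})$ gives $|G(\theta,\vecpi) - g(\theta,\lpopmfa(\pi|\theta))| \le \mathcal{O}(1/\sqrt{N})$, which is claim~2.
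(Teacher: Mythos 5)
Your proposal follows essentially the same route as the paper's proof: an inductive $\mathcal{O}(\sfrac{1}{\sqrt{N}})$ bound on $\mathbb{E}\,\|\widehat{L}_h - L_h\|_1$, the observation that NZD confines the mean-field flow away from the knife-edge discontinuities of $\pwin$ so that the allocation map is locally Lipschitz (including at zero-mass bid levels that a deviator might probe, where NZD forces the winning probability to be locally constant), a uniform-in-deviation transfer to single-agent values, and the standard triangle-inequality assembly for claim~1 together with Lipschitzness of $g$ for claim~2. This matches the paper's three-step structure in Appendix~\ref{section:upper_bound_exploitability_proof} and the objective transfer in Appendix~\ref{section:objective_convergence}; your handling of the deviator (other $N-1$ agents stay within $\mathcal{O}(\sfrac{1}{\sqrt{N}})$ of the flow, the deviator's own mass contributes $\mathcal{O}(\sfrac{1}{N})$) is exactly the paper's Lemma~\ref{lemma:allocation_dynamics_deviation} combined with its Step~1--3 decomposition.

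One step in your sketch is misstated, and it is precisely the point the paper flags as its first key ingredient (``special handling of the correlated evolution''). You claim that ``conditioned on the bids and the (deterministic) winner counts, the next states are independent.'' They are not: ties at the threshold bid are broken uniformly at random across \emph{all} threshold bidders regardless of their private state, so the identities of the winners among agents in a given state $s$ are hypergeometric-type draws, and the post-allocation indicators $\ind{z_h^i = s}$ are correlated even after conditioning on bids and counts. Only the per-bid-level winning \emph{fraction} is deterministic given bids; the state-resolved post-allocation distribution $\widehat{\xi}_h$ is random. The concentration you need still holds, but via a different mechanism: because the number of allocated items is fixed given $\widehat{L}_h$, the winner indicators satisfy $\lfloor \alpha_h N\rfloor = \sum_i \ind{z_h^i = \perp}$ and are therefore \emph{negatively correlated}, which gives $\Varop[\widehat{\xi}_h(s)\,\vert\,\setF_h] \leq \sfrac{\widehat{\mu}_h(s)}{4N}$ and hence the $\mathcal{O}(\sfrac{1}{\sqrt{N}})$ fluctuation your induction requires (this is the argument in the paper's Lemma~\ref{lemma:state_pop_deviation}). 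With the independence claim replaced by this negative-correlation argument, your proof closes; as written, the justification for the inductive concentration step is false even though its conclusion is true.
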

\begin{proofsketch}
    The proof builds on (1) special handling of the correlated evolution of $s_h^{i}$ at any round $h$ and 
    (2) showing that for non-zero dominant policies $\pi$, the dynamics are locally Lipschitz continuous. 
    The two conclusions are proved separately in Appendices \ref{section:upper_bound_exploitability_proof} and \ref{section:objective_convergence}.
\end{proofsketch}

\Cref{theorem:approximation_auction} demonstrates convergence for a broad class of policies and relates to a large strand of work on equilibrium computation for auctions, which we discuss in \Cref{app:related_works}. While a true MF-NE does not necessarily satisfy the no zero-dominance property, an entropy-regularized MF-NE does.
In this regard, the results above show that the BA-MFG \emph{essentially} characterizes the limiting behavior of batched auctions.

\begin{remark}
In general, \Cref{theorem:approximation_auction} incorporates a standard worst-case exponential bound in $H$, depending on $w_h, \pi$.
However, in certain cases, such as non-expansive or population-independent $w_h$ and $\pi$ with full support, the bound becomes polynomial in $H, |\setS|, |\setA|$ (see \Cref{section:upper_bound_exploitability_proof}).
We later verify the quality of the bound in real-world experiments.
\end{remark}

Finally, we state the following differentiability result of $\Fomd^\tau$, thus satisfying \eqref{des_optimization} when combined with the adjoint method described in \Cref{sec:explicitdifferentiation}.
This result permits mechanism design by backpropagation for any entropy regularization $\tau > 0$, completing the motivation for BA-MFG.

\begin{lemma}[Differentiability on $\Gauc$]
\label{lemma:qauclipschitz}
Let $\Mauc$ be a BA-MFG on an open parameter space $\Theta\subset\mathbb{R}^d$, with Lipschitz $\utilityfn_h, \transitionfn_h, \alpha_h^\theta, \paymentfn_h^\theta$, then $\Fomd^\tau$ is almost everywhere differentiable on $\mathbb{R}^{[H]\times\setS\times\setA} \times \Theta$.
\end{lemma}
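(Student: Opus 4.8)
The plan is to prove that $\Fomd^\tau$ is \emph{locally Lipschitz} on an open set of full Lebesgue measure and then invoke Rademacher's theorem, which guarantees that a locally Lipschitz map between Euclidean spaces is differentiable almost everywhere. Writing $\Fomd^\tau(\vectheta, \veczeta)(h,s,a) = (1-\eta\tau)\zeta_{h,s,a} + \eta\, \qmfg^\tau_h(s,a \mid \lpop(\softmax(\veczeta)), \softmax(\veczeta), \vectheta)$, I would trace regularity through the composition. The outer affine map is smooth; $\softmax$ is real-analytic and, crucially, maps into the relative interior of $\Delta_\setA$, so every induced policy has full support. The operators $\lpop$ and $\qmfg^\tau_h$ are assembled from $\Pauc_{h,\theta}$ and $\Rauc_{h,\theta}$ by finitely many sums, products, and (bounded) backward/forward recursions, and these in turn are built from the globally Lipschitz data $\utilityfn_h,\transitionfn_h,\alpha_h^\theta,\paymentfn_h^\theta$ together with the single delicate ingredient $\pwin$. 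Since a finite sum or product of bounded, locally Lipschitz functions is locally Lipschitz, and a composition of a globally Lipschitz map with a locally Lipschitz map is locally Lipschitz (this is what keeps $\alpha_h^\theta(\bids(L))$ and $\paymentfn_h^\theta(a,\bids(L))$ well behaved despite the data being only Lipschitz), the \emph{only} possible source of non-Lipschitz behavior is $\pwin$.

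Next I would analyze $\pwin(s,a,L,\alpha) = \ind{s\neq\perp}\,\mathrm{clip}\!\big((\alpha - \sum_{s'\in\setV, a'>a} L(s',a'))/\bids(L)(a)\big)$ with $\mathrm{clip}(x) = \max\{0,\min\{1,x\}\}$. The clip map is $1$-Lipschitz, so its kinks at argument $0$ and $1$ do \emph{not} obstruct almost-everywhere differentiability: Rademacher absorbs those kink sets. The one genuine obstruction is the division, and $\pwin$ is discontinuous only where the ratio takes the indeterminate form $0/0$, i.e. where the denominator $\bids(L_h)(a) = \sum_{s\in\setV} L_h(s,a)$ and the numerator $\alpha_h^\theta(\bids(L_h)) - \sum_{s'\in\setV, a'>a} L_h(s',a')$ vanish simultaneously --- precisely the no-zero-dominance violating configuration. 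Wherever the denominator is nonzero, $\pwin$ is the clip of a smooth rational function; wherever the denominator vanishes but the numerator does not, the ratio saturates and $\pwin$ is locally constant in $\{0,1\}$; in both cases $\pwin$ is locally Lipschitz. Letting $B$ denote the exceptional set (the union over $h,a$ of the simultaneous-vanishing loci), the map $\Fomd^\tau$ is therefore locally Lipschitz on the open set $\Omega := (\mathbb{R}^{[H]\times\setS\times\setA}\times\Theta)\setminus B$.

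It remains to show $B$ has Lebesgue measure zero, which is the main obstacle. The strict positivity of $\softmax$ is essential here. Writing $L_h(s,a) = m_h(s)\,\pi_h(a\mid s)$ with state marginal $m_h(s) := \sum_a L_h(s,a)$ and $\pi_h(a\mid s) > 0$, the denominator $\bids(L_h)(a) = \sum_{s\in\setV} m_h(s)\,\pi_h(a\mid s)$ can vanish (for any, hence every, $a$) only when the whole non-$\perp$ marginal is depleted, $\sum_{s\in\setV} m_h(s) = 0$; and on that event the numerator reduces to $\alpha_h^\theta(0)$. Thus $B\subseteq \bigcup_h \big(\{\sum_{s\in\setV} m_h(s) = 0\}\cap\{\alpha_h^\theta(0)=0\}\big)$, and it suffices to show the depletion set $\{\sum_{s\in\setV}m_h(s) = 0\}$ is null for each $h$. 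Since $\mu_0\in\Delta_\setV$ gives $\sum_{s\in\setV}m_0(s) = 1$, this set is empty at $h=0$, and for $h\ge 1$ I would track the total non-$\perp$ mass along the explicit forward recursion $m_{h+1}(s') = \sum_{s,a} L_h(s,a)\,\Pauc_{h,\theta}(s'\mid s,a,L_h)$ to argue that full depletion of $\setV$ can occur only on a null set.

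Putting the pieces together, $\Fomd^\tau$ is locally Lipschitz on the open full-measure set $\Omega$, hence differentiable almost everywhere on $\Omega$ by Rademacher's theorem; since $|B| = 0$, it is differentiable almost everywhere on all of $\mathbb{R}^{[H]\times\setS\times\setA}\times\Theta$. I expect the decisive difficulty to be the measure-zero claim for $B$: because $\utilityfn_h,\transitionfn_h,\alpha_h^\theta,\paymentfn_h^\theta$ are assumed only Lipschitz, their own zero sets could a priori be fat, so the argument must route the entire singularity through the denominator $\bids(L_h)$, whose vanishing --- thanks to the strict positivity of $\softmax$ --- is the structurally rigid event of total $\perp$-depletion rather than a soft condition on the Lipschitz payoff or transition data.
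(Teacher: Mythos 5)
Your skeleton---local Lipschitz continuity of $\Fomd^\tau$ plus Rademacher's theorem---is the same one the paper uses, and your diagnosis that $\pwin$ is the only singular ingredient, with genuine discontinuities exactly at the $0/0$ (NZD-violating) configurations, is correct. The gap is precisely the step you flag as the main obstacle: the claim that the exceptional set $B$ is Lebesgue-null. Nothing in your plan can deliver it. The depletion event $\{\sum_{s\in\setV} m_h(s)=0\}$ is determined by the forward population recursion, which depends continuously on $(\veczeta,\vectheta)$ away from the singularities; if depletion were attainable at some parameter, it could perfectly well occur on a set of positive measure (an open set of $\vectheta$'s, say), so ``tracking the non-$\perp$ mass along the recursion'' has no measure-theoretic mechanism to force nullity. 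What closes the argument is structural, not measure-theoretic, and it makes $B$ \emph{empty} rather than null: the parameterization enforces $\sum_h \alpha_h^\theta \leq \alphamax < 1$ almost surely, $\mu_0 \in \Delta_\setV$ places all initial mass on active states, and $\perp$ is entered only by winning, so the active mass satisfies $\sum_{s\in\setV} m_h(s) \geq 1-\alphamax > 0$ at every round for \emph{every} $(\veczeta,\vectheta)$. Combined with the strict positivity of $\softmax$ (bounded below by some $\epsilon>0$ locally uniformly in $\veczeta$), this yields $\bids(L_h)(a) \geq (1-\alphamax)\,\epsilon > 0$ for all $a$, so the denominator in $\pwin$ never vanishes, and your two benign regimes (smooth ratio under clip; saturation when only the denominator degenerates) reduce to the single everywhere-locally-Lipschitz one.

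This is exactly the paper's route in \Cref{section:quaclipschitz}: \Cref{remark:Lipschitz_pwin_full_support} gives the Lipschitz bound on $\pwin$ of order $\sfrac{1}{(1-\alphamax)\epsilon}$ under full-support policies, \Cref{lemma:Lipschitz_lpopmfa} gives Lipschitz continuity of $\lpopmfa$ in the policy, \Cref{lemma:lipschitz_mfg_rewards_transition} transfers this to $\Rauc_h,\Pauc_h$, the final lemma of that section propagates it through the backward recursion to the entropy-regularized value and $\qmfg$-functions with explicit constants, and \Cref{theorem:rademacher} then gives almost-everywhere differentiability of $\Fomd^\tau$ on all of $\mathbb{R}^{[H]\times\setS\times\setA}\times\Theta$---no exceptional set ever appears. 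So your proposal is repairable by a one-line substitution: replace the unproven null-set argument for $B$ with the $1-\alphamax$ active-mass bound, after which your decomposition collapses to the paper's proof. As written, however, the proof is incomplete at its load-bearing step, and the proposed fix for that step would not succeed.
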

Equipped with an algorithmic tool to solve large-scale ID problems, we move to empirical demonstration on applications.

\section{Experimental Results}
\label{sec:experimental_results}

We evaluate our methodology on numerical examples of increasing complexity, using \myalgname{} to obtain gradient estimates and \textsc{Adam} \cite{Kingma2015Adam} as an update rule on parameters $\theta$.
All experiment details, including computational resources, can be found in \Cref{app:exp_details}.
We also provide reference implementations in JAX and PyTorch\footnote{The PyTorch implementation was adapted from MFGLib\cite{mfglib}.}.

First, we demonstrate the effectiveness of our approach on the prototypical MFG of the beach bar game \cite{perrin2020fictitious}.
We formulate the PMFG $\setM_{\text{bb}}$, where a large population of beachgoers starting from $\mu_0=\operatorname{Uniform}(\setS)$ can move left, stay, or move right ($\setA:\{-1,0,1\}$) on a beach ($\setS:=[K]$) over $H$ steps, while trying to minimize their distance to the bar located at $s_\text{bar}=\sfrac{K}{2}$ and avoiding busy spots.
We parameterize a pricing mechanism $\theta \in [0, \sfrac{1}{2}]^\setS$ for spots on the beach to minimize congestion (the softmax of population flow):
\begin{align*}
    R_{\text{bb}}^{h,\theta}(s,a|\vecpop, \theta) := -\frac{d(s, s_{\text{bar}})}{K} + \frac{\left|a\right|}{K} - \frac{\log L_h(s)}{3} - \theta_s, \quad g_{\text{bb}}(\theta, \vecpop) := - \sum_{h,s} \exp\{|\setS|L_h(s)\}.
\end{align*}
We report the training curves and induced flows in \Cref{fig:bb}, along with $\theta^*$ in \Cref{app:exp_details}.

\begin{figure}[H]
    \centering
    \begin{subfigure}{0.33\textwidth}
        \input{plots/beachbar/tikz.tex}
        \centering
    \end{subfigure}
    \begin{subfigure}{0.33\textwidth}
        \centering
        \includegraphics[scale=0.35]{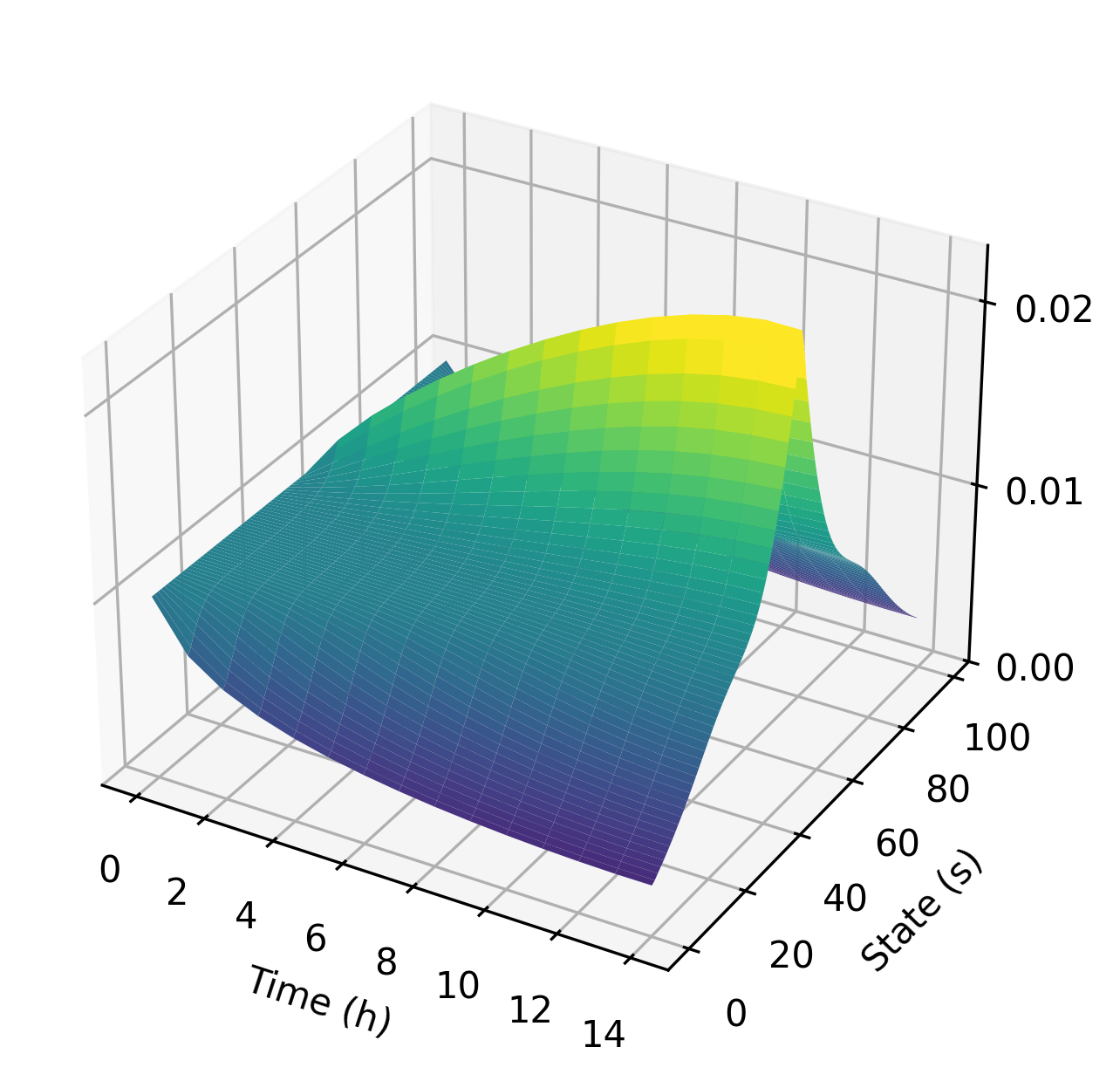}
    \end{subfigure}%
    \begin{subfigure}{0.33\textwidth}
        \centering
        \includegraphics[scale=0.35]{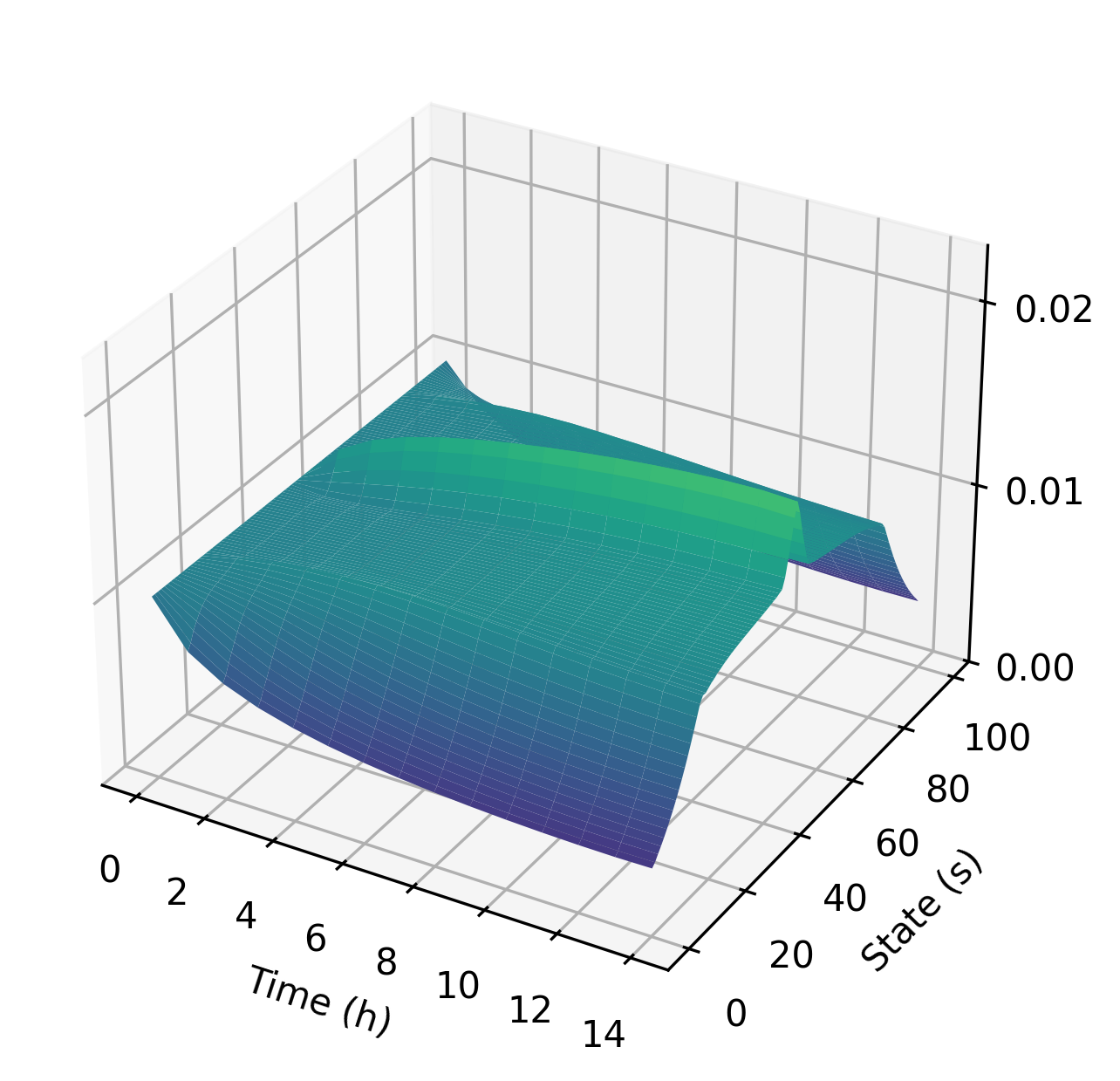}
    \end{subfigure}
    \caption{Payment design with \myalgname{} in $\setM_{\text{bb}}$. \textbf{Left: } {\color{blue} objective} and {\color{red} exploitability} throughout training iterations. \textbf{Middle-right: } population flow in time before and after learning payments.}
    \label{fig:bb}
\end{figure}

\paragraph{Dynamic auction settings.}
We move to the more challenging but relevant setting of designing neural network mechanisms BA-MFGs.
We focus on risk-neutral bidders ($u_h$ linear in payments) and on \emph{direct revelation} mechanisms, i.e. $\setS=\setA$.\footnote{The latter choice is motivated by the conceptual simplicity, widespread use in practice, and does not represent a significant restriction given the revelation principle \cite{Krishna2002AuctionTheory}.}
We set $|\setS|=100$, and maximize the revenue objective:
\begin{align*}
    \objrev(\theta, \vecpop) := \sum_{h = 0}^{H-1}\sum_{(s, a) \in \setV \times \setA}L_h(s, a)\pwin( s, a, L_h, \alpha_h^\theta(\bids(L_h))) \paymentfn_h^\theta\left(a, \bids(L_h)\right).
\end{align*}
The exact settings, labeled (A1)-(A3) are as follows:
\begin{enumerate}[leftmargin=*,noitemsep,topsep=0pt, label=\textbf{(A\arabic*)}]
    \item $H=4$, $\mu_0 = \operatorname{Uniform}(\setS)$, $\alpha_{\max}=0.8$, and single-minded bidders (after winning stay at state $\perp$) with no evolution in valuations $s_h^i$ otherwise.
    \item $H=4$, $\alpha_{\max}=0.8$, $\mu_0(s) \propto \gamma^s$ for $\gamma = 0.9$, dynamic values with $w(s'|s) \propto \exp\{\sfrac{-(3s-s')^2}{2\sigma^2}\}$ for $\sigma=0.2$, bidders are single-minded.
    \item $H=5$, $\mu_0$ uniformly sampled from $\Delta_\setS$, $\alpha_{\max}\sim \text{Uniform}([0.6, 1.2])$, participants re-enter with probability $0.3$ each round.\footnote{The setting is more challenging for two reasons. First, the neural mechanism must generalize over $\alphamax$, which it can observe. 
    Second, it must generalize over all $\mu_0$, which it does not observe, but potentially infer from $\bids(L)$. 
    This is also referred to as \textit{prior-free} mechanism design \cite{Goldberg2001Competitive,Hartline2008Optimal}}
\end{enumerate}
We parameterize $\paymentfn_h^\theta, \alpha_h^\theta$ with a residual neural network (architecture clarified in \Cref{app:exp_details}) containing $\approx 2\times 10^5$ parameters, with inputs $\vece_h, \nu_h^{-\perp},$ and remaining unsold goods at round $h$, guaranteeing by parameterization that no more than $\alphamax$ goods are sold in total.

\paragraph{Baselines.}
We evaluate \myalgname{} against several benchmarks. 
First, we compare against the results of running a simple first-price auction (\textsc{FirstPrice}), i.e., the highest bidders win and pay what they bid, to see how much more revenue we can achieve from optimizing over $\alpha^\theta_h$ and $\paymentfn^\theta_h$. 
Second, we contrast with various methods without gradient information: two methods using two-point gradient estimators ($0$-\textsc{Adam} and $0$-\textsc{SGD} respectively), and a 0-order annealing strategy (\textsc{Anneal}) using random perturbations of $\theta$.
We use $\tau=10^{-3}, \eta=10$ and $T=400$ for computing objective $G^{T}_{\textrm{approx}}$.
We report the training curves in \Cref{fig:experiments:training_curves}, where we evaluate $G^{T_{\text{val}}}_{\textrm{approx}}$ throughout training $T_{\text{val}} = 500$ for robustness.
The results indicate the effectiveness of our method against zeroth-order methods across different auctions.
Evaluations on a larger variety of settings and parameterizations (longer horizons $H$, nonlinear utilities, static mechanisms, other $g$) are also reported in \Cref{app:exp_details}.
\begin{figure}[h]
\centering
\includegraphics{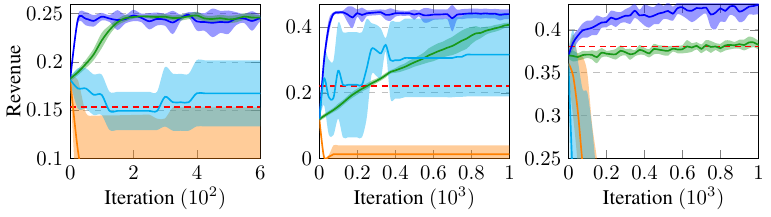}
\begin{tikzpicture}
\input{plots/legend_training_curves}
\end{tikzpicture}
\vspace{-.2cm}
\caption{$\objrev$ throughout iterations of \myalgname{} and baseline algorithms in settings (A1-3), left to right.}
\label{fig:experiments:training_curves}
\end{figure}

\paragraph{Empirically Testing \eqref{des_approximation} \& \eqref{des_optimization}.}
\Cref{fig:experiments:generalobs} illustrates that we fulfill \eqref{des_approximation} \& \eqref{des_optimization}.
Notably, the actual revenue in the $N$ player auction is very close to the optimized $\objrev$ even for $N\approx 100$.
Furthermore, the exploitability curve of OMD iterates at the optimized $\theta^*$ also suggests that the iterates are a good approximation of MF-NE, and empirically, the assumptions of \Cref{lemma:omd_diff} are valid.
Namely, OMD iterations produce a valid approximate Nash equilibrium after the end-to-end optimization process with AMID, empirically verifying that the revenue at Nash is indeed optimized.

We further provide empirical evidence supporting \eqref{des_optimization} by comparing the computational footprint of AMID against naive backpropagation through the full computational graph induced by OMD.
In \Cref{tab:experiments:computetimespace}, we report the time and memory usage of the two methods when solving (A1) with increasing time horizons $H$ on a single H100.
The results are reported for a single backpropagation step.
The modest growth in memory and computation time observed for AMID as $H$ increases highlights the scalability and practical suitability of our methodology for solving large-scale ID problems.

\begin{figure}[h]
\centering
\includegraphics{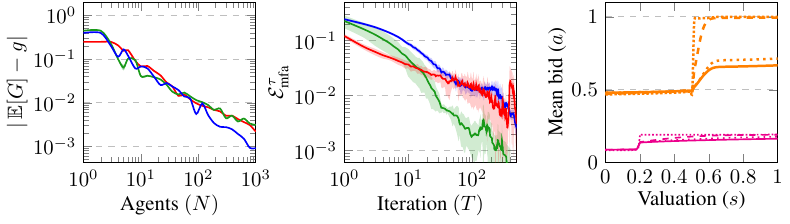}
\vspace{-.2cm}
\caption{
\textbf{Left:} deviation in revenue in $\Mauc$ vs $N$-player $\Gauc$ at $\theta^*$ as functions of $N$, and \textbf{middle:} exploitability curve of OMD iterations $\Fomd^{(T)}(\vectheta^*, \cdot)$ at optimized $\vectheta^*$ in {\color{red} (A1)}, {\color{darkgreen} (A2)}, {\color{blue} (A3)}. 
\textbf{Right:} mean bids of NE computed by $\Fomd^{(T)}$ for $h\in [4]$ {\color{magenta} before} and {\color{orange} after} optimization with \myalgname{} in $(A1)$.
}
\label{fig:experiments:generalobs}
\end{figure}

\begin{table}[ht]
\centering
\caption{Empirical compute time and memory usage for single-step naive backpropagation vs. single-step AMID across different problem horizons, in setting (A1). 
The rows with ``n/a'' indicate the method did not run on a single H100.}
\vspace{1em}
\begin{tabular}{lcccc}
\hline
\textbf{Horizon} & \textbf{Naive (time, s)} & \textbf{Naive (memory)} & \textbf{AMID (time, s)} & \textbf{AMID (memory)} \\
\hline
$H{=}5$  & $0.19 \pm 0.02$ s & 2760 MiB  & $0.09 \pm 0.01$ s  & 560 MiB  \\
$H{=}10$ & $0.25 \pm 0.10$ s & 8746 MiB  & $0.21 \pm 0.08$ s & 586 MiB  \\
$H{=}25$ & $0.71 \pm 0.15$ s & 16960 MiB & $0.67 \pm 0.12$ s  & 826 MiB  \\
$H{=}50$ & n/a & n/a & $1.72 \pm 0.41$ s  & 1076 MiB \\
\hline
\end{tabular}
\label{tab:experiments:computetimespace}
\end{table}

\section{Conclusion}
In this work, we presented a novel method for ID relying on PMFGs. 
In particular, we set forth two desiderata in order to use scalable first-order optimization to approximately solve ID problems. 
Through new analyses, we demonstrated that these conditions hold in both classical Mean Field Game (MFG) settings and batched auction environments.
For both settings, we presented a unified algorithm, called \myalgname{}, which can solve a span of ID problems, such as congestion pricing or optimal auction design.
Overall, the AMID framework offers a flexible foundation for diverse incentive design applications, paving the way for future extensions.

\begin{ack}
This project was supported by Swiss National Science Foundation (SNSF) under the framework of
NCCR Automation and SNSF Starting Grant.
V. Thoma acknowledges funding from the Swiss National Science Foundation (SNSF) Project Funding No. 200021-207343 and 
is supported by an ETH AI Center Doctoral Fellowship.
\end{ack}

\bibliographystyle{plain}
\bibliography{ref}

\begin{thebibliography}{10}

\bibitem{aurell2022optimal}
Alexander Aurell, Rene Carmona, Gokce Dayanikli, and Mathieu Lauriere.
\newblock Optimal incentives to mitigate epidemics: a stackelberg mean field
  game approach.
\newblock {\em SIAM Journal on Control and Optimization}, 60(2):S294--S322,
  2022.

\bibitem{Balseiro2015Repeated}
Santiago~R. Balseiro, Omar Besbes, and Gabriel~Y. Weintraub.
\newblock Repeated auctions with budgets in ad exchanges: Approximations and
  design.
\newblock {\em Management Science}, 61(4):864--884, 2015.

\bibitem{basar2002revenue}
Tamer Basar and Rayadurgam Srikant.
\newblock Revenue-maximizing pricing and capacity expansion in a many-users
  regime.
\newblock In {\em Proceedings. Twenty-First Annual Joint Conference of the IEEE
  Computer and Communications Societies}, volume~1, pages 294--301. IEEE, 2002.

\bibitem{bensoussan2015mean}
Alain Bensoussan, Michael~HM Chau, and Sheung Chi~Phillip Yam.
\newblock Mean field stackelberg games: Aggregation of delayed instructions.
\newblock {\em SIAM Journal on Control and Optimization}, 53(4):2237--2266,
  2015.

\bibitem{bensoussan2013mean}
Alain Bensoussan, Jens Frehse, Phillip Yam, et~al.
\newblock {\em Mean field games and mean field type control theory}, volume
  101.
\newblock Springer, 2013.

\bibitem{Bichler2021Learning}
Martin Bichler, Maximilian Fichtl, Stefan Heidekrüger, Nils Kohring, and Paul
  Sutterer.
\newblock Learning equilibria in symmetric auction games using artificial
  neural networks.
\newblock {\em Nature Machine Intelligence}, 3(8):687--695, August 2021.

\bibitem{Bichler2023Computing}
Martin Bichler, Maximilian Fichtl, and Matthias Oberlechner.
\newblock Computing bayes nash equilibrium strategies in auction games via
  simultaneous online dual averaging.
\newblock In {\em Proceedings of the 24th ACM Conference on Economics and
  Computation}, EC '23, page 294, New York, NY, USA, 2023. Association for
  Computing Machinery.

\bibitem{bosshard2017fastBNE}
Vitor Bosshard, Benedikt B\"unz, Benjamin Lubin, and Sven Seuken.
\newblock Computing {B}ayes-{N}ash equilibria in combinatorial auctions with
  continuous value and action spaces.
\newblock In {\em Proceedings of the 26th International Joint Conference on
  Artificial Intelligence (IJCAI)}, Melbourne, Australia, August 2017.

\bibitem{Bosshard2020JAIR}
Vitor Bosshard, Benedikt B\"{u}nz, Benjamin Lubin, and Sven Seuken.
\newblock Computing bayes-nash equilibria in combinatorial auctions with
  verification.
\newblock {\em Journal of Artificial Intelligence Research}, 69:531--570, 2020.

\bibitem{Cai2014Simultaneous}
Yang Cai and Christos Papadimitriou.
\newblock Simultaneous bayesian auctions and computational complexity.
\newblock In {\em Proceedings of the Fifteenth ACM Conference on Economics and
  Computation}, EC '14, page 895–910, New York, NY, USA, 2014. Association
  for Computing Machinery.

\bibitem{Canyakmaz2024Learning}
Ilayda Canyakmaz, Iosif Sakos, Wayne Lin, Antonios Varvitsiotis, and Georgios
  Piliouras.
\newblock Learning and steering game dynamics towards desirable outcomes.
\newblock In {\em Proceedings of the 7th Annual Learning for Dynamics \&
  Control Conference}, volume 283 of {\em Proceedings of Machine Learning
  Research}, pages 1512--1524. PMLR, 04--06 Jun 2025.

\bibitem{carmona2021mean}
Ren{\'e} Carmona and G{\"o}k{\c{c}}e Dayan{\i}kl{\i}.
\newblock Mean field game model for an advertising competition in a duopoly.
\newblock {\em International Game Theory Review}, 23(04):2150024, 2021.

\bibitem{carmona2022mean}
Ren{\'e} Carmona, G{\"o}k{\c{c}}e Dayan{\i}kl{\i}, and Mathieu Lauri{\`e}re.
\newblock Mean field models to regulate carbon emissions in electricity
  production.
\newblock {\em Dynamic Games and Applications}, 12(3):897--928, 2022.

\bibitem{carmona2013probabilistic}
Ren{\'e} Carmona and Fran{\c{c}}ois Delarue.
\newblock Probabilistic analysis of mean-field games.
\newblock {\em SIAM Journal on Control and Optimization}, 51(4):2705--2734,
  2013.

\bibitem{carmona2023model}
Ren{\'e} Carmona, Mathieu Lauri{\`e}re, and Zongjun Tan.
\newblock Model-free mean-field reinforcement learning: mean-field mdp and
  mean-field q-learning.
\newblock {\em The Annals of Applied Probability}, 33(6B):5334--5381, 2023.

\bibitem{Chen2022Adaptive}
Siyu Chen, Donglin Yang, Jiayang Li, Senmiao Wang, Zhuoran Yang, and Zhaoran
  Wang.
\newblock Adaptive model design for {M}arkov decision process.
\newblock In Kamalika Chaudhuri, Stefanie Jegelka, Le~Song, Csaba Szepesvari,
  Gang Niu, and Sivan Sabato, editors, {\em Proceedings of the 39th
  International Conference on Machine Learning}, volume 162 of {\em Proceedings
  of Machine Learning Research}, pages 3679--3700. PMLR, 17--23 Jul 2022.

\bibitem{chen2009settling}
Xi~Chen, Xiaotie Deng, and Shang-Hua Teng.
\newblock Settling the complexity of computing two-player nash equilibria.
\newblock {\em Journal of the ACM (JACM)}, 56(3):1--57, 2009.

\bibitem{Conitzer2004Self}
Vincent Conitzer and Tuomas Sandholm.
\newblock Self-interested automated mechanism design and implications for
  optimal combinatorial auctions.
\newblock In {\em Proceedings of the 5th ACM Conference on Electronic
  Commerce}, EC '04, page 132–141, New York, NY, USA, 2004. Association for
  Computing Machinery.

\bibitem{cui2021approximately}
Kai Cui and Heinz Koeppl.
\newblock Approximately solving mean field games via entropy-regularized deep
  reinforcement learning.
\newblock In {\em International Conference on Artificial Intelligence and
  Statistics}, pages 1909--1917. PMLR, 2021.

\bibitem{Curry2024Automated}
Michael Curry, Vinzenz Thoma, Darshan Chakrabarti, Stephen McAleer, Christian
  Kroer, Tuomas Sandholm, Niao He, and Sven Seuken.
\newblock Automated design of affine maximizer mechanisms in dynamic settings.
\newblock {\em Proceedings of the AAAI Conference on Artificial Intelligence},
  38(9):9626--9635, March 2024.

\bibitem{Curry2022Finding}
Michael Curry, Alexander~R Trott, Soham Phade, Yu~Bai, and Stephan Zheng.
\newblock Finding general equilibria in many-agent economic simulations using
  deep reinforcement learning, 2022.

\bibitem{daskalakis2009complexity}
Constantinos Daskalakis, Paul~W Goldberg, and Christos~H Papadimitriou.
\newblock The complexity of computing a nash equilibrium.
\newblock {\em Communications of the ACM}, 52(2):89--97, 2009.

\bibitem{dayanikli2023machine}
Gokce Dayanikli and Mathieu Lauriere.
\newblock A machine learning method for stackelberg mean field games.
\newblock {\em arXiv preprint arXiv:2302.10440}, 2023.

\bibitem{dEon2024Understandinga}
Greg d'Eon, Neil Newman, and Kevin Leyton-Brown.
\newblock Understanding iterative combinatorial auction designs via multi-agent
  reinforcement learning.
\newblock In {\em Proceedings of the 25th ACM Conference on Economics and
  Computation}, EC '24, page 1102–1130, New York, NY, USA, 2024. Association
  for Computing Machinery.

\bibitem{Despotakis2021First}
Stylianos Despotakis, R.~Ravi, and Amin Sayedi.
\newblock First-price auctions in online display advertising.
\newblock {\em Journal of Marketing Research}, 58(5):888--907, October 2021.

\bibitem{Fiez2020Implicit}
Tanner Fiez, Benjamin Chasnov, and Lillian Ratliff.
\newblock Implicit learning dynamics in stackelberg games: Equilibria
  characterization, convergence analysis, and empirical study.
\newblock In Hal~Daumé III and Aarti Singh, editors, {\em Proceedings of the
  37th International Conference on Machine Learning}, volume 119 of {\em
  Proceedings of Machine Learning Research}, pages 3133--3144. PMLR, 13--18 Jul
  2020.

\bibitem{Gale2001Sequential}
Ian Gale and Mark Stegeman.
\newblock Sequential auctions of endogenously valued objects.
\newblock {\em Games and Economic Behavior}, 36:74--103, 07 2001.

\bibitem{Gerstgrasser2023Oracles}
Matthias Gerstgrasser and David~C. Parkes.
\newblock Oracles \& followers: Stackelberg equilibria in deep multi-agent
  reinforcement learning.
\newblock In Andreas Krause, Emma Brunskill, Kyunghyun Cho, Barbara Engelhardt,
  Sivan Sabato, and Jonathan Scarlett, editors, {\em Proceedings of the 40th
  International Conference on Machine Learning}, volume 202 of {\em Proceedings
  of Machine Learning Research}, pages 11213--11236. PMLR, 23--29 Jul 2023.

\bibitem{Goldberg2001Competitive}
Andrew~V. Goldberg, Jason~D. Hartline, and Andrew Wright.
\newblock Competitive auctions and digital goods.
\newblock page 735 – 744, 2001.
\newblock Cited by: 199.

\bibitem{Greenwald2012Approximating}
Amy Greenwald, Jiacui Li, and Eric Sodomka.
\newblock Approximating equilibria in sequential auctions with incomplete
  information and multi-unit demand.
\newblock In F.~Pereira, C.J. Burges, L.~Bottou, and K.Q. Weinberger, editors,
  {\em Advances in Neural Information Processing Systems}, volume~25. Curran
  Associates, Inc., 2012.

\bibitem{mfglib}
X.~Guo, A.~Hu, M.~Santamaria, M.~Tajrobehkar, and J.~Zhang.
\newblock {MFGLib}: A library for mean field games.
\newblock {\em arXiv preprint arXiv:2304.08630}, 2023.

\bibitem{guo2019learning}
Xin Guo, Anran Hu, Renyuan Xu, and Junzi Zhang.
\newblock Learning mean-field games.
\newblock In H.~Wallach, H.~Larochelle, A.~Beygelzimer, F.~d\textquotesingle
  Alch\'{e}-Buc, E.~Fox, and R.~Garnett, editors, {\em Advances in Neural
  Information Processing Systems}, volume~32. Curran Associates, Inc., 2019.

\bibitem{Guo2023MESOB}
Xin Guo, Lihong Li, Sareh Nabi, Rabih Salhab, and Junzi Zhang.
\newblock Mesob: Balancing equilibria \& social optimality, 2023.

\bibitem{Hartline2008Optimal}
Jason~D. Hartline and Tim Roughgarden.
\newblock Optimal mechanism design and money burning.
\newblock In {\em Proceedings of the Fortieth Annual ACM Symposium on Theory of
  Computing}, STOC '08, page 75–84, New York, NY, USA, 2008. Association for
  Computing Machinery.

\bibitem{Huang2024Learning}
Jiawei Huang, Vinzenz Thoma, Zebang Shen, Heinrich~H. Nax, and Niao He.
\newblock Learning to steer markovian agents under model uncertainty.
\newblock In {\em The Thirteenth International Conference on Learning
  Representations}, 2025.

\bibitem{huang2006large}
Minyi Huang, Roland~P Malham{\'e}, and Peter~E Caines.
\newblock Large population stochastic dynamic games: closed-loop mckean-vlasov
  systems and the nash certainty equivalence principle.
\newblock {\em Communications in Information \& Systems}, 6(3):221--252, 2006.

\bibitem{isobe2024last}
Noboru Isobe, Kenshi Abe, and Kaito Ariu.
\newblock Last iterate convergence in monotone mean field games.
\newblock {\em arXiv preprint arXiv:2410.05127}, 2024.

\bibitem{Iyer2014Mean}
Krishnamurthy Iyer, Ramesh Johari, and Mukund Sundararajan.
\newblock Mean field equilibria of dynamic auctions with learning.
\newblock {\em Management Science}, 60(12):2949--2970, dec 2014.

\bibitem{Jeroslow1985polynomial}
Robert~G. Jeroslow.
\newblock The polynomial hierarchy and a simple model for competitive analysis.
\newblock {\em Mathematical Programming}, 32(2):146--164, June 1985.

\bibitem{Kingma2015Adam}
Diederik~P. Kingma and Jimmy Ba.
\newblock Adam: {A} method for stochastic optimization.
\newblock In Yoshua Bengio and Yann LeCun, editors, {\em 3rd International
  Conference on Learning Representations, {ICLR} 2015, San Diego, CA, USA, May
  7-9, 2015, Conference Track Proceedings}, 2015.

\bibitem{Klemperer2002How}
Paul Klemperer.
\newblock How (not) to run auctions: The european 3g telecom auctions.
\newblock {\em European Economic Review}, 46(4–5):829--845, May 2002.

\bibitem{Kohring2023Enablinga}
Nils Kohring, Fabian~Raoul Pieroth, and Martin Bichler.
\newblock Enabling first-order gradient-based learning for equilibrium
  computation in markets.
\newblock In Andreas Krause, Emma Brunskill, Kyunghyun Cho, Barbara Engelhardt,
  Sivan Sabato, and Jonathan Scarlett, editors, {\em Proceedings of the 40th
  International Conference on Machine Learning}, volume 202 of {\em Proceedings
  of Machine Learning Research}, pages 17327--17342. PMLR, 23--29 Jul 2023.

\bibitem{Krishna2002AuctionTheory}
Vijay Krishna.
\newblock {\em Auction Theory}.
\newblock Academic Press, 2002.

\bibitem{lasry2007mean}
Jean-Michel Lasry and Pierre-Louis Lions.
\newblock Mean field games.
\newblock {\em Japanese journal of mathematics}, 2(1):229--260, 2007.

\bibitem{lauriere2022learning}
Mathieu Lauri{\`e}re, Sarah Perrin, Julien P{\'e}rolat, Sertan Girgin, Paul
  Muller, Romuald {\'E}lie, Matthieu Geist, and Olivier Pietquin.
\newblock Learning in mean field games: A survey.
\newblock {\em arXiv preprint arXiv:2205.12944}, 2022.

\bibitem{ley2025less}
Ana Ley, Winnie Hu, and Keith Collins.
\newblock Less traffic, faster buses: Congestion pricing’s first week.
\newblock {\em The New York Times}, January 2025.
\newblock Accessed: 2025-15-01.

\bibitem{Li2023Achievinga}
Jiayang Li, Jing Yu, Boyi Liu, Yu~Nie, and Zhaoran Wang.
\newblock Achieving hierarchy-free approximation for bilevel programs with
  equilibrium constraints.
\newblock In {\em Proceedings of the 40th International Conference on Machine
  Learning}, ICML'23, Honolulu, Hawaii, USA, 2023. JMLR.org.

\bibitem{Li2020Enda}
Jiayang Li, Jing Yu, Yu~Nie, and Zhaoran Wang.
\newblock End-to-end learning and intervention in games.
\newblock In H.~Larochelle, M.~Ranzato, R.~Hadsell, M.F. Balcan, and H.~Lin,
  editors, {\em Advances in Neural Information Processing Systems}, volume~33,
  pages 16653--16665. Curran Associates, Inc., 2020.

\bibitem{Likhodedov2004Methods}
Anton Likhodedov and Tuomas Sandholm.
\newblock Methods for boosting revenue in combinatorial auctions.
\newblock In {\em Proceedings of the 19th National Conference on Artifical
  Intelligence}, AAAI'04, page 232–237, San Jose, California, 2004. AAAI
  Press.

\bibitem{Liu2022Inducing}
Boyi Liu, Jiayang Li, Zhuoran Yang, Hoi-To Wai, Mingyi Hong, Yu~Nie, and
  Zhaoran Wang.
\newblock Inducing equilibria via incentives: Simultaneous design-and-play
  ensures global convergence.
\newblock In S.~Koyejo, S.~Mohamed, A.~Agarwal, D.~Belgrave, K.~Cho, and A.~Oh,
  editors, {\em Advances in Neural Information Processing Systems}, volume~35,
  pages 29001--29013. Curran Associates, Inc., 2022.

\bibitem{Luo1996Mathematical}
Zhi-Quan Luo, Jong-Shi Pang, and Daniel Ralph.
\newblock {\em Mathematical Programs with Equilibrium Constraints}.
\newblock Cambridge University Press, nov 1996.

\bibitem{maharjan2013dependable}
Sabita Maharjan, Quanyan Zhu, Yan Zhang, Stein Gjessing, and Tamer Basar.
\newblock Dependable demand response management in the smart grid: A
  stackelberg game approach.
\newblock {\em IEEE Transactions on Smart Grid}, 4(1):120--132, 2013.

\bibitem{Martin2023Finding}
Carlos Martin and Tuomas Sandholm.
\newblock Finding mixed-strategy equilibria of continuous-action games without
  gradients using randomized policy networks.
\newblock In Edith Elkind, editor, {\em Proceedings of the Thirty-Second
  International Joint Conference on Artificial Intelligence, {IJCAI-23}}, pages
  2844--2852. International Joint Conferences on Artificial Intelligence
  Organization, 8 2023.
\newblock Main Track.

\bibitem{moon2018linear}
Jun Moon and Tamer Ba{\c{s}}ar.
\newblock Linear quadratic mean field stackelberg differential games.
\newblock {\em Automatica}, 97:200--213, 2018.

\bibitem{Myerson1981Optimal}
Roger~B. Myerson.
\newblock Optimal auction design.
\newblock {\em Mathematics of Operations Research}, 6(1):58--73, 1981.

\bibitem{perolat2022scaling}
Julien P{\'e}rolat, Sarah Perrin, Romuald Elie, Mathieu Lauri{\`e}re, Georgios
  Piliouras, Matthieu Geist, Karl Tuyls, and Olivier Pietquin.
\newblock Scaling mean field games by online mirror descent.
\newblock In {\em Proceedings of the 21st International Conference on
  Autonomous Agents and Multiagent Systems}, pages 1028--1037, 2022.

\bibitem{perrin2022generalization}
Sarah Perrin, Mathieu Lauri{\`e}re, Julien P{\'e}rolat, Romuald {\'E}lie,
  Matthieu Geist, and Olivier Pietquin.
\newblock Generalization in mean field games by learning master policies.
\newblock In {\em Proceedings of the AAAI Conference on Artificial
  Intelligence}, volume~36, pages 9413--9421, 2022.

\bibitem{perrin2020fictitious}
Sarah Perrin, Julien P{\'e}rolat, Mathieu Lauri{\`e}re, Matthieu Geist, Romuald
  Elie, and Olivier Pietquin.
\newblock Fictitious play for mean field games: Continuous time analysis and
  applications.
\newblock {\em Advances in Neural Information Processing Systems},
  33:13199--13213, 2020.

\bibitem{Pieroth2023Equilibrium}
Fabian~R. Pieroth, Nils Kohring, and Martin Bichler.
\newblock Equilibrium computation in multi-stage auctions and contests.
\newblock {\em ArXiv}, abs/2312.11751, 2023.

\bibitem{Rabinovich2013Computing}
Zinovi Rabinovich, Victor Naroditskiy, Enrico~H. Gerding, and Nicholas~R.
  Jennings.
\newblock Computing pure bayesian-nash equilibria in games with finite actions
  and continuous types.
\newblock {\em Artificial Intelligence}, 195:106--139, February 2013.

\bibitem{Reeves2004Computing}
Daniel~M. Reeves and Michael~P. Wellman.
\newblock Computing best-response strategies in infinite games of incomplete
  information.
\newblock In {\em Proceedings of the 20th Conference on Uncertainty in
  Artificial Intelligence}, UAI '04, page 470–478, Arlington, Virginia, USA,
  2004. AUAI Press.

\bibitem{Roughgarden2021Transaction}
Tim Roughgarden.
\newblock Transaction fee mechanism design.
\newblock In {\em Proceedings of the 22nd ACM Conference on Economics and
  Computation}, EC '21, page 792, New York, NY, USA, 2021. Association for
  Computing Machinery.

\bibitem{saldi2018markov}
Naci Saldi, Tamer Basar, and Maxim Raginsky.
\newblock Markov--nash equilibria in mean-field games with discounted cost.
\newblock {\em SIAM Journal on Control and Optimization}, 56(6):4256--4287,
  2018.

\bibitem{sanjari2025incentive}
Sina Sanjari, Subhonmesh Bose, and Tamer Ba{\c{s}}ar.
\newblock Incentive designs for stackelberg games with a large number of
  followers and their mean-field limits.
\newblock {\em Dynamic Games and Applications}, 15(1):238--278, 2025.

\bibitem{Thoma2025Computing}
Vinzenz Thoma, Vitor Bosshard, and Sven Seuken.
\newblock Computing perfect bayesian equilibria in sequential auctions with
  verification.
\newblock {\em Proceedings of the AAAI Conference on Artificial Intelligence},
  39(13):14158--14166, April 2025.

\bibitem{Thoma2023Learning}
Vinzenz Thoma, Michael Curry, Niao He, and Sven Seuken.
\newblock Learning best response policies in dynamic auctions via deep
  reinforcement learning, 2023.

\bibitem{Thoma2024Contextuala}
Vinzenz Thoma, Barna Pasztor, Andreas Krause, Giorgia Ramponi, and Yifan Hu.
\newblock Contextual bilevel reinforcement learning for incentive alignment.
\newblock In {\em Advances in Neural Information Processing Systems},
  volume~37, pages 127369--127435, 2024.

\bibitem{Vorobeychik2008Stochastic}
Yevgeniy Vorobeychik and Michael~P. Wellman.
\newblock Stochastic search methods for nash equilibrium approximation in
  simulation-based games.
\newblock In {\em Proceedings of the 7th International Joint Conference on
  Autonomous Agents and Multiagent Systems - Volume 2}, AAMAS '08, page
  1055–1062, Richland, SC, 2008. International Foundation for Autonomous
  Agents and Multiagent Systems.

\bibitem{Wang2023Differentiable}
Jing Wang, Meichen Song, Feng Gao, Boyi Liu, Zhaoran Wang, and Yi~Wu.
\newblock Differentiable arbitrating in zero-sum markov games.
\newblock In {\em Proceedings of the 2023 International Conference on
  Autonomous Agents and Multiagent Systems}, AAMAS '23, page 1034–1043,
  Richland, SC, 2023. International Foundation for Autonomous Agents and
  Multiagent Systems.

\bibitem{Wang2022Coordinating}
Kai Wang, Lily Xu, Andrew Perrault, Michael~K. Reiter, and Milind Tambe.
\newblock Coordinating followers to reach better equilibria: End-to-end
  gradient descent for stackelberg games.
\newblock {\em Proceedings of the AAAI Conference on Artificial Intelligence},
  36(5):5219--5227, June 2022.

\bibitem{wang2020breaking}
Lingxiao Wang, Zhuoran Yang, and Zhaoran Wang.
\newblock Breaking the curse of many agents: Provable mean embedding
  q-iteration for mean-field reinforcement learning.
\newblock In {\em International conference on machine learning}, pages
  10092--10103. PMLR, 2020.

\bibitem{Widmer2025Steering}
Leo Widmer, Jiawei Huang, and Niao He.
\newblock Steering no-regret agents in mfgs under model uncertainty, 2025.

\bibitem{yardim2023policy}
Batuhan Yardim, Semih Cayci, Matthieu Geist, and Niao He.
\newblock Policy mirror ascent for efficient and independent learning in mean
  field games.
\newblock In {\em International Conference on Machine Learning}, pages
  39722--39754. PMLR, 2023.

\bibitem{yardim2025variational}
Batuhan Yardim, Semih Cayci, and Niao He.
\newblock A variational inequality approach to independent learning in static
  mean-field games.
\newblock {\em ACM / IMS Journal of Data Science}, 2(2), July 2025.

\bibitem{yardim2024meanfield}
Batuhan Yardim, Artur Goldman, and Niao He.
\newblock When is mean-field reinforcement learning tractable and relevant?
\newblock In {\em Proceedings of the 23rd International Conference on
  Autonomous Agents and Multiagent Systems}, AAMAS '24, page 2038–2046,
  Richland, SC, 2024. International Foundation for Autonomous Agents and
  Multiagent Systems.

\bibitem{yardim2024exploiting}
Batuhan Yardim and Niao He.
\newblock Exploiting approximate symmetry for efficient multi-agent
  reinforcement learning.
\newblock In {\em Proceedings of the 7th Annual Learning for Dynamics \&
  Control Conference}, volume 283 of {\em Proceedings of Machine Learning
  Research}, pages 31--44. PMLR, 04--06 Jun 2025.

\bibitem{zaman2023oracle}
Muhammad Aneeq~Uz Zaman, Alec Koppel, Sujay Bhatt, and Tamer Basar.
\newblock Oracle-free reinforcement learning in mean-field games along a single
  sample path.
\newblock In {\em International Conference on Artificial Intelligence and
  Statistics}, pages 10178--10206. PMLR, 2023.

\bibitem{Zhang2024Steering}
Brian~Hu Zhang, Gabriele Farina, Ioannis Anagnostides, Federico Cacciamani,
  Stephen McAleer, Andreas Haupt, Andrea Celli, Nicola Gatti, Vincent Conitzer,
  and Tuomas Sandholm.
\newblock Steering no-regret learners to a desired equilibrium.
\newblock In {\em Proceedings of the 25th ACM Conference on Economics and
  Computation}, EC '24, page 73–74, New York, NY, USA, 2024. Association for
  Computing Machinery.

\bibitem{zhang2023learning}
Fengzhuo Zhang, Vincent~YF Tan, Zhaoran Wang, and Zhuoran Yang.
\newblock Learning regularized monotone graphon mean-field games.
\newblock {\em arXiv preprint arXiv:2310.08089}, 2023.

\bibitem{Zheng2022AI}
Stephan Zheng, Alexander Trott, Sunil Srinivasa, David~C. Parkes, and Richard
  Socher.
\newblock The ai economist: Taxation policy design via two-level deep
  multiagent reinforcement learning.
\newblock {\em Science Advances}, 8(18), May 2022.

\end{thebibliography}

\newpage
\setcounter{tocdepth}{2}
\tableofcontents

\appendix

\newpage

\section{Frequently-Used Notation}
\label{app:notation}

\begin{longtable}{ p{.23\textwidth}  p{.77\textwidth} } 
        \textbf{General Notation} & \\
        $\entropy(u)$ & $:= -\sum_x u(x) \log u(x)$, entropy \\
        $\operatorname{sigmoid}(x)$ & $:= \frac{1}{1 + e^{-x}}$, sigmoid function\\
        $\nabla f$ & $\in\mathbb{R}^{d_1\times d_2}$, Jacobian of function $f:\mathbb{R}^{d_2} \rightarrow \mathbb{R}^{d_1}$\\
        $\mathbb{S}_{D-1}$ & $:= \{x\in\mathbb{R}^D:\|x\|= 1\}$, $(D-1)$-dimensional unit sphere in $\mathbb{R}^D$\\
        $\mathbb{B}_D$ & $:= \{x\in\mathbb{R}^D:\|x\|\leq 1\}$, $D$-dimensional unit closed ball in $\mathbb{R}^D$\\
        $\mathbb{V}\text{ar}$ & variance of random variable \\
        $\vece_i$ & standard unit vector with $i$-th entry 1. \\
        $\Delta_{\setX}$ & $:= \{u\in\mathbb{R}^{\setX} : \sum_x u_x=1, u_x \geq 0\}$, probability simplex on $\setX$. \\
        $\kldiv(u|v)$& := $\sum_x u(x)\log\sfrac{u(x)}{v(x)}$, Kullback–Leibler divergence\\
        $\langle \cdot, \cdot \rangle$ & dot product.\\
        $\ind{}$ & indicator function. \\
        $\| \cdot \|_1$& $\ell_1$ norm (on Euclidean space $\mathbb{R}^D$)\\
        $\| \cdot \|_2$& $\ell_2$ norm (on Euclidean space $\mathbb{R}^D$)\\
        $\bigotimes_{i\in[N]} m^i$ & product measure \\
        $\empc{\vecx}$ & empirical counts of entries of some $\vecx\in\setX^N$,  $\empc{\vecx}:= \sfrac{1}{N}\sum_{i=1}^N \vece_{x_i}$ \\
        $\rho(A)$ & spectral radius of matrix $A\in \mathbb{R}^{D,D}$ \\
        $\pmarg_\setY (d)$ & $:= \sum_{x\in\setX} d(x, \cdot) \in \Delta_\setY$, for $d \in \Delta_{\setX\times\setY}$.\\
        $\|A\|_{p\rightarrow q}$ & $:= \sup_{\|x\|_p \leq 1} \| A x \|_q$, operator norm of $A$ with norms $\|\cdot\|_p, \|\cdot\|_q$\\
         & \\
        \textbf{Generic PMFGs} & \\
        $\mathcal{G}$ & parameterized MFG \\
        $H$ & horizon (number of rounds) of auction\\
        $\setS$ & (finite) state space\\
        $\setA$ & (finite) action space \\
        $N$ & number of players/agents \\
        $g$ & ID objective \\
        $G$ & := $\Exop[ g(\theta, \widehat{\vecpop})]$, true ID objective in $N$-player game\\
        $\Nash^\tau(\setG)$ & set of $\tau$-regularized Nash equilibria \\
        $\Theta$ & parameter space \\
        $\theta$ & $\in \Theta$, ID design parameter\\
        $\Pi_H$ & $\{\pi:[H]\times\setS \rightarrow \Delta_\setA\}$, set of finite-horizon Markovian policies.\\
        $\mu_0$ & initial state distribution \\
        $\widebar{P}_{h, \theta}$ & parameterized state transition dynamics in $N$-player DG \\
        $\widebar{R}_{h, \theta}$ & parameterized reward functions in $N$-player DG \\
        $\tau$ & $\in\mathbb{R}$ entropy regularization magnitude \\
        $\vecpi$ & $\in\Pi_H^N$, $N$-tuples of policies \\
        $\Jdg^{\tau,i}_\setG$ & expected reward of player $i$ in dynamic game $\setG$ (see \Cref{def:dg})\\
        $\Expdg^{\tau, i}_\setG$ & exploitability of player $i$ in dynamic game $\setG$ (see \Cref{def:dg}) \\
        $\Expdg^{\tau}_\setG$ & maximum exploitability in dynamic game $\setG$ (see \Cref{def:dg}) \\
        $\setM$ & parameterized mean-field game (PMFG) \\
        $P_{h, \theta}$ & parameterized state transition dynamics in PMFG \\
        $R_{h, \theta}$ & parameterized reward functions in PMFG\\
        $\gpop_h$ & one-step MFG forward flow (see \Cref{def:mfg})\\
        $\lpop$ & maps policies in $\Pi_H$ to $H$-step mean-field population flow in $\Delta_{\setS\times\setA}^H$ (see \Cref{def:mfg}) \\
         &  \\ 
        \textbf{Adjoint method:} & \\
        $\Vmfg^\tau_h$ & state value function in the PMFG \\
        $\qmfg^\tau_h $ & q-value value function (on state-action pairs) in the PMFG \\
        $\Fupdate$ & $:\mathbb{R}^{[H]\times\setS\times\setA}\times \Theta \rightarrow\mathbb{R}^{[H]\times\setS\times\setA}$ generic policy update operator for computing NE, defined in log policy space\\
        $\Fomdpi$ & mirror descent update, in policy space \\
        $\Fomd$ & mirror descent update, in log-policy space \\
         &  \\ 
        \textbf{For auctions:} & \\
        $\Gauc$ & batched Auction\\
        $\perp$ & inactive state for agent not participating in the current round\\
        $\setV$ & value space\\
        $\alpha_{\max}$ & parameter for maximal amount of goods\\
        $\alpha^{\theta}_h$ & parametrized item allocation function\\
        $\paymentfn^{\theta}_h$ & parametrized payment function\\
        $\utilityfn_h$ & utility function of bidders\\
        $\transitionfn_h$ & valuation dynamics function\\
        $P^{\text{auc}}_{h, \theta}$ & transition dynamics in batched auction\\
        $R^{\text{auc}}_{h, \theta}$ & reward functions\\
        $\Jauc^{\tau, i}$ & entropy regularized ($\tau$) sum of rewards for agent $i \in [N]$\\
        $\Expauc^{\tau, i}$ & entropy regularized ($\tau$) exploitability for agent $i \in [N]$\\
        $\Mauc$ & batched auction MFG\\
        $\Pauc_{h, \theta}$ & BA-MFG transition dynamics\\
        $\Rauc_{h, \theta}$ & BA-MFG reward functions\\
        $\nu^{-\perp}$ & action marginal of active (non-$\perp$) states; maps state-action distributions to sub-probability distributions over actions\\
        $\pwin$ & winning probability function, given a bid $a$, sold goods $\alpha$ and population bid distribution $\nu^{-\perp}$\\
        $\xi^{L, \theta}$ & post allocation state distribution\\
        $\Vmfa^{\tau}$ & entropy regularized sum of rewards for BA-MFG\\
        $\Expmfa^{\tau}$ & entropy regularized exploitability of rewards for BA-MFG\\
        $\lpopmfa$ & population operator for BA-MFG\\
        $\objrev$ & revenue objective\\ 
\end{longtable}

\section{Extended Related Works}
\label{app:related_works}

\paragraph{Related works on Equilibrium Computation in Auctions.}

As outlined in \Cref{app:auction_eq}, many real-world auctions are not strategyproof. It is thus important to evaluate them at equilibrium---both from a predictive (how bidders will likely behave), as well as from a normative (bidding recommendations) standpoint. 
While some simple formats have been solved analytically \cite{Krishna2002AuctionTheory}, existing hardness results for computing exact equilibria in auctions \cite{Cai2014Simultaneous} motivate approximate computational approaches. 
In single-round auctions, a strand of work has used iterated best-response computations to calculate equilibria \cite{Reeves2004Computing,Vorobeychik2008Stochastic,Rabinovich2013Computing,bosshard2017fastBNE,Bosshard2020JAIR}. 
Other approaches rely on gradient descent \cite{Bichler2023Computing,Kohring2023Enablinga} or deep learning \cite{Bichler2021Learning,Martin2023Finding}.
For multi-round auctions, \cite{Thoma2025Computing} compute $\varepsilon$-perfect Bayesian equilibria, using best response dynamics. 
Others have used (deep) RL to find approximate Nash equilibria \cite{Greenwald2012Approximating,Pieroth2023Equilibrium,dEon2024Understandinga,Thoma2023Learning}. 
In our work, by using mean-field approximations, we circumvent the curse of dimensionality inherent to these multi-agent RL approaches. Using mean-field approaches to solve auctions has been explored previously to some extent. \cite{Balseiro2015Repeated,guo2019learning} study specific repeated ad auctions with budget constraints, and \cite{Iyer2014Mean} studies dynamic auctions, where bidders iteratively learn about their own type.

\paragraph{Other MFG works.}
Stackelberg equilibria for MFGs have also been studied in the particular case of linear-quadratic models \cite{moon2018linear, bensoussan2015mean}.
Another relevant model in this setting is mean-field incentive design with major and minor players \cite{sanjari2025incentive}, where designing incentives for a leader is studied for the purpose of influencing a population.
In continuous time, Stackelberg MFGs have been studied in applications such as regulating carbon markets \cite{carmona2022mean}, epidemics \cite{aurell2022optimal}, and advertising markets \cite{carmona2021mean}.

\section{Automated Mechanism Design as \ref{problem:n_player_design}}
\label{app:auction_eq}
We note that in contrast to our approach many works on automated mechanism design focus on designing strategyproof auctions, i.e. auctions in which bidders bid truthful in equilibrium, relying on the so-called \textit{revelation principle}\cite{Curry2024Automated,Conitzer2004Self,Likhodedov2004Methods}. The revelation principle states that any non-strategyproof equilibrium of an auction can be implemented as an outcome equivalent strategyproof equilibrium of an adapted auction \cite{Krishna2002AuctionTheory}.
Restricting to strategyproof mechanisms bypasses the need to differentiate through an equilibrium. Instead of a problem like \eqref{problem:n_player_design} where the outer objective depends on an inner equilibrium solution, the inner solution is already known--bid truthfully--and instead the problem becomes one of constrained optimization problem, where the so-called \emph{incentive compatibility} (IC) constraints ensure that bidding truthfully is in fact an equilibrium \cite{Conitzer2004Self}.

While restricting to strategyproof mechanisms foregoes the need to differentiate through an equilibrium, many real-world auctions are not strategyproof. In fact, in 2019 Google for example deliberately changed towards non-strategyproof first price auctions for selling ads, citing the increased transparency of simple, non-strategyproof format for the bidders \cite{Despotakis2021First}. In such cases without IC constraints, the question is how bidders will respond in equilibrium and in turn designing revenue-optimal auctions becomes an instance of \eqref{problem:n_player_design}, which we tackle in \Cref{sec:auctions_maintext,sec:experimental_results}.

\section{Mean-Field Mechanism Design}
\label{app:meanfieldmechanismdesign}

\subsection{Preliminary Lemmas}

\begin{theorem}[Rademacher]
\label{theorem:rademacher}
    Let $U\subset\mathbb{R}^m$ be an open set and $f:U \rightarrow \mathbb{R}^n$ be a Lipschitz continuous map.
    Then, $f$ is almost everywhere differentiable on $U$, that is, the points on which $f$ is not differentiable on $U$ for a set of measure 0.
\end{theorem}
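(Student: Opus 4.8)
The plan is to prove Rademacher's theorem by the classical reduction from full (Fréchet) differentiability to directional derivatives, exploiting the one-dimensional theory of Lipschitz functions together with a Fubini argument. First I would reduce to the scalar-valued case. Since $f=(f^1,\dots,f^n)$ is differentiable at a point exactly when each component $f^j:U\to\mathbb{R}$ is, and since a finite union of Lebesgue-null sets is null, it suffices to prove the statement for a single Lipschitz function $f:U\to\mathbb{R}$ with constant $L$; in particular $|f(x)-f(y)|\le L\|x-y\|_2$ throughout. All integrals below are with respect to Lebesgue measure.

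Second, I would show that for each fixed unit vector $v\in\mathbb{R}^m$ the directional derivative $D_vf(x):=\lim_{t\to 0}\tfrac{f(x+tv)-f(x)}{t}$ exists for almost every $x\in U$. Indeed, along each line $\{x+tv\}$ the map $t\mapsto f(x+tv)$ is Lipschitz, hence absolutely continuous, hence differentiable for a.e.\ $t$ by Lebesgue's one-dimensional theorem. Writing the exceptional set as a subset of $U$ and integrating with Fubini over the orthogonal complement of $v$ shows that $D_vf(x)$ exists off a null set. Applying this to the coordinate directions gives the existence a.e.\ of the partial derivatives $\partial_i f$, hence of the measurable vector field $\nabla f$, with $\|\nabla f\|_2\le L$ wherever it is defined.

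Third, I would identify the directional derivative with the linear form $v\mapsto\nabla f(x)\cdot v$ almost everywhere, via a distributional (integration-by-parts) argument. For any test function $\varphi\in C_c^\infty(U)$ and fixed $v$, dominated convergence (the difference quotients are bounded by $L$) plus a change of variables yields
\begin{align*}
\int_U D_vf(x)\,\varphi(x)\,dx
= -\int_U f(x)\,D_v\varphi(x)\,dx
= -\sum_{i=1}^m v_i\int_U f(x)\,\partial_i\varphi(x)\,dx
= \int_U \big(\nabla f(x)\cdot v\big)\,\varphi(x)\,dx,
\end{align*}
where the last equality integrates each $\partial_i$ by parts back onto $f$. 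Since $\varphi$ is arbitrary, $D_vf(x)=\nabla f(x)\cdot v$ for a.e.\ $x$.

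The main obstacle is the final step: upgrading the a.e.\ existence and linearity of directional derivatives to genuine Fréchet differentiability, which is where the Lipschitz bound does essential work. I would fix a countable dense set $\{v_k\}\subset\mathbb{S}^{m-1}$ and let $A_k$ be the full-measure set on which $D_{v_k}f(x)$ exists and equals $\nabla f(x)\cdot v_k$; then $A:=\bigcap_k A_k$ has full measure. For $x\in A$ define the defect $Q(x,v,t):=\tfrac{f(x+tv)-f(x)}{t}-\nabla f(x)\cdot v$. The Lipschitz bound on $f$ and the bound $\|\nabla f(x)\|_2\le L$ give the uniform Lipschitz-in-direction estimate $|Q(x,v,t)-Q(x,v',t)|\le 2L\|v-v'\|_2$. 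Given $\varepsilon>0$, I would cover $\mathbb{S}^{m-1}$ by finitely many balls of radius $\varepsilon$ centered at points $v_{k_1},\dots,v_{k_p}$ of the dense set; by definition of $A$ there is a single $\delta>0$ with $|Q(x,v_{k_j},t)|<\varepsilon$ for all $j$ and all $0<|t|<\delta$. For an arbitrary unit $v$, choosing the nearest center gives $|Q(x,v,t)|\le|Q(x,v_{k_j},t)|+2L\varepsilon<(1+2L)\varepsilon$, i.e.\ $Q(x,v,t)\to 0$ as $t\to 0$ \emph{uniformly in} $v$. This uniformity over directions is precisely the statement that $f(x+w)=f(x)+\nabla f(x)\cdot w+o(\|w\|_2)$, so $f$ is differentiable at every $x\in A$, completing the proof. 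I would expect the compactness/finite-cover step to be the delicate point, since it is what converts pointwise (per-direction) control into the simultaneous control required for the $o(\|w\|_2)$ remainder.
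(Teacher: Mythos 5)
Your proof is correct, and it is the standard classical argument for Rademacher's theorem (essentially the Evans--Gariepy proof): reduce to scalar range, get a.e.\ existence of each directional derivative from the one-dimensional Lebesgue theorem plus Fubini, identify $D_vf$ with $\nabla f\cdot v$ a.e.\ by testing against $C_c^\infty$ functions with dominated convergence, and then upgrade to Fr\'echet differentiability via a countable dense set of directions, the uniform Lipschitz-in-direction estimate on the defect $Q$, and a finite cover of $\mathbb{S}^{m-1}$. Note, however, that the paper contains no proof of \Cref{theorem:rademacher} to compare against: it is recorded as a known classical result and invoked only to obtain the almost-everywhere differentiability claims in \Cref{lemma:differentiability}, so your write-up is strictly more than the paper supplies. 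Two small points worth tightening: (i) in Step 2, the a.e.\ existence of the partial derivatives alone only yields $\|\nabla f(x)\|_2\le\sqrt{m}\,L$, not $\|\nabla f(x)\|_2\le L$; the sharper bound follows only once you know $D_vf(x)=\nabla f(x)\cdot v$ for a dense set of unit vectors $v$ (equivalently, just run your defect estimate with the constant $(1+\sqrt{m})L$, which changes nothing downstream); (ii) the Fubini step tacitly uses that the set where the directional derivative exists is measurable, which you should justify by writing it via $\limsup$ and $\liminf$ of difference quotients over rational $t$, these being Borel functions of $x$ since $f$ is continuous. Neither point affects the correctness of the argument.
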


In some cases, an explicit differentiability assumption might be useful for PMFG dynamics, which we state below.
\begin{assumption}[Differentiability]
\label{assumption:differentiable}
    For all $s,s'\in\setS, a\in\setA$, the functions $P_\theta(s'|s,a,L), R_\theta(s,a,L)$ are differentiable on $\theta, L$.
    Furthermore, $g(\theta, \vecpop)$ is differentiable on $\theta, \vecpop$ with bounded derivatives.
\end{assumption}

In particular, the following simple result is useful for the derivation of \myalgname{}.
\begin{lemma}[Differentiability of operators]
\label{lemma:differentiability}
    The maps $\gpop, \lpop, \qmfg^\tau_h, \Vmfg^\tau_h,\Fomdpi $ as well as the map $\theta, \pi \rightarrow g(\theta, \Lambda(\pi))$ are almost everywhere differentiable under \Cref{assumption:lipschitz} and differentiable everywhere under \Cref{assumption:differentiable}.
\end{lemma}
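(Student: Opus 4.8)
The plan is to treat all six maps uniformly: each is a finite algebraic combination—built from sums, products, compositions, and (for $\Fomdpi$) an $\argmax$ over the simplex—of a small family of \emph{atoms}, namely the parameterized kernels $P_{h,\theta}, R_{h,\theta}$, the design objective $g$, the (multi-)linear coordinate maps $\pi\mapsto\pi_h(a|s)$ and $\initpop$, and the entropy functional $u\mapsto-\sum_a u_a\log u_a$. Concretely, $\gpop_h$ is a finite sum of products of $\pop$, $P_{h,\theta}$ and $\pi_h$; $\lpop$ is the $H$-fold composition of the $\gpop_h$; unrolling the Bellman recursion writes $\Vmfg^\tau_h$ and $\qmfg^\tau_h$ as finite sums of products of $\pi_h(a|s)$, $R_{h,\theta}$, $P_{h,\theta}$ and entropy terms (evaluated at a common point, so no recursive composition arises); $\theta,\pi\mapsto g(\theta,\lpop(\pi))$ is $g$ composed with $\lpop$; and $\Fomdpi$ is the simplex $\argmax$ of an expression in $\qmfg^\tau_h$ and $\pi$. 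I would then exploit that both differentiability and local Lipschitz continuity are stable under these operations.

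For the everywhere-differentiable claim under \Cref{assumption:differentiable}, I would invoke the sum, product and chain rules: $P_{h,\theta}, R_{h,\theta}, g$ are differentiable by hypothesis, the coordinate maps and $\initpop$ are smooth, and the entropy functional is $C^\infty$ on the interior of $\Delta_\setA$. The only nonroutine atom is the $\argmax$ defining $\Fomdpi$, which I would resolve in closed form. Its objective $\langle\qmfg^\tau_h(s,\cdot),u\rangle+\tau\entropy(u)-\eta^{-1}(1-\tau\eta)\kldiv(u|\pi(s))$ is strictly concave on $\Delta_\setA$ (the entropy term being strictly concave), so the maximizer is unique, interior, and given by the Gibbs form
\[
  \Fomdpi(\theta,\pi)(h,s,a)\;\propto\;\exp\{\eta\,\qmfg^\tau_h(s,a|\lpop(\pi),\pi,\theta)\}\,\pi_h(a|s)^{1-\eta\tau},
\]
which is exactly the softmax of the log-space update $\Fomd$ and is smooth in $(\qmfg^\tau_h,\pi)$ on the full-support region. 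Composing this smooth softmax with the already-differentiable $\qmfg^\tau_h$ and $\lpop$ yields differentiability of $\Fomdpi$, and likewise for $g\circ\lpop$.

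For the almost-everywhere claim under \Cref{assumption:lipschitz}, I would apply Rademacher's theorem (\Cref{theorem:rademacher}): it suffices to show each \emph{composite} map is locally Lipschitz on its bounded domain. The important conceptual point is that a.e.\ differentiability is \emph{not} stable under composition, so I would not compose pointwise a.e.\ differentiability; instead I establish Lipschitzness of the composite directly and let Rademacher deliver a.e.\ differentiability. Local Lipschitz continuity \emph{is} preserved under finite sums, products on bounded sets, and composition; the atoms $P_{h,\theta},R_{h,\theta},g$ are Lipschitz by assumption, the simplices $\Delta_\setS,\Delta_\setA$ are compact, and the Gibbs/softmax map is smooth hence Lipschitz on compacta. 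This immediately handles $\gpop_h$, $\lpop$, $g\circ\lpop$, and the $(\qmfg^\tau_h,\pi)$-dependence of $\Fomdpi$.

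The one genuinely delicate step—the main obstacle—is the entropy functional, whose gradient blows up at the boundary of $\Delta_\setA$, so it is not Lipschitz and Rademacher cannot be applied to it directly. Here I would use that in $\Vmfg^\tau_h$ and $\qmfg^\tau_h$ the entropy enters only additively and multiplicatively through the \emph{direct} policy argument $\pi_h(s)$, never inside a composition with another merely-a.e.-differentiable map. Since $u\mapsto-\sum_a u_a\log u_a$ is itself a.e.\ differentiable (failing only on the measure-zero set where a coordinate vanishes), and finite sums and products of a.e.-differentiable functions remain a.e.\ differentiable because a finite union of null sets is null, $\Vmfg^\tau_h$ and $\qmfg^\tau_h$ are a.e.\ differentiable on the entire domain and genuinely differentiable on the full-support region. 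Restricting to full-support policies—precisely the regime produced by the softmax iterates of $\Fomd$—removes this boundary issue and completes both claims.
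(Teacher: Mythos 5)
Your proposal is correct and follows the same skeleton as the paper's proof: write each operator as a finite composition/sum/product of the atomic maps $P_{h,\theta},R_{h,\theta},g$, policy coordinates, and entropy; conclude differentiability under \Cref{assumption:differentiable} by the chain rule, and a.e.\ differentiability under \Cref{assumption:lipschitz} by Lipschitzness-on-bounded-sets plus Rademacher (\Cref{theorem:rademacher}). Where you genuinely depart from the paper is in noticing that this blanket argument is not literally valid for the entropy-bearing maps: the paper asserts all listed operators are ``Lipschitz on every bounded domain,'' but $u\mapsto -\sum_a u_a\log u_a$ fails to be Lipschitz at the boundary of $\Delta_\setA$, so $\Vmfg^\tau_h$, $\qmfg^\tau_h$, and (via the closed-form Gibbs solution $\Fomdpi(\theta,\pi)(h,s,a)\propto \pi_h(a|s)^{1-\eta\tau}\exp\{\eta\,\qmfg^\tau_h(s,a|\cdot)\}$, whose power $\pi_h(a|s)^{1-\eta\tau}$ is only H\"older at zero when $\eta\tau\in(0,1)$) $\Fomdpi$ are not globally Lipschitz on the closed policy simplex when $\tau>0$, and Rademacher cannot be invoked for them directly. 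Your repair---observing that entropy enters only additively and multiplicatively through the direct policy argument, that a.e.\ differentiability is closed under finite sums and products (finite unions of null sets are null) even though it is not closed under composition, and that the simplex boundary where the atoms misbehave is null---is exactly what is needed, and your explicit resolution of the $\argmax$ in $\Fomdpi$ (strict concavity, interior maximizer, smoothness of the softmax on the full-support region, agreement with $\Fomd$) supplies a step the paper leaves implicit. So your route buys rigor at the precise point where the paper's one-line proof has a small gap, at the cost of length; the paper's version buys brevity and is salvageable by your argument without changing the statement.
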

\begin{proof}
    This result is a straightforward result of the definitions of the mentioned operators, in particular, when \Cref{assumption:differentiable} is taken, the mentioned operators are also differentiable as they are the compositions of differentiable functions. 
    In the case where only \Cref{assumption:lipschitz} holds, the above-mentioned functions are also Lipschitz on every bounded domain, which implies by \Cref{theorem:rademacher} that they are almost everywhere differentiable.
\end{proof}

Finally, we state the following standard lemma from past work on the approximation of MFG dynamics by finite player games.

\begin{lemma}
\label{app:main_technical_lemma_lipschitz}
Assume that the conditions of \Cref{theorem:mainapprox} hold for the PMFG $\setM$ and DG $\setG$, let $\pi, \widebar{\pi}\in \Pi_H$ be two arbitrary policies, and let $\theta\in\Theta$ be fixed.
Let $\vecpop^{\widebar{\pi}} = \{\pop_h^{\widebar{\vecpi}}\}_h = \lpop(\widebar{\pi}|\theta)$ be the population flow induced by $\widebar{\pi}$ on the PMFG with fixed parameter $\theta$.
Take the trajectories $s_h^i, a_h^i, \widehat{L}_h$ induced by the DG with parameter $\theta$ and policy profile $(\pi, \widebar{\pi}, \ldots, \widebar{\pi}) \in \Pi_H^N$.
Then,
    \begin{align*}
        \Exop [ \|\widehat{\pop}_h -\pop_h^{\widebar{\pi}}\|_1 ] \leq
            \frac{1-\lpopmu^{h+1}}{1-\lpopmu}
            |\setS| |\setA|\sqrt{\frac{2}{N}} + \frac{1}{N}
            \sum_{i=0}^{h-1} \lpopmu^{h-i-1}
            \Delta_h,
    \end{align*}
where $\Delta_h := \sup_s \| \widebar{\pi}_h(\cdot|s) - \pi_h(\cdot|s)\|_1$, and $\lpopmu$ is a uniform bound on the Lipschitz moduli of $\Gamma_h$ in $L$.
Furthermore, denoting the random variables $s_h, a_h$ as the distributions of state-action pairs in the PMFG dynamics with population flow $\vecpop^{\widebar{\pi}}$ and policy $\pi$,
\begin{align*}
    \| \Prob[s_h=\cdot, a_h=\cdot] - \Prob[s_h^1=\cdot, a_h^1=\cdot] \|_1 \leq K_L \sum_{h'=0}^h\Exop [ \|\widehat{\pop}_h -\pop_h^{\widebar{\pi}}\|_1 ],
\end{align*}
where $K_L$ is the Lipschitz modulus of transition dynamics $P_{\theta,h}$ in $L$.
\end{lemma}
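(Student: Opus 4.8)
The plan is to prove both bounds by induction on $h$, using the standard propagation-of-chaos decomposition together with the conditional independence of one-step transitions under the product kernel $\widebar{P}_{h,\theta} = \bigotimes_i P_{h,\theta}(\cdot\,|\,s^i,a^i,\empc{\vecs,\veca})$ (this is exactly the product-measure hypothesis of \Cref{theorem:mainapprox}). Throughout I write $e_h := \mathbb{E}[\|\widehat{\pop}_h - \pop_h^{\widebar{\pi}}\|_1]$ and let $\mathcal{F}_h$ denote the $\sigma$-algebra generated by all states and actions through round $h$.

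For the first bound, I would decompose the one-step error, using $\pop_{h+1}^{\widebar{\pi}} = \gpop_h(\pop_h^{\widebar{\pi}}, \widebar{\pi}_{h+1}|\theta)$, into three pieces: (i) a fluctuation term $\widehat{\pop}_{h+1} - \mathbb{E}[\widehat{\pop}_{h+1}\,|\,\mathcal{F}_h]$; (ii) a policy-mismatch term $\mathbb{E}[\widehat{\pop}_{h+1}\,|\,\mathcal{F}_h] - \gpop_h(\widehat{\pop}_h, \widebar{\pi}_{h+1}|\theta)$; and (iii) a propagated term $\gpop_h(\widehat{\pop}_h, \widebar{\pi}_{h+1}|\theta) - \gpop_h(\pop_h^{\widebar{\pi}}, \widebar{\pi}_{h+1}|\theta)$. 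Piece (iii) is controlled by the uniform Lipschitz modulus $\lpopmu$ of $\gpop_h$ in $L$, giving $\lpopmu\|\widehat{\pop}_h - \pop_h^{\widebar{\pi}}\|_1$. Piece (ii) is nonzero only because agent $1$ follows $\pi$ rather than $\widebar{\pi}$: since the conditional mean of $\widehat{\pop}_{h+1}$ equals $\frac1N\sum_i P_{h,\theta}(\cdot\,|\,s_h^i,a_h^i,\widehat{\pop}_h)\,\pi_{h+1}^i$ with $\pi^1=\pi$ and $\pi^i=\widebar{\pi}$ otherwise, only the index $i=1$ survives the subtraction, and its $L^1$ mass is at most $\tfrac1N\Delta_{h+1}$. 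Piece (i) carries the $N^{-1/2}$ scaling: conditioned on $\mathcal{F}_h$ the next state–action pairs are independent across agents, so $\widehat{\pop}_{h+1}$ concentrates around its conditional mean, and a coordinatewise second-moment (Jensen) bound, summed over $\setS\times\setA$, yields a term of order $|\setS||\setA|\sqrt{2/N}$. Taking expectations and combining the three pieces gives the linear recursion $e_{h+1} \le \lpopmu\, e_h + |\setS||\setA|\sqrt{2/N} + \tfrac1N\Delta_{h+1}$, with base case $e_0 \le |\setS||\setA|\sqrt{2/N}$ (plus the analogous $h=0$ mismatch); unrolling and summing the geometric series in $\lpopmu$ reproduces the claimed expression.

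For the second bound, I would first observe that because the tagged agent $1$ in $\setG$ and the representative MFG agent use the \emph{same} policy $\pi$, the state–action law difference collapses to the state-marginal difference: writing $m_h, m_h^1$ for the two state marginals, $\|\mathbb{P}[(s_h,a_h)=\cdot]-\mathbb{P}[(s_h^1,a_h^1)=\cdot]\|_1 = \|m_h - m_h^1\|_1 =: d_h$. I would then induct on $d_h$ by inserting the intermediate kernel $P_{h,\theta}(\cdot\,|\,s_h^1,a_h^1,\pop_h^{\widebar{\pi}})$: the gap between using the random $\widehat{\pop}_h$ and the deterministic $\pop_h^{\widebar{\pi}}$ is bounded, via the Lipschitz modulus $K_L$ of $P_{h,\theta}$ in $L$, by $K_L\,\mathbb{E}\|\widehat{\pop}_h-\pop_h^{\widebar{\pi}}\|_1 = K_L e_h$, while the remaining gap is the image of $(m_h-m_h^1)$ under a fixed stochastic kernel and hence contracts in $L^1$ to at most $d_h$. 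This yields $d_{h+1} \le d_h + K_L e_h$ with $d_0 = 0$, and unrolling gives $d_h \le K_L\sum_{h'=0}^{h-1} e_{h'} \le K_L\sum_{h'=0}^{h} e_{h'}$, which is the stated inequality.

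The main obstacle is piece (i) of the first bound: making the fluctuation estimate rigorous requires conditioning carefully on $\mathcal{F}_h$, invoking the conditional independence of transitions guaranteed by the product structure of $\widebar{P}_{h,\theta}$ (precisely the hypothesis that fails for auctions and forces the separate analysis of \Cref{theorem:approximation_auction}), and then applying the variance argument uniformly over the random conditional mean to extract the $|\setS||\setA|\sqrt{2/N}$ factor. Isolating the single deviating agent cleanly in piece (ii) — so that the mismatch contributes only $O(\tfrac1N)$ rather than $O(1)$ — also demands care, but once the decomposition is in place both inductions are routine.
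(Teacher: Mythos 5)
Your proposal is correct and follows essentially the same route as the paper: the paper's own proof of this lemma is a one-line deferral to the propagation-of-chaos argument of Theorem 3.2 in \cite{yardim2024meanfield} (extended to dynamics depending on the joint state-action flow), and your three-term decomposition (conditional fluctuation via the product-kernel independence, the single deviating agent contributing $\mathcal{O}(\sfrac{\Delta_{h+1}}{N})$, and Lipschitz propagation through $\gpop_h$) together with the kernel-insertion induction $d_{h+1}\leq d_h + K_L e_h$ for the tagged agent's law is precisely that argument written out. Your unrolling also implicitly fixes the index typo in the stated bound (the $\Delta_h$ inside the sum should be $\Delta_i$), and correctly identifies the product-measure hypothesis as the step that fails for auctions.
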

\begin{proof}
    The proof is a straightforward extension of the approximation results due to \cite{yardim2024meanfield}, to the case where transition dynamics and rewards also depend on the mean-field flow over actions.
    See in particular Theorem 3.2 in \cite{yardim2024meanfield}.
\end{proof}

\subsection{Proof of \texorpdfstring{\Cref{theorem:mainapprox}}{} }

First, we present the bound on exploitability.
As in the setting of \Cref{app:main_technical_lemma_lipschitz}, let $\vecpop^{\widebar{\pi}} = \{\pop_h^{\widebar{\vecpi}}\}_h = \lpop(\widebar{\pi}|\theta)$ be the population flow induced by $\widebar{\pi}$ on the PMFG with fixed parameter $\theta$.
Take the trajectories $s_h^i, a_h^i, \widehat{L}_h$ induced by the DG with parameter $\theta$ and policy profile $\vecpi:=(\pi, \widebar{\pi}, \ldots, \widebar{\pi}) \in \Pi_H^N$.
Similarly, denote the random variables $s_h, a_h$ as the distributions of state-action pairs in the PMFG dynamics with population flow $\vecpop^{\widebar{\pi}}$ and policy $\pi$.
\begin{align*}
    &|\Vmfg^\tau_\setM \left( \vecpop^{\widebar{\pi}}, \pi | \theta \right) -  \Jdg^{\tau,1}_\setG(\vecpi | \theta)| \\
    &= \Big|\sum_{h,s,a} \Prob[s_h=s, a_h=a] (R_{h,\theta}(\pop_h^{\widebar{\pi}},s,a|\theta) + \tau \entropy(\widebar{\pi}_h(s)) ) \\
        &\quad - \sum_{h,s,a,L} \Prob[s_h=s, a_h=a, \widehat{L}_h=L] (R_{h,\theta}(L,s,a|\theta) + \tau \entropy(\widebar{\pi}_h(s)) ) \Big | \\
    &\leq \Big|\sum_{h,s,a,L} \Prob[s_h=s, a_h=a, \widehat{L}_h=L] (R_{h,\theta}(L,s,a|\theta) - R_{h,\theta}(\pop_h^{\widebar{\pi}},s,a|\theta))\Big| \\
        &\quad + |\sum_{h,s,a} ( \Prob[s_h^i=s, a_h^i=a] - \Prob[s_h=s, a_h=a] )(R_{h,\theta}(\pop_h^{\widebar{\pi}},s,a|\theta) + \tau\entropy(\widebar{\pi}_h(s))) |.
\end{align*}
Since $|R_{h,\theta}(\pop_h^{\widebar{\pi}},s,a|\theta) + \tau\entropy(\widebar{\pi}_h(s))| \leq 1 + \tau \log|\setA|$, and $R_{h,\theta}$ is Lipschitz in $L$ (say with modulus $K$), it holds that
\begin{align*}
    &|\Vmfg^\tau_\setM \left( \vecpop^{\widebar{\pi}}, \pi | \theta \right) - \Jdg^{\tau,1}_\setG(\vecpi | \theta)| \\
    &\leq K \sum_{h}\Exop[\|\widehat{\pop}_h -\pop_h^{\widebar{\pi}}\|_1 ] + \sum_h \| \Prob[s_h=\cdot, a_h=\cdot] - \Prob[s_h^1=\cdot, a_h^1=\cdot] \|_1 (1+\tau \log |\setA|) \leq \mathcal{O}(\sfrac{1}{\sqrt{N}}),
\end{align*}
by an application of \Cref{app:main_technical_lemma_lipschitz}.
The corresponding bound on $\Expdg^\tau_\setG(\vecpi^*|\theta)$ is obtained by maximizing $\widebar{\pi}$ as the best response to the population strategy profile in the DG, and setting $\pi$ to be an MFG-NE $\pi^*$ for the parameter $\theta$.

We also prove a similar bound for the objective value.
Assume now that all $N$ players play policy $\pi^*$ in the DG.
Then, since $g$ is Lipschitz continuous,
\begin{align*}
    |G(\theta, \vecpi^*) - g(\theta, \lpop(\pi^*|\theta))| &\leq \Exop [ | g(\theta, \{\widehat{L}_h\}_{h=0}^{H-1}) - g(\theta, \{\widehat{L}_h\}_{h=0}^{H-1})| ] \\
    &\leq  \Exop [ \sum_h K_h \| \widehat{L}_h - \pop_h^{\widebar{\pi}} \|_1 ],
\end{align*}
where we denote the Lipschitz modulus of $g$ with $L_h$ as $K_h$ (with respect to norm $\|\cdot\|_1$).
An application of the technical lemma \Cref{app:main_technical_lemma_lipschitz} yields then the $\mathcal{O}(\sfrac{1}{\sqrt{N}})$ upper bound.

\subsection{Proof of \texorpdfstring{\Cref{lemma:omd_diff}}{}}

We first prove that if $\lim_{T \rightarrow \infty} \Fomd^{(T)} (\vectheta', \veczeta)$ exists and $\Fomd^\infty(\vectheta', \veczeta) = \veczeta ^*$, the $\softmax(\veczeta ^*) \in \Nash^\tau_\setM(\vectheta')$.
Since $\Fomd$ is a continuous function, it must hold that
\begin{align*}
    \Fomd(\vectheta', \veczeta^*) = \Fomd(\vectheta', \lim_{T \rightarrow \infty} \Fomd^{(T)} (\vectheta', \veczeta)) = \Fomd^\infty(\vectheta', \veczeta) = \veczeta^*,
\end{align*}
therefore $\Fomd(\vectheta', \veczeta^*) = \veczeta^*$.
Denoting $\pi^*_h(\cdot|s) := \softmax(\veczeta^*(h,s,\cdot))$, we then have the relations
\begin{align*}
    \Fomdpi(\theta, \veczeta^*)(h, s,a) &:= (1 - \eta\tau) \veczeta^*(h,s,a) + \eta \qmfg^\tau_h (s,a | \lpop(\pi^*), \pi^*, \theta), \\
    \Fomdpi(\theta, \pi)(h, s) &:= \argmax_{u \in \Delta_\setA} \langle \qmfg^\tau_h (s,\cdot | \lpop(\pi), \pi, \theta), u\rangle + \tau \entropy(u) - \eta^{-1}(1 - \tau\eta) \kldiv(u | \pi(s)).
\end{align*}
Then, for any $h$, it holds that $\veczeta^*(h,s,a) = \tau^{-1}\qmfg^\tau_h (s,a | \lpop(\pi^*), \pi^*, \theta)$.
We show that $\pi^*$ is then the best response to $\lpop(\pi^*)$ by backward induction.
Denote for convenience $\vecpop^* := \{\pop_h^*\}_h = \lpop(\pi^*)$.

At time $H-1$, $\veczeta^*(H-1,s,a) = \tau^{-1}R_{h,\theta}(s,a, \pop_{H-1}^*)$ and $\pi^*_{H-1}(a|s) = \frac{\exp\{ \tau^{-1}R_{h,\theta}(s,a, \pop_{H-1}^*)\}}{\sum_{a'}  \tau^{-1}R_{h,\theta}(s,a', \pop_{H-1}^*)}$, therefore, by first order optimality conditions, $\pi^*_{H-1}(\cdot|s)$ maximizes (uniquely) the strongly concave function $u \rightarrow \langle u, \tau^{-1}R_{h,\theta}(s,a, \pop_{H-1}^*)\rangle +\tau\entropy(u)$ on the simplex $\Delta_\setA$.
Hence, at every state $s$ at time $H-1$, the policy $\pi^*_{H-1}(\cdot|s)$ is optimal.
Assume now $\pi^*_{h'}(\cdot|s)$ is optimal for all $h' \geq h$ for some $h \in [H]$.
Then, by the inductive assumption, $\qmfg^\tau_{h'} (s,a | \lpop(\pi^*), \pi^*, \theta)$ is the optimal regularized q function for all $h' \geq h$.
Since $\veczeta^*(h,s,a) = \tau^{-1}\qmfg^\tau_{h} (s,a | \lpop(\pi^*), \pi^*, \theta)$, once again by first order optimality conditions, $\pi^*_h$ is also the optimal policy at time $h$.

We move on to the convergence of derivatives.
Firstly, since $q_h^\tau$ is given to be $C^1$ in a neighborhood of $(\theta, \zeta^*)$ and $\rho(\partial_{\veczeta} \qmfg_{\cdot}^\tau(\cdot,\cdot|\lpop(\softmax(\veczeta^*)), \softmax(\veczeta^*), \vectheta )) < \tau$, there exists an open neighborhood $\widebar{U}$ of $(\theta, \zeta^*)$ where $\rho(\partial_{\veczeta} \qmfg_{\cdot}^\tau(\cdot,\cdot|\lpop(\softmax(\veczeta')), \softmax(\veczeta'), \vectheta' )) < \tau - \delta$ for all $(\vectheta', \veczeta')\in \widebar{U}$ for some $\delta_1>0$.
Then, since 
\begin{align*}
    \Fomd(\vectheta, \veczeta)(h, s, a) := (1 - \eta \tau) \zeta_{h, s,a} + \eta  \qmfg^\tau_h (s,a | \lpop(\softmax(\veczeta)), \softmax(\veczeta), \vectheta),
\end{align*}
on $\widebar{U}$ it also holds that $\rho(\partial_{\veczeta}\Fomd(\vectheta, \veczeta)) < 1 - \delta_2$ for some $\delta_2 > 0$.
On $\widebar{U}$, the map $\Fomd^\infty$ is implicitly defined by $\Fomd(\vectheta, \Fomd^\infty(\theta, \zeta)) = \Fomd^\infty(\theta, \zeta)$, therefore by the implicit function theorem $\Fomd^\infty$ is differentiable in $\theta$ and 
\begin{align*}
    \nabla_\theta \Fomd^\infty(\theta, \zeta) = (I - \partial_\zeta \Fomd(\theta, \Fomd^\infty(\theta, \zeta)))^{-1} \partial_\theta\Fomd(\theta, \zeta), \quad \forall (\theta, \zeta) \in \widebar{U}.
\end{align*}

Fixing some direction $v\in\Theta$, define inductively the iterates
\begin{align*}
    \zeta_0 &:= \zeta', \quad \zeta_{t+1} := \Fomd(\theta, \zeta_t) \\
    x_0 &:= 0, \quad x_{t+1} := \partial_\zeta \Fomd(\theta, \zeta_t) x_{t} +  \partial_\theta \Fomd(\theta, \zeta_t) v. 
\end{align*}
Note that by the chain rule, $x_{t+1} = \nabla_u( \Fomd^{(T)}(\theta, \zeta') ) v $, therefore, if we show that $\lim_{t\rightarrow \infty} x_t = \nabla_\theta( \Fomd^{(\infty)}(\theta, \zeta') ) v$ for any choice of $v$, we are done.
Firstly, by the assumptions of the lemma, for sufficiently large $T_0$, it holds that $\zeta_t\in\widebar{U}$ for all $t > T_0$, and $\zeta^* = \lim_{t\rightarrow\infty} \zeta_t$.
Defining $x^* := (I -  \partial_\zeta \Fomd(\theta, \zeta^*))^{-1} \partial_\theta \Fomd(\theta, \zeta_t) v$, which satisfies
$x^* := \partial_\zeta \Fomd(\theta, \zeta^*)x^* + \partial_\theta \Fomd(\theta, \zeta_t) v$.
Therefore,
\begin{align*}
     \|x_{t+1} - x^*\| = &\|\partial_\zeta \Fomd(\theta, \zeta_t) x_{t} - \partial_\zeta \Fomd(\theta, \zeta^*) x^* +  \partial_\theta \Fomd(\theta, \zeta_t) v - \partial_\theta \Fomd(\theta, \zeta^*) v\| \\
     \leq & \|\partial_\zeta \Fomd(\theta, \zeta_t) x_{t} - \partial_\zeta \Fomd(\theta, \zeta^*) x_{t} \| 
      + \| \partial_\zeta \Fomd(\theta, \zeta^*) x_{t} - \partial_\zeta \Fomd(\theta, \zeta^*) x^*\| \\
     & + \| \partial_\theta \Fomd(\theta, \zeta_t) v - \partial_\theta \Fomd(\theta, \zeta^*) v\| ,
\end{align*}
which proves that $x_t \rightarrow x^*$.
By the implicit function theorem, $x^*$ is the gradient of $\Fomd^{(\infty)}$, concluding the proof.

\subsection{Discussion of Assumptions of \texorpdfstring{\Cref{lemma:omd_diff}}{}}

We briefly discuss the ramifications of \Cref{lemma:omd_diff}.
Firstly, the result is useful only assuming that the OMD iterates converge to an NE of the MFG.
NE computation in MFGs is a well-studied research topic on its own, and several positive results are known for various classes of MFGs.
We take this for granted in this lemma, as NE computation is not the main goal.

Next, we note that the lemma suggests that the derivative of \eqref{eq:tstepobjfunction} is a valid first-order oracle provided that the Jacobian of $\Fomd$ has bounded spectral radius around the NE induced by a $\theta$.
Importantly, for any PMFG, $T$-step objective approximates the fixed-point gradient provided that $\tau$ is sufficiently large.
For more structured settings, the result can be strengthened to permit $\tau=0$, for instance, the derivations in \cite{Li2020Enda} readily extend to the case where $H=1$, $|\setS|=1$, and the reward function $R$ is monotone in population distribution.
We leave as future work to generalize this for monotone PMFGs with dynamics (i.e., for PMFGs with $H>1$).

\subsection{Proof of \Cref{lemma:adjoint}}
\label{sec:appendix_adjoint}

Let $\Theta = \mathbb{R}^{D_1}, Z = \mathbb{R}^{D_2}$, and assume the functions $F:\Theta \times Z \rightarrow Z$ and $g:\Theta \times Z \rightarrow \mathbb{R}$ are differentiable.
We define $\ell: \Theta \rightarrow \mathbb{R}$ 
\begin{align}
    \ell(\vectheta) &:= g( \vectheta, \veczeta_{T+1}(\vectheta) ), \notag \\
    \veczeta_{t+1}(\vectheta) &:= (1 - \tau \eta) \veczeta_{t}(\vectheta) + \eta F(\vectheta, \veczeta_{t}(\vectheta)), \quad \forall t = 0, \ldots, T. \label{eq:problem_def_adjoint}
\end{align}
for some $\veczeta_0$ constant.
Clearly $\ell(\vectheta)$ is differentiable; our goal is to efficiently compute $\nabla_\vectheta\ell(\vectheta)$.

Define for any sequence of (row) vectors $\{\veca_t\}_{t=0}^T$ for $\veca_t \in \Theta$ the function 
\begin{align*}
    L(\vectheta, \left\{\veca_t\right\}_{t=0}^T ) := & g\left(\vectheta, \veczeta_{T+1}(\vectheta)\right)+\sum_{t=0}^T \veca_t \cdot\left(\veczeta_{t+1}-(1 - \eta\tau) \veczeta_t(\vectheta) - \eta F\left(\vectheta , \veczeta_t(\vectheta)\right)\right) \\
    = & g\left(\vectheta, \veczeta_{T+1}(\vectheta)\right) + \veca_T \cdot \veczeta_{T+1}(\vectheta) - (1 - \eta\tau) \veca_0 \cdot \veczeta_0(\vectheta) - \eta \sum_{t=0}^T \veca_t \cdot  F\left(\vectheta, \veczeta_t(\vectheta) \right) \\
        & + \sum_{t=1}^T \veczeta_t \cdot (\veca_{t-1} - (1-\eta\tau)\veca_t).
\end{align*}
Since for any $\{\veca_t\}_{t=0}^T$ it holds that $L(\vectheta, \left\{\veca_t\right\}_{t=0}^T ) = \ell (\theta)$, we have the identities
\begin{align*}
    \partial_\vectheta L(\vectheta, \left\{\veca_t\right\}_{t=0}^T ) &= \nabla_\vectheta \ell(\theta), \quad \partial_{\veca_t} L(\vectheta, \left\{\veca_t\right\}_{t=0}^T ) = 0
\end{align*}
Therefore, for any sequence of functions $\veca_t(\vectheta)$ that depend on $\vectheta$, we have
\begin{align*}
    \nabla_\vectheta\ell(\vectheta) = \partial_\vectheta L(\vectheta, \left\{\veca_t\right\}_{t=0}^T ) + \sum_t (\nabla_{\vectheta}\veca_t) \frac{\partial}{\partial \veca_t} L(\vectheta, \left\{\veca_t\right\}_{t=0}^T ) = \partial_\vectheta L(\vectheta, \left\{\veca_t\right\}_{t=0}^T ).
\end{align*}

We will therefore compute $\partial_\vectheta L(\vectheta, \left\{\veca_t\right\}_{t=0}^T )$ with a suitable choice of $\veca_t$.
By simple derivation:
\begin{align*}
    \partial_\vectheta L(\vectheta, \left\{\veca_t\right\}_{t=0}^T ) = & \partial_{\veczeta} g(\vectheta, \veczeta_{T+1}(\vectheta)) \nabla_{\vectheta} \veczeta_{T+1}(\vectheta) + \partial_{\vectheta} g(\vectheta, \veczeta_{T+1}(\vectheta))  \\
        & + \veca_T \nabla_{\vectheta} \veczeta_{T+1}(\vectheta) - (1-\eta\tau) \veca_0 \nabla_{\vectheta} \veczeta_{0}(\vectheta) \\
        & - \eta \sum_{t=0}^T \veca_t \partial_{\vectheta} F (\vectheta, \veczeta_{t}(\vectheta)) - \eta \sum_{t=0}^T \veca_t \partial_{\veczeta} F (\vectheta, \veczeta_{t}(\vectheta)) \nabla_{\vectheta} \veczeta_{t}(\vectheta)  \\
        & + \sum_{t=1}^T  (\veca_{t-1} - (1-\eta\tau)\veca_t) \nabla_{\vectheta} \veczeta_{t}(\vectheta)
\end{align*}
Since $\nabla_\vectheta \veczeta_0 = 0$, as $\veczeta_0$ is constant, rearranging the terms yields:
\begin{align*}
    \partial_\vectheta L(\vectheta, \left\{\veca_t\right\}_{t=0}^T ) = & (\nabla_{\veczeta} g(\veczeta_{T+1}(\vectheta)) + \veca_T) \nabla_{\vectheta} \veczeta_{T+1}(\vectheta) - \eta \sum_{t=0}^T \veca_t \partial_{\vectheta} F (\vectheta, \veczeta_{t}(\vectheta)) + \partial_{\vectheta} g(\vectheta, \veczeta_{T+1}(\vectheta)) \\
        & + \sum_{t=1}^T  (\veca_{t-1} - (1-\eta\tau)\veca_t - \eta \veca_t \partial_{\veczeta} F (\vectheta, \veczeta_{t}(\vectheta))) \nabla_{\vectheta} \veczeta_{t}(\vectheta).
\end{align*}
Therefore, we pick
\begin{align*}
    \widebar{\veca}_T &:= -\nabla_{\veczeta} g(\veczeta_{T+1}(\vectheta)) \\
    \widebar{\veca}_{t-1} &:= (1-\eta\tau)\widebar{\veca}_{t} + \eta \widebar{\veca}_t \partial_{\veczeta} F (\vectheta, \veczeta_{t}(\vectheta)),
\end{align*}
we obtain the equality
\begin{align}
    \label{eq:adjoint_derivative}
    \nabla_\vectheta \ell(\vectheta) = \partial_\vectheta L(\vectheta, \left\{\widebar{\veca}_t\right\}_{t=0}^T ) = - \eta \sum_{t=0}^T \veca_t \partial_{\vectheta} F (\vectheta, \veczeta_{t}(\vectheta)) + \partial_{\vectheta} g(\vectheta, \veczeta_{T+1}(\vectheta)).
\end{align}
Most importantly, \Cref{eq:adjoint_derivative} permits the computation of $\nabla_\vectheta \ell$ with $T$ with only caching the variables $\veczeta_t$.
Namely, the forward system \Cref{eq:problem_def_adjoint} can be used to iteratively compute $\veczeta_{T+1}$, and the backward system only requires evaluating Jacobians at the current step, meaning other than $\veczeta_t$, the memory requirements are kept constant.

In case memory is a bottleneck, $\veczeta_t$ can be cached during the forward step every $\beta\sqrt{T}$ steps for some constant $\beta$, meaning only $\mathcal{O}(\sqrt{T})$ memory is needed.
In this case, during the backward step, $\veczeta_t$ will need to be recomputed every $\mathcal{O}(\sqrt{T})$ steps for $\mathcal{O}(\sqrt{T})$ OMD iterations, maintaining the $\mathcal{O}(T)$ time complexity.
$\beta$ will introduce a tradeoff between time and space complexity.

\subsection{Extension of \Cref{lemma:adjoint} to General Mirror Descent}
\label{sec:appendix_adjoint_bregman}

\Cref{lemma:adjoint} analyzes the ajoint method under the specific entropy regularization scheme.
Let $h: \mathbb{R}^{D_1} \rightarrow \mathbb{R}$ be a strongly convex distance generating function, and $\nabla h :\mathbb{R}^{D_1} \rightarrow \mathbb{R}^{D_1}$ the corresponding mirror map.
Then, the generalized mirror descent update rule can be written as (in policy space $\Pi_H \subset \mathbb{R}^{[H]\times\setS\times\setA}$):
\begin{align*}
     \nabla h(\pi_{t+1}) &:= (1 - \tau \eta) \nabla h(\pi_{t}) + \eta F(\theta, \pi_{t}(\theta)), \quad \forall t = 0, \ldots, T. 
\end{align*}
Defining the iterates $\veczeta_t := \nabla h(\pi_{t+1}) $, we obtain the similar update rule
\begin{align*}
\ell(\vectheta) & := g( \vectheta, (\nabla h)^{-1}( \veczeta_{T+1}(\vectheta)) ), \\
    \veczeta_{t+1}(\vectheta) &:= (1 - \tau \eta) \veczeta_{t}(\vectheta) + \eta F(\vectheta, (\nabla h)^{-1}(\veczeta_{t}(\vectheta))), \quad \forall t = 0, \ldots, T,
\end{align*}
which reduces to the case analyzed in \Cref{sec:appendix_adjoint}, by the definitions
\begin{align*}
    \widebar{F}(\vectheta, \veczeta_{t}(\vectheta))
 := F(\vectheta, (\nabla h)^{-1}(\veczeta_{t}(\vectheta))) \\
    \widebar{g}( \vectheta, \veczeta_{T+1}(\vectheta) )
 := g( \vectheta, (\nabla h)^{-1}( \veczeta_{T+1}(\vectheta))).
\end{align*}

\section{Results on Batched Auctions}
\label{sec:batched_auction_results}

This section presents the formal analysis of parameterized batched auctions introduced in the main text.
We also provide rigorous statements and complete the proofs left out in the main text.

\Cref{sec:appendix_ba_definitions} provides formal definitions of the settings omitted in the main text and useful auxiliary constructions to assist the proofs.
\Cref{subsection:auxiliary_lemmas} presents a collection of auxiliary lemmas that support the subsequent analysis. 
In \Cref{section:upper_bound_exploitability_proof}, we prove the first part of \Cref{theorem:approximation_auction}, establishing an upper bound on the exploitability of mean field policies under the no zero-dominance condition. 
The second part of the theorem, which addresses convergence of the mechanism-level objective, is proved in \Cref{section:objective_convergence}.
Finally, \Cref{section:quaclipschitz} provides the proof of \Cref{lemma:qauclipschitz}, which establishes the Lipschitz continuity of the (entropy regularized) q values under full-support policies.

\subsection{Extended Definitions}
\label{sec:appendix_ba_definitions}

\textbf{Additional useful notation.}
To streamline the proofs, we also introduce some useful notation.
For any arbitrary finite set $\setX$ and a scalar $\beta \in \mathbb{R}_{\geq 0}$ define the sets:
\begin{align*}
    \Delta_{\setX}^{\beta} &:= \{d\in \mathbb{R}^{\setX}:\: \forall x\in \setX:\, d(x)\geq 0,\, \sum_{x\in \setX}d(x) = \beta\}, \\
    \Delta_{\setX}^{\geq \beta} &:= \{d\in \mathbb{R}^{\setX}:\: \forall x\in \setX:\, d(x)\geq 0,\, \beta\leq  \sum_{x\in \setX}d(x) \leq  1\}, \\
    \Delta_{\setX}^{\leq \beta} &:= \{d \in \mathbb{R}^{\setX}:\: \forall x\in \setX:\, d(x)\geq 0,\, \sum_{x\in \setX} d(x) \leq  \beta\}.
\end{align*}
For $x \in \mathbb{R}^\setX$, where $\setX$ has a total order, denote the cumulative mass function $S_x(d) := \sum_{x' \geq x} d(x')$.
For some $d\in \Delta_{\setX\times\setY}$, define the marginal distribution
\begin{align*}
    \pmarg_\setY (d) := \sum_{x\in\setX} d(x, \cdot) \in \Delta_\setY.
\end{align*}
For $\alpha\in[0,1]$, define the \emph{threshold bid operator} $\threshact_\alpha: \Delta_{\setA}^{\geq 0} \rightarrow \setA$ as 
\begin{align*}
    \threshact_\alpha(d) := \max \left( \, \{ a \in \setA : S_a(d) \geq \alpha, d(a) > 0 \} \cup \{a_{0}\} \right),
\end{align*}
where $a_0$ is the smallest element of $\setA$.
For $d \in \Delta_{\setV\times\setA}^{\geq \alpha}$ we also define $\threshact_\alpha(d) := \threshact_\alpha( \pmarg_\setS (d))$.

We define the operator $ \Xi_\alpha: \Delta_{\mathcal{V} \times \mathcal{A}}^{\geq \alpha} \rightarrow \mathbb{R}^{\mathcal{V} \times \mathcal{A}} $, which maps a matrix state-action distribution $ L $ to a matrix $ L' $, where each entry $ L'(s, a) $ represents the expected probability mass of agents in state $ s $ who chose action $ a $ and did not win an item this round, given that $ \alpha $ goods are allocated. 
Formally:
\begin{equation*}
    \Xi_\alpha(d)(s,a) = 
    \begin{cases}
        0 & \text{if } a > \threshact(d), \alpha > 0\\
          \frac{\sum_{s'\in \setV, a' \geq \threshact_\alpha(d)}d(s',a') -\alpha}{\sum_{s'\in \setV}d(s',\threshact_\alpha(d))}d(s, \threshact_\alpha(d)) & \text{if } a = \threshact(d), \alpha > 0\\
          d(s,a) & \text{otherwise}
    \end{cases}
\end{equation*}
$\Xi_\alpha$ is well-defined as $\sum_{s'\in \setV}d(s',\threshact_\alpha(d)) > 0$ whenever $\alpha > 0$ by definition.

Using the operator $ \Xi_\alpha $, we define the post allocation operator $ \Upsilon_\alpha: \Delta_{\mathcal{V} \times \mathcal{A}}^{\geq \alpha} \rightarrow \Delta_{\mathcal{V}} $, which maps a sub-probability distribution $d$ over active state-action pairs to the post allocation state distribution.
That is,
\begin{align*}
    \Upsilon_\alpha(L)(s) := \sum_{a \in \mathcal{A}} \Xi_\alpha(L)(s, a).
\end{align*}

The operator $ \Gamma_\alpha $, which governs the transition of the population state distribution after item allocation, can be expressed in terms of $ \Upsilon_\alpha $ by explicitly accounting for the inactive state $ \perp $. 
Specifically, let $ d \in \Delta_{\mathcal{S} \times \mathcal{A}} $ be a state-action distribution and assume $ \alpha(\nu^{-\perp}(d)) \leq \| \nu^{-\perp}(d) \|_1 $. Then $ \Gamma_\alpha: \Delta_{\mathcal{S} \times \mathcal{A}} \rightarrow \Delta_{\mathcal{S}} $ is defined as
\[
\Gamma_\alpha(d)(s) := 
\begin{cases}
\Upsilon_{\alpha(\nu^{-\perp}(d))}(d_{-\perp})(s) & \text{if } s \in \mathcal{V}, \\
\sum_{a \in \mathcal{A}} d(\perp, a) + \alpha & \text{if } s = \perp,
\end{cases}
\]
where $d_{-\perp}$ denotes the restriction of $d$ to active states $\setV$.

The BA-MFG is a parametrized PMFG, where both the payment function and the allocation threshold function are parameterized as $\alpha_h^\theta$ and $\paymentfn_h^\theta$.
Throughout this section, we assume these parameters are fixed and omit them, writing without loss of generality $\alpha_h, \paymentfn_h$ instead.

\begin{definition}[Batched Auction MFG (BA-MFG)]
A Batched-Auction MFG (BA-MFG) is a MFG defined by the tuple $\Mauc = (\setS, \setA, H, \mu_0,  \{\Pauc_h\}_{h= 0}^{H-1}, \{\Rauc_h\}_{h=0}^{H-1})$ of discrete state space $\setS = \setV \cup \{\perp\}$, discrete action space $\setA$, horizon $H \in \mathbb{N}_{>0}$, initial distribution $\mu_0 \in \setS$, transition dynamics $\Pauc_h: \setS \times \setA \times \Delta_{\setS \times \setA} \to \Delta_{\setS}$ and reward functions $\Rauc_h: \setS \times \setA \times \Delta_{\setS \times \setA} \to [0,1]$. 
The transition dynamics $\Pauc$ and the reward functions $\Rauc$ depend from the allocation functions $\{\alpha_h\}_{h=0}^{H-1}$, the dynamic functions $\{\transitionfn_h\}_{h=0}^{H-1},$ the payment functions $\{\paymentfn_h\}_{h=0}^{H-1}$ and the utility functions $\{\utilityfn_h\}_{h=0}^{H-1}$. 
Define the ``winning probability'' as
\begin{equation*}
    \pwin(s,a, L, \alpha) =  \ind{s\neq \perp}\begin{cases}
        1 & \text{if } \sum_{s'}\sum_{a'\geq a}L(s', a') \leq \alpha\\
        0 &\text{if } \sum_{s'}\sum_{a'> a}L(s', a') \geq \alpha\\
        \frac{\alpha - \sum_{s'} \sum_{a'> a}L(s', a')}{\sum_{s'}L(s, a)} & \text{otherwise}
    \end{cases}.
\end{equation*}
For the allocation $\alpha$ and valuation transition $\transitionfn$, 
define the operators $\Gamma_{\alpha}: \Delta_{\setS\times\setA} \rightarrow \Delta_\setS, \Gamma_{\transitionfn}: \Delta_\setS \times \Delta_\setS$ as
\begin{align*}
    \Gamma_\alpha(L)(s) &:= 
    \begin{cases}
        \sum_{a}L(s,a)(1 - \pwin(s,a, L, \alpha(\nu^{-\perp}(L)))) & \text{if } s \neq \perp\\
        L(\perp) + \sum_{s'}\sum_{a'} L(s',a')\pwin(s',a', L, \alpha(\nu^{-\perp}(L))) &\text{otherwise}
    \end{cases}, \\
    \Gamma_{\transitionfn}(\xi) &:= \sum_{z \in \setS}\xi(z)\transitionfn(\cdot \vert z, \xi),
\end{align*}
Define also $\Gamma_h, \lpopmfa$ as \footnote{Note that this way of writing $\Gamma_h, \lpopmfa$ is consistent with the MFG definition.}
\begin{align*}
    \Gamma_h(L, \pi)(s,a) &:= \Gamma_{\transitionfn_h}(\Gamma_{\alpha_h}(L))(s) \pi(a\vert s),\\
    \lpopmfa(\pi) &:= \{\Gamma_{h-1}(... \Gamma_1(\Gamma_0(\mu_0 \cdot \pi_0, \pi_1), \pi_2)..., \pi_{h-1})\}_{h=0}^{H-1}.
\end{align*}
The transition probability and rewards can be written as:
\begin{align*}
    \Pauc_h(s' \vert s, a, L) &:= \pwin(s,a, L, \alpha_h(\nu^{-\perp}))\transitionfn_h(s' \vert \perp, \Gamma_{\alpha_h}(L))\\
    &\quad + (1-  \pwin(s,a, L, \alpha_h(\nu^{-\perp})))\transitionfn_h(s' \vert s, \Gamma_{\alpha_h}(L))\\
    \Rauc_h(s,a,L) &:= \pwin(s,a,L, \alpha_h(\nu^{-\perp}(L)))\utilityfn_h(s, \paymentfn_h(a, \nu^{-\perp}(L))).
\end{align*}
For $\pi \in \Pi_H, \tau \geq 0$ and $\vecpop = \{L_h\}_{h=0}^{H-1}$, the total expected (entropy regularized) reward is
\begin{equation*}
    \Vmfa^\tau(\vecpop, \pi) := \mathbb{E}\left[\sum_{h=0}^{H-1}\Rauc_h(s,a, L_h) + \tau \entropy(\pi_h(s_h)) \middle\vert \substack{s_0 \sim \mu_0, a_h \sim \pi(s_h)\\
    s_{h+1} \sim \Pauc_h(s_h, a_h, L_h) } \right]
\end{equation*}
\end{definition}

For a policy $\pi \in \Pi_H$ we denote with $L^{\pi} = \lpopmfa(\pi)$ the induced state-action distribution, with $\mu^{\pi}$ the induced state distribution and with $\xi^{\pi}$ the induced hidden state distribution, with $\xi_h^{\pi} = \Gamma_{\alpha_h}(L_h^{\pi})$. 
Additionally, we define with $P_\alpha$ the transition probabilities associated with the item allocation dynamics
\begin{equation*}
    P_\alpha(s' \vert s,a, L) := \begin{cases}
        \pwin(s,a, L, \alpha(\nu^{-\perp}(L))) &\text{if }s' = \perp\\
        (1- \pwin(s,a, L, \alpha(\nu^{-\perp}(L)))) &\text{if }s' = s \\
        0 &\text{otherwise}
    \end{cases}. 
\end{equation*}
Then, the total expected (entropy regularized) reward can also be expressed as
\begin{equation*}
    \Vmfa^\tau(\vecpop, \pi) := \mathbb{E}\left[\sum_{h=0}^{H-1}\Rauc_h(s,a, L_h) + \tau \entropy(\pi_h(s_h)) \middle\vert \substack{s_0 \sim \mu_0, a_h \sim \pi(s_h), z_h \sim P_{\alpha_h}(s_h, a_h, L_h)\\
    s_{h+1} \sim \transitionfn_h(z_h, \Gamma_{\alpha_h}(L_h)) } \right]
\end{equation*}

\begin{definition}[$ N $-player Batched Auction ($N$-BA)]\label{definition:n_batched_auction}
    An $N$- player batched auction ($N$-BA) is a dynamic game $\Gauc = (N, \setS, \setA, H, \mu_0, \{P^{\text{auc}}_h\}_{h=0}^{H-1}, \{R^{\text{auc}}_h\}_{h=0}^{H-1}) $ of discrete state space $\setS = \setV \cup \{\perp\}$, discrete action space $\setA$, horizon $H \in \mathbb{N}_{>0}$, starting distribution $\mu_0 \in \setS$, transition dynamics $P^{\text{auc}}_h: \setS^N \times \setA^N \to \Delta_{\setS}^N$ and rewards functions $R^{\text{auc}}_h: \setS^N \times \setA^N  \to [0,1]^N$. The transition dynamics $P^{\text{auc}}_h$ and the reward functions $R^{\text{auc}}_h$ depend from the allocation functions $\{\alpha_h\}_{h=0}^{H-1}$, the dynamic functions $\{\transitionfn_h\}_{h=0}^{H-1},$ the payment functions $\{\paymentfn_h\}_{h=0}^{H-1}$ and the utility functions $\{\utilityfn_h\}_{h=0}^{H-1}$.
    Let $ \vecs= (s^1, \dots, s^N) \in \mathcal{S}^N $ denote the joint state of all agents, and $ \veca = (a^1, \dots, a^N) \in \mathcal{A}^N $ the joint action profile. Assume items are allocated to the top bidders according to the submitted actions, with uniform random tie-breaking at the allocation threshold. This induces a joint allocation probability kernel $ P_{N, \alpha}: \setS^N \times \setA^N \rightarrow \Delta_{\setS^N} $ formally defined as

    \begin{equation*}
    P_{N, \alpha}(\vecz \vert \vecs, \veca) =
        \begin{cases}
            \displaystyle
            \frac{1}{\binom{|\setT|}{\lfloor\alpha N \rfloor - |\setW|}} & \text{if } 
            \begin{aligned}
            & \forall i \in \setW \cup \{i \in [N]: s^{j} = \perp\},\; z^i = \perp, \\
            & \vert \{j \in \setT: z^{j} = \perp\}\vert = \lfloor\alpha(\nu)\cdot  N\rfloor - \vert \setW \vert,  \\
            &  \forall i \in \setL,\, z^{i} = s^{i}
            \end{aligned} \\[2.0em]
            0 & \text{otherwise}
        \end{cases},
    \end{equation*}
    
    where $\nu = \sum_{j}\vece_{a^j} \ind{s^j \not= \perp }$,   $a^* = \max \{a \in \setA: \sum_{j \in [N]} \ind{a^{j} \geq a, s^{j} \not= \perp} \geq \lfloor \alpha(\nu) N \rfloor\}$, $\setT = \{j \in [N]: a^{j} = a^*, s^{j} \not= \perp\}$, $\setW = \{j \in [N]: a^{j} > a^*, s^{j} \not= \perp\}$ and $\setL = \{j \in [N]: a^{j} < a^*, s^{j} \not= \perp\}$. The marginal winning probability and the rewards, similar as its corresponding MFG can be expressed as
    \begin{align*}
        &P_{N, \alpha}^{i}(\perp \vert \vecs, \veca) =: \sum_{\vecz}P_{N, \alpha}(\vecz \vert \vecs, \veca)\ind{z^i = \perp}=  \pwin(s^{i},a^{i}, \sigma(\vecs, \veca), \tfrac{\lfloor N \alpha(\nu^{-\perp}(\sigma(\vecs, \veca)))\rfloor}{N})\\
        &R_h^{\text{auc, i}}(\vecs, \veca) = \pwin(s^{i},a^{i},\sigma(\vecs, \veca), \tfrac{\lfloor N \alpha_h(\nu^{-\perp}(\sigma(\vecs, \veca)))\rfloor}{N})\utilityfn_h(s^{i}, \paymentfn_h(a^{i}, \nu^{-\perp}(\sigma(\vecs, \veca))))
    \end{align*}
    We define with $\Rauc_{N,h}$ the N-player discretization of $\Rauc_h$:
    \begin{equation*}
        \Rauc_{N,h}(s,a,L):= \pwin(s,a, L, \tfrac{\lfloor N \alpha_h(\nu^{-\perp}(L))\rfloor}{N})\utilityfn_h(s, \paymentfn_h(a, \nu^{-\perp}(L)))
    \end{equation*}
    Note that if $ L(s,a) = 0 $, the reward $ \Rauc_{N,h}(s,a,L) $ is not defined. Additionally, observe that 
    \[
    R_h^{\text{auc}, i}(\vecs, \veca) = \Rauc_{N,h}(s^{i}, a^{i}, \sigma(\vecs, \veca)).
    \]
    For a strategy profile $\vecpi  \in \Pi_H^N$, $\tau\geq 0$ the (entropy regularized) sum of rewards of player $i\in [N]$ is defined as
    \begin{equation*}
        \Jauc^{\tau, i}(\vecpi) := \mathbb{E} \left[\sum_{h=0}^{H-1}\Rauc_{N,h}(s_h^i, a_h^i,\widehat{L}_h) + \tau\entropy(\pi_h^{i}(s_h^{i}))\middle| \substack{\forall j \in [N]: s_0^j \sim \mu_0,\, a_h^j \sim \pi_h^i( s_h^j)\\
        \vecz_{h} \sim P_{N,\alpha_h}(\vecs_h, \veca_h),\, \widehat{L}_h = \frac{1}{N}\sum_{j\in [N]}\vece_{s_h^j, a_h^j}\\
        s_{h+1}^j \sim \transitionfn(z_h^j, \widehat{\xi}_h), \widehat{\xi}_h = \frac{1}{N}\sum_{j \in [N]}\vece_{z_h^j}
        } \right],  
    \end{equation*}
    exploitability $\Expauc^{\tau, i}$ as $\Expauc^{\tau, i}(\vecpi) := \max_{\pi' \in \Pi_H}\Jauc^{\tau, i}(\pi', \vecpi^{-i}) - \Jauc^{\tau, i}(\vecpi).$
\end{definition}

\subsection{Preliminary Lemmas}
\label{subsection:auxiliary_lemmas}
We present several important lemmas that will be used later to prove the main convergence theorem.

\begin{lemma}[Sensitivity of $ \Xi_\alpha $ to the Population]\label{lemma:lipschitz_Xi_pop}
Let $ \alpha \in [0,1] $ and let $ d, d' \in \Delta_{\mathcal{V} \times \mathcal{A}}^{\geq \alpha} $. Then $ \Xi_\alpha $ is Lipschitz-continuous with respect to the $ \ell_1 $-norm, with constant 1:
\begin{equation*}
    \| \Xi_\alpha(d) - \Xi_\alpha(d') \|_1 \leq \| d - d' \|_1.
\end{equation*}
\end{lemma}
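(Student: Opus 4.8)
The plan is to reduce the claim to a bound on the Jacobian of $\Xi_\alpha$, exploiting that $\Xi_\alpha$ acts as a per-bid reweighting of $d$. First I would record the closed form $\Xi_\alpha(d)(s,a) = d(s,a)\,\phi_d(a)$, where, writing $m_a(d):=\sum_{s\in\setV} d(s,a)$ and $S_a(d):=\sum_{s\in\setV}\sum_{a'\ge a} d(s,a')$, the per-bid keep-fraction is $\phi_d(a) = \min\{1,\max\{0,(S_a(d)-\alpha)/m_a(d)\}\}$; this is exactly $1-\pwin(s,a,d,\alpha)$ and equals $1$ for $a<\threshact_\alpha(d)$, equals $0$ for $a>\threshact_\alpha(d)$, and is the genuine fraction $(S_{a^*}(d)-\alpha)/m_{a^*}(d)\in[0,1]$ at the threshold $a^*:=\threshact_\alpha(d)$. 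I would also check that $\Xi_\alpha$ is continuous on $\Delta_{\setV\times\setA}^{\geq \alpha}$, the only delicate points being $m_a(d)\to 0$, where the bounded factor $d(s,a)/m_a(d)\le 1$ keeps the product continuous, and threshold jumps, where the extracted mass varies continuously.

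Next I would partition $\Delta_{\setV\times\setA}^{\geq \alpha}$ into the regions cut out by the linear conditions that fix the combinatorial type of the allocation, namely the identity of the threshold bid and the signs of $S_a(d)-\alpha$, each such region being convex as an intersection of half-spaces. On the interior of each region the threshold $a^*$ is locally constant and $\Xi_\alpha$ is $C^1$, so it suffices to bound the operator norm $\|\nabla\Xi_\alpha(d)\|_{1\to 1}$, i.e. the maximal absolute column sum of the Jacobian, by $1$ there. I would compute the column indexed by a perturbation coordinate $(s',a')$ in three cases: (i) $a'<a^*$ affects only the identical entry $\Xi_\alpha(d)(s',a')=d(s',a')$, giving column sum $1$; (ii) $a'>a^*$ affects only the threshold entries through $S_{a^*}$, giving $\partial_{d(s',a')}\Xi_\alpha(d)(s,a^*)=d(s,a^*)/m_{a^*}(d)$ and hence column sum $\sum_{s} d(s,a^*)/m_{a^*}(d)=1$; (iii) $a'=a^*$, where both $m_{a^*}$ and $S_{a^*}$ move.

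The crux is case (iii). Writing $\ell:=S_{a^*}(d)-\alpha\in[0,m_{a^*}(d)]$ and $\phi:=\ell/m_{a^*}(d)$, direct differentiation gives $\partial_{d(s',a^*)}\Xi_\alpha(d)(s,a^*)=\delta_{s,s'}\,\phi+d(s,a^*)\,(m_{a^*}(d)-\ell)/m_{a^*}(d)^2$. Since $\ell\le m_{a^*}(d)$, every such derivative is nonnegative, so the absolute column sum equals the plain column sum, which telescopes to $\phi+\sum_{s}d(s,a^*)(m_{a^*}(d)-\ell)/m_{a^*}(d)^2=\phi+(m_{a^*}(d)-\ell)/m_{a^*}(d)=1$. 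Thus $\|\nabla\Xi_\alpha(d)\|_{1\to 1}\le1$ on every region interior. Finally I would upgrade this to the global bound: on each convex region the fundamental theorem of calculus along the segment $d_t=(1-t)d'+td$ yields $\|\Xi_\alpha(d)-\Xi_\alpha(d')\|_1\le\|d-d'\|_1$, and for general endpoints the segment meets only finitely many regions, so I would conclude by splitting at the crossing points, applying the per-region bound on each closed sub-segment (valid by continuity), and summing the pieces by the triangle inequality. I expect the main obstacle to be exactly the threshold column in (iii): the individual partials blow up as $m_{a^*}(d)\to0$, and only the exact cancellation $\phi+(m_{a^*}(d)-\ell)/m_{a^*}(d)=1$, together with the nonnegativity that turns the operator norm into a plain column sum, keeps the constant at $1$; handling the non-$C^1$ boundary set (threshold ties and $m_a=0$) by continuity and convexity of the pieces is the remaining technical point.
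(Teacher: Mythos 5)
Your proposal is correct and takes essentially the same route as the paper's proof: identify the closed form of $\Xi_\alpha$ on the regions where the threshold action $\threshact_\alpha(d)$ is constant, bound the maximal absolute column sum $\|\nabla \Xi_\alpha\|_{1\to 1}$ by $1$ via the same three cases (with the identical cancellation $\phi + (m-\ell)/m = 1$ in the threshold column, where nonnegativity of the partials is what lets the absolute column sum telescope), and conclude non-expansiveness. Your explicit globalization step—splitting the segment $d_t = (1-t)d' + td$ at region crossings and summing via the triangle inequality—is a welcome, slightly more careful completion of the final step that the paper states without detail.
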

\begin{proof}
It is straightforward to verify that $\Xi_\alpha$ is continuous.
We verify that it is also Lipschitz continuous with modulus 1.
Assume $\alpha > 0$, as otherwise $\Xi_\alpha(d)$ is the identity map and the claim is trivial.
Denote the bids by $\setA := \{ 1, \ldots, A \}$, and let $S_a(d) := \sum_{s'} \sum_{a' \geq a} d(s',a')$ with $S_{A+1} (d) := 0$, and $r_{a}(d) := \sum_{s'\in \setV}d(s',a)$.
For $\bar{a} \in \{ 1, \ldots, A+1\}$, define the regions
\begin{align*}
    \setR_{\bar{a}} := \{ d \in \Delta_{\mathcal{V} \times \mathcal{A}}^{\geq \alpha} : S_{\bar{a}}(d) \geq \alpha > S_{\bar{a}+1}(d) \}.
\end{align*}
On the region $\setR_{\bar{a}}$, the map $\Xi_\alpha(d)$ is differentiable, in fact, for $d\in \setR_{\bar{a}}$, it holds that $\bar{a} = \threshact_\alpha(d)$ and
\begin{align*}
    \Xi_\alpha(d)(s,a) = 
    \begin{cases}
        0 & \text{if } a > \bar{a}\\
          \frac{S_{\bar{a}}(d) -\alpha}{r_{\bar{a}}(d)}d(s, \bar{a}) & \text{if } a = \bar{a}\\
          d(s,a) & \text{otherwise}
    \end{cases}
\end{align*}
We calculate the Jacobian of $\Xi_\alpha$ and upper bound its operator norm $\|\nabla \Xi_\alpha \|_{1\rightarrow 1}$ given by the max column sum $\|\nabla \Xi_\alpha \|_{1\rightarrow 1} = \max_{s,a} \sum_{s',a'} | (\nabla \Xi_\alpha)_{s'a', sa}|.$
We upper bound the column sums corresponding to $s,a$.

\textbf{Case 1.}
If $a < \bar{a}$, then $\frac{\partial \Xi_\alpha(d)(s',a')}{\partial d_{sa}} = 0$ for any $(s',a') \neq (s,a)$ and $\frac{\partial \Xi_\alpha(d)(s,a)}{\partial d_{sa}} = 1$, therefore the column sum is
\begin{align*}
    \sum_{s',a'} \left| \frac{\partial \Xi_\alpha(d)(s',a')}{\partial d_{sa}} \right| = \left| \frac{\partial \Xi_\alpha(d)(s,a)}{\partial d_{sa}} \right| =  1.
\end{align*}

\textbf{Case 2.}
If $a = \bar{a}$, then $\frac{\partial \Xi_\alpha(d)(s',a')}{\partial d_{sa}} = 0$ if $a'\neq \bar{a}$, therefore
\begin{align*}
    \sum_{s',a'} \left| \frac{\partial \Xi_\alpha(d)(s',a')}{\partial d_{sa}} \right| &= \sum_{s'} \left| \frac{\partial \Xi_\alpha(d)(s',\bar{a})}{\partial d_{s\bar{a}}} \right| = \left| \frac{\partial \Xi_\alpha(d)(s,\bar{a})}{\partial d_{s\bar{a}}} \right| + \sum_{s'\neq s} \left| \frac{\partial \Xi_\alpha(d)(s',\bar{a})}{\partial d_{s\bar{a}}} \right| \\
    &= \left| \frac{S_{\bar{a}}(d) - \alpha}{r_{\bar{a}}(d)} + d(s,\bar{a}) \frac{r_{\bar{a}}(d) - S_{\bar{a}}(d) + \alpha }{ r_{\bar{a}}(d)^2 }  \right| \\
        & \quad +  \sum_{s'\neq s} \left| d(s',\bar{a}) \frac{r_{\bar{a}}(d) - S_{\bar{a}}(d) + \alpha}{ r_{\bar{a}}(d)^2 }  \right|
        = 1,
\end{align*}
since all terms in the absolute values are nonnegative if $\bar{a} = \threshact_\alpha(d)$.

\textbf{Case 3.}
If $a > \bar{a}$, then only the rows corresponding to the active action $\bar{a}$ has nonzero gradient, and
\begin{align*}
    \sum_{s',a'} \left| \frac{\partial \Xi_\alpha(d)(s',a')}{\partial d_{sa}} \right| &= \sum_{s'} \left| \frac{\partial \Xi_\alpha(d)(s',\bar{a})}{\partial d_{s\bar{a}}} \right| = \sum_{s'} \frac{1}{\sum_{s''} d(s'',\bar{a})} d(s,\bar{a}) = 1.
\end{align*}

To conclude, it holds that on any arbitrary region $\setR_{\bar{a}}$,
\begin{align*}
    \|\nabla \Xi_\alpha \|_{1\rightarrow 1} = \max_{s,a} \sum_{s',a'} \left| \frac{\partial \Xi_\alpha(d)(s',\bar{a})}{\partial d_{s\bar{a}}} \right| = 1.
\end{align*}
Therefore, $\|\nabla \Xi_\alpha \|_{1\rightarrow 1} \leq 1$ on all regions $\setR_{\bar{a}}$, and $\Xi_\alpha$ is non-expansive in the $\ell_1$ norm.
\end{proof}

\begin{lemma}[Sensitivity of $ \Xi_\alpha $ to the Allocation Parameter]\label{lemma:lipschitz_Xi_alpha}
    Let $d\in \Delta_{\setV \times \setA}^{\leq 1}$ arbitrary, let $\alpha_1, \alpha_2 \leq \|d\|_1$ arbitrary. Then
    \begin{equation*}
        \|\Xi_{\alpha_1}(d) - \Xi_{\alpha_2}(d)\|_1 \leq |\alpha_1 - \alpha_2|.
    \end{equation*}
\end{lemma}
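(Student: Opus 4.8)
The plan is to exploit the product structure of $\Xi_\alpha$ across states: for each bid level $a$ there is a single \emph{loser fraction} $f_\alpha(a)\in[0,1]$, independent of the state $s$, with $\Xi_\alpha(d)(s,a) = f_\alpha(a)\,d(s,a)$. Writing $\bar a := \threshact_\alpha(d)$, $S_a(d) := \sum_{s'\in\setV}\sum_{a'\geq a} d(s',a')$ and $r_a(d) := \sum_{s'\in\setV} d(s',a)$, the definition of $\Xi_\alpha$ gives $f_\alpha(a)=1$ for $a<\bar a$, $f_\alpha(a)=(S_{\bar a}(d)-\alpha)/r_{\bar a}(d)$ for $a=\bar a$, and $f_\alpha(a)=0$ for $a>\bar a$ (bids with $r_a(d)=0$ contribute nothing and may be ignored). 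The first step is to rewrite this as a function of $\alpha$ at a \emph{fixed} bid level $a$ with $r_a(d)>0$: using that $\threshact_\alpha(d)=a$ exactly when $\alpha\in(S_a(d)-r_a(d),\,S_a(d)]$, one gets $f_\alpha(a)=1$ for $\alpha\le S_a(d)-r_a(d)$, $f_\alpha(a)=(S_a(d)-\alpha)/r_a(d)$ on $(S_a(d)-r_a(d),\,S_a(d)]$, and $f_\alpha(a)=0$ for $\alpha>S_a(d)$.

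Second, I read off two properties from this closed form, for each fixed $a$: $f_\alpha(a)$ is continuous in $\alpha$ (the branches agree at the breakpoints, giving $f=1$ at $\alpha=S_a(d)-r_a(d)$ and $f=0$ at $\alpha=S_a(d)$) and non-increasing in $\alpha$. Without loss of generality take $\alpha_1\le\alpha_2$, both in $[0,\|d\|_1]$ so that both operators are well defined. Monotonicity then yields $\Xi_{\alpha_1}(d)(s,a)=f_{\alpha_1}(a)\,d(s,a)\ge f_{\alpha_2}(a)\,d(s,a)=\Xi_{\alpha_2}(d)(s,a)$ pointwise, so every term in the $\ell_1$ norm has a fixed sign and
\[
\|\Xi_{\alpha_1}(d) - \Xi_{\alpha_2}(d)\|_1 = \|\Xi_{\alpha_1}(d)\|_1 - \|\Xi_{\alpha_2}(d)\|_1 .
\]

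Third, I compute the total mass $\|\Xi_\alpha(d)\|_1 = \sum_a f_\alpha(a)\,r_a(d)$. Summing the three cases, the bids below $\bar a$ contribute $\sum_{a<\bar a} r_a(d)=\|d\|_1-S_{\bar a}(d)$, the threshold bid contributes $S_{\bar a}(d)-\alpha$, and higher bids contribute $0$, so $\|\Xi_\alpha(d)\|_1=\|d\|_1-\alpha$ (exactly $\alpha$ units of goods are won and removed). Substituting into the displayed identity gives $\|\Xi_{\alpha_1}(d)-\Xi_{\alpha_2}(d)\|_1=(\|d\|_1-\alpha_1)-(\|d\|_1-\alpha_2)=|\alpha_1-\alpha_2|$, an equality that in particular establishes the claimed constant-$1$ Lipschitz bound.

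The main obstacle is the \emph{apparent} discontinuity from the threshold $\bar a=\threshact_\alpha(d)$ jumping as $\alpha$ crosses a value $S_a(d)$: one might worry that $\Xi_\alpha(d)$ is only piecewise defined and could jump there. The crux of the argument is the fixed-$a$ reformulation of $f_\alpha(a)$, which shows that at each switch point the partial-winning branch meets the adjacent full/empty branch exactly, so $\Xi_\alpha(d)$ is in fact globally continuous and monotone in $\alpha$; after that, the total-mass bookkeeping is routine. As an alternative I could mirror the preceding lemma and bound $\|\partial_\alpha\Xi_\alpha(d)\|_1=\sum_s d(s,\bar a)/r_{\bar a}(d)=1$ on each region of constant threshold and integrate, but the monotonicity route avoids tracking region boundaries altogether.
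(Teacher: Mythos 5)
Your proof is correct, and it takes a genuinely different route from the paper's. The paper mirrors its preceding population-sensitivity lemma: it partitions $[0,\|d\|_1]$ into intervals $\setR_{\bar{a}}$ on which the threshold bid $\threshact_\alpha(d)$ is constant, computes the derivative bound $\|\nabla_\alpha \Xi_\alpha(d)\|_{1\to 1} = \sum_{s}d(s,\bar{a})/r_{\bar{a}}(d) = 1$ on each interval, and concludes the Lipschitz estimate --- which is exactly the alternative you sketch in your closing remark. Your argument instead factorizes $\Xi_\alpha(d)(s,a) = f_\alpha(a)\,d(s,a)$ and re-indexes the case split by $\alpha$ at fixed $a$, so that continuity at the breakpoints $\alpha = S_a(d)$ and $\alpha = S_a(d)-r_a(d)$ becomes an explicit one-line check rather than an unstated prerequisite; note that the paper's piecewise-derivative bound implicitly needs this continuity across region boundaries in order to integrate the bound over all of $[0,\|d\|_1]$, and it leaves that verification to the reader. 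Your fixed-$a$ characterization $\threshact_\alpha(d)=a \iff \alpha\in(S_a(d)-r_a(d),\,S_a(d)]$ is right (the supremum of $S_{a''}(d)$ over bids $a''>a$ with mass equals $S_a(d)-r_a(d)$, since intermediate zero-mass bids contribute nothing), and discarding bids with $r_a(d)=0$ is legitimate since $d(s,a)=0$ forces $\Xi_\alpha(d)(s,a)=0$ in every branch. What your route buys: monotonicity of $f_\alpha(a)$ in $\alpha$ plus the mass identity $\|\Xi_\alpha(d)\|_1 = \|d\|_1-\alpha$ yields the exact equality $\|\Xi_{\alpha_1}(d)-\Xi_{\alpha_2}(d)\|_1 = |\alpha_1-\alpha_2|$, strictly stronger than the paper's inequality, and it handles the edge case $\alpha=0$ (where $\Xi_0$ is the identity by the paper's convention, matching $f_0(a)=1$) without any special pleading. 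What the paper's route buys is uniformity of technique: the same Jacobian-norm computation serves both this lemma and the population-sensitivity lemma, where no monotone structure is available.
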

\begin{proof}
    As in \Cref{lemma:lipschitz_Xi_pop}, let $\setA := \{ 1, \ldots, A \}$, and let $S_a(d) := \sum_{s'} \sum_{a' \geq a} d(s',a')$ with $S_{A+1} (d) := 0$, and $r_{a}(d) := \sum_{s'\in \setV}d(s',a)$.
    
For $\bar{a} \in \{ 1, \ldots, A+1\}$ such that $r_{\bar{a}}(d) > 0$, define the partition of the interval $[0, \|d\|_1]$ into the intervals
\begin{align*}
    \setR_{\bar{a}} := \{ \alpha \in [0, \|d\|_1] : S_{\bar{a}}(d) \geq \alpha > S_{\bar{a}+1}(d) \}.
\end{align*}
On the interval $\setR_{\bar{a}}$, the map $\Xi_\alpha(d)$ is differentiable in $\alpha$, and it holds that $\bar{a} = \threshact_\alpha(d)$ and
\begin{align*}
    \Xi_\alpha(d)(s,a) = 
    \begin{cases}
        0 & \text{if } a > \bar{a}\\
          \frac{S_{\bar{a}}(d) -\alpha}{r_{\bar{a}}(d)}d(s, \bar{a}) & \text{if } a = \bar{a}\\
          d(s,a) & \text{otherwise}
    \end{cases}
\end{align*}
Then, again, we upper bound the operator norm of the Jacobian (in this case, gradient):
\begin{align*}
    \| \nabla_\alpha \Xi_\alpha(d) \|_{1\rightarrow 1} = &\sum_{s', a'} \left| \frac{\partial \Xi_\alpha(d)(s',a') }{\partial \alpha} \right| = \sum_{s'} \left| \frac{-1}{ r_{\bar{a}}(d) } d(s', \bar{a})\right| = 1,
\end{align*}
So $\| \nabla_\alpha \Xi_\alpha(d) \|_{1\rightarrow 1} \leq 1$, implying the claim of the lemma.
\end{proof}

For completeness, we state the simple corollaries of the above two sensitivity analyses in the following. 

\begin{corollary}[Lipschitz Continuity of $\Xi$]\label{lemma:lipschitz_Xi}
    Let $d,d' \in \Delta_{\setV \times \setA}^{\leq 1}, \alpha \in [0, \|d\|_1),$ and $\alpha' \in [0, \|d'\|_1)$ arbitrary, then
    \begin{equation*}
        \|\Xi_{\alpha}(d) - \Xi_{\alpha'}(d')\|_1 \leq \|d-d'\|_1 + |\alpha - \alpha'|.
    \end{equation*}
\end{corollary}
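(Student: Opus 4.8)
The plan is to obtain the joint bound from the two single-variable sensitivity results already in hand---\Cref{lemma:lipschitz_Xi_pop} (sensitivity to the population) and \Cref{lemma:lipschitz_Xi_alpha} (sensitivity to the allocation parameter)---by inserting a single intermediate point and applying the triangle inequality, so that each resulting term varies only one of the two arguments.

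First I would observe that the claimed inequality is symmetric under the simultaneous swap $(d,\alpha)\leftrightarrow(d',\alpha')$: the left-hand side is unchanged because $\|\cdot\|_1$ is symmetric, and the right-hand side $\|d-d'\|_1+|\alpha-\alpha'|$ is manifestly invariant. Hence I may assume without loss of generality that $\alpha\leq\alpha'$. The point of this ordering is that it makes the intermediate operator $\Xi_{\alpha}(d')$ well-defined: the hypotheses give $\alpha<\|d\|_1$ and $\alpha'<\|d'\|_1$, and chaining $\alpha\leq\alpha'<\|d'\|_1$ yields $\alpha<\|d'\|_1$, so $d'\in\Delta_{\setV\times\setA}^{\geq\alpha}$ and $\Xi_\alpha(d')$ lies in the common domain of both lemmas.

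Next I would split
\begin{align*}
    \|\Xi_{\alpha}(d) - \Xi_{\alpha'}(d')\|_1
    &\leq \|\Xi_{\alpha}(d) - \Xi_{\alpha}(d')\|_1
    + \|\Xi_{\alpha}(d') - \Xi_{\alpha'}(d')\|_1,
\end{align*}
and bound the two terms separately. For the first term, both $d$ and $d'$ lie in $\Delta_{\setV\times\setA}^{\geq\alpha}$ (since $\alpha<\|d\|_1$ and $\alpha<\|d'\|_1$), so \Cref{lemma:lipschitz_Xi_pop} applies with the allocation $\alpha$ held fixed and gives $\|\Xi_{\alpha}(d) - \Xi_{\alpha}(d')\|_1\leq\|d-d'\|_1$. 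For the second term, the population $d'$ is held fixed and the two parameters satisfy $\alpha,\alpha'\leq\|d'\|_1$, so \Cref{lemma:lipschitz_Xi_alpha} gives $\|\Xi_{\alpha}(d') - \Xi_{\alpha'}(d')\|_1\leq|\alpha-\alpha'|$. Adding the two bounds yields the claim.

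The only real obstacle is domain bookkeeping rather than analysis: one must ensure the intermediate point $\Xi_\alpha(d')$ sits where both prior lemmas are valid, which is precisely what the ordering $\alpha\leq\alpha'$ secures. No new estimates are required beyond the two sensitivity lemmas, and the argument is essentially a one-line triangle inequality once the well-definedness check is in place.
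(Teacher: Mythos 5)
Your proof is correct and follows essentially the same route as the paper's: a single triangle inequality through an intermediate point, with each term controlled by \Cref{lemma:lipschitz_Xi_pop} and \Cref{lemma:lipschitz_Xi_alpha} respectively. The only (immaterial) difference is the mirror-image decomposition---you interpose $\Xi_{\alpha}(d')$ under the normalization $\alpha\leq\alpha'$, while the paper interposes $\Xi_{\alpha'}(d)$ under the normalization $\|d\|_1\geq\|d'\|_1$---and your domain bookkeeping is, if anything, spelled out more carefully than in the paper.
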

\begin{proof}
    Let $d,d' \in \Delta_{\setV \times \setA}^{\leq 1}, \alpha \in [0, \|d\|_1),$ and $\alpha' \in [0, \|d'\|_1)$ arbitrary. Without loss of generality assume $\|d\|_1 \geq \|d'\|_1$, then by applying triangular inequality we have
    \begin{align*}
        \|\Xi_{\alpha}(d) - \Xi_{\alpha'}(d')\|_1 &\leq \|\Xi_{\alpha}(d) - \Xi_{\alpha'}(d)\|_1 + \|\Xi_{\alpha'}(d) - \Xi_{\alpha'}(d')\|_1
        \leq |\alpha - \alpha'| + \|d-d'\|_1,
    \end{align*}
    where the last step follows from \Cref{lemma:lipschitz_Xi_pop,lemma:lipschitz_Xi_alpha}.
\end{proof}

\begin{corollary}[Lipschitz Continuity of $ \Upsilon_\alpha$]\label{corollary:lipschitz_upsilon}
    Let $d,d' \in \Delta_{\setV \times \setA}^{\leq 1}, \alpha \in [0, \|d\|_1),$ and $\alpha' \in [0, \|d'\|_1)$ arbitrary, then $\Upsilon_\alpha$ is non-expansive in the $\ell_1$ norm, that is 
    \begin{equation*}
        \|\Upsilon_\alpha(d) - \Upsilon_{\alpha'}(d')\|_1 \leq |\alpha - \alpha'| +  \|d-d'\|_1.
    \end{equation*}
\end{corollary}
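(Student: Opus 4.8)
The plan is to recognize that $\Upsilon_\alpha$ is nothing more than the action-marginal of $\Xi_\alpha$, so the desired bound should follow immediately from the Lipschitz estimate for $\Xi$ already recorded in \Cref{lemma:lipschitz_Xi}. Concretely, by the definition $\Upsilon_\alpha(d)(s) = \sum_{a \in \setA} \Xi_\alpha(d)(s,a)$, the difference $\Upsilon_\alpha(d) - \Upsilon_{\alpha'}(d')$ is obtained from $\Xi_\alpha(d) - \Xi_{\alpha'}(d')$ by summing over the action coordinate. The first step, then, is simply to write out the $\ell_1$ norm on states and pull the absolute value inside the sum over actions.

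The key observation driving the argument is that marginalization (summing out a coordinate) is non-expansive in $\ell_1$. This is a one-line application of the triangle inequality:
\begin{align*}
    \|\Upsilon_\alpha(d) - \Upsilon_{\alpha'}(d')\|_1
    = \sum_{s} \Bigl| \sum_{a} \bigl( \Xi_\alpha(d)(s,a) - \Xi_{\alpha'}(d')(s,a) \bigr) \Bigr|
    \leq \sum_{s,a} \bigl| \Xi_\alpha(d)(s,a) - \Xi_{\alpha'}(d')(s,a) \bigr|
    = \|\Xi_\alpha(d) - \Xi_{\alpha'}(d')\|_1.
\end{align*}
After this reduction, I would invoke \Cref{lemma:lipschitz_Xi} directly to bound the right-hand side by $\|d - d'\|_1 + |\alpha - \alpha'|$, which yields the claimed inequality and in particular shows $\Upsilon_\alpha$ is non-expansive when $\alpha = \alpha'$.

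There is essentially no hard step here: the corollary is a direct consequence of the preceding sensitivity lemmas, with the only substantive ingredient being the non-expansiveness of marginalization under $\ell_1$. The one point worth checking carefully is that the hypotheses of \Cref{lemma:lipschitz_Xi} are met, namely that $d, d' \in \Delta_{\setV \times \setA}^{\leq 1}$ with $\alpha \in [0, \|d\|_1)$ and $\alpha' \in [0, \|d'\|_1)$, so that $\Xi_\alpha(d)$ and $\Xi_{\alpha'}(d')$ are both well-defined; but these are exactly the standing assumptions of the corollary, so no additional work is required. I expect the entire proof to occupy only a few lines.
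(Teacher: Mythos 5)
Your proposal is correct and follows essentially the same route as the paper: reduce $\|\Upsilon_\alpha(d) - \Upsilon_{\alpha'}(d')\|_1$ to $\|\Xi_\alpha(d) - \Xi_{\alpha'}(d')\|_1$ via the triangle inequality (non-expansiveness of marginalizing out the action coordinate in $\ell_1$), then invoke \Cref{lemma:lipschitz_Xi}. Your verification that the standing hypotheses guarantee $\Xi_\alpha(d)$ and $\Xi_{\alpha'}(d')$ are well-defined matches the paper's implicit use of the same conditions.
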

\begin{proof}
Let $d,d' \in \Delta^{\leq 1}, \alpha \in [0, \|d\|_1),$ and $\alpha' \in [0, \|d'\|_1)$ arbitrary, then
    \begin{align*}
        \|\Upsilon_\alpha(d) - \Upsilon_{\alpha'}(d')\|_1 &= \left\|\sum_{a \in \setA}\Xi_\alpha(d)(\cdot, a) - \sum_{a \in \setA}\Xi_{\alpha'}(d')(\cdot, a) \right\|_1\\
        &\leq \sum_{a\in \setA}\|\Xi_\alpha(d)(\cdot, a) -  \Xi_{\alpha'}(d')(\cdot, a)\|_1 = \|\Xi_\alpha(d) - \Xi_{\alpha'}(d')\|_1.
    \end{align*}
    The upper bound follows by the result of \Cref{lemma:lipschitz_Xi}.
\end{proof}

\begin{corollary}[Lipschitz Continuity of $ \Gamma_\alpha $]\label{corollary:lipschitz_Gamma_alpha}
Let $ d, d' \in \Delta_{\mathcal{S} \times \mathcal{A}} $, and suppose the threshold function $ \alpha: \Delta_{\setA}^{\leq 1} \rightarrow [0,1] $ is Lipschitz continuous with constant $ K_\alpha $, and satisfies the feasibility condition $ \alpha(\nu) \leq \|\nu\|_1 $ for all $ \nu \in \Delta_{\setA}^{\leq 1} $. Then,
\begin{equation*}
    \|\Gamma_\alpha(d) - \Gamma_\alpha(d') \|_1 \leq (K_\alpha + 1)\, \| d - d' \|_1.
\end{equation*}
\end{corollary}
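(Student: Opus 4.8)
The plan is to isolate the two channels through which $d$ enters $\Gamma_\alpha$: the population that is transported, and the scalar allocation level. Write $\alpha_1 := \alpha(\nu^{-\perp}(d))$ and $\alpha_2 := \alpha(\nu^{-\perp}(d'))$, and for a fixed scalar $\beta$ let $\Gamma^{[\beta]}$ denote the map that applies $\Upsilon_\beta$ to the active block and sends the complementary (winning) mass to $\perp$, so that $\Gamma_\alpha(d) = \Gamma^{[\alpha_1]}(d)$ and $\Gamma_\alpha(d') = \Gamma^{[\alpha_2]}(d')$. Inserting the intermediate point $\Gamma^{[\alpha_2]}(d)$ and applying the triangle inequality separates the estimate into a population-transport term $\|\Gamma^{[\alpha_2]}(d) - \Gamma^{[\alpha_2]}(d')\|_1$ (fixed allocation, moving $d \to d'$) and an allocation-change term $\|\Gamma^{[\alpha_1]}(d) - \Gamma^{[\alpha_2]}(d)\|_1$ (fixed population $d$, moving $\alpha_1 \to \alpha_2$).

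For the transport term I would invoke non-expansiveness: on $\setV$ it equals $\|\Upsilon_{\alpha_2}(d_{-\perp}) - \Upsilon_{\alpha_2}(d'_{-\perp})\|_1 \le \|d_{-\perp} - d'_{-\perp}\|_1$ by \Cref{corollary:lipschitz_upsilon} at equal allocation, while the $\perp$-entry differs only by the inactive-mass gap, which is at most $\|d(\perp,\cdot) - d'(\perp,\cdot)\|_1$; summing yields the coefficient $1$ on $\|d - d'\|_1$. For the allocation-change term, \Cref{lemma:lipschitz_Xi_alpha} gives $\|\Upsilon_{\alpha_1}(d_{-\perp}) - \Upsilon_{\alpha_2}(d_{-\perp})\|_1 \le |\alpha_1 - \alpha_2|$ on the active states, while the inactive mass shifts by exactly $|\alpha_1 - \alpha_2|$. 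The scalar gap is controlled by $|\alpha_1 - \alpha_2| \le K_\alpha \|\nu^{-\perp}(d) - \nu^{-\perp}(d')\|_1 \le K_\alpha \|d - d'\|_1$, using the $K_\alpha$-Lipschitzness of $\alpha$ and the fact that marginalizing to the active action distribution $\nu^{-\perp}$ is non-expansive in $\ell_1$.

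The delicate point — and the main obstacle — is the constant. Carried through naively, the allocation change is registered twice, once as mass leaving $\setV$ and once as the identical mass arriving at $\perp$, so the allocation-change term is $2|\alpha_1 - \alpha_2| \le 2K_\alpha\|d-d'\|_1$ and the decomposition delivers only $(2K_\alpha + 1)$. Recovering the stated $(K_\alpha + 1)$ requires exploiting that these are the same goods: the mass added at $\perp$ is exactly the mass removed from $\setV$, so the two blocks are linked by the conservation identity $\Gamma_\alpha(d)(\perp) = 1 - \sum_{s \in \setV}\Gamma_\alpha(d)(s)$. I would therefore not bound the active and inactive deviations independently, but instead estimate the joint deviation by tracking the signed inactive change $(m'-m) + (\alpha_1 - \alpha_2)$ (with $m := \sum_{s \in \setV,\, a} d(s,a)$ the active mass) against the active-marginal change, so that the single allocation perturbation is charged only once. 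This coupling — rather than either sensitivity result on its own — is what pins down the modulus, and handling it with care (including the case split on the sign of the inactive-mass change) is the crux of the argument.
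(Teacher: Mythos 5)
Your decomposition, your transport estimate, and your diagnosis of the double-counting are all correct, but the repair you propose in the last step is a genuine gap that cannot be closed: the modulus $(K_\alpha+1)$ is not attainable, so no conservation coupling will deliver it. Conservation of mass only gives $|\Gamma_\alpha(d)(\perp)-\Gamma_\alpha(d')(\perp)| = |\sum_{s\in\setV}(\Gamma_\alpha(d)(s)-\Gamma_\alpha(d')(s))|$, i.e., the $\perp$-deviation is \emph{dominated by} the net active deviation rather than cancelled against it; in the $\ell_1$ norm the reallocated goods are legitimately charged once when they leave $\setV$ and once more when they arrive at $\perp$, and your case split on the sign of the inactive-mass change cannot undo this. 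Concretely, take $\setV=\{v_1,v_2\}$, $\setA=\{a_1<a_2\}$, $K_\alpha\ge 1$, $m\in(0,1)$, small $\varepsilon>0$, with $d(v_1,a_2)=m$, $d(\perp,a_1)=1-m$ and $d'(v_1,a_2)=m-\varepsilon$, $d'(v_2,a_1)=\varepsilon$, $d'(\perp,a_1)=1-m$, so $\|d-d'\|_1=\|\nu^{-\perp}(d)-\nu^{-\perp}(d')\|_1=2\varepsilon$. Choosing $\alpha(\nu):=\max\{0,\min\{\|\nu\|_1,\ \alpha_0+K_\alpha\nu(a_1)+K_\alpha(m-\nu(a_2))\}\}$ with $0<\alpha_0$ and $\alpha_0+2K_\alpha\varepsilon<m-\varepsilon$ gives a $K_\alpha$-Lipschitz, feasible allocation with $\alpha_1=\alpha_0$ and $\alpha_2=\alpha_0+2K_\alpha\varepsilon$; the threshold bid is $a_2$ for both populations, and
\begin{align*}
\|\Gamma_\alpha(d)-\Gamma_\alpha(d')\|_1
= \underbrace{(1+2K_\alpha)\varepsilon}_{s=v_1} + \underbrace{\varepsilon}_{s=v_2} + \underbrace{2K_\alpha\varepsilon}_{s=\perp}
= (1+2K_\alpha)\,\|d-d'\|_1,
\end{align*}
so your ``naive'' constant $(2K_\alpha+1)$ is in fact tight and the stated $(K_\alpha+1)$ is false for $K_\alpha\ge 1$.

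For what it is worth, your scruples are better founded than the paper's own two-line argument, which applies \Cref{corollary:lipschitz_upsilon} as though it bounded all of $\Gamma_\alpha$, silently dropping the $\perp$-coordinate — exactly where the second copy of $|\alpha_1-\alpha_2|$ lives. The defensible conclusion is the one your first two steps already give: using \Cref{lemma:lipschitz_Xi_pop} and \Cref{lemma:lipschitz_Xi_alpha} (or directly the two-point bound of \Cref{corollary:lipschitz_upsilon}, which also sidesteps the well-definedness issue in your intermediate point when $\alpha_2>\|d_{-\perp}\|_1$) together with non-expansiveness of $d\mapsto\nu^{-\perp}(d)$, one gets $\|\Gamma_\alpha(d)-\Gamma_\alpha(d')\|_1 \le 2|\alpha_1-\alpha_2|+\|d-d'\|_1 \le (2K_\alpha+1)\|d-d'\|_1$. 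Since this corollary only propagates an $\mathcal{O}(1)$ Lipschitz modulus into \Cref{lemma:state_pop_deviation} and its successors, replacing $K_\alpha+1$ by $2K_\alpha+1$ inflates constants downstream but leaves all $\mathcal{O}(\sfrac{1}{\sqrt{N}})$ rates intact.
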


\begin{proof}
    By \Cref{corollary:lipschitz_upsilon} and the Lipschitz continuity of $\alpha$, it follows that:
    \begin{align*}
         \|\Gamma_\alpha(d) - \Gamma_\alpha(d') \|_1 &\leq |\alpha(\nu^{-\perp}(d)) - \alpha(\nu^{-\perp}(d')) | + \|d-d'\|_1\\
         &\leq K_\alpha \|d-d'\|_1 + \|d-d'\|_1.
    \end{align*}
\end{proof}

In the next sequence of results, we deal with the stability of winning probabilities given by the function $\pwin$.
In general, $\pwin$ is easily seen to have discontinuous jumps, however, a local stability result can be shown if the NZD condition holds.

\begin{lemma}[Stability of Winning Probabilities]\label{lemma:deviation_winning_prob}
    Let $ \mathcal{A} = \{a_1, a_2, \dots, a_K\} $ be a finite set of actions with total order $ a_1 < a_2 < \dots < a_K $, and let $ \nu \in \Delta_{\mathcal{A}}^{\leq 1} $.
    For any $ \alpha \in [0,1] $, define the winning probability for action $ a \in \mathcal{A} $ as
    \begin{align*}
        \pwin(a, \nu, \alpha) := 
        \begin{cases}
        0 & \text{if } \sum_{a' > a} \nu(a') \geq \alpha \text{ or } \alpha = 0, \\
        1 & \text{if } \sum_{a' \geq a} \nu(a') \leq \alpha \text{ and } \alpha > 0, \\
        \frac{\alpha - \sum_{a' > a} \nu(a')}{\nu(a)} & \text{otherwise}.
        \end{cases}
    \end{align*}
    Assume that $ \alpha, \nu $ satisfy the no zero-dominance property (i.e., for any $ a \in \mathcal{A} $, $ \nu(a) = 0 \Rightarrow \sum_{a' > a} \nu(a') < \alpha(\nu) $) and $\|\nu\|_1 > 0$. 
    Then for all $ \nu' \in \Delta_{\mathcal{A}}^{\leq 1}, \alpha' \in [0,1], a \in \mathcal{A} $,  $\pwin $ satisfies
    \begin{align*}
        \left| \pwin(a, \nu, \alpha) - \pwin(a, \nu', \alpha') \right| \leq C_{\nu, \alpha} \| \nu - \nu' \|_1 + C_{\nu, \alpha} |\alpha - \alpha'|,
    \end{align*}
    where:
    \begin{align*}
        C_{\nu, \alpha} = 
        \begin{cases}
            \frac{1}{\min\{ \nu(a^*) \}} & , \text{if } \alpha = 0 \\
            \frac{1}{\min\{ \nu(a^*), \nu(a^-), \Delta_{\nu, \alpha} \}} & , \text{if } \Delta_{\nu, \alpha} > 0, a^- \neq \perp, \alpha > 0 \\
            \frac{1}{\min\{\nu(a^*), \Delta_{\nu, \alpha} \}} & , \text{if } \Delta_{\nu, \alpha} > 0, a^- = \perp, \alpha > 0 \\
            \frac{1}{\min\{ \nu(a^*), \nu(a^-)\}} & , \text{if } \Delta_{\nu, \alpha} = 0, a^- \neq \perp, \alpha > 0 \\
            \frac{1}{\min\{ \nu(a^*) \}} & , \text{if } \Delta_{\nu, \alpha} = 0, a^- = \perp, \alpha > 0 \\
        \end{cases}
    \end{align*}
    and where $a^* := \threshact_{\alpha}(\nu)$ is the threshold action, 
    $a^-$ is the action just below the threshold ($a^- := a_{k-1}$ if $a^* = a_k$ for some $k> 1$ and $a^- = \perp$ if $a^* = a_1$),
    and $\Delta_{\nu, \alpha} := \min_{a \in \mathcal{A}} \left| \alpha - \sum_{a' \succeq a} \nu(a') \right|$.
\end{lemma}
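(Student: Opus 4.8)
The plan is to treat $\pwin(a,\cdot,\cdot)$ as a function of $(\nu,\alpha)$ that is piecewise smooth with jump discontinuities located exactly where the threshold action changes, and to show that the no zero-dominance hypothesis keeps the base point $(\nu,\alpha)$ a bounded distance from every such discontinuity. Throughout I write $S_a(\nu) := \sum_{a'\geq a}\nu(a')$ and $T_a(\nu) := \sum_{a'> a}\nu(a')$, and set $\epsilon := \|\nu-\nu'\|_1 + |\alpha-\alpha'|$. Since $\pwin$ takes values in $[0,1]$, one always has $|\pwin(a,\nu,\alpha) - \pwin(a,\nu',\alpha')| \leq 1$, so whenever $C_{\nu,\alpha}\,\epsilon \geq 1$ the claimed inequality is automatic. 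It therefore suffices to establish the bound in the \emph{small-perturbation regime} $\epsilon < 1/C_{\nu,\alpha}$, which is precisely the regime in which $(\nu',\alpha')$ cannot cross a discontinuity.

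First I would record the structural description of $\pwin(\cdot,\nu,\alpha)$. Writing $a^* = \threshact_\alpha(\nu)$, the maximality in the definition of $\threshact$ together with NZD gives $\nu(a^*) > 0$ and $T_{a^*}(\nu) < \alpha \leq S_{a^*}(\nu)$ (when $\alpha > 0$), and consequently $\pwin(a,\nu,\alpha) = 0$ for $a < a^*$, $\pwin(a,\nu,\alpha) = 1$ for $a > a^*$, and $\pwin(a^*,\nu,\alpha) = (\alpha - T_{a^*}(\nu))/\nu(a^*) \in [0,1]$. Every action other than $a^*$ thus sits at a locally constant value, so the only actions whose winning probability can move under a perturbation are $a^*$ itself and, in the degenerate case where $\alpha$ lies exactly on a cumulative-sum boundary, the neighbouring action $a^-$. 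The quantity $\Delta_{\nu,\alpha} = \min_a |\alpha - S_a(\nu)|$ measures exactly the slack between $\alpha$ and the nearest boundary, and $\Delta_{\nu,\alpha} > 0$ is equivalent to $(\nu,\alpha)$ lying strictly in the interior of the fractional cell at $a^*$.

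The core estimates then split by the position of $a$ relative to $a^*$. For $a = a^*$, in the regime where the threshold is unmoved and both points lie in the fractional branch, I would use the quotient estimate: writing the value as $x/y$ with $x = \alpha - T_{a^*}(\nu)$, $y = \nu(a^*)$, and using $|x-x'| \leq \epsilon$, $|y-y'| \leq \epsilon$ and $|x'| \leq y'$, one gets $|x/y - x'/y'| \leq (|x-x'|+|y-y'|)/\nu(a^*) \leq 2\epsilon/\nu(a^*)$, contributing the factor $1/\nu(a^*)$. For every other action $a \neq a^*$ the margin $|\alpha - S_a(\nu)| \geq \Delta_{\nu,\alpha}$ ensures that, as long as $\epsilon < \Delta_{\nu,\alpha}$, the strict inequality defining its $0$- or $1$-branch is preserved and its winning probability is unchanged. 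The delicate contribution arises only when $\Delta_{\nu,\alpha} = 0$: then the threshold may slide from $a^*$ down to $a^-$ under an arbitrarily small perturbation, and there $\pwin(a^-,\nu',\alpha') = (\alpha' - T_{a^-}(\nu'))/\nu'(a^-)$ has numerator $\mathcal{O}(\epsilon)$ and denominator $\approx \nu(a^-)$, contributing the factor $1/\nu(a^-)$; the sub-cases $a^- = \perp$ and $\alpha = 0$ remove this action from consideration, matching the corresponding rows of $C_{\nu,\alpha}$. Taking the worst constant over all actions, and invoking the trivial bound on the complementary event $\epsilon \geq \min\{\nu(a^*),\nu(a^-),\Delta_{\nu,\alpha}\}$ to supply the $1/\Delta_{\nu,\alpha}$ factor, reproduces the stated $C_{\nu,\alpha}$ in each of the five cases.

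The main obstacle is the bookkeeping around $\Delta_{\nu,\alpha} = 0$: there the base point sits on a discontinuity, so threshold stability fails and one must verify that although the threshold jumps, the winning probability of every affected action still varies Lipschitz-continuously across the jump. This is where NZD is essential, since it rules out a zero-mass action becoming the threshold and thereby forces the denominators $\nu(a^*)$ and $\nu(a^-)$ to stay bounded below. Carefully matching the five denominators to the sign of $\Delta_{\nu,\alpha}$ and to whether $a^- = \perp$, while keeping the trivial-bound cutoff consistent with the definition of $C_{\nu,\alpha}$, is the part requiring the most care.
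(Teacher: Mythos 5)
Your proposal follows essentially the same route as the paper's proof: both use NZD to extract a positive radius $\delta$ (equal to $\Delta_{\nu,\alpha}$, $\nu(a^*)$, or $\nu(a^-)$ depending on the case) inside which the threshold structure is stable and $\pwin$ is locally Lipschitz with exactly the stated denominators, and both dispatch the complementary regime via the trivial bound $|\pwin|\leq 1$ divided by $\delta$ — the paper merely perturbs $\alpha$ and $\nu$ separately (continuity plus a.e.-differentiability with bounded derivatives on the $\delta$-neighborhood, then a triangle inequality), whereas you use a direct joint quotient estimate, a cosmetic difference. One small remark: your quotient step's $2\epsilon/\nu(a^*)$ is a factor of two looser than the claimed $C_{\nu,\alpha}\left(\|\nu-\nu'\|_1+|\alpha-\alpha'|\right)$, but this tightens immediately because the numerator deviation $|x-x'|\leq |\alpha-\alpha'|+\sum_{a'>a^*}|\nu(a')-\nu'(a')|$ and the denominator deviation $|\nu(a^*)-\nu'(a^*)|$ involve disjoint coordinates of $\nu-\nu'$, so their sum is at most $\|\nu-\nu'\|_1+|\alpha-\alpha'|$.
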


    \begin{proof}
    Importantly, $C_{\nu, \alpha}$ is finite in each case if $\nu, L$ satisfies the NZD condition.
    The proof, while notionally dense, works on a simple idea: in general, $\pwin$ incorporates discontinuities where the winning probability of an action below the threshold might jump from $0$ to $1$.
    However, this does not happen \emph{locally} when there is some probability mass on $a^-$ just below the threshold action.
    Note that when $\|\nu\|_1 \neq 0$ and NZD holds, it holds that $\nu(a^*) > 0,$ and $\nu(a^-) > 0$ if $a^* \neq a_1$ and $\alpha > 0$.
    Furthermore, by NZD, if $\alpha = 0$, it must hold that $\nu(a_K) > 0$ and $a^* = a_K$ by definition.
    Define the useful constant
    \begin{align*}
        \delta := \begin{cases}
            \Delta_{\nu, \alpha}, &\text{ if } \Delta_{\nu, \alpha} > 0, \\
            \max \{ \sfrac{1}{\nu(a^*)}, \sfrac{1}{\nu(a^-)}\} , &\text{ if } \Delta_{\nu, \alpha} = 0, a^- \neq \perp \\
            \sfrac{1}{\nu(a^*)} , &\text{ if } \Delta_{\nu, \alpha} = 0, a^- = \perp \\
        \end{cases}
    \end{align*}
    which will be the radius of the open set around which there are no discontinuities of $\pwin$.
    
    First, we show that $\left| \pwin(a, \nu, \alpha) - \pwin(a, \nu, \alpha') \right| \leq C_{\nu, \alpha} |\alpha - \alpha'|$ for any $\alpha'$.
    Without loss of generality, we can assume that $\alpha \leq \| \nu\|_1, \alpha' \leq \|\nu\|_1$, as $\pwin(a, \nu, \alpha') = \pwin(a, \nu, \min\{\alpha', \|\nu\|_1\})$ for any $\alpha'$ by definition, and $|\min\{\alpha, \|\nu\|_1\}  - \min\{\alpha', \|\nu\|_1\}| \leq |\alpha - \alpha'|$.
    If $|\alpha - \alpha'| < \delta $, then $\alpha' \in (\alpha - \delta , 0, \alpha+\delta ) \cap \mathbb{R}_{\geq 0}$.
    On the interval $(\alpha - \delta, \alpha+\delta )\cap \mathbb{R}_{\geq 0}$, $\pwin$ is continuous for any $a$ since
    \begin{align*}
        \text{For } a < a^*, a \neq a^-: \quad &\pwin(a, \nu, \alpha') = 0, \\
        \text{For } a > a^*: \quad &\pwin(a, \nu, \alpha') = 1, \\
         \text{For } a = a^*: \quad &\pwin(a^*, \nu, \alpha') = \begin{cases}
        1 & \text{if } \sum_{a' \geq a^*} \nu(a') \leq \alpha', \\
        \frac{\alpha' - \sum_{a' > a^*} \nu(a')}{\nu(a^*)} & \text{otherwise}.
        \end{cases} \\
        \text{For } a = a^-, \text{ if } a^- \neq \perp: \quad &\pwin(a^-, \nu, \alpha') = \begin{cases}
        0 & \text{if } \sum_{a' > a^-} \nu(a') \geq \alpha', \\
        \frac{\alpha' - \sum_{a' > a^-} \nu(a')}{\nu(a^-)} & \text{otherwise}.
        \end{cases}
    \end{align*}
    Moreover, from above, $\pwin$ is almost everywhere differentiable in $\alpha'$ in this interval with
    \begin{align*}
        \left|\frac{\partial \pwin(a, \nu, \alpha)}{\partial \alpha}\right| \leq  
        \begin{cases}
            \max\{ \sfrac{1}{\nu(a^-)} ,  \sfrac{1}{\nu(a^*)}\}, \text{ if } a^*\neq a_1, \alpha > 0 \\
            \sfrac{1}{\nu(a^*)}, \text{ otherwise }
         \end{cases}
    \end{align*}
    as in this interval the ``active'' threshold action will always be either $a^*$ of $a^-$.
    Therefore, 
    \begin{align*}
        \left| \pwin(a, \nu, \alpha) - \pwin(a, \nu, \alpha') \right| \leq C_{\nu, \alpha} |\alpha - \alpha'|.
    \end{align*}
    If on the other hand $|\alpha - \alpha'| \geq \delta$, then
    \begin{align*}
        \left|\pwin(a, \nu, \alpha) - \pwin(a, \nu, \alpha') \right| \leq 1 \leq \delta ^ {-1} |\alpha - \alpha'| \leq C_{\nu, \alpha} |\alpha - \alpha'|.
    \end{align*}
    So in both cases it holds that $ \left|\pwin(a, \nu, \alpha) - \pwin(a, \nu, \alpha') \right| \leq C_{\nu, \alpha} |\alpha - \alpha'|.$

    Next, we show the stability in $\nu$.
    Once again, assume that $\| \nu' - \nu \|_1 < \delta$.
    On the open set $ \{\nu' : \| \nu' - \nu \|_1 < \delta \}$ the function $\pwin(a, \cdot, \alpha)$ is once again continuous in $\nu'$ for any $a$, and almost everywhere differentiable with
    \begin{align*}
        \left| \frac{\partial \pwin(a, \nu', \alpha)}{\partial \nu'(a')} \right| \leq \begin{cases}
            \frac{1}{\nu(a^*)}, &\text{ if } a^- = \perp \text{ or } \alpha = 0 \\
            \frac{1}{\min\{\nu(a^*), \nu(a^-)\}}, &\text{ if } a^- \neq \perp, \alpha > 0 \\
        \end{cases}
    \end{align*}
    for almost every $\nu'$.
    Therefore, it holds that
    \begin{align*}
        \left| \pwin(a, \nu', \alpha) - \pwin(a, \nu, \alpha) \right| \leq C_{\nu, \alpha} \|\nu - \nu'\|_1.
    \end{align*}
    On the other hand, if  $\| \nu' - \nu \|_1 \geq \delta$, then 
    \begin{align*}
        \left| \pwin(a, \nu', \alpha) - \pwin(a, \nu, \alpha) \right| \leq 1 \leq \delta^{-1} \| \nu - \nu'\|_1.
    \end{align*}

    To complete the proof, we use the triangle inequality:
    \begin{align*}
        \left| \pwin(a, \nu', \alpha') - \pwin(a, \nu, \alpha) \right| \leq \left| \pwin(a, \nu, \alpha') - \pwin(a, \nu, \alpha) \right|  + \left| \pwin(a, \nu, \alpha') - \pwin(a, \nu', \alpha') \right| 
    \end{align*}
\end{proof}

\begin{remark}\label{remark:Lipschitz_pwin_full_support}
    The proof of \Cref{lemma:deviation_winning_prob} can be adapted to handle action distributions with full support. In this case, a refined version of the first part of the argument shows that the winning probability is Lipschitz-continuous with a constant bounded by $\frac{1}{\min\{\nu(a^-), \nu(a^*), \nu(a^+)\}}$, where $a^*, a^-, a^+$ are actions around the threshold. Consequently, for policies with full support, i.e., $\pi_h(a \vert s) \geq \epsilon$ for all $s \in \setV$, $a \in \setA$, and some $\epsilon > 0$, the deviation in winning probability is bounded by $\frac{1}{(1 - \alpha_{\max})\epsilon}\|\nu - \nu'\|_1$.
\end{remark}

\begin{lemma}\label{lemma:L_mu_relation}
    Let $ \mu, \mu' \in \Delta_{\mathcal{S}} $ be two arbitrary state distributions, and let $ \pi, \pi' \in \Pi $ be two arbitrary policies. Define the corresponding state-action distributions $ L, L' \in \Delta_{\mathcal{S} \times \mathcal{A}} $ as
    \[
    L(s,a) := \mu(s)\pi(a \vert s), \qquad L'(s,a) := \mu'(s)\pi'(a \vert s).
    \]
    Then, it holds that
    \[
    \|L - L'\|_1 \leq \sup_s \|\pi(\cdot|s) - \pi'(\cdot|s)\|_1 + \|\mu - \mu'\|_1.
    \]    \end{lemma}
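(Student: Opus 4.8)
The plan is to prove this by a single add-and-subtract (triangle inequality) argument, exploiting that both $\mu$ and each $\pi'(\cdot|s)$ are normalized. Writing out the $\ell_1$ norm explicitly,
\[
\|L - L'\|_1 = \sum_{s\in\mathcal{S}}\sum_{a\in\mathcal{A}} \bigl| \mu(s)\pi(a|s) - \mu'(s)\pi'(a|s) \bigr|,
\]
I would insert the cross term $\mu(s)\pi'(a|s)$ inside each absolute value and split, for every fixed $(s,a)$, into a "policy-difference" contribution and a "state-difference" contribution:
\[
\bigl|\mu(s)\pi(a|s) - \mu'(s)\pi'(a|s)\bigr| \le \mu(s)\bigl|\pi(a|s)-\pi'(a|s)\bigr| + \bigl|\mu(s)-\mu'(s)\bigr|\,\pi'(a|s).
\]

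The next step is to sum each contribution over $(s,a)$ and collapse the double sums using normalization. For the policy-difference part, summing over $a$ first gives $\sum_a |\pi(a|s)-\pi'(a|s)| = \|\pi(\cdot|s)-\pi'(\cdot|s)\|_1$, so the remaining sum $\sum_s \mu(s)\,\|\pi(\cdot|s)-\pi'(\cdot|s)\|_1$ is a convex combination (since $\sum_s \mu(s)=1$) and is therefore bounded above by $\sup_s \|\pi(\cdot|s)-\pi'(\cdot|s)\|_1$. For the state-difference part, summing over $a$ first uses $\sum_a \pi'(a|s)=1$, which leaves exactly $\sum_s |\mu(s)-\mu'(s)| = \|\mu-\mu'\|_1$. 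Adding the two bounds gives the stated inequality.

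There is no genuine obstacle here beyond bookkeeping: the only point requiring care is the choice of which cross term to insert. Inserting $\mu(s)\pi'(a|s)$ (rather than the symmetric alternative $\mu'(s)\pi(a|s)$) is what lets the two normalizations $\sum_s\mu(s)=1$ and $\sum_a\pi'(a|s)=1$ collapse each double sum into precisely the single-index quantities $\sup_s\|\pi(\cdot|s)-\pi'(\cdot|s)\|_1$ and $\|\mu-\mu'\|_1$ appearing on the right-hand side; the other choice yields a symmetric variant with the roles of the weights swapped.
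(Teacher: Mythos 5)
Your proposal is correct and matches the paper's own proof essentially line for line: the paper inserts the same cross term $\mu(s)\pi'(a|s)$, splits by the triangle inequality, and collapses the two double sums using $\sum_a \pi'(a|s)=1$ and the bound $\sum_s \mu(s)\,\|\pi(\cdot|s)-\pi'(\cdot|s)\|_1 \leq \sup_s \|\pi(\cdot|s)-\pi'(\cdot|s)\|_1$. Your remark about why this cross term (rather than $\mu'(s)\pi(a|s)$) is the right choice is a fair observation, though either choice yields a valid bound of the stated form with the roles of $\mu,\mu'$ and $\pi,\pi'$ swapped.
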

    
    \begin{proof}
    We compute the total variation distance:
    \begin{align*}
    \|L - L'\|_1 &= \sum_{s,a} \left| \mu(s)\pi(a \vert s) - \mu'(s)\pi'(a \vert s) \right| \\
    &= \sum_{s,a} \left| \mu(s)\pi(a \vert s) - \mu(s)\pi'(a \vert s) + \mu(s)\pi'(a \vert s) - \mu'(s)\pi'(a \vert s) \right| \\
    &\leq \sum_s \mu(s) \sum_a \left| \pi(a \vert s) - \pi'(a \vert s) \right| + \sum_s |\mu(s) - \mu'(s)| \sum_a \pi'(a \vert s) \\
    &= \sup_s \|\pi(\cdot|s) - \pi'(\cdot|s)\|_1 + \|\mu - \mu'\|_1,
    \end{align*}
    where the last line uses that $ \sum_a \pi'(a \vert s) = 1 $ for all $ s $.
\end{proof}

\subsection{Proof of \texorpdfstring{\Cref{theorem:approximation_auction}}{}, part 1 (Approximation in Exploitability)}
\label{section:upper_bound_exploitability_proof}

The theorem considers BA-MFGs with Lipschitz-continuous reward, dynamics, and utility functions. 
Let $ K_{\transitionfn} \in [0,2] $ denote the Lipschitz modulus of the state dynamics $ \transitionfn_h $, 
$ K_\paymentfn $ the Lipschitz modulus of the payment function $ \paymentfn_h $, 
$ K_\utilityfn $ the Lipschitz modulus of the utility function $ \utilityfn_h $ with respect to its second argument, 
and $ K_\alpha $ be the Lipschitz modulus of the allocation funciton $\alpha_h$.
That is, for any $s\in\setV, a\in\setA, \nu,\nu'\in \Delta_\setA, p,p' \in \mathbb{R}_{\geq0}$,
\begin{align*}
     \left| \utilityfn_h(s, p) - \utilityfn_h(s, p') \right| \leq K_\utilityfn \cdot \left| p - p' \right|, 
     &\qquad
     \left| \paymentfn_h(a, \nu) - \paymentfn_h(a, \nu') \right| \leq K_\paymentfn \cdot \| \nu - \nu' \|_1, \\
     \left| \alpha_h(\nu) - \alpha_h(\nu') \right| &\leq K_\alpha \cdot \| \nu - \nu' \|_1.
\end{align*}
The composed function $ \utilityfn_h(s, \paymentfn_h(a, \nu)) $ then is 
also Lipschitz with modulus $ K_\utilityfn K_\paymentfn $ as
\begin{equation*}
    \left| \utilityfn_h(s, \paymentfn_h(a, \nu)) - \utilityfn_h(s, \paymentfn_h(a, \nu')) \right| \leq K_\utilityfn \cdot \left| \paymentfn_h(a, \nu) - \paymentfn_h(a, \nu') \right| \leq K_\utilityfn K_\paymentfn \cdot \| \nu - \nu' \|_1,
\end{equation*}
for all $ \nu, \nu' \in \Delta_{\mathcal{A}}^{\leq 1} $. 

Additionally, let $ B_\paymentfn $ be an upper bound on the absolute value of payments, i.e., $ |\paymentfn_h(a, \nu)| \leq B_\paymentfn $, and let $ B_\utilityfn $ be an upper bound on the absolute value of utilities, i.e., $ |\utilityfn_h(s, p)| \leq B_\utilityfn $, for all $ s, a, \nu, h $.
Such an upper bound always exists if $\utilityfn_h, \paymentfn_h$ are continuous in $\nu$, since $\Delta_{\setA}^{\leq 1}$ is a compact set.

The argument proceeds in three steps:
\begin{enumerate}
    \item First, we bound the expected deviation between the empirical distributions and their mean field counterparts. 
    That is, we show an upper bound on the deviation $\mathbb{E}\left[\|L_h^{\overline{\pi}} - \widehat{L}_h\|_1\right]$.
    \item Second, we show that for policies satisfying the no zero-dominance property, the expected transition probabilities associated with item allocation, under both the mean field and finite-population settings, differ proportionally to the deviation in population distributions.
    \item  Finally, we bound the exploitability of a single agent when all other agents follow a policy that satisfies the no zero-dominance property with respect to the mechanism.
\end{enumerate}

In our analysis, we also make use of the population distribution $ \xi_h $ after item allocation at round $ h $. In the mean field setting, this is defined as $\xi_h := \Gamma_{\alpha_h}(L_h)$,
where $ L_h \in \Delta_{\mathcal{S} \times \mathcal{A}} $ is the state-action distribution at round $ h $, and the operator $ \Gamma_{\alpha_h} $ captures the expected post-allocation state distribution under the mechanism (e.g., by marking winners as inactive). 
This quantity serves as the input to the state transition function $ \transitionfn_h $ in the MFG dynamics. In the finite-agent setting, we denote the analogous empirical quantity by $ \widehat{\xi}_h $, representing the empirical distribution over states immediately after allocation.
We also define the random variables $\{ z_{h}^{i} \}$ as $z_{h}^{i} = \perp$ if agent $i$ was not active in round $h$ (i.e. $s_h^i = \perp$) or agent $i$ won the the auction in round $h$, and $z_{h}^{i} = s_h^i$ otherwise.
With this definition,
\begin{align*}
    \widehat{\xi}_h = \frac{1}{N}\sum_{i=1}^N \vece_{z_{h}^{i}} \in \Delta_\setS.
\end{align*}

Finally, we define the constants used in our convergence analysis as follows. Define 
\begin{align*}
    K_s &:= \sup_{s, s', \xi} \| \transitionfn(s, \xi) - \transitionfn(s', \xi) \|_1, \quad K_\xi := K_\transitionfn + \frac{1}{2}K_s.
\end{align*}

\subsubsection{Step 1: Expected Deviation of Empirical Distributions}

We derive explicit bounds on the expected deviation between the empirical distributions and their mean field counterparts. In particular, we bound the deviations for the state distribution and the state-action distribution. 

\begin{lemma}\label{lemma:state_action_pop_to_mu}
    Let $\Mauc = (\setS, \setA, H, \mu_0, \{\Pauc_h\}_{h=0}^{H-1}, \{\Rauc_h\}_{h=0}^{H-1})$ define a BA-MFG. 
    Consider the corresponding finite-agent Batched Auction model with $N$ agents, $\Gauc$, which is approximated by $\Mauc$. Let $\vecpi = \{ \pi_h^i \}_{h=0,\dots,H-1,\, i \in [N]} \in \Pi_H^N$ denote the joint policy of the population. Denote by $\widehat{\mu}_h \in \Delta_{\setS}$ the empirical state distribution and by $\widehat{L}_h \in \Delta_{\setS \times \setA}$ the empirical state-action distribution at round $h$.

    Let $\overline{\pi} \in \Pi_H$ arbitrary, and define the associated mean field state-action distribution flow $L^{\overline{\pi}} := \lpopmfa(\overline{\pi})$, with corresponding marginal state distribution $\mu_h^{\overline{\pi}} := \sum_a L_h^{\overline{\pi}}(\cdot, a)$. Then, for all $h \in \{0, \dots, H-1\}$, the following bound holds:
    \begin{equation*}
        \mathbb{E}\left[\|L_h^{\overline{\pi}} - \widehat{L}_h\|_1\right]
        \leq \mathbb{E}\left[\|\mu_h^{\overline{\pi}} - \widehat{\mu}_h\|_1\right]
        + \frac{1}{N} \sum_{i \in [N]} \| \overline{\pi}_{h} - \pi_{h}^i \|_1 +  \sqrt{\frac{\vert \setA \vert  \vert \setS\vert}{N}}.
    \end{equation*}
\end{lemma}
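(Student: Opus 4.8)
The plan is to insert two intermediate state-action distributions between $L_h^{\overline{\pi}}$ and $\widehat{L}_h$ and control the three resulting pieces by the triangle inequality. Define the ``frozen-policy'' distribution $\widetilde{L}_h(s,a) := \widehat{\mu}_h(s)\,\overline{\pi}_h(a\vert s)$, obtained by pairing the \emph{empirical} state marginal with the reference policy $\overline{\pi}$, and the ``conditional-mean'' distribution $\overline{L}_h(s,a) := \frac{1}{N}\sum_{i\in[N]}\ind{s_h^i=s}\,\pi_h^i(a\vert s)$, which is exactly $\mathbb{E}[\widehat{L}_h\mid \vecs_h]$: conditionally on the state profile $\vecs_h=\{s_h^i\}_i$, the actions $a_h^i\sim\pi_h^i(\cdot\vert s_h^i)$ are drawn independently. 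I would then write $\|L_h^{\overline{\pi}}-\widehat{L}_h\|_1 \le \|L_h^{\overline{\pi}}-\widetilde{L}_h\|_1 + \|\widetilde{L}_h-\overline{L}_h\|_1 + \|\overline{L}_h-\widehat{L}_h\|_1$ and bound each term separately; the three summands will produce, respectively, the three terms on the right-hand side of the claim.

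The first two terms are deterministic manipulations. For the first, since $L_h^{\overline{\pi}}(s,a)=\mu_h^{\overline{\pi}}(s)\overline{\pi}_h(a\vert s)$ and $\widetilde{L}_h(s,a)=\widehat{\mu}_h(s)\overline{\pi}_h(a\vert s)$ share the same conditional $\overline{\pi}_h(\cdot\vert s)$, summing out $a$ via $\sum_a \overline{\pi}_h(a\vert s)=1$ collapses the $\ell_1$ distance to $\|\mu_h^{\overline{\pi}}-\widehat{\mu}_h\|_1$, giving the first term after taking expectations. For the second, $\widetilde{L}_h(s,a)-\overline{L}_h(s,a) = \frac{1}{N}\sum_i \ind{s_h^i=s}(\overline{\pi}_h(a\vert s)-\pi_h^i(a\vert s))$, so a triangle inequality over $i$ followed by the observation that the $(s,a)$-sum for a fixed $i$ equals $\|\overline{\pi}_h(\cdot\vert s_h^i)-\pi_h^i(\cdot\vert s_h^i)\|_1 \le \|\overline{\pi}_h-\pi_h^i\|_1$ yields the bound $\frac{1}{N}\sum_i \|\overline{\pi}_h-\pi_h^i\|_1$.

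The real work is the concentration term $\mathbb{E}\|\overline{L}_h-\widehat{L}_h\|_1$, which I expect to be the main obstacle. Here I would condition on $\vecs_h$: for fixed $(s,a)$, the quantity $\widehat{L}_h(s,a)-\overline{L}_h(s,a)=\frac{1}{N}\sum_i X_i$ is an average of conditionally independent, mean-zero terms $X_i:=\ind{s_h^i=s,a_h^i=a}-\ind{s_h^i=s}\pi_h^i(a\vert s)$ with $\mathbb{E}[X_i^2\mid\vecs_h]\le \ind{s_h^i=s}\pi_h^i(a\vert s)$. Conditional independence then gives $\mathbb{E}[(\widehat{L}_h(s,a)-\overline{L}_h(s,a))^2\mid\vecs_h]\le \frac{1}{N^2}\sum_i \ind{s_h^i=s}\pi_h^i(a\vert s)$, and Jensen's inequality yields $\mathbb{E}[|\widehat{L}_h(s,a)-\overline{L}_h(s,a)|\mid\vecs_h]\le \frac{1}{N}\sqrt{\sum_i \ind{s_h^i=s}\pi_h^i(a\vert s)}$. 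Summing over the $|\setS||\setA|$ pairs and applying Cauchy--Schwarz, together with the mass identity $\sum_{s,a}\sum_i \ind{s_h^i=s}\pi_h^i(a\vert s)=N$ (each agent contributes exactly $1$), produces the bound $\sqrt{|\setS||\setA|/N}$ uniformly in $\vecs_h$. Taking total expectation and combining the three pieces gives the claim; the delicate points are the conditional independence and variance bookkeeping, and the Cauchy--Schwarz step that trades the $|\setS||\setA|$ summands against the total mass $N$ to obtain the $\mathcal{O}(\sfrac{1}{\sqrt{N}})$ rate.
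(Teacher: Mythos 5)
Your proposal is correct and takes essentially the same route as the paper: the paper's proof likewise introduces the frozen-policy distribution $L(s,a)=\widehat{\mu}_h(s)\overline{\pi}_h(a\vert s)$ and the conditional mean $\mathbb{E}[\widehat{L}_h\mid \vecs_h]$, bounds the two deterministic terms exactly as you do, and controls the fluctuation term by conditioning on the state profile, exploiting conditional independence of the actions, a variance bound, Jensen, and Cauchy--Schwarz to obtain $\sqrt{\vert\setS\vert\vert\setA\vert/N}$. The only cosmetic differences are that you apply a single Cauchy--Schwarz over all $(s,a)$ pairs where the paper nests one over $a$ and one over $s$ (yielding the same constant), and your conditioning on $\vecs_h$ alone is slightly cleaner than the paper's conditioning notation.
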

\begin{proof}
    We decompose the deviation:
    \begin{align*}
        \mathbb{E}\left[\|L_h^{\overline{\pi}} - \widehat{L}_h\|_1\middle\vert \{s_h^i\}_{i,h}\right] &\leq \underbrace{\|L_h^{\overline{\pi}} - \mathbb{E}[\widehat{L}_h\vert \{s_h^i\}_{i,h}]\|_1}_{(\square)} + \underbrace{\mathbb{E}\left[\left\|\widehat{L}_h - \mathbb{E}[\widehat{L}_h\vert \{s_h^i\}_{i,h}]\right\|_1\middle\vert \{s_h^i\}_{i,h}\right]}_{(\triangle)}.
    \end{align*}
    We bound the two terms separately. 
    For $(\square)$, define $L \in \Delta_{\setS \times \setA}$ as $L(s,a) = \widehat{\mu}_h(s)\overline{\pi}(a\vert s) = \frac{1}{N}\sum_{i \in N}\ind{s_h^i = s}\overline{\pi}(a\vert s)$.
    We (almost surely) have:
    \begin{align*}
        \|L_h^{\overline{\pi}} - \mathbb{E}[\widehat{L}_h\vert  \{s_h^i\}_{i,h}]\|_1 &\leq \|L_h^{\overline{\pi}} - L\|_1 + \|\mathbb{E}[\widehat{L}_h\vert \widehat{\mu}_h] - L\|_1\\
        &\leq \sum_{s\in \setS}\sum_{a\in \setA}\vert \mu_h^{\overline{\pi}}(s)\overline{\pi}(a\vert s) - \widehat{\mu}_h(s)\overline{\pi}(a\vert s)\vert + \|\mathbb{E}[\widehat{L}_h\vert  \{s_h^i\}_{i,h}] - L\|_1\\
        &\leq \|\mu_h^{\overline{\pi}} - \widehat{\mu}_h\|_1 + \frac{1}{N}\sum_{i\in [N]}\sum_{a\in \setA}\vert\overline{\pi}(a\vert s_h^i) - \pi_h^i(a\vert s_h^i)\vert\\
        &\leq \|\mu_h^{\overline{\pi}} - \widehat{\mu}_h\|_1 + \frac{1}{N}\sum_{i\in [N]}\|\overline{\pi}_h - \pi_h^i\|_1.
    \end{align*}
    For the second term $(\triangle)$, by applying Jensen's inequality, we have:
    \begin{align*}
        \mathbb{E}\left[\left\|\widehat{L}_h - \mathbb{E}[\widehat{L}_h\vert  \{s_h^i\}_{i,h}]\right\|_1\middle\vert  \{s_h^i\}_{i,h}\right] &= \sum_{s \in \setS}\sum_{a\in \setA}\mathbb{E}\left[\left\vert\widehat{L}_h(s,a) - \mathbb{E}[\widehat{L}_h(s,a)\vert  \{s_h^i\}_{i,h}]\right\vert\middle\vert  \{s_h^i\}_{i,h}\right]\\
        &\leq  \sum_{s \in \setS}\sum_{a\in \setA}\sqrt{\Var{\widehat{L}_h(s,a) \middle\vert  \{s_h^i\}_{i,h}}}\\
        &=\frac{1}{N}\sum_{s \in \setS}\sum_{a\in \setA}\sqrt{\sum_{i\in [N], s_h^i = s}\pi_h^i(a\vert s)(1-\pi_h^i(a\vert s))}.
    \end{align*}
    Applying Cauchy-Schwarz's inequality, we get for any $s\in\setS$:
    \begin{align*}
        \sum_{a\in \setA}\sqrt{\sum_{i\in [N], s_h^i = s}\pi_h^i(a\vert s)(1-\pi_h^i(a\vert s))} &\leq \sqrt{\vert \setA\vert \sum_{a\in \setA}\sum_{i\in [N], s_h^i = s}\pi_h^i(a\vert s)(1-\pi_h^i(a\vert s))}\\
        &\leq \sqrt{\vert \setA\vert \sum_{a\in \setA}\sum_{i\in [N], s_h^i = s}\pi_h^i(a\vert s)}\\
        &\leq \sqrt{N\vert \setA\vert \widehat{\mu}_h(s)}.
    \end{align*}
    By integrating this result into the previous computation and using Cauchy-Schwarz's inequality, we get:
    \begin{align*}
        \mathbb{E}\left[\left\|\widehat{L}_h - \mathbb{E}[\widehat{L}_h\vert  \{s_h^i\}_{i,h}]\right\|_1\middle\vert \{s_h^i\}_{i,h}\right] &= \frac{1}{N}\sum_{s \in \setS}\sum_{a\in \setA}\sqrt{\sum_{i\in [N], s_h^i = s}\pi_h^i(a\vert s)(1-\pi_h^i(a\vert s))}\\
        &\leq \frac{\sqrt{\vert \setA \vert}}{\sqrt{N}}\sum_{s\in \setS}\sqrt{\widehat{\mu}_h(s)}
        \leq \sqrt{ \frac{\vert \setA \vert \vert \setS\vert}{N} \sum_{s \in \setS}\widehat{\mu}_h(s) } \leq \sqrt{\frac{\vert \setA \vert \vert \setS\vert}{N}}.
    \end{align*}
    Combining the upper bounds derived for terms $(\square)$ and $(\triangle)$, we obtain the desired result, as $\mathbb{E}\left[\|L_h^{\overline{\pi}} - \widehat{L}_h\|_1\right] = \mathbb{E}\left[\mathbb{E}\left[\|L_h^{\overline{\pi}} - \widehat{L}_h\|_1\middle\vert  \{s_h^i\}_{i,h}\right]\right]$.
\end{proof}

\begin{lemma}[Deviation Between Empirical and Mean Field Population]\label{lemma:state_pop_deviation}
    Let $\Mauc = (\setS, \setA, H, \mu_0, \{\Pauc_h\}_{h=0}^{H-1}, \{\Rauc_h\}_{h=0}^{H-1})$ be a BA-MFG. 
    Let $\{\alpha_h\}_{h=0}^{H-1}$, $\{\transitionfn_h\}_{h=0}^{H-1}$, $\{\paymentfn_h\}_{h=0}^{H-1}$, and $\{\utilityfn_h\}_{h=0}^{H-1}$ denote the allocation thresholds, transition dynamics, payment, and utility functions, respectively, from which $\{\Pauc_h\}_{h=0}^{H-1}$ and $\{\Rauc_h\}_{h=0}^{H-1}$ are derived. Assume these functions are Lipschitz continuous, with respective Lipschitz constants $K_{\alpha}$, $K_{\transitionfn}$, $K_{\paymentfn}$, and $K_{\utilityfn}$.

    Consider the corresponding finite-agent Batched Auction model with $ N $ agents $\Gauc $, which is approximated by $ \Mauc $. Let $ \vecpi = \{ \pi_h^i \}_{h=0,\dots,H-1,\, i \in [N]} \in \Pi_H^N $ denote the joint policy of the population. For each round $ h $, denote by $ \widehat{L}_h \in \Delta_{\mathcal{S} \times \mathcal{A}} $ the empirical state-action distribution and by $ \widehat{\mu}_h \in \Delta_{\mathcal{S}} $ the corresponding empirical state distribution.
    
    Let $\overline{\pi} \in \Pi_H$ be an arbitrary policy, and define the associated mean field state-action distribution flow $L^{\overline{\pi}} := \lpopmfa(\overline{\pi})$, with corresponding marginal state distribution $\mu_h^{\overline{\pi}} := \sum_a L_h^{\overline{\pi}}(\cdot, a)$. Then, for all $h \in \{0, \dots, H-1\}$, it holds that
    \begin{align*}
        \mathbb{E}\left[ \| \mu_h^{\overline{\pi}} - \widehat{\mu}_h \|_1 \right]
        &\leq \frac{1 - (K_\xi(1 + K_\alpha))^{h+1}}{1 - K_\xi(1 + K_\alpha)}\frac{\sqrt{\vert \setS \vert}}{\sqrt{N}}\\
        & \quad + K_\xi\frac{1 - (K_\xi(1 + K_\alpha))^h}{1- K_\xi(1 + K_\alpha)}\biggl(\frac{\sqrt{\vert \setS \vert }}{2\sqrt{N}} + \frac{1}{N} +  (1 + K_\alpha) \frac{\sqrt{\vert \setS \vert \vert \setA \vert}}{\sqrt{N}} \biggr)\\
        &\quad + \sum_{h' < h} (K_\xi\cdot (K_\alpha + 1 ))^{h - h'} \cdot \frac{1}{N} \sum_{i \in [N]} \| \overline{\pi}_{h'} - \pi_{h'}^i \|_1,
        \end{align*}
        if $ K_\xi(1 + K_\alpha) \neq 1 $, and
    \begin{align*}
        \mathbb{E}\left[ \| \mu_h^{\overline{\pi}} - \widehat{\mu}_h \|_1 \right]
        &\leq (h+1) \frac{\sqrt{\vert \setS \vert}}{\sqrt{N}} + h\frac{\sqrt{\vert \setS \vert \vert \setA \vert}}{\sqrt{N}} + h K_\xi \biggl(\frac{\sqrt{\vert \setS \vert}}{2 \sqrt{N}} + \frac{1}{N}\biggr)\\
        &\quad + \sum_{h' < h} \frac{1}{N} \sum_{i \in [N]} \| \overline{\pi}_{h'} - \pi_{h'}^i \|_1,
    \end{align*}
    if $ K_\xi(1 + K_\alpha) = 1 $.
\end{lemma}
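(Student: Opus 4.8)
The plan is to derive a one-step recursion for $e_h := \mathbb{E}[\|\mu_h^{\overline{\pi}} - \widehat{\mu}_h\|_1]$ and then solve it as a linear difference inequality. The guiding observation is that, although individual winning events are correlated and $\pwin$ is discontinuous, the \emph{aggregate} post-allocation operator $\Gamma_{\alpha_h}$ is globally Lipschitz with modulus $K_\alpha+1$ by \Cref{corollary:lipschitz_Gamma_alpha}; hence the no zero-dominance condition is \emph{not} needed for this population-flow estimate, and the jump discontinuities only surface in the single-agent exploitability bound of \Cref{section:upper_bound_exploitability_proof}.

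First I would peel off the transition-sampling layer. Conditioning on the post-allocation states $\{z_h^i\}$ and on $\widehat{\xi}_h = \tfrac1N\sum_i \vece_{z_h^i}$, the next states $s_{h+1}^i \sim \transitionfn_h(z_h^i,\widehat{\xi}_h)$ are \emph{independent}, with conditional mean $\tfrac1N\sum_i \transitionfn_h(z_h^i,\widehat{\xi}_h) = \Gamma_{\transitionfn_h}(\widehat{\xi}_h)$. A Jensen/variance bound exactly as for the term $(\triangle)$ in \Cref{lemma:state_action_pop_to_mu} gives $\mathbb{E}\|\widehat{\mu}_{h+1}-\Gamma_{\transitionfn_h}(\widehat{\xi}_h)\|_1 \le \sqrt{|\setS|/N}$, and using the Lipschitz modulus $K_\xi = K_\transitionfn + \tfrac12 K_s$ of $\xi\mapsto\Gamma_{\transitionfn_h}(\xi)$ yields $\|\Gamma_{\transitionfn_h}(\widehat{\xi}_h)-\Gamma_{\transitionfn_h}(\xi_h^{\overline{\pi}})\|_1 \le K_\xi\|\widehat{\xi}_h-\xi_h^{\overline{\pi}}\|_1$. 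Thus $e_{h+1}\le \sqrt{|\setS|/N} + K_\xi\,\mathbb{E}\|\widehat{\xi}_h-\xi_h^{\overline{\pi}}\|_1$.

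Next I would control $\mathbb{E}\|\widehat{\xi}_h-\xi_h^{\overline{\pi}}\|_1$ by inserting the deterministic floored allocation $\Gamma_{\alpha_h^N}(\widehat{L}_h)$, where $\alpha_h^N := \lfloor\alpha_h N\rfloor/N$, and splitting into three pieces. The key structural fact is that, conditioned on the bids, the expected empirical post-allocation distribution is exactly $\Gamma_{\alpha_h^N}(\widehat{L}_h)$, so the tie-breaking randomness gives a concentration term $\mathbb{E}\|\widehat{\xi}_h-\Gamma_{\alpha_h^N}(\widehat{L}_h)\|_1\le \tfrac{\sqrt{|\setS|}}{2\sqrt N}$; the floor error is absorbed by $\|\Gamma_{\alpha_h^N}(\widehat{L}_h)-\Gamma_{\alpha_h}(\widehat{L}_h)\|_1\le|\alpha_h-\alpha_h^N|\le 1/N$ via the allocation-sensitivity estimate of \Cref{lemma:lipschitz_Xi_alpha}; and the remaining term is $\|\Gamma_{\alpha_h}(\widehat{L}_h)-\Gamma_{\alpha_h}(L_h^{\overline{\pi}})\|_1 \le (K_\alpha+1)\|\widehat{L}_h-L_h^{\overline{\pi}}\|_1$ by \Cref{corollary:lipschitz_Gamma_alpha}. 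Invoking \Cref{lemma:state_action_pop_to_mu} to replace $\mathbb{E}\|\widehat{L}_h-L_h^{\overline{\pi}}\|_1$ by $e_h + \tfrac1N\sum_i\|\overline{\pi}_h-\pi_h^i\|_1 + \sqrt{|\setA||\setS|/N}$ closes the loop and produces the affine recursion $e_{h+1}\le A\,e_h + C_N + A\,\tfrac1N\sum_i\|\overline{\pi}_h-\pi_h^i\|_1$, with $A:=K_\xi(1+K_\alpha)$ and $C_N := \tfrac{\sqrt{|\setS|}}{\sqrt N} + K_\xi\bigl(\tfrac{\sqrt{|\setS|}}{2\sqrt N}+\tfrac1N+(1+K_\alpha)\tfrac{\sqrt{|\setS||\setA|}}{\sqrt N}\bigr)$.

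Finally I would unroll from the base case $e_0 = \mathbb{E}\|\widehat{\mu}_0-\mu_0\|_1\le\sqrt{|\setS|/N}$, giving $e_h \le A^h e_0 + \sum_{j=0}^{h-1}A^{h-1-j}C_N + \sum_{j<h}A^{h-j}\tfrac1N\sum_i\|\overline{\pi}_j-\pi_j^i\|_1$. The algebraic identity $A^h + \tfrac{A^h-1}{A-1} = \tfrac{A^{h+1}-1}{A-1}$ merges the $e_0$ contribution with the per-step $\sqrt{|\setS|/N}$ terms into the first displayed geometric factor $\tfrac{1-A^{h+1}}{1-A}\tfrac{\sqrt{|\setS|}}{\sqrt N}$, the $K_\xi(\cdots)$ part of $C_N$ forms the second geometric factor ($h$ terms), and the boundary case $A=1$ collapses the geometric sums to the arithmetic coefficients $(h+1)$ and $h$. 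The main obstacle is the allocation layer: one must verify that $\widehat{\xi}_h$ indeed has conditional mean $\Gamma_{\alpha_h^N}(\widehat{L}_h)$ and that the correlated winner-selection at the threshold bid (sampling without replacement) still concentrates at rate $O(1/\sqrt N)$ — here negative association keeps the per-coordinate variance no larger than the i.i.d. case, so the Bernoulli variance bound applies; the remainder is bookkeeping of Lipschitz moduli and summing the geometric series.
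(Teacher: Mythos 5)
Your proposal is correct and follows essentially the same route as the paper's proof: the same two-layer decomposition (transition-sampling concentration at rate $\sqrt{|\setS|/N}$ with the $K_\xi$-Lipschitz aggregate map, then the post-allocation layer split into the negative-association concentration term $\sqrt{|\setS|}/(2\sqrt{N})$, the $1/N$ floor error via \Cref{lemma:lipschitz_Xi_alpha}, and the $(K_\alpha+1)$-Lipschitz bound from \Cref{corollary:lipschitz_Gamma_alpha}), closed via \Cref{lemma:state_action_pop_to_mu} and unrolled with the same geometric-sum bookkeeping, including the $K_\xi(1+K_\alpha)=1$ degenerate case. Your side observations — that NZD is not needed here and that the conditional mean of $\widehat{\xi}_h$ is exactly $\Gamma_{\lfloor N\alpha_h\rfloor/N}(\widehat{L}_h)$ — are also consistent with the paper's argument.
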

\begin{proof}
    Let $\{s_h^{i}\}_{h = 0, \dots, H-1, i\in [N]}, \{a_h^{i}\}_{h = 0, \dots, H-1, i\in [N]}$ and $\{z_h^{i}\}_{h = 0, \dots, H-1, i\in [N]}$ as in \Cref{definition:n_batched_auction}.
    We prove the lemma inductively over $h$. 
    For $h= 0$ we have $\mu_0^{\overline{\pi}} = \mu_0 = \mathbb{E}[\widehat{\mu}_0]$. 
    Let $X_s := \sum_{i\in [N]}\mathbbm{1}_{\{s_0^i = s\}}$, which is by definition $X_s$ a binomial random variable with parameters $N$ and $\mu_0(s)$. 
    Since it is a sum of independent Bernoulli random variables, its variance is $\Var{X_s} = N\mu_0(s)(1-\mu_0(s))$.
    By using Jensen's, we can upper bound the expected absolute deviation for each state $s\in \setV$
    \begin{align*}
        \mathbb{E}[\vert \mu_0(s) - \widehat{\mu}_0(s)\vert ] 
        \leq \sqrt{\Var{\widehat{\mu}_0(s)}}
        = \sqrt{\frac{\mu_0(s)(1-\mu_0(s))}{N}}\leq \sqrt{\frac{\mu_0(s)}{N}}
    \end{align*}
    By summing over all states $s\in \setV$ and applying Cauchy-Schwarz's inequality, we get:
    \begin{align*}
        \mathbb{E}[\|\mu_0 - \widehat{\mu}_0\|_1] = \sum_{s \in \setV}\mathbb{E}[\vert \mu_0(s) - \widehat{\mu}_0(s)\vert ] \leq \sum_{s \in \setV}\sqrt{\frac{\mu_0(s)}{N}} \stackrel{C.S.}{\leq} \frac{\sqrt{\vert \setV\vert }}{\sqrt{N}}.
    \end{align*}
    Next, for $h \geq 0$, we compute an upper bound for the deviation at step $h+1$.
    In particular, we analyze the conditional expectation
    \begin{align*}
        \mathbb{E} &\left[ \| \mu_{h+1}^{\overline{\pi}} - \widehat{\mu}_{h+1} \|_1 \big\vert \{z_h^i\}_{i=1}^N \right] \\
        & \quad \leq \underbrace{\mathbb{E}\left[\|\mu_{h+1}^{\overline{\pi}} - \mathbb{E}[\widehat{\mu}_{h+1}\vert \{z_h^i\}_{i=1}^N]\|_1 \middle\vert \{z_h^i\}_{i=1}^N \right]}_{:= (\triangle)} + \underbrace{ \mathbb{E}\left[\|\widehat{\mu}_{h+1} - \mathbb{E}[\widehat{\mu}_{h+1}\vert \{z_h^i\}_{i=1}^N]\|_1 \middle\vert \{z_h^i\}_{i=1}^N\right]}_{:= (\square)},
    \end{align*}
    almost surely,
    where $\{z_h^i\}_{i=1}^N$ are the states of agents after the item allocation in round $h$ as before.
    We upper bound the two terms $(\square)$ and $(\triangle)$ separately. 
    For $(\square)$ we have:
    \begin{align*}
        (\square) &= \sum_{s\in \setS}\mathbb{E}\left[|\widehat{\mu}_{h+1}(s) - \mathbb{E}[\widehat{\mu}_{h+1}(s) \, \vert  \{z_h^i\}] |_1 \middle\vert  \{z_h^i\}\right]  \leq \sum_{s\in \setS} \sqrt{\Var{\widehat{\mu}_{h+1}(s)\middle\vert  \{z_h^i\}}}.
    \end{align*}
    For arbitrary $s\in \setS$, noting that $\widehat{\xi}_h$ is $ \{z_h^i\}$-measurable,
    \begin{align*}
        \Var{\widehat{\mu}_{h+1}(s)\middle\vert \, \{z_h^i\} } &= \frac{1}{N^2}\sum_{s'\in \setS}N\widehat{\xi}_h(s')\transitionfn(s\vert s', \widehat{\xi}_h)(1- \transitionfn(s\vert s', \widehat{\xi}_h))
        \leq \frac{1}{N}\sum_{s'\in \setS }\widehat{\xi}_h(s')\transitionfn(s\vert s', \widehat{\xi}_h).
    \end{align*}
    By the Cauchy-Schwarz inequality, we have
    \begin{align*}
            \left(\sum_{s\in \setS}\sqrt{\sum_{s'\in \setS }\widehat{\xi}_h(s')\transitionfn(s\vert s', \widehat{\xi}_h)}\right)^2 
            \leq \left(\sum_{s\in \setS}\sum_{s'\in \setS }\widehat{\xi}_h(s')\transitionfn(s\vert s', \widehat{\xi}_h))\right) |\setS|
            =  \vert \setS \vert.
    \end{align*}
    Therefore, we (almost surely) have
    \begin{align*}
        (\square) &\leq \frac{1}{\sqrt{N}}\sum_{s\in \setS}\sqrt{\sum_{s'\in \setS }\widehat{\xi}_h(s')\transitionfn(s\vert s', \widehat{\xi}_h)}\leq \frac{\sqrt{\vert \setS \vert}}{\sqrt{N}}.
    \end{align*}
    For $(\triangle)$ it holds that
    \begin{align*}
        (\triangle) &= \mathbb{E}\left[\|\mu_{h+1}^{\overline{\pi}} - \mathbb{E}[\widehat{\mu}_{h+1}\vert \{z_h^i\}]\|_1 \middle\vert \{z_h^i\}\right]
        =\|\mu_{h+1}^{\overline{\pi}} - \mathbb{E}[\widehat{\mu}_{h+1}\vert \{z_h^i\}]\|_1\\
        &=\left\|\Gamma_g(\xi_h^{\overline{\pi}}) - \Gamma_g(\widehat{\xi}_h)\right\|_1 \leq \left(K_\transitionfn + \frac{K_s}{2}\right)\| \xi_h^{\overline{\pi}} - \widehat{\xi}_h\|_1,
    \end{align*}
    where in the last step we applied Lemma 2.2 from \cite{yardim2024meanfield}.
    Merging the upper bounds for $(\triangle)$ and $(\square)$ and taking expectations, we obtain
    \begin{equation*}
        \mathbb{E} \big[ \| \mu_{h+1}^{\overline{\pi}} - \widehat{\mu}_{h+1} \|_1 \big] \leq \frac{\sqrt{\vert \setS \vert}}{\sqrt{N}} + \left(K_\transitionfn + \frac{K_s}{2}\right)\mathbb{E}\left[\|\xi_h^{\overline{\pi}} - \widehat{\xi}_h\|_1\right]. 
    \end{equation*}
    
    Next, we bound the expected deviation between the post-allocation state distribution in the mean field model and its empirical counterpart in the finite-agent system. 
    Specifically, we consider $\mathbb{E}\left[\| \xi_h^{\overline{\pi}} - \widehat{\xi}_h \|_1\right]$,
     where $ \xi_h^{\overline{\pi}} := \Gamma_{\alpha_h}(L_h^{\overline{\pi}}) $ is the mean field post-allocation distribution induced by policy $ \overline{\pi} $, and $ \widehat{\xi}_h $ is the corresponding empirical distribution in the finite-agent auction, computed from realized allocations rather than via the operator $ \Gamma_{\alpha_h} $. 
    Denote the $\sigma$-algebra induced by $\{s_h^i, a_h^i\}_{i}$ as $\setF_h$ for simplicity, then
    \begin{equation*}
        \mathbb{E} \left[ \| \xi_{h}^{\overline{\pi}} - \widehat{\xi}_{h} \|_1 \big\vert \setF_h \right] \leq \underbrace{\mathbb{E} \left[ \| \xi_{h}^{\overline{\pi}} - \mathbb{E}[\widehat{\xi}_{h}\vert \setF_h ] \|_1 \big\vert \setF_h \right]}_{:= (\diamondsuit)} + \underbrace{\mathbb{E} \left[ \| \widehat{\xi}_{h}- \mathbb{E}[\widehat{\xi}_{h}\vert \setF_h] \|_1 \big\vert \setF_h \right]}_{:= (\heartsuit)}.
    \end{equation*}
    We upper-bound the two terms separately once again. 
    For $(\heartsuit)$ we have:
    \begin{align*}
        (\heartsuit) = \mathbb{E} \left[ \| \widehat{\xi}_{h}- \mathbb{E}[\widehat{\xi}_{h}\vert \setF_h] \|_1 \big\vert \setF_h \right] &= \mathbb{E}\left[\sum_{s\in \setV}\left\vert \widehat{\xi}_{h}(s) - \mathbb{E}[\widehat{\xi}_{h}(s)\vert \setF_h]\right\vert\,\middle\vert \setF_h\right]\\
        &=\sum_{s\in \setV}\mathbb{E}\left[\left\vert \widehat{\xi}_{h}(s) - \mathbb{E}[\widehat{\xi}_{h}(s)\vert \setF_h]\right\vert\,\middle\vert \setF_h\right].
    \end{align*}

     We establish an upper bound on the absolute deviation for each state $ s \in \setV $. 
     Let $ s \in \setV $ be arbitrary. 
     Given the empirical state-action distribution $\widehat{L}_h \in \Delta_{\mathcal{S} \times \mathcal{A}}$, the corresponding empirical state distribution is given by marginalizing over actions: $\widehat{\mu}_h = \sum_{a \in \mathcal{A}} \widehat{L}_h(\cdot ,a)$. By definition of $ \widehat{\mu}_h $, there are $ N\widehat{\mu}_h(s) $ agents in state $ s $ at round $ h $. Denoting these agents as $ i_1, \dots, i_{N\widehat{\mu}_h(s)} $, we can express $ \widehat{\xi}_{h}(s) $ as:
    \begin{equation*}
        \widehat{\xi}_{h}(s) = \frac{1}{N} \sum_{j \in [N\widehat{\mu}_h(s)]} \ind{z_{h}^{i_j} = s}
    \end{equation*}
    Two key observations can be made regarding these indicator variables:  

    1. The indicators are negatively correlated due to the structure of the auction. To illustrate this, assume without loss of generality that the first $M$ agents have not won yet, i.e., $ s_h^i \neq \perp $ for all $ i \in [M] $. Since the number of items in each round is fixed, when conditioning on $\widehat{L}_h$, we have 

   \begin{equation*}
       \left\lfloor\alpha_h(\nu^{-\perp}(\widehat{L}_h)) N \right\rfloor = \sum_{i = 1}^{M} \ind{z_{h}^i = \perp},
   \end{equation*}

   which implies that the indicator variables $\ind{z_{h}^i = \perp}$ are negatively correlated. Consequently, their complements $\ind{z_{h}^i = s_h^i} = 1 - \ind{z_{h}^i = \perp}$ are also negatively correlated. This implies that any subset of these indicator variables retains this negative correlation property. Specifically, for every state $ s $, the random variables  $\ind{z_{h}^{i_j} = s}, \, j \in [N\widehat{\mu}_h(s)],$ are negatively correlated.

    2. Since these are Bernoulli random variables, their variance is at most $ 1/4 $.
    
    It follows from the two observations above that the variance of $\widehat{\xi}_{h}$ conditioned on $\setF_h$ can be upper bounded almost surely as follows:
    \begin{align*}
        \Var{\widehat{\xi}_{h}(s)\middle\vert \setF_h} &= \Var{\frac{1}{N}\sum_{j \in [N\widehat{\mu}_h(s)]}\ind{z_{h}^{i_j} = s} \middle\vert \setF_h}
        = \frac{1}{N^2}\Var{\sum_{j \in [N\widehat{\mu}_h(s)]}\ind{z_{h}^{i_j} = s} \middle\vert \setF_h}\\
        &\leq \frac{1}{N^2}\sum_{j \in [N\widehat{\mu}_h(s)]}\Var{\ind{z_{h}^{i_j} = s} \middle\vert \setF_h} \leq \frac{\widehat{\mu}_h(s)}{4N}.
    \end{align*}
    Using Jensen's inequality, we can bound the absolute deviation using the variance:
    \begin{align*}
        \mathbb{E}\left[\left\vert \widehat{\xi}_{h}(s) - \mathbb{E}[\widehat{\xi}_{h}(s)\vert \setF_h]\right\vert\,\middle\vert \setF_h\right] \leq \sqrt{\Var{\widehat{\xi}_{h}(s)\middle\vert \setF_h}} \leq \sqrt{\frac{\widehat{\mu}_h(s)}{4N}}
    \end{align*}
    Using this result, together with Cauchy-Schwarz's inequality, we can further bound $(\heartsuit)$:
    \begin{align*}
        (\heartsuit) &= \sum_{s\in \setV}\mathbb{E}\left[\left\vert \widehat{\xi}_{h}(s) - \mathbb{E}[\widehat{\xi}_{h}(s)\vert \setF_h]\right\vert\,\middle\vert \setF_h\right]
        \leq \sum_{s\in \setV}\sqrt{\frac{\widehat{\mu}_h(s)}{4N}} \stackrel{C.S.}{\leq} \frac{\sqrt{\vert \setV\vert }}{2\sqrt{N}}
    \end{align*}
    For the term $(\diamondsuit)$, applying the result of \Cref{corollary:lipschitz_Gamma_alpha} yields:
    \begin{align*}
        (\diamondsuit) = \mathbb{E} \left[ \| \xi_{h}^{\overline{\pi}} - \mathbb{E}[\widehat{\xi}_{h}\vert \setF_h] \|_1 \big\vert \setF_h \right] &= \|\Gamma_{\alpha_h}(L_h^{\overline{\pi}})- \Gamma_{\alpha_h}(\widehat{L}_h) \|_1 +  \|\mathbb{E}[\widehat{\xi}_h | \setF_h] - \Gamma_{\alpha_h}(\widehat{L}_h)\|_1\\
        &\leq |\alpha_h(\nu^{-\perp}(L_h^{\overline{\pi}})) - \alpha_h(\nu^{-\perp}(\widehat{L}_h))| + \|L_h^{\overline{\pi}} - \widehat{L}_h\|_1 + \frac{1}{N}\\
        &\leq K_\alpha \|(\nu^{-\perp}(L_h^{\overline{\pi}}) - (\nu^{-\perp}(\widehat{L}_h)\|_1 + \|L_h^{\overline{\pi}} - \widehat{L}_h\|_1 + \frac{1}{N}\\
        &\leq (K_\alpha + 1)\|L_h^{\overline{\pi}} - \widehat{L}_h\|_1 + \frac{1}{N},
    \end{align*}
    where the second to last step follows from the Lipschitz continuity of $\alpha_h$, while the last step follows from $\|(\nu^{-\perp}(L_h^{\overline{\pi}}) - (\nu^{-\perp}(\widehat{L}_h)\|_1 \leq \|L_h^{\overline{\pi}} - \widehat{L}_h\|_1.$ 
    Additionally, $ \|\mathbb{E}[\widehat{\xi}_h | \setF_h] - \Gamma_{\alpha_h}(\widehat{L}_h)\|_1 \leq \frac{1}{N}$ comes from \Cref{lemma:lipschitz_Xi_alpha}, as $| \lfloor N \alpha_h(\nu^{-\perp}(L_h^{\overline{\pi}}))\rfloor - N\alpha_h(\nu^{-\perp}(L_h^{\overline{\pi}})) | \leq 1$.
    
    Combining the upper bounds for $(\heartsuit)$ and $(\diamondsuit)$ we get:
    \begin{align*}
        \mathbb{E}\bigl[\| \xi_h^{\overline{\pi}} - &\widehat{\xi}_h \|_1\bigr] \\
        &\leq (K_\alpha + 1)\mathbb{E}\left[\|L_h^{\overline{\pi}} - \widehat{L}_h\|_1\right] + \frac{\sqrt{\vert \setV \vert}}{2 \sqrt{N}} + \frac{1}{N}\\
        & \leq (K_\alpha + 1 )\biggl(\mathbb{E}\left[\|\mu_h^{\overline{\pi}} - \widehat{\mu}_h\|_1\right]
        + \frac{1}{N} \sum_{i \in [N]} \| \overline{\pi}_{h} - \pi_{h}^i \|_1 +  \sqrt{\frac{\vert \setA \vert \cdot \vert \setV\vert}{N}}\biggr) + \frac{\sqrt{\vert \setV \vert}}{2 \sqrt{N}} + \frac{1}{N},
    \end{align*}
    where the last step follows from \Cref{lemma:state_action_pop_deviation}.
    
    Combining this result with the bound on $ \mathbb{E}[\| \mu_{h+1}^{\overline{\pi}} - \widehat{\mu}_{h+1} \|_1] $, we apply induction on $ h $ to conclude the proof of the lemma.
\end{proof}

\begin{corollary}[Deviation Between Empirical and Mean Field State-Action Population]\label{lemma:state_action_pop_deviation}

    Under the conditions of \Cref{lemma:state_pop_deviation},
    for all $h \in \{0, \dots, H-1\}$, it holds that:
    \begin{align*}
        \mathbb{E}\left[ \| L_h^{\overline{\pi}} - \widehat{L}_h \|_1 \right]
        &\leq \frac{1 - (K_\xi(1 + K_\alpha))^{h+1}}{1 - K_\xi(1 + K_\alpha)}\biggl(\frac{\sqrt{\vert \setS \vert}}{\sqrt{N}} + \frac{\sqrt{\vert \setS \vert \vert \setA \vert}}{\sqrt{N}} \biggr)\\
        & \quad + K_\xi\frac{1 - (K_\xi(1 + K_\alpha))^h}{1- K_\xi(1 + K_\alpha)}\biggl(\frac{\sqrt{\vert \setS \vert }}{2\sqrt{N}} + \frac{1}{N} \biggr)\\
        &\quad + \sum_{h' \leq h} (K_\xi\cdot (K_\alpha + 1 ))^{h - h'} \cdot \frac{1}{N} \sum_{i \in [N]} \| \overline{\pi}_{h'} - \pi_{h'}^i \|_1,
        \end{align*}
        if $ K_\xi(1 + K_\alpha) \neq 1 $, and
    \begin{align*}
        \mathbb{E}\left[ \| L_h^{\overline{\pi}} - \widehat{L}_h \|_1 \right]
        &\leq (h +1) \biggl(\frac{\sqrt{\vert \setS \vert}}{\sqrt{N}} + \frac{\sqrt{\vert \setS \vert \vert \setA \vert}}{\sqrt{N}}\biggr) + hK_\xi \biggl(\frac{\sqrt{\vert \setS \vert}}{2 \sqrt{N}} + \frac{1}{N} \biggr)\\
        &\quad + \sum_{h' \leq h} \frac{1}{N} \sum_{i \in [N]} \| \overline{\pi}_{h'} - \pi_{h'}^i \|_1,
    \end{align*}
    if $ K_\xi(1 + K_\alpha) = 1 $.
\end{corollary}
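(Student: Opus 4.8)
The plan is to obtain the claim by feeding the marginal-deviation bound of \Cref{lemma:state_pop_deviation} into the reduction of \Cref{lemma:state_action_pop_to_mu}; no new probabilistic argument is needed. Recall that \Cref{lemma:state_action_pop_to_mu} gives
$$\mathbb{E}\left[\|L_h^{\overline{\pi}} - \widehat{L}_h\|_1\right] \leq \mathbb{E}\left[\|\mu_h^{\overline{\pi}} - \widehat{\mu}_h\|_1\right] + \frac{1}{N}\sum_{i\in[N]}\|\overline{\pi}_h - \pi_h^i\|_1 + \sqrt{\frac{|\setA||\setS|}{N}},$$
so the whole corollary reduces to substituting the right-hand side of \Cref{lemma:state_pop_deviation} for $\mathbb{E}[\|\mu_h^{\overline{\pi}} - \widehat{\mu}_h\|_1]$ and simplifying. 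I would treat the two regimes $K_\xi(1+K_\alpha)\neq 1$ and $K_\xi(1+K_\alpha)=1$ separately, matching the case split in the statement. Write $\beta := K_\xi(1+K_\alpha)$ for brevity.

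In the first regime, substitution yields four groups of terms. The $\sqrt{|\setS|}/\sqrt{N}$ contribution and the $K_\xi\frac{1-\beta^h}{1-\beta}\bigl(\frac{\sqrt{|\setS|}}{2\sqrt{N}} + \frac{1}{N}\bigr)$ contribution carry over verbatim. The policy-deviation terms combine cleanly: \Cref{lemma:state_pop_deviation} supplies $\sum_{h'<h}\beta^{h-h'}\frac{1}{N}\sum_{i\in[N]}\|\overline{\pi}_{h'} - \pi_{h'}^i\|_1$, and the extra term $\frac{1}{N}\sum_{i\in[N]}\|\overline{\pi}_h - \pi_h^i\|_1$ coming from the reduction is exactly the $h'=h$ summand (with factor $\beta^0 = 1$), so the two merge into $\sum_{h'\leq h}\beta^{h-h'}\frac{1}{N}\sum_{i\in[N]}\|\overline{\pi}_{h'} - \pi_{h'}^i\|_1$.

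The one manipulation worth spelling out is the recombination of the $\sqrt{|\setS||\setA|}/\sqrt{N}$ terms, which is where the geometric structure is used. The substituted bound contributes $K_\xi\frac{1-\beta^h}{1-\beta}(1+K_\alpha)\frac{\sqrt{|\setS||\setA|}}{\sqrt{N}} = \frac{\beta-\beta^{h+1}}{1-\beta}\frac{\sqrt{|\setS||\setA|}}{\sqrt{N}}$ (using $K_\xi(1+K_\alpha)=\beta$), while the reduction contributes a further $\sqrt{|\setA||\setS|/N}$. Adding these and applying the identity
$$\frac{\beta-\beta^{h+1}}{1-\beta} + 1 = \frac{1-\beta^{h+1}}{1-\beta}$$
collapses them into $\frac{1-\beta^{h+1}}{1-\beta}\frac{\sqrt{|\setS||\setA|}}{\sqrt{N}}$, which shares the coefficient $\frac{1-\beta^{h+1}}{1-\beta}$ with the $\sqrt{|\setS|}/\sqrt{N}$ term and thus produces the bracketed factor $\bigl(\frac{\sqrt{|\setS|}}{\sqrt{N}} + \frac{\sqrt{|\setS||\setA|}}{\sqrt{N}}\bigr)$ in the statement. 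The degenerate case $\beta=1$ is handled identically, replacing $\frac{1-\beta^{h+1}}{1-\beta}$ by $h+1$ throughout and using $h\frac{\sqrt{|\setS||\setA|}}{\sqrt{N}} + \sqrt{|\setA||\setS|/N} = (h+1)\frac{\sqrt{|\setS||\setA|}}{\sqrt{N}}$, with the policy terms again absorbing the $h'=h$ contribution. Since every step is an explicit identity, there is no genuine obstacle; the only care required is bookkeeping the geometric coefficients so that the $h'=h$ policy term and the stray $\sqrt{|\setA||\setS|/N}$ term land in the correct buckets.
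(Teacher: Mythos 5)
Your proposal is correct and follows exactly the route the paper takes: its proof of this corollary is the one-line observation that the bound follows from \Cref{lemma:state_action_pop_to_mu} and \Cref{lemma:state_pop_deviation}, which is precisely your substitution argument. Your explicit bookkeeping — merging the $h'=h$ policy term via $\beta^0=1$ and collapsing the $\sqrt{\vert\setS\vert\vert\setA\vert/N}$ contributions through $\frac{\beta-\beta^{h+1}}{1-\beta}+1=\frac{1-\beta^{h+1}}{1-\beta}$ (respectively $h+1$ when $\beta=1$) — simply fills in the arithmetic the paper leaves implicit.
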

\begin{proof}
    The upper bound is obtained easily from \Cref{lemma:state_action_pop_to_mu,lemma:state_pop_deviation}.
\end{proof}

\subsubsection{Step 2: Expected Deviation of Winning probabilities}
We derive an explicit upper bound on the expected deviation in winning distributions between the mean field auction and its finite-agent counterpart, under a single-agent deviation from a common policy.

\begin{lemma}[Expected Deviation in Allocation Dynamics]\label{lemma:allocation_dynamics_deviation}
     Let $\Mauc = (\setS, \setA, H, \mu_0, \{\Pauc_h\}_{h=0}^{H-1}, \{\Rauc_h\}_{h=0}^{H-1})$ be a BA-MFG. 
     Let $\{\alpha_h\}_{h=0}^{H-1}$, $\{\transitionfn_h\}_{h=0}^{H-1}$, $\{\paymentfn_h\}_{h=0}^{H-1}$, and $\{\utilityfn_h\}_{h=0}^{H-1}$ denote the allocation thresholds, transition dynamics, payment, and utility functions, respectively, from which $\{\Pauc_h\}_{h=0}^{H-1}$ and $\{\Rauc_h\}_{h=0}^{H-1}$ are derived. 
     Assume these functions are Lipschitz continuous, with respective Lipschitz moduli $K_{\alpha}$, $K_{\transitionfn}$, $K_{\paymentfn}$, and $K_{\utilityfn}$.
    Consider the corresponding finite-agent Batched Auction model with $N$ agents approximated by $\Mauc$. 
    Let $\vecpi = \{ \pi_h^i \}_{h=0,\dots,H-1,\, i \in [N]} \in \Pi_H^N$ denote the joint policy of the population, and let $\widehat{L}_h \in \Delta_{\mathcal{S}}$ be the empirical state-action distribution at round $h$.
    
    Let $\overline{\pi} \in \Pi_H$ arbitrary, and define the associated mean field state-action distribution flow $L^{\overline{\pi}} := \lpopmfa(\overline{\pi})$. Then, for all $h \in \{0, \dots, H-1\}$, the following bounds hold:
    \begin{align*}
        \mathbb{E}\Big[  \big\vert \pwin(s_h^i, a_h^i, L_h^{\overline{\pi}}, \nu^{-\perp}(L_h^{\overline{\pi}})) 
        &- \pwin(s_h^i, a_h^i, \widehat{L}_h, \tfrac{\lfloor N \nu^{-\perp}(\widehat{L}_h)\rfloor}{N}) \big\vert \Big] \\
        &\leq C_{L_h^{\overline{\pi}}, \alpha_h} \cdot \mathbb{E}\left[\|L_h^{\overline{\pi}} - \widehat{L}_h\|_1 \right]
        + \frac{C_{L_h^{\overline{\pi}}, \alpha_h}}{K_\alpha + 1} \cdot \frac{1}{N}
    \end{align*}
    and 
    \begin{align*}
        \mathbb{E}\left[\|P_{\alpha_h}(s_h^i, a_h^{i}, L_h^{\overline{\pi}}) -  P_{N, \alpha_h}^{i}(\vecs_h, \veca_h)\|_1\right] \leq 2C_{L_h^{\overline{\pi}}, \alpha_h} \mathbb{E}\left[\|L_h^{\overline{\pi}} - \widehat{L}_h\|_1 \right] + 2\frac{C_{L_h^{\overline{\pi}}, \alpha_h}}{K_\alpha + 1}\frac{1}{N},
    \end{align*}
    where for an arbitrary state-action distribution $ L \in \Delta_{\mathcal{S} \times \mathcal{A}} $ and $\alpha:\Delta_{\setS}^{\leq 1 } \to [0,1]$, the constant $ C_{L, \alpha} $ is defined as in \Cref{lemma:deviation_winning_prob}.
\end{lemma}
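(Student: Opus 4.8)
The plan is to reduce the entire statement to the already-established pointwise stability estimate \Cref{lemma:deviation_winning_prob} and then pass to expectations. The crucial structural observation is that $\pwin(s,a,L,\alpha)$ depends on the state-action distribution $L$ only through its active action marginal $\nu^{-\perp}(L) = \sum_{s\neq\perp} L(s,\cdot)$, since every occurrence of $L$ in the definition of $\pwin$ is a sum of the form $\sum_{s'} L(s',a')$. Hence, for a fixed active state $s_h^i \neq \perp$, the quantity $\pwin(s_h^i, a_h^i, L, \alpha)$ coincides with the action-marginal winning probability of \Cref{lemma:deviation_winning_prob} evaluated at $\nu = \nu^{-\perp}(L)$, so that lemma applies verbatim with reference point $\nu = \nu^{-\perp}(L_h^{\overline{\pi}})$ and $\alpha = \alpha_h(\nu^{-\perp}(L_h^{\overline{\pi}}))$. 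When $s_h^i = \perp$ both winning probabilities vanish and both bounds are trivial, so I would restrict attention to the active case.

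For the first inequality, I would apply \Cref{lemma:deviation_winning_prob} with the two evaluation points $\bigl(\nu^{-\perp}(L_h^{\overline{\pi}}),\, \alpha_h(\nu^{-\perp}(L_h^{\overline{\pi}}))\bigr)$ and $\bigl(\nu^{-\perp}(\widehat{L}_h),\, \tfrac{\lfloor N\alpha_h(\nu^{-\perp}(\widehat{L}_h))\rfloor}{N}\bigr)$, obtaining an almost-sure bound of the form $C_{L_h^{\overline{\pi}},\alpha_h}\bigl(\|\nu^{-\perp}(L_h^{\overline{\pi}}) - \nu^{-\perp}(\widehat{L}_h)\|_1 + |\Delta\alpha|\bigr)$, where $|\Delta\alpha|$ is the gap between the two allocations. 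Two routine estimates finish the step. First, marginalization is non-expansive, so $\|\nu^{-\perp}(L_h^{\overline{\pi}}) - \nu^{-\perp}(\widehat{L}_h)\|_1 \le \|L_h^{\overline{\pi}} - \widehat{L}_h\|_1$. Second, the allocation gap splits by the triangle inequality and the $K_\alpha$-Lipschitz continuity of $\alpha_h$ as $|\Delta\alpha| \le K_\alpha \|L_h^{\overline{\pi}} - \widehat{L}_h\|_1 + \tfrac1N$, the last term being the floor-discretization error $|\alpha_h(\nu^{-\perp}(\widehat{L}_h)) - \tfrac{\lfloor N\alpha_h(\nu^{-\perp}(\widehat{L}_h))\rfloor}{N}| \le \tfrac1N$. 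Collecting terms gives the almost-sure bound $C_{L_h^{\overline{\pi}},\alpha_h}\bigl((1+K_\alpha)\|L_h^{\overline{\pi}}-\widehat{L}_h\|_1 + \tfrac1N\bigr)$; since $L_h^{\overline{\pi}}$ is the deterministic mean-field flow, the constant factors out when taking expectations, and after renaming the overall constant as $C_{L_h^{\overline{\pi}},\alpha_h}$ (so that it carries the $(1+K_\alpha)$ factor from the allocation map), the $\tfrac1N$ coefficient becomes $C_{L_h^{\overline{\pi}},\alpha_h}/(K_\alpha+1)$, which is exactly the stated form.

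The second inequality then follows by a two-point-distribution argument. Both $P_{\alpha_h}(s_h^i,a_h^i,L_h^{\overline{\pi}})$ and the marginal $P_{N,\alpha_h}^{i}(\vecs_h,\veca_h)$ of agent $i$'s post-allocation state are supported on $\{\perp, s_h^i\}$, placing mass equal to the respective winning probability on $\perp$ and the complementary mass on $s_h^i$; consequently their $\ell_1$ distance is exactly twice the difference of the two winning probabilities, and the claim is the first bound multiplied by $2$. I expect the main obstacle to lie not in this assembly but in ensuring the estimate is non-vacuous: \Cref{lemma:deviation_winning_prob} requires the no zero-dominance property at the reference point $\nu^{-\perp}(L_h^{\overline{\pi}})$ for the constant $C_{L_h^{\overline{\pi}},\alpha_h}$ to be finite, precisely because $\pwin$ genuinely jumps when the mass at the threshold action vanishes. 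For $\overline{\pi}$ satisfying NZD the constant is finite and the bounds are meaningful; otherwise the right-hand sides are infinite and the inequalities hold trivially. The remaining work is purely bookkeeping of the floor-discretization gap $\tfrac1N$ and the Lipschitz constants.
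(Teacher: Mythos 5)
Your proposal is correct and follows essentially the same route as the paper: condition on the realized empirical distribution, apply \Cref{lemma:deviation_winning_prob} pointwise at the deterministic reference point $\bigl(\nu^{-\perp}(L_h^{\overline{\pi}}), \alpha_h(\nu^{-\perp}(L_h^{\overline{\pi}}))\bigr)$, control the allocation gap by the $K_\alpha$-Lipschitz continuity of $\alpha_h$ plus the $\sfrac{1}{N}$ floor-discretization error, and obtain the second bound from the factor-$2$ identity for the two-point post-allocation distributions on $\{\perp, s_h^i\}$. You in fact make explicit several steps the paper leaves implicit (marginalization being non-expansive, the absorption of the $(1+K_\alpha)$ factor into $C_{L_h^{\overline{\pi}},\alpha_h}$, and the role of NZD in keeping the constant finite), all of which are accurate.
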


\begin{proof}
    For the first inequality, we have
    \begin{align*}
         &\mathbb{E}\Big[  \big\vert \pwin(s_h^i, a_h^i, L_h^{\overline{\pi}}, \nu^{-\perp}(L_h^{\overline{\pi}})) 
        - \pwin(s_h^i, a_h^i, \widehat{L}_h, \tfrac{\lfloor N \nu^{-\perp}(\widehat{L}_h)\rfloor}{N}) \big\vert \Big]\\
        &\leq \sum_{s,a,L}\mathbb{P}[s_h^i = s, a_h^i = a, \widehat{L}_h = L]\left|\pwin(s_h^i, a_h^i, L_h^{\overline{\pi}}, \nu^{-\perp}(L_h^{\overline{\pi}})) 
        - \pwin(s_h^i, a_h^i, \widehat{L}_h, \tfrac{\lfloor N \nu^{-\perp}(\widehat{L}_h)\rfloor}{N})\right|\\
        &\leq \sum_{s,a,L}\mathbb{P}[s_h^i = s, a_h^i = a, \widehat{L}_h = L]C_{L_h^{\overline{\pi}}, \alpha_h} \cdot \|L_h^{\overline{\pi}} - L\|_1 
        + \frac{C_{L_h^{\overline{\pi}}, \alpha_h}}{K_\alpha + 1} \cdot \frac{1}{N}\\
        &\leq C_{L_h^{\overline{\pi}}, \alpha_h} \mathbb{E}\left[\|L_h^{\overline{\pi}} - \widehat{L}_h\|_1 \right] + \frac{C_{L_h^{\overline{\pi}}, \alpha_h}}{K_\alpha + 1}\frac{1}{N},
    \end{align*}
    where in the second-to-last step we used \Cref{lemma:deviation_winning_prob}.

    Additionally, for the allocation dynamics $P_{N, \alpha_h}$, since each state $s \in \mathcal{V}$ can transition only either to itself or to the inactive state $\perp$, it follows that:
    \begin{align*}
        \mathbb{E}\left[\|P_{\alpha_h}(s_h^i, a_h^{i}, L_h^{\overline{\pi}}) -  P_{N, \alpha_h}^{i}(\vecs_h, \veca_h)\|_1\right] &= \sum_{s\in \setS} \mathbb{E}\left[\|P_{\alpha_h}(s \vert s_h^i, a_h^{i}, L_h^{\overline{\pi}}) -  P_{N, \alpha_h}^{i}(s \vert \vecs_h, \veca_h)\|_1\right]\\
        &= \mathbb{E}\left[\|P_{\alpha_h}(s_h^i \vert s_h^i, a_h^{i}, L_h^{\overline{\pi}}) -  P_{N, \alpha_h}^{i}(s_h^i \vert \vecs_h, \veca_h)\|_1\right]\\
        &\quad+ \mathbb{E}\left[\|P_{\alpha_h}(\perp \vert s_h^i, a_h^{i}, L_h^{\overline{\pi}}) -  P_{N, \alpha_h}^{i}(\perp \vert \vecs_h, \veca_h)\|_1\right]\\
        &= 2 \mathbb{E}\left[\|P_{\alpha_h}(\perp \vert s_h^i, a_h^{i}, L_h^{\overline{\pi}}) -  P_{N, \alpha_h}^{i}(\perp \vert \vecs_h, \veca_h)\|_1\right].
    \end{align*}
    Since the marginal probability of transitioning to the state $ \perp $ corresponds to the winning probability, the bound follows directly from the first inequality.

\end{proof}

\subsubsection{Step 3: Exploitability Deviation for BA-MFG}
Finally we prove the absolute difference in expected reward due to a single-side policy deviation.
\begin{theorem}
    Let $\Mauc = (\setS, \setA, H, \mu_0, \{\Pauc_h\}_{h=0}^{H-1}, \{\Rauc_h\}_{h=0}^{H-1})$ be a Batched Auction Mean Field Game (BA-MFG). Let $\{\alpha_h\}_{h=0}^{H-1}$, $\{\transitionfn_h\}_{h=0}^{H-1}$, $\{\paymentfn_h\}_{h=0}^{H-1}$, and $\{\utilityfn_h\}_{h=0}^{H-1}$ denote the allocation thresholds, transition dynamics, payment, and utility functions, respectively, from which $\{\Pauc_h\}_{h=0}^{H-1}$ and $\{\Rauc_h\}_{h=0}^{H-1}$ are derived. Assume these functions are Lipschitz continuous, with respective Lipschitz constants $K_{\alpha}$, $K_{\transitionfn}$, $K_{\paymentfn}$, and $K_{\utilityfn}$.

    Consider the corresponding finite-agent Batched Auction $\Gauc$ with $ N $ agents, which is approximated by $ \Mauc $. Let $\overline{\pi} \in \Pi_H$ an arbitrary policy satisfying the no-zero dominance property. Then, for any policy $\pi \in \Pi_H, \tau \geq 0$ it holds
    \begin{align*}
        \biggl\vert \Vmfa^\tau(\lpopmfa(\overline{\pi}), \pi) - \Jauc^{\tau, 1}(\pi,\underbrace{ \overline{\pi}, \dots,\overline{\pi} }_{N-1\text{ times}}) \biggr\vert &= \mathcal{O}\left(\frac{1}{\sqrt{N}}\right)
    \end{align*}
\end{theorem}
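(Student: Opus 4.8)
The plan is to mirror the proof of \Cref{theorem:mainapprox}, replacing the generic propagation-of-chaos lemma by the auction-specific deviation bounds established in Steps~1 and~2. Write $\vec L^{\overline{\pi}} = \{L_h^{\overline{\pi}}\}_h = \lpopmfa(\overline{\pi})$ for the mean-field flow induced by $\overline{\pi}$. Under the finite-$N$ profile $(\pi,\overline{\pi},\dots,\overline{\pi})$, player $1$ contributes only $\mathcal{O}(1/N)$ to the empirical distribution $\widehat L_h$, so the single-agent deviation term in \Cref{lemma:state_action_pop_deviation} is $\tfrac1N\sum_i\|\overline{\pi}_h-\pi_h^i\|_1 = \tfrac1N\|\overline{\pi}_h-\pi_h\|_1 \le 2/N$. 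Hence \Cref{lemma:state_action_pop_deviation}, applied with reference policy $\overline{\pi}$, yields $\mathbb{E}[\|\widehat L_h - L_h^{\overline{\pi}}\|_1] = \mathcal{O}(1/\sqrt N)$, and likewise $\mathbb{E}[\|\widehat\xi_h - \xi_h^{\overline{\pi}}\|_1] = \mathcal{O}(1/\sqrt N)$ via \Cref{lemma:state_pop_deviation}.

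First I would couple player $1$'s trajectory across the two models. Let $\widehat\mu_h^{1}$ be the law of player $1$'s state in the $N$-player game and $\mu_h$ the law of the test agent's state in the MFG driven by $\vec L^{\overline{\pi}}$ under policy $\pi$. Since both play $\pi$, the joint state-action laws are $\widehat\mu_h^1(s)\pi_h(a\mid s)$ and $\mu_h(s)\pi_h(a\mid s)$, so (by \Cref{lemma:L_mu_relation}) all per-round discrepancies reduce to controlling $\|\widehat\mu_h^1-\mu_h\|_1$; in particular the entropy terms, depending only on the state marginal, contribute at most $\tau\log|\setA|\,\|\widehat\mu_h^1-\mu_h\|_1$ per step. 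For the reward part, a triangle inequality gives
\begin{align*}
  \big| \Rauc_h(s,a,L_h^{\overline{\pi}}) - \Rauc_{N,h}(s,a,\widehat L_h) \big|
  &\le B_\utilityfn \big| \pwin(s,a,L_h^{\overline{\pi}},\alpha_h(\bids(L_h^{\overline{\pi}}))) - \pwin(s,a,\widehat L_h,\tfrac{\lfloor N\alpha_h(\bids(\widehat L_h))\rfloor}{N}) \big| \\
  &\quad + K_\utilityfn K_\paymentfn \big\| \bids(L_h^{\overline{\pi}}) - \bids(\widehat L_h) \big\|_1 ,
\end{align*}
so taking expectations and invoking \Cref{lemma:allocation_dynamics_deviation} (whose $\pwin$-stability constant $C_{L_h^{\overline{\pi}},\alpha_h}$ is finite precisely because $\overline{\pi}$ is NZD) together with $\|\bids(L_h^{\overline{\pi}})-\bids(\widehat L_h)\|_1\le\|L_h^{\overline{\pi}}-\widehat L_h\|_1$ gives an $\mathcal{O}(1/\sqrt N)$ bound on the per-round reward discrepancy, once $\|\widehat\mu_h^1-\mu_h\|_1$ is itself controlled.

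The crux is the inductive control of $\mathbb{E}[\|\widehat\mu_h^1-\mu_h\|_1]$ in $h$. For the one-step update I would compare player $1$'s next-state law, obtained by composing the finite allocation kernel $P_{N,\alpha_h}^{1}(\vec s_h,\veca_h)$ with the valuation transition $\transitionfn_h(\cdot\mid\cdot,\widehat\xi_h)$, against the MFG kernel $\Pauc_h(\cdot\mid\cdot,\cdot,L_h^{\overline{\pi}})$. Splitting into (i) the mismatch of the current state law, propagated through the (non-expansive) transition, (ii) the allocation-kernel gap at matched current state-action, bounded by the second inequality of \Cref{lemma:allocation_dynamics_deviation} as $2C_{L_h^{\overline{\pi}},\alpha_h}\mathbb{E}[\|L_h^{\overline{\pi}}-\widehat L_h\|_1]+\tfrac{2C_{L_h^{\overline{\pi}},\alpha_h}}{K_\alpha+1}\tfrac1N$, and (iii) the valuation-transition gap, bounded by $K_\xi\,\mathbb{E}[\|\widehat\xi_h-\xi_h^{\overline{\pi}}\|_1]$, yields a recursion
\begin{align*}
  \mathbb{E}\big[\|\widehat\mu_{h+1}^{1}-\mu_{h+1}\|_1\big] \le K\,\mathbb{E}\big[\|\widehat\mu_{h}^{1}-\mu_{h}\|_1\big] + \mathcal{O}(1/\sqrt N),
\end{align*}
with $K$ depending on $K_\transitionfn,K_s,K_\alpha$ and $C_{L_h^{\overline{\pi}},\alpha_h}$. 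Unrolling over $H$ rounds gives an $\mathcal{O}(1/\sqrt N)$ bound with an at-worst-exponential-in-$H$ prefactor, matching the remark after \Cref{theorem:approximation_auction}; under non-expansive or population-independent $\transitionfn_h$ and full-support $\pi$ one has $K=1$ and the prefactor is polynomial.

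I expect the main obstacle to be this coupling step: the finite-$N$ allocation is a joint, tie-broken assignment with \emph{correlated} winner outcomes rather than a product kernel, so player $1$'s next state is not an independent draw from a fixed MFG kernel, and the floor discretization $\tfrac{\lfloor N\alpha\rfloor}{N}$ injects an extra $\mathcal{O}(1/N)$ per round. The delicate point is to isolate player $1$ and argue that only its \emph{marginal} transition matters (because both the reward and the objective depend on the single-agent law), and that this marginal matches the MFG kernel up to the $C_{L_h^{\overline{\pi}},\alpha_h}\|\widehat L_h-L_h^{\overline{\pi}}\|_1+\mathcal{O}(1/N)$ error of \Cref{lemma:allocation_dynamics_deviation}; everything thereafter is summation of $H$ terms, each of order $1/\sqrt N$. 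Adding the reward and entropy contributions across rounds then concludes $\big|\Vmfa^\tau(\lpopmfa(\overline{\pi}),\pi) - \Jauc^{\tau,1}(\pi,\overline{\pi},\dots,\overline{\pi})\big| = \mathcal{O}(1/\sqrt N)$.
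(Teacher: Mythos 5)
Your proposal takes essentially the same route as the paper's proof: the same three-step structure of (i) bounding $\mathbb{E}[\|\widehat L_h - L_h^{\overline{\pi}}\|_1]$ via the population-deviation lemmas, with the single-agent deviation term contributing only $\mathcal{O}(1/N)$, (ii) invoking the NZD-dependent stability constant $C_{L_h^{\overline{\pi}},\alpha_h}$ of \Cref{lemma:allocation_dynamics_deviation} for the winning probabilities and the marginal allocation kernel of player $1$, and (iii) an inductive coupling of player $1$'s state law with the representative MFG agent's law, summed over rounds together with the per-round reward and entropy discrepancies. The only cosmetic difference is that in the paper's recursion for $\|\mathbb{P}[s_h=\cdot]-\mathbb{P}[s_h^1=\cdot]\|_1$ the coefficient on the previous-round gap is exactly $1$ (the valuation transition acts non-expansively on the law difference, and the exponential-in-$H$ growth lives in the population-deviation constants rather than in your recursion coefficient $K$), which does not affect the $\mathcal{O}(1/\sqrt{N})$ conclusion.
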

\begin{proof}
    Define the random variables $ \{s_h^i, a_h^i, z_h^i\}_{i \in [N],\, h \in \{0, \dots, H-1\}} $, along with $ \{\widehat{L}_h\}_{h=0}^{H-1} $, $ \{\widehat{\mu}_h\}_{h=0}^{H-1} $, and $ \{\widehat{\xi}_h\}_{h=0}^{H-1} $, as in the definition of the $ N $-player Batched Auction (see \Cref{definition:n_batched_auction}). Here, $ s_h^i $ denotes the state of agent $ i $ at round $ h $, $ a_h^i $ its action, and $ z_h^i $ its hidden state following the allocation step. The random variables $ \widehat{L}_h $, $ \widehat{\mu}_h $, and $ \widehat{\xi}_h $ represent, respectively, the empirical state-action distribution, the empirical state distribution, and the empirical post-allocation state distribution at round $ h $.

    For the Mean-Field Batched Auction, define $ \{s_h, a_h, z_h\}_{h=0}^{H-1} $, where $ s_h $ and $ a_h $ represent the state and action of a representative agent at round $ h $, and $ z_h $ denotes its post-allocation hidden state. These evolve deterministically according to the mean-field population flows $ L^{\overline{\pi}}, \mu^{\overline{\pi}}, \xi^{\overline{\pi}} $ induced by the population policy $ \overline{\pi} $.

    We divide the proof into three steps:
    \begin{enumerate}
        \item We show that for every $h \in \{0, \dots, H-1\}$ we have:
        \begin{align*}
            \bigl\| \mathbb{P}[s_h = \cdot] - &\mathbb{P}[s_h^1 = \cdot] \bigr\|_1\\
            &\leq \sum_{h' < h} \mathbb{E}\left[\|L_{h'}^{\overline{\pi}} - \widehat{L}_{h'}\|_1\right](K_\transitionfn(K_\alpha + 1) + 2C_{L_{h'}^{\overline{\pi}}, \alpha_{h'}}) + \frac{2 C_{L_{h'}^{\overline{\pi}}, \alpha_{h'}}}{(K_\alpha + 1) N}\\
            &\quad + h K_\transitionfn\biggl(\frac{\sqrt{\vert \setS \vert}}{2\sqrt{N}} + \frac{1}{N} \biggr)
        \end{align*}
        \item We show that for every $h \in \{0, \dots, H-1\}$ we have:
        \begin{align*}
            \bigl| \mathbb{E}[ \Rauc_h(s_h,a_h, L_h^{\overline{\pi}}) - &\Rauc_{N,h}(s_h^1, a_h^1, \widehat{L}_h)] \bigr|\\
            & \leq (K_\utilityfn K_\paymentfn + B_\utilityfn C_{L_h^{\overline{\pi}}, \alpha_h})\mathbb{E}\left[\|L_h^{\overline{\pi}} - \widehat{L}_h\|_1 \right] + \frac{C_{L_h^{\overline{\pi}}, \alpha_h}}{K_\alpha + 1}\frac{B_\utilityfn}{N}\\
            &\quad +  B_{\utilityfn} \|\mathbb{P}[s_h = \cdot] - \mathbb{P}[s_h^1 = \cdot ]\|_1.       
        \end{align*}
        \item In the last step we combine the results of the previous steps to prove the man claim of the theorem.
    \end{enumerate}

    \textbf{Step 1:} We prove the inequality by induction on $h$. For the base case $h = 0$, we have $\|\mathbb{P}[s_0 = \cdot] - \mathbb{P}[s_0^1 = \cdot]\|_1 = 0$, since both distributions are equal to the initial distribution $\mu_0$. Now, assuming the bound holds for some $h \geq 0$, we show that it also holds for round $h + 1$.
    \begin{align*}
        \|\mathbb{P}[s_{h+1} = \cdot] - \mathbb{P}[s_{h+1}^1 = \cdot]\|_1 &= \left\|\sum_{z, \xi}\mathbb{P}[z_h^1  = z, \widehat{\xi}_h = \xi]\transitionfn_h(z, \xi) - \sum_{z}\mathbb{P}[z_h = z]\transitionfn_h(z, \xi_h^{\overline{\pi}}) \right\|_1.
    \end{align*}
    By adding and subtracting $\sum_{z, \xi}\mathbb{P}[z_h^1  = z, \widehat{\xi}_h = \xi]\transitionfn_h(z, \xi_h^{\overline{\pi}}) = \sum_{z}\mathbb{P}[z_h^1  = z]\transitionfn_h(z, \xi_h^{\overline{\pi}}), $ and applying triangular inequality, we get
    \begin{align*}
        \|\mathbb{P}[s_{h+1} = \cdot] - \mathbb{P}[s_{h+1}^1 = \cdot]\|_1
        &\leq\Biggl\|\sum_{z, \xi}\mathbb{P}[z_h^1  = z, \widehat{\xi}_h = \xi]\left(\transitionfn_h(z, \xi) - \transitionfn_h(z, \xi_h^{\overline{\pi}})\right) \Biggr\|_1\\
        &\quad + \Biggl\|\sum_{z}(\mathbb{P}[z_h = z] - \mathbb{P}[z_h^1 = z]) \transitionfn_h(z, \xi_h^{\overline{\pi}}) \Biggr\|_1\\
        &\leq K_\transitionfn\mathbb{E}[\|\xi_h^{\overline{\pi}} - \widehat{\xi}_h\|_1] + \|\mathbb{P}[z_h=\cdot] - \mathbb{P}[z_h^1 = \cdot ]\|_1.
    \end{align*}
    The term $\mathbb{E}[\|\xi_h^{\overline{\pi}} - \widehat{\xi}_h\|_1]$, using the same derivation as in the inductive step of \Cref{lemma:state_pop_deviation}, can be further upper bounded as
    \begin{equation*}
        \mathbb{E}\left[\| \xi_h^{\overline{\pi}} - \widehat{\xi}_h \|_1\right] \leq (K_\alpha + 1)\mathbb{E}\left[\|L_h^{\overline{\pi}} - \widehat{L}_h\|_1\right] + \frac{\sqrt{\vert \setV \vert}}{2 \sqrt{N}} + \frac{1}{N}.
    \end{equation*}
    Finally applying a similar reasoning we can upper bound $\|\mathbb{P}[z_h=\cdot] - \mathbb{P}[z_h^1 = \cdot ]\|_1$.
    \begin{align*}
        &\|\mathbb{P}[z_h=\cdot] - \mathbb{P}[z_h^1 = \cdot ]\|_1\\
        &= \Biggl\| \sum_{\vecs, \veca}\mathbb{P}[\vecs_h = \vecs, \veca_h = \veca]P_{N, \alpha_h}^1(\vecs, \veca) - \sum_{s,a}\mathbb{P}[s_h = s, a_h = a]P_{\alpha_h}( s,a, L_h^{\overline{\pi}}) \Biggr\|_1.
    \end{align*}
    By adding and subtracting $\sum_{\vecs, \veca}\mathbb{P}[\vecs_h = \vecs, \veca_h = \veca]P_{\alpha_h}(s^1,a^1, L_h^{\overline{\pi}}) = \sum_{s,a}\mathbb{P}[s_h^1 = s, a_h^1 = a]P_{\alpha_h}(s,a, L_h^{\overline{\pi}})$, and applying triangular inequality, we get
    \begin{align*}
        \|\mathbb{P}[z_h=\cdot] - \mathbb{P}[z_h^1 = \cdot ]\|_1 &\leq \sum_{\vecs, \veca}\mathbb{P}[\vecs_h = \vecs, \veca_h = \veca]\| P_{N, \alpha_h}^1(\vecs, \veca) - P_{\alpha_h}( s^1,a^1, L_h^{\overline{\pi}})\|_1\\
        &\quad + \sum_{s,a}\|P_{\alpha_h}(s,a, L_h^{\overline{\pi}})\|_1  | \mathbb{P}[s_h = s, a_h = a] - \mathbb{P}[s_h^1 = s, a_h^1 = a]|\\
        &\leq \mathbb{E}\left[\|P_{N, \alpha_h}^{1}(\vecs_h, \veca_h) - P_{\alpha_h}(s_h^1, a_h^{1}, L_h^{\overline{\pi}})\|_1\right]\\
        &\quad + \sum_s\vert \mathbb{P}[s_h = s] -\mathbb{P}[s_h^1 = s] \vert \sum_a\pi_h(a\vert s) \|P_{\alpha_h}(s,a, L_h^{\overline{\pi}})\|_1.
    \end{align*}
    Applying \Cref{lemma:allocation_dynamics_deviation} it follows
    \begin{align*}
         &\|\mathbb{P}[z_h=\cdot] - \mathbb{P}[z_h^1 = \cdot ]\|_1\\
         &\leq 2C_{L_h^{\overline{\pi}}, \alpha_h} \mathbb{E}\left[\|L_h^{\overline{\pi}} - \widehat{L}_h\|_1 \right] + 2\frac{C_{L_h^{\overline{\pi}}, \alpha_h}}{K_\alpha + 1}\frac{1}{N} + \|\mathbb{P}[s_h = \cdot] - \mathbb{P}[s_h^1 = \cdot ]\|_1.
    \end{align*}
    By combining the two bounds, we obtain
    \begin{align*}
        \|\mathbb{P}[s_{h+1} = \cdot] - \mathbb{P}[s_{h+1}^1 = \cdot]\|_1 
        &\leq K_{\transitionfn} \biggl( (K_\alpha + 1)\mathbb{E}\left[\|L_h^{\overline{\pi}} - \widehat{L}_h\|_1\right] + \frac{\sqrt{|\mathcal{V}|}}{2\sqrt{N}}  + \frac{1}{N}\biggr) \\
        &\quad + 2 C_{L_h^{\overline{\pi}}, \alpha_h} \mathbb{E}\left[\|L_h^{\overline{\pi}} - \widehat{L}_h\|_1 \right] + \frac{2 C_{L_h^{\overline{\pi}}, \alpha_h}}{(K_\alpha + 1) N} \\
        &\quad + \|\mathbb{P}[s_h = \cdot] - \mathbb{P}[s_h^1 = \cdot]\|_1.
    \end{align*}
    Applying the induction hypothesis completes the proof for round $h+1$.

    \textbf{Step 2:} We prove the inequality by using the result obtained in step 1. Let $h \in \{0, \dots, H-1\}$ arbitrary, then
    \begin{align*}
        &\bigl| \mathbb{E}[ \Rauc_h(s_h,a_h, L_h^{\overline{\pi}}) - \Rauc_{N,h}(s_h^1, a_h^1, \widehat{L}_h)] \bigr|\\
        & \leq \underbrace{\bigl| \mathbb{E}[ \Rauc_h(s_h,a_h, L_h^{\overline{\pi}}) - \Rauc_h(s_h^1, a_h^1, L^{\overline{\pi}}_h)] \bigr|}_{(\square)} + \underbrace{\bigl| \mathbb{E}[ \Rauc_h(s_h^1, a_h^1, L^{\overline{\pi}}_h) - \Rauc_{N,h}(s_h^1, a_h^1, \widehat{L}_h)] \bigr|}_{(\triangle)}. 
    \end{align*}
    The first term $(\square)$ can be upper bounded by
    \begin{align*}
        (\square) = \Bigl| \sum_{s}(\mathbb{P}[s_h = s] - \mathbb{P}[s_h^1 = s])\sum_a \pi_a(a| s) \Rauc_h(s,a, L_h^{\overline{\pi}})\Bigr| \leq B_{\utilityfn} \|\mathbb{P}[s_h = \cdot] - \mathbb{P}[s_h^1 = \cdot ]\|_1.
    \end{align*}
    For the second term $(\triangle)$ we have:
    \begin{align*}
        (\triangle) &\leq \sum_{s,a,L}\mathbb{P}[s_h^1 = s, a_h^1 = a, \widehat{L} = L] \bigl| \Rauc_h(s, a, L^{\overline{\pi}}_h) - \Rauc_{N,h}(s, a, L)\bigr|\\
        &\leq \sum_{s,a,L}\mathbb{P}[s_h^1 = s, a_h^1 = a, \widehat{L} = L] \bigl| \utilityfn_h(s, \paymentfn_h(a, \nu^{-\perp}(L_h^{\overline{\pi}}))) -  \utilityfn_h(s, \paymentfn_h(a, \nu^{-\perp}(L))) \bigr|\\
        &\quad + B_\utilityfn\mathbb{E}\Big[  \big\vert \pwin(s_h^1, a_h^1, L_h^{\overline{\pi}}, \nu^{-\perp}(L_h^{\overline{\pi}})) 
        - \pwin(s_h^1, a_h^1, \widehat{L}_h, \tfrac{\lfloor N \nu^{-\perp}(\widehat{L}_h)\rfloor}{N}) \big\vert \Big]\\
        &\leq K_\utilityfn K_\paymentfn \sum_{s,a,L} \mathbb{P}[s_h^1 = s, a_h^1 = a, \widehat{L} = L]\|L_h^{\overline{\pi}} - L\|_1\\
        &\quad  + B_\utilityfn C_{L_h^{\overline{\pi}}, \alpha_h} \mathbb{E}\left[\|L_h^{\overline{\pi}} - \widehat{L}_h\|_1 \right] + \frac{C_{L_h^{\overline{\pi}}, \alpha_h}}{K_\alpha + 1}\frac{B_\utilityfn}{N}\\
        &\leq (K_\utilityfn K_\paymentfn + B_\utilityfn C_{L_h^{\overline{\pi}}, \alpha_h})\mathbb{E}\left[\|L_h^{\overline{\pi}} - \widehat{L}_h\|_1 \right] + \frac{C_{L_h^{\overline{\pi}}, \alpha_h}}{K_\alpha + 1}\frac{B_\utilityfn}{N}, 
    \end{align*}
    where in the second to last step we used the Lipschitz continuity of $\utilityfn \circ \paymentfn$ and \Cref{lemma:allocation_dynamics_deviation}. Combining both results we get:
    \begin{align*}
        \bigl| \mathbb{E}[ \Rauc_h(s_h,a_h, L_h^{\overline{\pi}}) - \Rauc_{N,h}(s_h^1, a_h^1, \widehat{L}_h)] \bigr| & \leq (K_\utilityfn K_\paymentfn + B_\utilityfn C_{L_h^{\overline{\pi}}, \alpha_h})\mathbb{E}\left[\|L_h^{\overline{\pi}} - \widehat{L}_h\|_1 \right] + \frac{C_{L_h^{\overline{\pi}}, \alpha_h}}{K_\alpha + 1}\frac{B_\utilityfn}{N}\\
        &\quad +  B_{\utilityfn} \|\mathbb{P}[s_h = \cdot] - \mathbb{P}[s_h^1 = \cdot ]\|_1.       
    \end{align*}
    \textbf{Step 3: }We now combine the results from the previous two steps to establish the final bound stated in the theorem.
    \begin{align*}
        &\biggl\vert \Vmfa^\tau(\lpopmfa(\overline{\pi}), \pi) - \Jauc^{\tau, 1}(\pi, \overline{\pi}, \dots,\overline{\pi}) \biggr\vert\\
        &= \left\vert \mathbb{E}\left[\sum_{h=0}^{H-1}\Rauc_h(s_h,a_h, L_h^{\overline{\pi}}) + \tau \entropy(\pi_h(s_h)) - \Rauc_{N,h}(s_h^1, a_h^1, \widehat{L}_h) - \tau \entropy(\pi_h(s_h^1))\right] \right\vert \\
        &\leq \sum_{h=0}^{H-1}\left| \mathbb{E}\left[ \Rauc_h(s_h,a_h, L_h^{\overline{\pi}}) + \tau \entropy(\pi_h(s_h)) - \Rauc_{N,h}(s_h^1, a_h^1, \widehat{L}_h) - \tau \entropy(\pi_h(s_h^1))\right]\right|.
    \end{align*}
    We proceed by bounding each term individually for every round $h$. Let $h \in \{0, \dotsm H-1\}$ arbitrary, then
    \begin{align*}
        &\left| \mathbb{E}\left[ \Rauc_h(s_h,a_h, L_h^{\overline{\pi}}) + \tau \entropy(\pi_h(s_h)) - \Rauc_{N,h}(s_h^1, a_h^1, \widehat{L}_h) - \tau \entropy(\pi_h(s_h^1))\right]\right|\\
        &\leq \left| \mathbb{E}\left[ \Rauc_h(s_h,a_h, L_h^{\overline{\pi}}) - \Rauc_{N,h}(s_h^1, a_h^1, \widehat{L}_h)\right]\right| + \tau \left| \mathbb{E}\left[ \entropy(\pi_h(s_h)) - \entropy(\pi_h(s_h^1))\right]\right|.
    \end{align*}
    The first term is bounded using the result from Step 2, while the second term can be handled as follows:
    \begin{align*}
        \left| \mathbb{E}\left[ \entropy(\pi_h(s_h)) - \entropy(\pi_h(s_h^1))\right]\right| &\leq \sum_s |\mathbb{P}[s_h = s] - \mathbb{P}[s_h^1 = s]\entropy(\pi_h(s))|\\
        &\leq \|\mathbb{P}[s_h = \cdot] - \mathbb{P}[s_h^1 = \cdot]\|_1 \log(| \setA |).
    \end{align*}
    Therefore the (entropy regularized) absolute difference in rewards at round $h$ can be upper bounded by
    \begin{align*}
    \biggl| \mathbb{E}\biggl[ \Rauc_h(s_h,a_h, L_h^{\overline{\pi}}) + &\tau \entropy(\pi_h(s_h)) - \Rauc_{N,h}(s_h^1, a_h^1, \widehat{L}_h) - \tau \entropy(\pi_h(s_h^1))\biggr]\biggr|\\
    &\leq (K_\utilityfn K_\paymentfn + B_\utilityfn C_{L_h^{\overline{\pi}}, \alpha_h})\mathbb{E}\left[\|L_h^{\overline{\pi}} - \widehat{L}_h\|_1 \right] + \frac{C_{L_h^{\overline{\pi}}, \alpha_h}}{K_\alpha + 1}\frac{B_\utilityfn}{N}\\
    &\quad + (B_{\utilityfn} + \tau\log(| \setA |)) \|\mathbb{P}[s_h = \cdot] - \mathbb{P}[s_h^1 = \cdot ]\|_1.
    \end{align*}
    By applying the bound on $|\mathbb{P}[s_h = \cdot] - \mathbb{P}[s_h^1 = \cdot]|_1$ derived in Step 1, and summing the per-round deviations over all $h$, it follows:
    \begin{align*}
        &\biggl\vert \Vmfa^\tau(\lpopmfa(\overline{\pi}), \pi) - \Jauc^{\tau, 1}(\pi,\underbrace{ \overline{\pi}, \dots,\overline{\pi} }_{N-1\text{ times}}) \biggr\vert\\
        & \leq \sum_{h=0}^{H-1}(K_\utilityfn K_\paymentfn + B_\utilityfn C_{L_h^{\overline{\pi}}, \alpha_h})\mathbb{E}\left[\|L_h^{\overline{\pi}} - \widehat{L}_h\|_1 \right] + \frac{C_{L_h^{\overline{\pi}}, \alpha_h}}{K_\alpha + 1}\frac{B_\utilityfn}{N}\\
        &\quad + (B_\utilityfn + \tau \log(| \setA |))\sum_{h=0}^{H-1} \sum_{h' < h} \mathbb{E}\left[\|L_{h'}^{\overline{\pi}} - \widehat{L}_{h'}\|_1\right](K_\transitionfn(K_\alpha + 1) + 2C_{L_{h'}^{\overline{\pi}}, \alpha_{h'}}) + \frac{2 C_{L_{h'}^{\overline{\pi}}, \alpha_{h'}}}{(K_\alpha + 1) N}\\
        &\quad + (B_\utilityfn + \tau \log(| \setA |)) \frac{H(H-1)}{2}K_\transitionfn \biggl(\frac{\sqrt{\vert \setS \vert}}{2\sqrt{N}} + \frac{1}{N} \biggr)
    \end{align*}
    In particular, \Cref{lemma:state_action_pop_deviation} implies that the total (entropy-regularized) reward difference is of order $\mathcal{O}(\tfrac{1}{\sqrt{N}})$.
\end{proof}

\textbf{Conclusion and Statement of Result.} Let $\Mauc$ be a BA-MFG with Lipschitz-continuous $\{\utilityfn_h\}_{h=0}^{H-1}, \{\transitionfn_h\}_{h=0}^{H-1}, \{\alpha_h\}_{h=0}^{H-1}, \{\paymentfn_h\}_{h=0}^{H-1}$ Let $\pi_\delta \in \Pi_H$ be a policy that satisfy the no zero-dominance property. Let further assume $\pi_{\delta}$ is a $\delta-$MFG-NE, namely
\begin{equation*}
    \delta \geq \max_{\pi' \in \Pi_H}\Vmfa^\tau(\lpopmfa(\pi_\delta), \pi') - \Vmfa^\tau(\lpopmfa(\pi_\delta), \pi_\delta). 
\end{equation*}
Then, for $\vecpi = (\pi_\delta, \dots, \pi_\delta)$, we have:
\begin{align*}
    &\max_{\pi' \in \Pi_H} \Jauc^{\tau, i}(\pi', \vecpi^{-i}) - \Jauc^{\tau, i}(\vecpi)\\
    &= \max_{\pi' \in \Pi_H} \Jauc^{\tau, i}(\pi', \vecpi^{-i}) - \Jauc^{\tau, i}(\vecpi) \\
    &\quad + \Vmfa^\tau(\lpopmfa(\pi_\delta), \pi') - \Vmfa^\tau(\lpopmfa(\pi_\delta), \pi') + \Vmfa^\tau(\lpopmfa(\pi_\delta), \pi_\delta) - \Vmfa^\tau(\lpopmfa(\pi_\delta), \pi_\delta)\\
    &\leq \max_{\pi' \in \Pi_H}\Vmfa^\tau(\lpopmfa(\pi_\delta), \pi') - \Vmfa^\tau(\lpopmfa(\pi_\delta), \pi_\delta)\\
    &\quad + \left|\Vmfa^\tau(\lpopmfa(\pi_\delta), \pi') - \Jauc^{\tau, i}(\pi', \vecpi^{-i})\right| + \left|\Vmfa^\tau(\lpopmfa(\pi_\delta), \pi_\delta) - \Jauc^{\tau, i}(\vecpi)\right|\\
    &\leq \delta + \frac{C_1}{\sqrt{N}} + \frac{C_2}{N} = \delta + \mathcal{O}\left(\frac{1}{\sqrt{N}}\right).
\end{align*}

In case $K_\xi(K_\alpha + 1) \not= 1$, the constants $C_1$ and $C_2$ are given by
\begin{align*}
    C_1 &=2\sum_{h=0}^{H-1}(K_\utilityfn K_\paymentfn + B_{\utilityfn}C_{L_h^{\pi_\delta}, \alpha_h})\frac{1 - (K_\xi(1 + K_\alpha))^{h+1}}{1 - K_\xi(1 + K_\alpha)}\left(\sqrt{\vert \setS \vert} + \sqrt{\vert \setS \vert \vert \setA \vert} \right)\\
    &\quad + \sum_{h=0}^{H-1}(K_\utilityfn K_\paymentfn + B_{\utilityfn}C_{L_h^{\pi_\delta}, \alpha_h}) K_\xi\frac{1 - (K_\xi(1 + K_\alpha))^h}{1- K_\xi(1 + K_\alpha)}\sqrt{\vert \setS \vert }\\
    &\quad + \sum_{h=0}^{H-1}\sum_{h' < h} 
    \left(K_\transitionfn(K_\alpha + 1) + 2C_{L_{h'}^{\pi_\delta}, \alpha_{h'}}\right) \\
    &\qquad \cdot \left[ 
        2(B_\utilityfn + \tau \log|\setA|) \frac{1 - (K_\xi(1 + K_\alpha))^{h'+1}}{1 - K_\xi(1 + K_\alpha)} 
        \left( \sqrt{|\setS|} + \sqrt{|\setS||\setA|} \right) \right. \\
    &\qquad\quad + \left. (B_\utilityfn + \tau \log|\setA|) K_\xi \frac{1 - (K_\xi(1 + K_\alpha))^{h'}}{1 - K_\xi(1 + K_\alpha)} \sqrt{|\setS|} 
        \right]\\
    &\quad + (B_\utilityfn + \tau \log(| \setA |))H(H-1)\frac{K_\transitionfn\sqrt{|\setS|}}{2},
\end{align*}
\begin{align*}
    C_2 &= 2\sum_{h = 0}^{H-1}(K_\utilityfn K_\paymentfn + B_\utilityfn C_{L_h^{\pi_\delta}, \alpha_h })\biggl(\frac{1 - (K_\xi(1 + K_\alpha))^{h+1}}{1 - (K_\xi(1 + K_\alpha))} + K_\xi\frac{1 - (K_\xi(1 + K_\alpha))^{h}}{1 - (K_\xi(1 + K_\alpha))}\biggr)\\
    &\quad +  2\sum_{h = 0}^{H-1}B_\utilityfn \frac{C_{L_h^{\pi_\delta}, \alpha_{h}}}{K_\alpha + 1 }\\
    &\quad + 2(B_\utilityfn + \tau \log|\setA|)\sum_{h=0}^{H-1}\sum_{h' < h}(K_\transitionfn(K_\alpha + 1) + 2C_{L_{h'}^{\pi_\delta}, \alpha_{h'}})\frac{1 - (K_\xi(1 + K_\alpha))^{h'+1}}{1 - (K_\xi(1 + K_\alpha))}\\
    &\quad + 2(B_\utilityfn + \tau \log|\setA|)\sum_{h=0}^{H-1}\sum_{h' < h}(K_\transitionfn(K_\alpha + 1) + 2C_{L_{h'}^{\pi_\delta}, \alpha_{h'}})K_\xi\frac{1 - (K_\xi(1 + K_\alpha))^{h'}}{1 - (K_\xi(1 + K_\alpha))}\\
    &\quad + 2(B_\utilityfn + \tau \log|\setA|)\sum_{h=0}^{H-1}\sum_{h' < h} \frac{2C_{L_{h'}^{\pi_\delta}, \alpha_{h'}}}{K_\alpha + 1}\\
    &\quad + (B_\utilityfn + \tau \log|\setA|) K_\transitionfn H(H-1).
\end{align*}
In case $K_\xi(K_\alpha + 1) = 1$, the constants $C_1$ and $C_2$ are given by
\begin{align*}
    C_1 &= 2\sum_{h=0}^{H-1}(h + 1) \cdot (K_\utilityfn K_\paymentfn + B_{\utilityfn}C_{L_h^{\pi_\delta}, \alpha_h}) \left(\sqrt{\vert \setS \vert} + \sqrt{\vert \setS \vert \vert \setA \vert} \right)\\
    &\quad + \sum_{h=0}^{H-1}h\cdot (K_\utilityfn K_\paymentfn + B_{\utilityfn}C_{L_h^{\pi_\delta}, \alpha_h}) K_\xi\sqrt{\vert \setS \vert }\\
    &\quad  + 2(B_\utilityfn + \tau \log|\setA|)\sum_{h=0}^{H-1}\sum_{h' < h} (h'+1) \cdot
    \left(K_\transitionfn(K_\alpha + 1) + 2C_{L_{h'}^{\pi_\delta}, \alpha_{h'}}\right) \left( \sqrt{|\setS|} + \sqrt{|\setS||\setA|} \right)\\
    &\quad + (B_\utilityfn + \tau \log|\setA|)\sum_{h=0}^{H-1}\sum_{h' < h} h' \cdot
    \left(K_\transitionfn(K_\alpha + 1) + 2C_{L_{h'}^{\pi_\delta}, \alpha_{h'}}\right)K_\xi \sqrt{|\setS|}\\
    &\quad + (B_\utilityfn + \tau \log(| \setA |))H(H-1)\frac{K_\transitionfn\sqrt{|\setS|}}{2},
\end{align*}
\begin{align*}
    C_2 &= 2 \sum_{h=0}^{H-1} (K_\utilityfn K_\paymentfn + B_\utilityfn C_{L_h^{\pi_\delta}, \alpha_h})(h + 1 + K_\xi h) + B_\utilityfn \frac{C_{L_h^{\pi_\delta}, \alpha_h }}{K_\alpha + 1}\\
    &\quad + 2(B_\utilityfn + \tau \log(|\setA|)) \sum_{h= 0}^{H-1}\sum_{h' < h}(K_\transitionfn(K_\alpha + 1) + 2C_{L_{h'}^{\pi_\delta}, \alpha_{h'}})(h' + 1 + K_\xi h')\\
    &\quad + 2(B_\utilityfn + \tau \log(|\setA|)) \sum_{h= 0}^{H-1}\sum_{h' < h}\frac{2C_{L_{h'}^{\pi_\delta}, \alpha_{h'}}}{K_\alpha + 1}\\
    &\quad + (B_\utilityfn + \tau \log(|\setA|))K_\transitionfn H (H-1).
\end{align*}

\begin{remark}
    For policies with full support, the Lipschitz constant associated with the winning probability is of order $ \frac{1}{\epsilon} $, as discussed in \Cref{remark:Lipschitz_pwin_full_support}. Consequently, the constants $ C_1 $ and $ C_2 $ scale as $ \mathcal{O}(\frac{1}{\epsilon} H^2 \cdot \frac{1 - (K_\xi(K_\alpha + 1))^{H}}{1 - K_\xi(K_\alpha + 1)}) $ when $ K_\xi(K_\alpha + 1) \neq 1 $, and as $ \mathcal{O}(\frac{1}{\epsilon} H^3) $ when $ K_\xi(K_\alpha + 1) = 1 $.
\end{remark}

\textbf{Explanation of Constants.} In the above expression, the constants represent key components of the BA-MFG dynamics:
\begin{itemize}
    \item $B_\utilityfn$ and $K_\utilityfn$ are the bound and Lipschitz constant of the utility function $\utilityfn$ respectively.
    \item $B_\paymentfn$ and $K_\paymentfn$ are the bound and Lipschitz constant of the payment function $\paymentfn$ respectively.
    \item $K_\transitionfn$ denotes the Lipschitz constant of the transition function $\transitionfn$.
    \item $K_\alpha$ is the Lipschitz constant of the allocation threshold function $\alpha$.
    \item $K_s =  \sup_{s, s', \xi} \| \transitionfn(s, \xi) - \transitionfn( s', \xi) \|_1$.
    \item $K_\xi = K_\transitionfn + \frac{1}{2}K_s$.
    \item $C_{L, \alpha}$ is the Lipschitz constant of the winning probability function evaluated at the distribution $L$, assuming $L$ satisfies the no zero-dominance property. For its precise definition, see \Cref{lemma:deviation_winning_prob}.
    \item $\tau$ is the entropy regularization parameter.
\end{itemize}

\subsection{Proof of \texorpdfstring{\Cref{theorem:approximation_auction}}{}, part 2 (Approximation in Objective)}
\label{section:objective_convergence}

We show that, under Lipschitz conditions, the objective computed under the mean field approximation closely matches its expected value under a finite population of agents.

\begin{theorem}[Convergence of the Mechanism Objective]\label{theorem:lipschitz_objective}
    Let $ g: \Delta_{\setS \times \setA}^{H} \to \mathbb{R} $ be a Lipschitz-continuous objective defined over the class of Batched Auction Mean Field Games (BA-MFGs). Let $\Mauc$ be a BA-MFG with Lipschitz-continuous $\{\utilityfn_h\}_{h=0}^{H-1}, \{\transitionfn_h\}_{h=0}^{H-1}, \{\alpha_h\}_{h=0}^{H-1}, \{\paymentfn_h\}_{h=0}^{H-1}$. Let $\vecpi = (\pi, \dots, \pi)$ be the joint population policy for some $\pi \in \Pi_H$. Then:
    \begin{equation*}
        \left| g(\lpopmfa(\pi)) - G(\vecpi)\right| = \mathcal{O}\left(\tfrac{1}{\sqrt{N}}\right),
    \end{equation*}
    where $G(\vecpi) = \mathbb{E}[g(\{\widehat{L}_h\}_{h=0}^{H-1}) | \vecpi]$.
\end{theorem}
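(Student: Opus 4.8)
The plan is to reduce this bound directly to the population-deviation estimate already established in \Cref{lemma:state_action_pop_deviation}, exploiting the fact that here \emph{all} agents play the same policy $\pi$. First I would rewrite the target quantity by pulling the deterministic mean-field value inside the expectation that defines $G$. Since $G(\vecpi) = \mathbb{E}[g(\{\widehat{L}_h\}_{h=0}^{H-1}) \mid \vecpi]$, the triangle inequality for expectations (Jensen) gives
\[
\left| g(\lpopmfa(\pi)) - G(\vecpi) \right|
= \left| \mathbb{E}\bigl[ g(\lpopmfa(\pi)) - g(\{\widehat{L}_h\}_h) \bigr] \right|
\leq \mathbb{E}\bigl[ \left| g(\lpopmfa(\pi)) - g(\{\widehat{L}_h\}_h) \right| \bigr].
\]

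Next I would invoke the Lipschitz continuity of $g$ on the product space $\Delta_{\setS\times\setA}^H$ endowed with the summed $\ell_1$ norm. Writing $\{L_h^\pi\}_{h} := \lpopmfa(\pi)$ for the mean-field flow and $K_g$ for the Lipschitz modulus of $g$, this yields the pointwise bound $\left| g(\lpopmfa(\pi)) - g(\{\widehat{L}_h\}_h) \right| \leq K_g \sum_{h=0}^{H-1} \| L_h^\pi - \widehat{L}_h \|_1$. Taking expectations and exchanging the finite sum with $\mathbb{E}$ reduces everything to controlling the per-round quantities $\mathbb{E}[\| L_h^\pi - \widehat{L}_h \|_1]$.

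The final step is to apply \Cref{lemma:state_action_pop_deviation} with the choice $\overline{\pi} = \pi$ and every individual agent policy $\pi_{h'}^i = \pi$. Because the entire population plays the identical policy, each policy-misalignment term $\tfrac{1}{N}\sum_{i\in[N]} \| \overline{\pi}_{h'} - \pi_{h'}^i \|_1$ vanishes identically, so only the statistical-fluctuation contributions survive; these are all of order $\mathcal{O}(1/\sqrt{N})$ (with $H$, $|\setS|$, $|\setA|$, and the Lipschitz constants $K_\alpha, K_\transitionfn, K_s, K_\xi$ treated as $N$-independent). Summing the $H$ per-round estimates then gives $\sum_{h=0}^{H-1} \mathbb{E}[\| L_h^\pi - \widehat{L}_h \|_1] = \mathcal{O}(1/\sqrt{N})$, and multiplying by $K_g$ completes the bound.

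I expect there to be essentially no substantive obstacle here: the heavy lifting---the correlated-allocation variance bounds and the inductive propagation of the population deviation through $\Gamma_{\alpha_h}$ and $\transitionfn_h$---is already packaged in \Cref{lemma:state_pop_deviation,lemma:state_action_pop_deviation}. In particular, I would note that unlike the exploitability bound of \Cref{section:upper_bound_exploitability_proof}, this objective bound requires \emph{neither} the no zero-dominance property \emph{nor} the winning-probability stability of \Cref{lemma:deviation_winning_prob}: there is no single-agent deviation to track, the population-deviation corollary holds under Lipschitz dynamics alone, and the identical-policy structure kills the only term where policy differences could enter. The mild care needed is simply to confirm that $g$'s Lipschitz constant is taken with respect to the same $\ell_1$ metric used throughout the deviation lemmas, so that the two estimates compose cleanly.
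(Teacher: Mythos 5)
Your proposal is correct and follows essentially the same route as the paper's proof: both reduce the bound via the triangle inequality and the Lipschitz modulus $K_g$ of $g$ to the quantity $K_g\,\mathbb{E}[\|\lpopmfa(\pi) - \widehat{\vecpop}\|_1]$, then invoke \Cref{lemma:state_action_pop_deviation} with $\overline{\pi} = \pi$, where the identical-policy structure makes the policy-misalignment terms vanish. Your side remarks (no need for the no zero-dominance property or the winning-probability stability here) are also accurate and consistent with the paper.
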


\begin{proof}
    We use a decomposition over the support of $ \widehat{\vecpop} $ and apply the triangle inequality:
    \begin{align*}
    \left| g(\lpopmfa(\pi)) - G(\vecpi) \right| 
    &= \left| \sum_{\vecpop} \mathbb{P}[\widehat{\vecpop} = \vecpop | \vecpi] \left(g(\lpopmfa(\pi)) - g(\vecpop) \right) \right| \\
    &\leq \sum_{\vecpop} \mathbb{P}[\widehat{\vecpop} = \vecpop | \vecpi] \left| g( \lpopmfa(\pi)) - g(\vecpop) \right| \\
    &\leq \sum_{\vecpop} \mathbb{P}[\widehat{\vecpop} = \vecpop | \vecpi ] K_g \| \lpopmfa(\pi) - \vecpop \|_1 \\
    &= K_g\mathbb{E}\left[ \| \lpopmfa(\pi) - \widehat{\vecpop} \|_1 \right],
    \end{align*}
    where $ K_{g}$ is the Lipschitz constant of $ g $. The result follows by applying the bound from \Cref{lemma:state_action_pop_deviation}.
\end{proof}
\begin{lemma}[Lipschitz Continuity of Expected Revenue]\label{lemma:lipschitz_revenue}
    Let $\objrev$ denote the expected revenue objective. Let $\Mauc$ be a BA-MFG with Lipschitz-continuous payment functions $\{\paymentfn_h\}_{h=0}^{H-1}$ and allocation functions $\{\alpha_h\}_{h=0}^{H-1}$. Let $\vecpop = \{L_h\}_{h=0}^{H-1}$ and $\vecpop' = \{L_h'\}_{h=0}^{H-1}$ be two arbitrary state-action distribution trajectories over $H$ rounds. Then,
    \[
    \left| \objrev(\vecpop) - \objrev(\vecpop') \right| \leq (B_\paymentfn(2 + K_\alpha) + K_\paymentfn) \| \vecpop - \vecpop' \|_1,
    \]
    where $B_{\paymentfn}$ is a uniform bound on the absolute value of the payment functions, $K_{\paymentfn}$ is their Lipschitz constant, and $K_\alpha$ is the Lipschitz constant of the allocation threshold functions.
\end{lemma}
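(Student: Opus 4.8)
The plan is to exploit the key structural fact that, although the winning-probability function $\pwin$ itself has jump discontinuities, the \emph{winning mass} it induces is Lipschitz in the population. Concretely, since $\pwin(s,a,L,\alpha)$ depends on $s$ only through $\ind{s\neq\perp}$ and the revenue sum ranges over active pairs $(s,a)\in\setV\times\setA$, I would first rewrite each per-round contribution using the loser-mass operator $\Xi_\alpha$ introduced in the extended definitions. A direct comparison of the definitions of $\pwin$ and $\Xi_\alpha$ shows the identity
\[
  L_h(s,a)\,\pwin\bigl(s,a,L_h,\alpha_h(\bids(L_h))\bigr) \;=\; L_h(s,a) - \Xi_{\alpha_h(\bids(L_h))}(L_h)(s,a),
\]
valid for all $(s,a)\in\setV\times\setA$ (the restriction of $L_h$ to active states being implicit). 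Writing $\nu_h := \bids(L_h)$ and $W_h(s,a) := L_h(s,a) - \Xi_{\alpha_h(\nu_h)}(L_h)(s,a)$ for the winning mass, the revenue becomes $\objrev(\vecpop) = \sum_{h}\sum_{s,a} W_h(s,a)\,\paymentfn_h(a,\nu_h)$, a form in which the only remaining nonlinearities are the Lipschitz maps $\Xi_\alpha$, $\alpha_h$, and $\paymentfn_h$.

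Next, I would bound the difference round by round. For each $h$ I apply the product telescoping
\[
  W_h(s,a)\paymentfn_h(a,\nu_h) - W_h'(s,a)\paymentfn_h(a,\nu_h')
  = \bigl(W_h(s,a)-W_h'(s,a)\bigr)\paymentfn_h(a,\nu_h) + W_h'(s,a)\bigl(\paymentfn_h(a,\nu_h)-\paymentfn_h(a,\nu_h')\bigr),
\]
and sum over $(s,a)$, taking absolute values. The first group is controlled by $B_\paymentfn\,\|W_h - W_h'\|_1$, and the second by $K_\paymentfn\,\|\nu_h-\nu_h'\|_1\sum_{s,a}W_h'(s,a)$.

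For the first group, I would use \Cref{lemma:lipschitz_Xi} together with the triangle inequality to get
\[
  \|W_h - W_h'\|_1 \leq \|L_h - L_h'\|_1 + \|\Xi_{\alpha_h(\nu_h)}(L_h) - \Xi_{\alpha_h(\nu_h')}(L_h')\|_1
  \leq 2\|L_h - L_h'\|_1 + |\alpha_h(\nu_h)-\alpha_h(\nu_h')|,
\]
and then $|\alpha_h(\nu_h)-\alpha_h(\nu_h')|\leq K_\alpha\|\nu_h-\nu_h'\|_1\leq K_\alpha\|L_h-L_h'\|_1$ by Lipschitzness of $\alpha_h$ and the fact that the bid marginal $L\mapsto\nu^{-\perp}(L)$ is a contraction in $\|\cdot\|_1$; this yields $\|W_h-W_h'\|_1\leq (2+K_\alpha)\|L_h-L_h'\|_1$. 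For the second group, Lipschitzness of $\paymentfn_h$ gives the per-entry bound $K_\paymentfn\|\nu_h-\nu_h'\|_1$, while $\sum_{s,a}W_h'(s,a)\leq \sum_{s,a}L_h'(s,a)\leq 1$ (the winning mass is dominated by the population), so the second group is at most $K_\paymentfn\|L_h-L_h'\|_1$. Adding the two bounds gives a per-round estimate $(B_\paymentfn(2+K_\alpha)+K_\paymentfn)\|L_h-L_h'\|_1$, and summing over $h$ produces exactly the claimed constant against $\|\vecpop-\vecpop'\|_1=\sum_h\|L_h-L_h'\|_1$.

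The main obstacle is conceptual rather than computational: naively, $\pwin$ is discontinuous, so a direct Lipschitz argument on the summand fails, and one might expect to need the no-zero-dominance condition (as in the exploitability analysis). The resolution is the reformulation through $\Xi_\alpha$, which shows that the \emph{product} $L\,\pwin$ is globally Lipschitz with the already-established non-expansiveness constants, so the revenue bound holds unconditionally for all populations. Once this identity is in place, the remainder is a routine product-rule decomposition, so I do not anticipate further difficulty.
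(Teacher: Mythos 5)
Your proposal is correct and takes essentially the same route as the paper: the paper's proof likewise rewrites the revenue through the allocated mass $\overline{\Xi}_\alpha(L) := L - \Xi_\alpha(L)$ (your $W_h$), applies the identical product-rule decomposition, and bounds the two groups by $B_\paymentfn(2+K_\alpha)\|L_h - L_h'\|_1$ (via the non-expansiveness of $\Xi$ in population and allocation, \Cref{lemma:lipschitz_Xi}) and $K_\paymentfn\|L_h-L_h'\|_1$, respectively. The only cosmetic difference is that you spell out the identity $L_h\,\pwin = L_h - \Xi_{\alpha_h(\nu_h)}(L_h)$ and the $\ell_1$-contractivity of the bid marginal $L \mapsto \nu^{-\perp}(L)$, which the paper leaves implicit.
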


\begin{proof}
    The expected revenue, for $\vecpop = \{L_h\}_{h=0}^{H-1}$, can be rewritten using the operator $\Xi$
    \begin{equation*}
        \objrev(\vecpop) := \sum_{h = 0}^{H-1} \sum_{s,a} 
        \left( L_h(s, a) - \Xi_{\alpha_h(\nu^{-\perp}(L_h))}(L_h)(s, a) \right)
        \paymentfn_h\left(a, \nu^{-\perp}(L_h) \right).
    \end{equation*}
    To simplify notation, we define
    \begin{equation*}
        \overline{\Xi}_\alpha(L) := L - \Xi_{\alpha}(L),
    \end{equation*}
    representing the residual (unallocated) mass at each state-action pair. Applying the triangle inequality:
    \begin{align*}
        \left| \objrev(\vecpop) - \objrev(\vecpop') \right|
        &\leq \sum_{h = 0}^{H-1} 
        \Bigg| \sum_{s,a} \left( \overline{\Xi}_{\alpha_h(\nu^{-\perp}(L_h))}(L_h)(s, a) \right) \paymentfn_h(a, \nu^{-\perp}(L_h))\\
        &\quad \quad- \left(\overline{\Xi}_{\alpha_h(\nu^{-\perp}(L_h'))}(L'_h)(s, a) \right) \paymentfn_h(a, \nu^{-\perp}(L_h')) \Bigg|\\
        &\leq \sum_{h = 0}^{H-1}\sum_{s,a}\left| \overline{\Xi}_{\alpha_h(\nu^{-\perp}(L_h))}(L_h)(s, a) - \overline{\Xi}_{\alpha_h(\nu^{-\perp}(L_h'))}(L_h')(s, a)\right| \left|\paymentfn_h(a, \nu^{-\perp}(L_h))\right|\\
        &\quad + \sum_{h = 0}^{H-1}\sum_{s,a}\overline{\Xi}_{\alpha_h(\nu^{-\perp}(L_h'))}(L_h')(s, a) \left|\paymentfn_h(a, \nu^{-\perp}(L_h)) - \paymentfn_h(a, \nu^{-\perp}(L_h')) \right|.
    \end{align*}
    Using the boundedness of the payment function $\paymentfn$ and the Lipschitz continuity of the allocation operator $\Xi_\alpha$, the first term can be bounded by $B_\paymentfn(2 + K_\alpha)\sum_{h = 0}^{H-1} \|L_h - L_h'\|_1$. For the second term, the Lipschitz property of $\paymentfn$ implies a bound of $\sum_{h = 0}^{H-1} K_\paymentfn \|L_h - L_h'\|_11$. Combining these, we conclude that the revenue objective $\objrev$ is Lipschitz continuous with constant $B_\paymentfn(2 + K_\alpha) + K_\paymentfn$, and satisfies the bound 
    \begin{equation*}
        \left| \objrev(\vecpop) - \objrev(\vecpop') \right| \leq (B_\paymentfn(2 + K_\alpha) + K_\paymentfn)\|\vecpop - \vecpop'\|_1
    \end{equation*}
    
\end{proof}

\begin{corollary}[Convergence of Expected Revenue]
Let $ \objrev$  the expected revenue objective. Let $\Mauc$ be a BA-MFG with Lipschitz-continuous $\{\utilityfn_h\}_{h=0}^{H-1}, \{\transitionfn_h\}_{h=0}^{H-1}, \{\alpha_h\}_{h=0}^{H-1}, \{\paymentfn_h\}_{h=0}^{H-1}$. Let $\vecpi = (\pi, \dots, \pi)$ be the joint population policy for some $\pi \in \Pi_H$. Then:
\[
\left| \objrev(\lpopmfa(\vecpi)) - G_{\text{rev}}(\vecpi) \right| = \mathcal{O}\left( \tfrac{1}{\sqrt{N}} \right).
\]
\end{corollary}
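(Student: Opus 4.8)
The plan is to recognize that this corollary is an immediate instance of the general objective-convergence result, once the revenue objective has been certified Lipschitz in the population flow. First I would invoke \Cref{lemma:lipschitz_revenue}, which already establishes that the map $\vecpop \mapsto \objrev(\vecpop)$ is Lipschitz continuous in the $\ell_1$ norm, with the explicit constant $K_{\objrev} := B_\paymentfn(2 + K_\alpha) + K_\paymentfn$. This is the only substantive input, and it was proved earlier; the nontrivial part there is the dependence of both the allocation threshold $\alpha_h$ and the residual-mass operator $\overline{\Xi}$ on the population, which is why the constant picks up the factor $(2 + K_\alpha)$. Since $\objrev$ is exactly of the form $g:\Delta_{\setS\times\setA}^H\to\mathbb{R}$ required by the general theorem, nothing further about its structure is needed.

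Second I would instantiate \Cref{theorem:lipschitz_objective} with $g = \objrev$. Because $\objrev$ satisfies the Lipschitz hypothesis of that theorem, its proof applies verbatim and yields, for $\vecpi = (\pi,\dots,\pi)$,
\[
\left| \objrev(\lpopmfa(\pi)) - G_{\text{rev}}(\vecpi) \right| \;\le\; K_{\objrev}\,\Exop\!\left[\,\|\lpopmfa(\pi) - \widehat{\vecpop}\|_1\,\right].
\]
Finally, the $\mathcal{O}(\sfrac{1}{\sqrt{N}})$ decay of $\Exop[\|\lpopmfa(\pi) - \widehat{\vecpop}\|_1]$ is precisely the content of \Cref{lemma:state_action_pop_deviation}, so chaining the two bounds delivers the claimed rate.

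There is no genuine obstacle here beyond bookkeeping: both ingredients—Lipschitzness of $\objrev$ and the $\ell_1$ concentration of the empirical state-action flow toward its mean-field limit—are already in hand. The only point worth double-checking is that the revenue objective as written is a bona fide function of the $H$-round population flow $\vecpop \in \Delta_{\setS\times\setA}^H$ alone (with $\theta$ held fixed, as assumed throughout the auction section), so that the empirical-versus-mean-field decomposition underlying \Cref{theorem:lipschitz_objective} transfers directly. Given this, the corollary follows by combining \Cref{lemma:lipschitz_revenue} and \Cref{theorem:lipschitz_objective} with no additional estimates.
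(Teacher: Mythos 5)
Your proposal is correct and follows exactly the paper's own argument: the paper proves this corollary by combining \Cref{lemma:lipschitz_revenue} (Lipschitz continuity of $\objrev$) with \Cref{theorem:lipschitz_objective}, whose proof in turn reduces to the empirical-flow concentration bound of \Cref{lemma:state_action_pop_deviation}. Your additional spelling-out of the intermediate inequality and the role of the constant $B_\paymentfn(2+K_\alpha)+K_\paymentfn$ is accurate bookkeeping, not a deviation.
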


\begin{proof}
The result follows by combining \Cref{theorem:lipschitz_objective,lemma:lipschitz_revenue}.
\end{proof}

\subsection{Proof of \texorpdfstring{\Cref{lemma:qauclipschitz}}{}}
\label{section:quaclipschitz}
In this section, we prove that the (entropy regularized) q-functions are Lipschitz continuous with respect to the population policy, assuming full support. We begin by showing that the population flow is Lipschitz in the policy. Next, we establish that both the transition dynamics and the reward function are Lipschitz continuous with respect to the population distribution. Finally, we combine these results to derive a bound on the Lipschitz constant of the (entropy regularized) q-functions.

\begin{lemma}[Lipschitz Continuity of Population Operator]\label{lemma:Lipschitz_lpopmfa}
    Let $\Mauc = (\setS, \setA, H, \mu_0, \{\Pauc_h\}_{h=0}^{H-1}, \{\Rauc_h\}_{h=0}^{H-1})$ be a Batched Auction Mean Field Game (BA-MFG). Let $\{\alpha_h\}_{h=0}^{H-1}$, $\{\transitionfn_h\}_{h=0}^{H-1}$, $\{\paymentfn_h\}_{h=0}^{H-1}$, and $\{\utilityfn_h\}_{h=0}^{H-1}$ denote the allocation thresholds, transition dynamics, payment, and utility functions, respectively, from which $\{\Pauc_h\}_{h=0}^{H-1}$ and $\{\Rauc_h\}_{h=0}^{H-1}$ are derived. Assume these functions are Lipschitz continuous. Consider two arbitrary policies $\pi, \pi' \in \Pi_H$, then
    \begin{equation*}
        \|\lpopmfa(\pi) - \lpopmfa(\pi')\|_1 \leq \sum_{h'\leq h}(K_\xi (1 + K_\alpha))^{h- h'}\|\pi_{h'} - \pi'_{h'}\|_1
    \end{equation*}
\end{lemma}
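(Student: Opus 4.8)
The plan is to prove the stated inequality componentwise in $h$. Since $\lpopmfa(\pi) = \{L_h^\pi\}_{h=0}^{H-1}$ is generated by the recursion $L_h^\pi(s,a) = \mu_h^\pi(s)\,\pi_h(a\mid s)$ with marginal $\mu_h^\pi = \Gamma_{\transitionfn_{h-1}}(\Gamma_{\alpha_{h-1}}(L_{h-1}^\pi))$ and initialization $L_0^\pi = \mu_0\cdot\pi_0$, I read the claim as a bound on the $h$-th component $\|L_h^\pi - L_h^{\pi'}\|_1$ and proceed by induction on $h$. The free index $h$ on the right-hand side against the summation $h'\le h$ confirms this componentwise reading.

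First I would establish a one-step recursion. Applying \Cref{lemma:L_mu_relation} to $L_h^\pi, L_h^{\pi'}$, whose marginals are $\mu_h^\pi,\mu_h^{\pi'}$, gives
\[
\|L_h^\pi - L_h^{\pi'}\|_1 \;\le\; \|\pi_h - \pi_h'\|_1 + \|\mu_h^\pi - \mu_h^{\pi'}\|_1,
\]
where I abbreviate $\|\pi_h-\pi_h'\|_1 := \sup_s\|\pi_h(\cdot\mid s)-\pi_h'(\cdot\mid s)\|_1$. For the marginal term I chain two Lipschitz estimates: the non-expansiveness of the valuation-transition operator $\Gamma_{\transitionfn_{h-1}}$ (Lipschitz with modulus $K_\xi = K_\transitionfn + \tfrac12 K_s$, exactly as invoked through Lemma~2.2 of \cite{yardim2024meanfield} in the exploitability proof of \Cref{section:upper_bound_exploitability_proof}), followed by \Cref{corollary:lipschitz_Gamma_alpha} for the allocation operator $\Gamma_{\alpha_{h-1}}$:
\[
\|\mu_h^\pi - \mu_h^{\pi'}\|_1 \le K_\xi\,\|\Gamma_{\alpha_{h-1}}(L_{h-1}^\pi)-\Gamma_{\alpha_{h-1}}(L_{h-1}^{\pi'})\|_1 \le K_\xi(1+K_\alpha)\,\|L_{h-1}^\pi - L_{h-1}^{\pi'}\|_1.
\]
Combining the two displays yields $\|L_h^\pi - L_h^{\pi'}\|_1 \le \|\pi_h-\pi_h'\|_1 + K_\xi(1+K_\alpha)\,\|L_{h-1}^\pi - L_{h-1}^{\pi'}\|_1$.

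With the base case $h=0$ handled by \Cref{lemma:L_mu_relation} with equal marginals $\mu_0=\mu_0$, giving $\|L_0^\pi - L_0^{\pi'}\|_1 \le \|\pi_0-\pi_0'\|_1$ (the $h'=0$ term, since $(K_\xi(1+K_\alpha))^0=1$), unrolling the recursion over $h'=0,\dots,h$ produces the geometric weights $(K_\xi(1+K_\alpha))^{h-h'}$ and hence the claimed sum. The only genuinely nontrivial ingredient is the Lipschitz modulus $K_\xi$ of $\Gamma_{\transitionfn}$: because $\Gamma_{\transitionfn}(\xi)=\sum_z\xi(z)\transitionfn(\cdot\mid z,\xi)$ depends on $\xi$ both through the mixing weights and through the kernel itself, a naive triangle-inequality split produces the looser constant $K_\transitionfn + K_s$, whereas the sharper $K_\transitionfn + \tfrac12 K_s$ relies on the mass-centering (coupling) argument of Lemma~2.2 in \cite{yardim2024meanfield}, which I would cite rather than reprove. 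Everything else is a routine composition of already-established non-expansiveness estimates.
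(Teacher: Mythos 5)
Your proof is correct and follows essentially the same route as the paper's: induction on $h$ with base case $\|L_0^\pi - L_0^{\pi'}\|_1 \leq \|\pi_0 - \pi_0'\|_1$, the one-step split via \Cref{lemma:L_mu_relation}, and the chained Lipschitz estimates $K_\xi$ for $\Gamma_{\transitionfn}$ (via Lemma~2.2 of \cite{yardim2024meanfield}) and $(1+K_\alpha)$ for $\Gamma_{\alpha_h}$ from \Cref{corollary:lipschitz_Gamma_alpha}. Your componentwise reading of the statement (bounding $\|L_h^\pi - L_h^{\pi'}\|_1$ for each $h$, which the paper's loosely-stated inequality with a free index $h$ intends) is also the correct interpretation.
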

\begin{proof}
    We prove the bound inductively. Let $ L^{\pi} := \lpopmfa(\pi) $ and $ L^{\pi'} := \lpopmfa(\pi') $ denote the population flows induced by policies $ \pi $ and $ \pi' $, respectively. Similarly, let $ \mu^{\pi} $ and $ \mu^{\pi'} $ denote the corresponding marginal state distributions.

    For $h = 0$ we have $\mu_0^{\pi} = \mu_0^{\pi'} = \mu_0$. Therefore by \Cref{lemma:L_mu_relation} we have $\|L_0^{\pi} - L_0^{\pi'}\|_1 \leq \|\pi_0 - \pi_0'\|_1$.

    For $h + 1 > 0$ by applying the result of \Cref{lemma:L_mu_relation} we have
    \begin{equation*}
        \|L_{h+1}^{\pi} - L_{h+1}^{\pi'}\|_1 \leq \|\pi_{h+1} - \pi_{h+1}'\|_1 + \|\mu_{h+1}^{\pi} - \mu_{h+1}^{\pi'}\|_1.
    \end{equation*}
    We then bound the variational difference in state distribution:
    \begin{align*}
        \|\mu_{h+1}^{\pi} - \mu_{h+1}^{\pi'}\|_1 &= \|\Gamma_{\transitionfn}(\xi_h^{\pi}) - \Gamma_{\transitionfn}(\xi_h^{\pi'})\|_1
        \leq K_\xi\|\xi_h^{\pi} -\xi_h^{\pi'}\|_1\\
        & = K_\xi\|\Gamma_{\alpha_h}(L_h^{\pi}) - \Gamma_{\alpha_h}(L_h^{\pi'})\|_1
        \leq K_\xi (1 + K_\alpha)\|L_h^{\pi} - L_h^{\pi'}\|_1,
    \end{align*}
    where in the last step we used \Cref{corollary:lipschitz_Gamma_alpha}. 
    By induction over $h$ the claim follows.
\end{proof}

\begin{lemma}[Lipschitz Continuity of Transitions and Rewards under Full-Support Policies]\label{lemma:lipschitz_mfg_rewards_transition}
    Let $\Mauc$ be a BA-MFG with utility functions $\{\utilityfn_h\}_{h=0}^{H-1}$, transition dynamics $\{\transitionfn_h\}_{h=0}^{H-1}$, payment functions $\{\paymentfn_h\}_{h=0}^{H-1}$, and allocation thresholds $\{\alpha_h\}_{h=0}^{H-1}$, all of which are Lipschitz-continuous with constants $K_{\utilityfn}$, $K_{\transitionfn}$, $K_{\paymentfn}$, and $K_\alpha$, respectively. Consider two policies $ \pi, \pi' \in \Pi_H $ with full support; that is, for all $ s \in \mathcal{S} $, $ a \in \mathcal{A} $, and $ h \in {0, \dots, H-1} $, we have $ \pi_h(a \vert s) > \epsilon $ and $ \pi_h'(a \vert s) > \epsilon $ for some constant $ \epsilon > 0 $. Then
    \begin{align*}
        \bigl\vert \Rauc_h(s, a, L_h^{\pi}) - &\Rauc_h(s, a, L_h^{\pi'}) \bigr\vert\\
        &\leq \left(\frac{B_\utilityfn }{(1- \alpha_{\text{max}})\epsilon} + K_\utilityfn K_\paymentfn\right)\sum_{h'\leq h}(K_\xi (1 + K_\alpha))^{h- h'}\|\pi_{h'} - \pi'_{h'}\|_1,
    \end{align*}
    and 
    \begin{align*}
        \bigl\|\Pauc_h(s,a,L_h^{\pi}) - &\Pauc_h(s,a,L_h^{\pi'})\bigr\|_1\\
        &\leq \left(\frac{2}{(1- \alpha_{\text{max}})\epsilon}  + K_\alpha + 1\right)\sum_{h'\leq h}(K_\xi (1 + K_\alpha))^{h- h'}\|\pi_{h'} - \pi'_{h'}\|_1.
    \end{align*}
\end{lemma}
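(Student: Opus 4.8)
The plan is to exploit the structural form of $\Rauc_h$ and $\Pauc_h$, reducing each difference to (i) the sensitivity of the winning probability $\pwin$ and (ii) the sensitivity of the composed utility/transition maps, and then to convert population-flow differences $\|L_h^{\pi} - L_h^{\pi'}\|_1$ into policy differences via \Cref{lemma:Lipschitz_lpopmfa}. The one genuinely delicate ingredient is the local Lipschitz continuity of $\pwin$: in general it jumps, but under full support the active action marginal $\nu^{-\perp}(L_h^\pi)$ places mass at least $(1-\alphamax)\epsilon$ on every bid of every active state (the active mass being at least $1-\alphamax$ and each conditional bid probability at least $\epsilon$). Consequently the constant $C_{\nu,\alpha}$ of \Cref{lemma:deviation_winning_prob} is uniformly finite, and by the full-support refinement of \Cref{remark:Lipschitz_pwin_full_support} the map $L \mapsto \pwin(s,a,L,\alpha_h(\nu^{-\perp}(L)))$ is Lipschitz with constant $\tfrac{1}{(1-\alphamax)\epsilon}$, the allocation-parameter dependence being absorbed through Lipschitzness of $\alpha_h$ together with $\|\nu^{-\perp}(L)-\nu^{-\perp}(L')\|_1 \le \|L-L'\|_1$.

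For the reward I would write $\Rauc_h(s,a,L) = \pwin(\cdots)\,\utilityfn_h(s,\paymentfn_h(a,\nu^{-\perp}(L)))$ and apply the product inequality $|xy-x'y'|\le |x-x'|\,|y|+|x'|\,|y-y'|$ with $|\pwin|\le 1$ and $|\utilityfn_h|\le B_\utilityfn$. The first resulting term is $B_\utilityfn$ times the $\pwin$-Lipschitz constant above; the second uses that the composite $\utilityfn_h\circ\paymentfn_h$ is Lipschitz with modulus $K_\utilityfn K_\paymentfn$ (as noted earlier) times $\|\nu^{-\perp}(L_h^{\pi})-\nu^{-\perp}(L_h^{\pi'})\|_1\le\|L_h^{\pi}-L_h^{\pi'}\|_1$. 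This produces the one-step coefficient $\tfrac{B_\utilityfn}{(1-\alphamax)\epsilon}+K_\utilityfn K_\paymentfn$ multiplying $\|L_h^{\pi}-L_h^{\pi'}\|_1$.

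For the transition I would view $\Pauc_h(\cdot|s,a,L)$ as the convex combination $p\,\transitionfn_h(\cdot|\perp,\Gamma_{\alpha_h}(L)) + (1-p)\,\transitionfn_h(\cdot|s,\Gamma_{\alpha_h}(L))$ with $p=\pwin(\cdots)$. Abbreviating the unprimed quantities by $p,w_\perp,w_s$ and the primed ones by $p',w_\perp',w_s'$, the difference $pw_\perp+(1-p)w_s-p'w_\perp'-(1-p')w_s'$ splits — after adding and subtracting $p w_\perp'$ and $p w_s'$ — into a mixing-weight error bounded by $2|p-p'|$ and a dynamics error bounded by $\max\{\|w_\perp-w_\perp'\|_1,\|w_s-w_s'\|_1\}$. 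The former again invokes the $\pwin$-Lipschitz constant, contributing $\tfrac{2}{(1-\alphamax)\epsilon}$; the latter is controlled by the population-Lipschitzness of $\transitionfn_h$ and \Cref{corollary:lipschitz_Gamma_alpha}, giving $K_\transitionfn(1+K_\alpha)\|L_h^{\pi}-L_h^{\pi'}\|_1$, which under the normalization $K_\transitionfn\le 1$ matches the stated coefficient $\tfrac{2}{(1-\alphamax)\epsilon}+K_\alpha+1$.

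Finally, in both bounds I substitute $\|L_h^{\pi}-L_h^{\pi'}\|_1\le\sum_{h'\le h}(K_\xi(1+K_\alpha))^{h-h'}\|\pi_{h'}-\pi'_{h'}\|_1$ from \Cref{lemma:Lipschitz_lpopmfa}, which yields the geometric sum in the conclusion. I expect the main obstacle to be entirely concentrated in the $\pwin$ step: one must carefully check that full support forces the threshold bid $a^*$ and its neighbours $a^-,a^+$ to carry mass bounded below by $(1-\alphamax)\epsilon$ \emph{uniformly across rounds}, so that the jump discontinuities of $\pwin$ never occur locally and $C_{\nu,\alpha}$ stays finite; the remaining steps are routine product and convex-combination estimates.
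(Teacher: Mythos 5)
Your proposal is correct and takes essentially the same route as the paper's proof: the same product-rule split of $\Rauc_h$ into a $\pwin$-deviation term (handled via the full-support refinement in \Cref{remark:Lipschitz_pwin_full_support}, giving the $\sfrac{1}{(1-\alphamax)\epsilon}$ factor) and a $K_\utilityfn K_\paymentfn$ term, the same convex-combination decomposition of $\Pauc_h$ into a $2|\pwin-\pwin'|$ mixing-weight error plus a dynamics error controlled by \Cref{corollary:lipschitz_Gamma_alpha}, and the same final substitution of \Cref{lemma:Lipschitz_lpopmfa}. Your remark that matching the stated transition coefficient requires $K_\transitionfn \leq 1$ is a fair reading: the paper's own proof obtains the intermediate factor $K_\transitionfn(K_\alpha+1)$ and then silently absorbs $K_\transitionfn$ into the displayed constant $K_\alpha+1$, so you have in fact identified a small implicit normalization in the original argument rather than introduced a gap of your own.
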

\begin{proof}
    Let $\pi, \pi' \in \Pi_H$ be two arbitrary policies with full support Let $L_h^{\pi} = \lpopmfa(\pi)$ and $L_h^{\pi'} = \lpopmfa(\pi')$. We then bound separately the rewards and the transition probabilities.

    We first bound bound the absolute difference in rewards, let $s \in \setS, a \in \setA$ arbitrary, assume $s \not= \perp$, else the claim holds trivially, then, by applying triangular inequality we have:
    \begin{align*}
        &\left| \Rauc_h(s,a, L_h^{\pi'}) - \Rauc_h(s,a, L_h^{\pi'})\right|\\
        & \leq B_\utilityfn \left| \pwin(s,a, L_h^{\pi}, \alpha_h(\nu^{-\perp}(L_h^{\pi}))) - \pwin(s,a, L_h^{\pi'}, \alpha_h(\nu^{-\perp}(L_h^{\pi'})))\right|\\
        &\quad + \left| \utilityfn_h(s, \paymentfn_h(a, \nu^{-\perp}(L_h^{\pi}))) - \utilityfn_h(s, \paymentfn_h(a, \nu^{-\perp}(L_h^{\pi'})))\right|\\
        &\leq \frac{B_\utilityfn }{(1- \alpha_{\text{max}})\epsilon} \|L_h^{\pi} - L_h^{\pi'}\|_1 + K_\utilityfn K_\paymentfn \|L_h^{\pi} - L_h^{\pi'}\|_1,
    \end{align*}
    where the last step follows by \Cref{remark:Lipschitz_pwin_full_support} and the Lipschitz continuity of $\paymentfn$ and $\utilityfn$.
    By \Cref{lemma:Lipschitz_lpopmfa} the bound for the rewards follows.
    
    Similarly for the transition probabilities we have:
    \begin{align*}
        &\|\Pauc_h(s,a,L_h^{\pi}) - \Pauc_h(s,a,L_h^{\pi'})\|_1\\
        & \leq \sum_z \|\transitionfn_h(z | \Gamma_{\alpha_h}(L_h^{\pi'}))\|_1 \left|P_{\alpha_h}(z | s,a,L_h^{\pi'}) - P_{\alpha_h}(z | s,a,L_h^{\pi}) \right|\\
        & \quad + \sum_z P_{\alpha_h}(z | s,a, L_h^{\pi}) \|\transitionfn_h(z, \Gamma_{\alpha_h}(L_h^{\pi})) - \transitionfn_h(z, \Gamma_{\alpha_h}(L_h^{\pi'}))\|_1\\
        &\leq \|P_{\alpha_h}(s,a,L_h^{\pi'}) - P_{\alpha_h}(s,a,L_h^{\pi})\|_1 + K_\transitionfn \|\Gamma_{\alpha_h}(L_h^{\pi}) - \Gamma_{\alpha_h}(L_h^{\pi'})\|_1. 
    \end{align*}
    From \Cref{remark:Lipschitz_pwin_full_support} and \Cref{corollary:lipschitz_Gamma_alpha} it follows
    \begin{equation*}
        \|\Pauc_h(s,a,L_h^{\pi}) - \Pauc_h(s,a,L_h^{\pi'})\|_1 \leq \left(\frac{2}{(1- \alpha_{\text{max}})\epsilon}  + K_\alpha + 1\right)\|L_h^{\pi} - L_h^{\pi'}\|_1,
    \end{equation*}
    by \Cref{lemma:Lipschitz_lpopmfa} the bound follows.
    
\end{proof}

\begin{lemma}[Lipschitz Continuity of Regularized Value Functions]
    Let $\Mauc$ be a BA-MFG with utility functions $\{\utilityfn_h\}_{h=0}^{H-1}$, transition dynamics $\{\transitionfn_h\}_{h=0}^{H-1}$, payment functions $\{\paymentfn_h\}_{h=0}^{H-1}$, and allocation thresholds $\{\alpha_h\}_{h=0}^{H-1}$, all of which are Lipschitz-continuous with constants $K_\utilityfn$, $K_\transitionfn$, $K_\paymentfn$, and $K_\alpha$, respectively.
    
    Let $ \pi, \pi' \in \Pi_H $ be two policies with full support, i.e., for some $ \epsilon > 0 $, $ \pi_h(a \vert s), \pi_h'(a \vert s) > \epsilon $, for all $ h, s, a $. Let $ V_h^\tau(s\vert L^{\pi}, \pi) $ and $ V_h^\tau(s\vert L^{\pi'}, \pi') $ denote the entropy-regularized value functions under policies $ \pi $ and $ \pi' $, defined recursively as
    \begin{equation*}
        V_h^\tau(s\vert L^{\pi}, \pi) := \tau H(\pi_h(s)) + \sum_{a \in \mathcal{A}} \pi_h(a \vert s) \left[ \Rauc_h(s, a, L_h^\pi) + \sum_{s'} \Pauc_h(s' \vert s, a, L_h^\pi) V_{h+1}^\tau(s'\vert L^{\pi}, \pi) \right],    
    \end{equation*}
    where $ \{L_h^\pi\}_{h=0}^{H-1} := \lpopmfa(\pi) $ is the population distribution induced by policy $ \pi $.
    
    Then, there exists a constant $ C > 0 $ such that for all $ h $ and all $ s \in \mathcal{S} $, the following bound holds:
    \begin{align*}
       \left| V_h^\tau(s\vert L^{\pi}, \pi) - V_h^\tau(s\vert L^{\pi'}, \pi') \right|&\leq C \|\pi - \pi'\|_1,
    \end{align*}
    where $C = \mathcal{O}(\frac{1}{\epsilon}H^2)$ when $K_\xi(1 + K_\alpha) = 1$, and $C = \mathcal{O}(\frac{1}{\epsilon}H \tfrac{1- (K_\xi(1 + K_\alpha))^H}{1 -K_\xi(1 + K_\alpha)})$ when $K_\xi(1 + K_\alpha) \not= 1$.

\end{lemma}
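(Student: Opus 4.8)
The plan is to argue by backward induction on $h$, controlling $\Delta_h := \sup_{s\in\setS}\bigl| V_h^\tau(s\vert L^{\pi}, \pi) - V_h^\tau(s\vert L^{\pi'}, \pi') \bigr|$, and then to solve the resulting recursion. First I would record two preliminaries. (i) The a priori magnitude bound $|V_h^\tau(s\vert L^{\pi},\pi)| \leq (H-h)(B_\utilityfn + \tau\log|\setA|) \leq \bar V := H(B_\utilityfn + \tau\log|\setA|)$, immediate from $|\Rauc_h|\leq B_\utilityfn$ and $\entropy(\pi_h(s))\leq \log|\setA|$. (ii) On the full-support region $\{u : u_a \geq \epsilon\}$ the entropy is $\ell_1$-Lipschitz with constant $\log(1/\epsilon)+1$, since $\partial_{u_a}\entropy(u) = -\log u_a - 1$ and the segment joining two full-support distributions stays full-support. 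The base case is $\Delta_H = 0$.

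For the inductive step I would expand $V_h^\tau(s\vert L^{\pi},\pi) - V_h^\tau(s\vert L^{\pi'},\pi')$ into five differences by adding and subtracting intermediate terms:
\begin{align*}
&\tau\bigl(\entropy(\pi_h(s)) - \entropy(\pi'_h(s))\bigr) \\
&+ \sum_a (\pi_h(a\vert s) - \pi'_h(a\vert s))\bigl[\Rauc_h(s,a,L_h^{\pi}) + \textstyle\sum_{s'}\Pauc_h(s'\vert s,a,L_h^{\pi}) V_{h+1}^\tau(s'\vert L^{\pi},\pi)\bigr] \\
&+ \sum_a \pi'_h(a\vert s)\bigl(\Rauc_h(s,a,L_h^{\pi}) - \Rauc_h(s,a,L_h^{\pi'})\bigr) \\
&+ \sum_a \pi'_h(a\vert s)\textstyle\sum_{s'}\bigl(\Pauc_h(s'\vert s,a,L_h^{\pi}) - \Pauc_h(s'\vert s,a,L_h^{\pi'})\bigr) V_{h+1}^\tau(s'\vert L^{\pi},\pi) \\
&+ \sum_a \pi'_h(a\vert s)\textstyle\sum_{s'}\Pauc_h(s'\vert s,a,L_h^{\pi'})\bigl(V_{h+1}^\tau(s'\vert L^{\pi},\pi) - V_{h+1}^\tau(s'\vert L^{\pi'},\pi')\bigr).
\end{align*}
Line one is bounded by $\tau(\log(1/\epsilon)+1)\|\pi_h(\cdot\vert s)-\pi'_h(\cdot\vert s)\|_1$ via (ii); line two by $(B_\utilityfn+\bar V)\|\pi_h(\cdot\vert s)-\pi'_h(\cdot\vert s)\|_1$ via (i). Lines three and four are where the population sensitivity enters: by \Cref{lemma:lipschitz_mfg_rewards_transition} both $|\Rauc_h(s,a,L_h^{\pi})-\Rauc_h(s,a,L_h^{\pi'})|$ and $\|\Pauc_h(\cdot\vert s,a,L_h^{\pi})-\Pauc_h(\cdot\vert s,a,L_h^{\pi'})\|_1$ are dominated by a constant of order $1/\epsilon$ (the $1/\epsilon$ originating from \Cref{remark:Lipschitz_pwin_full_support}) times $\sum_{h'\leq h}(K_\xi(1+K_\alpha))^{h-h'}\|\pi_{h'}-\pi'_{h'}\|_1$, with line four additionally carrying the value bound $\bar V$. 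Line five is bounded by $\Delta_{h+1}$, since both the kernel $\Pauc_h(\cdot\vert s,a,L_h^{\pi'})$ and $\pi'_h(\cdot\vert s)$ sum to one. This yields $\Delta_h \leq \Delta_{h+1} + b_h$, where $b_h$ collects the first four bounds.

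Unrolling gives $\Delta_0 \leq \sum_{h=0}^{H-1} b_h$, and the stated rate follows from a double-sum identity: writing $q := K_\xi(1+K_\alpha)$, one has $\sum_{h=0}^{H-1}\sum_{h'\leq h} q^{h-h'}\|\pi_{h'}-\pi'_{h'}\|_1 \leq \tfrac{1-q^{H}}{1-q}\,\|\pi-\pi'\|_1$ when $q\neq 1$, and $\leq H\,\|\pi-\pi'\|_1$ when $q=1$. The dominant line-four term carries $\bar V = \mathcal{O}(H)$ together with this geometric sum, producing $C = \mathcal{O}\bigl(\tfrac{1}{\epsilon} H\,\tfrac{1-q^{H}}{1-q}\bigr)$ for $q\neq 1$ and $C=\mathcal{O}(\tfrac{1}{\epsilon}H^2)$ for $q=1$; the entropy and weight-difference terms contribute only $\mathcal{O}(H)$ and are absorbed. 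The main obstacle I anticipate is not any single estimate but the constant bookkeeping: one must ensure the $\bar V=\mathcal{O}(H)$ factor multiplies the transition-difference term exactly once rather than compounding across the $H$ backward steps, and verify that the $1/\epsilon$ blow-up from the winning-probability Lipschitz modulus is the sole source of $\epsilon$-dependence, the entropy's $\log(1/\epsilon)$ staying subdominant.
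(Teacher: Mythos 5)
Your proposal is correct and takes essentially the same route as the paper's proof: backward induction with the same entropy/weight-difference/reward/transition/next-value decomposition, the same appeal to \Cref{lemma:lipschitz_mfg_rewards_transition} (and hence \Cref{remark:Lipschitz_pwin_full_support}) for the $\mathcal{O}(1/\epsilon)$ population sensitivity through $\lpopmfa$, and the same swap-and-geometric-sum bookkeeping with the $K_\xi(1+K_\alpha)=1$ case treated separately. The only deviation is cosmetic: you use the uniform value bound $\bar{V}=\mathcal{O}(H)$ where the paper tracks horizon-indexed constants $A_h$ and $G_h$ with factors $(H-h)$ and $(H-h-1)$, which leaves the stated order of $C$ unchanged.
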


\begin{proof}
    We prove a stronger inductive bound that implies the result of the lemma. To simplify notation, we introduce the following constants:
    \begin{itemize}
        \item $\beta := K_\xi(1 + K_\alpha)$,
        \item $B := B_\utilityfn + \tau \log(|\setA|)$,
        \item $g_0 := \frac{B_\utilityfn}{(1 - \alpha_{\max})\epsilon} + K_\utilityfn K_\paymentfn$,
        \item $g_1 := (B_\utilityfn + \tau \log(|\setA|))\left(\frac{2}{(1 - \alpha_{\max})\epsilon} + K_\alpha + 1\right)$.
    \end{itemize}
    
    We also define the following $h$-dependent quantities, which will be used in the inductive argument:
    \begin{itemize}
        \item $A_h := \tau(\log(\tfrac{1}{\epsilon}) + 1) + (H - h)B$,
        \item $G_h := g_0 + g_1(H - h - 1)$,
        \item $\Delta_h := \|\pi_h - \pi_h'\|_1$.
    \end{itemize}
    
    We proceed by backward induction on the round $h$, and show that the following bound holds:
    \begin{align*}
        \left| V_h^\tau(s \mid L^\pi, \pi) - V_h^\tau(s \mid L^{\pi'}, \pi') \right| 
    \leq \sum_{h' = h}^{H-1} A_{h'} \Delta_{h'} 
    + \sum_{h' = h}^{H-1} G_{h'} \sum_{h'' \leq h'} \beta^{h' - h''} \Delta_{h''}.    
    \end{align*}

    For $h = H$ by definition, the value function at round $H$ is zero for all states, i.e.,
    \begin{equation*}
        \Vmfg_H^{\tau}(s\vert L^{\pi}, \pi) = \Vmfg_H^{\tau}(s\vert L^{\pi}, \pi) = 0 
    \end{equation*}

    Assume that for value function at time $h+1$, $\Vmfg_{h+1}^{\tau}$ the upper bound holds. We now prove the same for $V_h^{\tau}$.
    The regularized value function is:
    \begin{equation*}
        \Vmfg_h^{\tau}(s\vert L^{\pi}, \pi) = \tau \entropy(\pi_h(s)) + \sum_a \pi_h(a\vert s)\qmfg_h^{\tau}(s,a \vert L^{\pi}, \pi). 
    \end{equation*}
    Then,
    \begin{align*}
        &\vert \Vmfg_h^{\tau}(s\vert L^{\pi}, \pi) - \Vmfg_h^{\tau}(s\vert L^{\pi'}, \pi') \vert\\
        &\leq \underbrace{\tau \vert \entropy(\pi_h(s)) -  \entropy(\pi_h'(s)) \vert}_{:= (\square)} + \underbrace{\left\vert \sum_a \pi_h(a\vert s)\qmfg_{h}^{\tau}(s,a \vert L^{\pi}, \pi) - \sum_a \pi_h'(a\vert s)\qmfg_{h}^{\tau}(s,a \vert L^{\pi'}, \pi')\right\vert}_{:= (\triangle)}.
    \end{align*}
    For the entropy term $(\square)$ we have
    \begin{equation*}
         \tau \vert \entropy(\pi_h(s)) -  \entropy(\pi_h'(s)) \vert \leq \tau\left(\log(\tfrac{1}{\epsilon}) + 1\right) \|\pi_h(s) - \pi_h'(s)\|_1
    \end{equation*}

    By applying triangular inequality we can upper bound $(\triangle)$ as follows:
    \begin{align*}
        (\triangle) \leq \sum_a \vert \pi_h(a\vert s) - \pi_h'(a \vert s) \vert \qmfg_h^{\tau}(s,a \vert L^{\pi}, \pi)\vert + \sum_a \pi_h'(a \vert s) \vert \qmfg_h^{\tau}(s,a \vert L^{\pi}, \pi) -  \qmfg_h^{\tau}(s,a \vert L^{\pi'}, \pi') \vert
    \end{align*}
    The first term can be bounded as a function of $H$ and the maximal absolute utility $B_{\utilityfn}$:
    \begin{align*}
        \sum_a \vert \pi(a\vert s) - \pi'(a \vert s) \vert \qmfg_h^{\tau}(s,a \vert L^{\pi}, \pi)\vert \leq (H - h)\left(B_{\utilityfn} + \tau\log(\vert \setA \vert )\right) \|\pi_h(s) - \pi_h'(s)\|_1,
    \end{align*}
    where the term $\tau\log(\vert \setA \vert )$ comes from the maximal additional reward from the entropy regularizer.s

    For the second term we bound the absolute difference in $\qmfg$ functions:
    \begin{align*}
        \vert \qmfg_h^{\tau}(s,a \vert L^{\pi}, \pi) -  &\qmfg_h^{\tau}(s,a \vert L^{\pi'}, \pi') \vert\\
        &\leq \bigl| \Rauc_h(s,a, L_h^{\pi}) - \Rauc_h(s,a, L_h^{\pi'})\bigr| \\
        & \quad + \sum_{s'}\vert \Pauc_h(s' \vert s,a,L_h^{\pi}) - \Pauc_h(s' \vert s,a,L_h^{\pi'})\vert \vert \Vmfg_{h+1}^{\tau}(s,a \vert L^{\pi}, \pi)\vert\\
        &\quad + \sum_s \Pauc_h(s' | s,a ,L_h^{\pi'}) |\Vmfg_{h+1}^{\tau}(s' \vert L^{\pi}, \pi) - \Vmfg_{h+1}^{\tau}(s' \vert L^{\pi'}, \pi')|\\
        &\leq \bigl| \Rauc_h(s,a, L_h^{\pi}) - \Rauc_h(s,a, L_h^{\pi'})\bigr|\\
        &\quad + (H - h - 1)(B_\utilityfn + \tau \log(|\setA|)) \|\Pauc_h(s,a,L_h^{\pi}) - \Pauc_h(s,a,L_h^{\pi'})\|_1\\
        &\quad \max_s |\Vmfg_{h+1}^{\tau}(s' \vert L^{\pi}, \pi) - \Vmfg_{h+1}^{\tau}(s' \vert L^{\pi'}, \pi')|.
    \end{align*}
    By combining all intermediate bounds and applying \Cref{lemma:lipschitz_mfg_rewards_transition} we have
    \begin{align*}
        \left| V_h^\tau(s \mid L^\pi, \pi) - V_h^\tau(s \mid L^{\pi'}, \pi') \right| &\leq A_h \Delta_h + G_h\sum_{h' \leq h}\beta^{h-h'}\Delta_h' \\
        & \quad + \max_s |\Vmfg_{h+1}^{\tau}(s' \vert L^{\pi}, \pi) - \Vmfg_{h+1}^{\tau}(s' \vert L^{\pi'}, \pi')|,
    \end{align*}
    by induction the bound holds. As next we compute the global Lipschitz constant.
    \begin{align*}
        \left| V_h^\tau(s \mid L^\pi, \pi) - V_h^\tau(s \mid L^{\pi'}, \pi') \right| 
    \leq \underbrace{\sum_{h' = h}^{H-1} A_{h'} \Delta_{h'}}_{(\square)} 
    + \underbrace{\sum_{h' = h}^{H-1} G_{h'} \sum_{h'' \leq h'} \beta^{h' - h''} \Delta_{h''}}_{(\triangle)}.    
    \end{align*}
    For $(\square)$ we have:
    
    \begin{align*}
        \sum_{h' = h}^{H-1} A_{h'} \Delta_{h'} \leq \sum_{h = 0}^{H-1} A_{h} \Delta_{h} \leq A_0\sum_{h = 0}^{H-1}\Delta_{h}
    \end{align*}
    
    For $(\triangle)$ we have:
    \begin{align*}
        \sum_{h' = h}^{H-1} G_{h'} \sum_{h'' \leq h'} \beta^{h' - h''} \Delta_{h''} &\leq \sum_{h = 0}^{H-1} G_{h} \sum_{h' \leq h} \beta^{h - h'} \Delta_{h'}\\
        &= \sum_{h' = 0}^{H-1} \Delta_{h'}\sum_{j = 0}^{H-1- h'}G_{h' + j}\beta^j\\
        &= \sum_{h' = 0}^{H-1} \Delta_{h'}\left(G_{h'}\sum_{j=0}^{H-1-h'}\beta^{j} - g_1 \sum_{j=0}^{H-1-h'}j\beta^{j}\right)\\
        &\leq \sum_{h' = 0}^{H-1} \Delta_{h'}G_{h'}\sum_{j=0}^{H-1-h'}\beta^{j}
    \end{align*}
    Using the definition of the geometric sum, and observing that the constants $G_{h'}$ as well as the nested geometric sum decrease as $h'$ increases, we can further simplify the bound by pulling out the leading terms. For $\beta \not= 1$:
    \begin{equation*}
        \sum_{h' = 0}^{H-1} \Delta_{h'}G_{h'}\sum_{j=0}^{H-1-h'}\beta^{j} \leq \frac{1 - \beta^{H}}{1- \beta}(g_0 + g_1(H-1))\sum_{h=0}^{H-1}\Delta_h,
    \end{equation*}
    while for $\beta = 1$:
    \begin{equation*}
        \sum_{h' = 0}^{H-1} \Delta_{h'}G_{h'}\sum_{j=0}^{H-1-h'}\beta^{j} \leq H(g_0 + g_1(H-1))\sum_{h=0}^{H-1}\Delta_h.
    \end{equation*}
    Combining the results (for $\beta \not= 1$) of $(\square)$ and $(\triangle)$ we have:
    \begin{align*}
        \left| V_h^\tau(s \mid L^\pi, \pi) - V_h^\tau(s \mid L^{\pi'}, \pi') \right|  (\tau (\log(\tfrac{1}{\epsilon}) + 1) + HB + \frac{1-\beta^{H}}{1- \beta}(g_0 + g_1(H-1)))\sum_{h=0}^{H-1}\Delta_h.
    \end{align*}
    For $\beta = 1$ the geometric term is replaced by $H$. By the definitions of $B, g_0$ and $g_1$ the claim follows.
    
\end{proof}

\section{Experiment Details}
\label{app:exp_details}

\paragraph{Implementation Details.}
The experiments were implemented in JAX and PyTorch, the code is provided in the supplementary material.
We implement the adjoint method in JAX.
For the PyTorch implementation, some code was adapted from \cite{mfglib}.
All error bars in experiments are one standard deviation away from the mean.

\paragraph{Hardware and Compute Time.}
We run our experiments on a single NVIDIA H100 GPU with an AMD EPYC 16-core CPU.
One run of \myalgname{} for 1000 iterations takes 6 minutes, apart from the experiment (A8) described below with a long time horizon $H=100$, which takes 20 minutes for 1000 iterations.
The beach bar process experiments take roughly 3 minutes for 1000 iterations.

\paragraph{Parameterizing $\theta$ in $\setM_{\text{bb}}$.}
For the beach bar process, we parameterize the per state payments as $\theta_s = p_{\text{max}} \operatorname{sigmoid}(\xi_s)$, where $p_{\text{max}}$ is the maximum per state payment and the unconstrained parameters $\xi\in\mathbb{R}^\setS$ are learned via \myalgname{}.

\paragraph{Parameterizing $\theta$ in $\Mauc$.}
We parameterize $\paymentfn_h^\theta$ and sold goods $\alpha_h^\theta$ as residual neural networks sharing a base.
The base network, $f^\theta_{\text{base}}$, has $d_{\text{in}} = H+|\setA|+1$ inputs consisting of one-hot encoded time vector $\vece_h$, $|\setA|$-dimensional vector of bid distribution $\nu^{-\perp}$, and remaining goods at round $h$ denoted $r_h$, (given by $\alphamax - \sum_{h'=0}^{h-1}\alpha_{h'}$).
For the input vector $x_{\text{in}} \in \mathbb{R}^{d_\text{in}}$, the base residual network is defined as 
\begin{align*}
    x_{\mathrm{in}}
&= \begin{bmatrix}\mathbf{e}_h \\[0.3em]\nu^{-\perp} \\[0.3em] r_h\end{bmatrix}
\in\mathbb{R}^{H+d+1}, \\
h^{(1)}
&= \mathrm{ReLU}\bigl(W^{(1)}\,x_{\mathrm{in}} + b^{(1)}\bigr), \\
y_{\mathrm{base}} := h^{(2)}
&= \mathrm{ReLU}\bigl(W^{(2)}\,h^{(1)} + b^{(2)} + V^{(2)}\,x_{\mathrm{in}} + c^{(2)}\bigr)
\in\mathbb{R}^{d_{\mathrm{hidden}}}.
\end{align*}
The goods to be sold this round are then computed by:
\begin{align*}
    \alpha_h
= r \;\times\;\sigma\bigl(w_g^\top\,y_{\mathrm{base}} + b_g\bigr),
\in\mathbb{R}.
\end{align*}
and the payments functions for bids is computed then by:
\begin{align*}
    h^{(3)} 
= \mathrm{ReLU}\bigl(W^{(3)}\,y_{\mathrm{base}} + b^{(3)}\bigr)\;+\;y_{\mathrm{base}}, \quad
&h^{(4)} 
= \frac{1}{A-1}\,\sigma\bigl(W^{(4)}\,t + b^{(4)}\bigr)
\in\mathbb{R}^{A-1}, \\[0.6em]
x_{\mathrm{payment},1} = 0, 
\quad 
&x_{\mathrm{payment},i}
= \sum_{j=1}^{i-1} h^{(4)}_j,
\quad i=2,\dots,A.
\end{align*}
Note that this parameterization ensures that the payment rule $p_h^\theta$ is a monotonic increasing function of bids $a$.
The parameters $\theta$ of the mechanism overall are $W^{(1)},\,V^{(2)}\in\mathbb{R}^{d_{\mathrm{hidden}}\times (H + d + 1)},\quad
W^{(2)},\,W^{(3)}\in\mathbb{R}^{d_{\mathrm{hidden}}\times d_{\mathrm{hidden}}},\quad
W^{(4)}\in\mathbb{R}^{(A-1)\times d_{\mathrm{hidden}}}$
and $b^{(1)},\,b^{(2)},\,c^{(2)},\,b^{(3)},\,w_g\in\mathbb{R}^{d_{\mathrm{hidden}}},\quad
b^{(4)}\in\mathbb{R}^{A-1},\quad
b_g\in\mathbb{R}$.

\paragraph{Baseline algorithms.}
For the zeroth-order baseline algorithms $0$-\textsc{SGD} and $0$-\text{Adam}, we use the standard 2-point (biased) gradient estimator
\begin{align*}
    \widehat{\nabla}_\theta := \frac{G^{T}_{\textrm{approx}}( \theta + u_{\text{zero}} z ) - G^{T}_{\textrm{approx}}( \theta - u_{\text{zero}} z )}{2  u_{\text{zero}}} D z, 
\end{align*}
where $\theta \in \mathbb{R}^D$, $z$ is uniformly distributed on the sphere $\mathbb{S}_{D-1}$, and $u_{\text{zero}}$ is a tunable hyperparameter.
This estimator satisfies the well-known property
\begin{align*}
    \Exop[\widehat{\nabla}_\theta] = \grad\Exop[G^{T}_{\textrm{approx}}(\theta + u_{\text{zero}} \widebar{z})], \quad \widebar{z} \propto \operatorname{Uniform}(\mathbb{B}_D).
\end{align*}
That is, $\widehat{\nabla}_\theta$ is an unbiased estimator of the gradient of a smoothed version of the function $G^{T}_{\textrm{approx}}$.
The bias is tunable by the parameter $u_{\text{zero}}$, with smaller values corresponding to less bias but potentially higher variance in estimates.
Since a single evaluation of this gradient estimator takes 2 forward passes over $G^{T}_{\textrm{approx}}$, its run time is comparable to that of \myalgname{} per iteration.
For the baseline \textsc{Anneal}, each iteration, we sample a perturbation $n$ from the $D$-dimensional standard normal distribution.
After evaluating $G^{T}_{\textrm{approx}}(\theta), G^{T}_{\textrm{approx}}(\theta + \sigma_{\text{anneal}} n), G^{T}_{\textrm{approx}}(\theta - \sigma_{\text{anneal}} n)$ for the tunable hyperparameter $\sigma_{\text{anneal}} >0$, \textsc{Anneal} updates $\theta$ to be the best among $\theta, \theta - \sigma_{\text{anneal}} n, \theta + \sigma_{\text{anneal}} n $.

\textbf{Hyperparameters.}
All hyperparameters for the baselines as well as \myalgname{} are presented in \Cref{table:hyperparams}.
For a fair comparison, we perform a grid search on a range of values for the parameters for all baselines and take the best run after 10 repetitions.
In our experiments, \myalgname{} is robust to hyperparameter choices while zeroth order methods require some tuning.

\begin{table*}[h]
\centering
  \begin{tabular}{cp{7cm}p{4cm}} \toprule
    \textbf{Parameter} & \textbf{Explanation} & \textbf{Values}  \\ \midrule
    $\eta$ & Adam/SGD learning rate  &  \numlist[list-final-separator={, }]{3e-5; 1e-4; 3e-4; 1e-3; 1e-3; 1e-2} \\
    $u_{\text{zero}}$ & Noise magnitude for evaluating zeroth-order gradient estimator $\widehat{\nabla}_\theta$, for 0-\textsc{SGD} and 0-\textsc{Adam} & \numlist[list-final-separator={, }]{1e-3; 1e-2; 3e-2} \\
    $\sigma_{\text{anneal}}$ & Perturbation magnitude for \textsc{Anneal} &  \numlist[list-final-separator={, }]{1e-6; 1e-5; 1e-4; 1e-3; 1e-2; 3e-2} \\
     $\tau$ & Entropy regularization  & \num{1e-3} \\
     $\eta_{\text{OMD}}$ & OMD learning rate  & 10 \\
    $T$ & OMD iterations in \eqref{eq:tstepobjfunction}  & 400 \\
    $T_\text{val}$ &  OMD iterations for validation  & 500 \\
    $d_{\text{hidden}}$ & Hidden dimension of residual network parameterizing payments and sold goods &  \num{256} \\
    \bottomrule
  \end{tabular}
  \caption{Hyperparameters for the experiments on auctions.}
  \label{table:hyperparams}
\end{table*}

\subsection{Additional Results on the Beach Bar Process}

As mentioned in the main body of the paper, we first present the payment function $\theta_s$ learnt \myalgname{} after 1000 iterations.
As before, we report these by using a slightly higher OMD iteration step $T_{\text{val}} = 500$ than used for training, to demonstrate the robustness of our method.

\begin{figure}[H]
    \centering
    \input{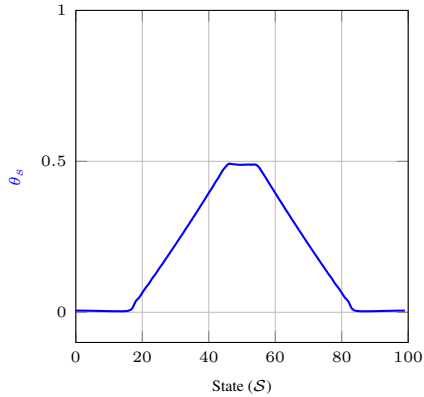}
    \caption{Payment function $s\rightarrow\theta_s$ learned after training with \myalgname{}, where payments are bounded on $[0, \sfrac{1}{2}]$.}
    \label{fig:bb_app_payment}
\end{figure}

A bottleneck in the beach bar experiment is the magnitude of payments $\theta_s$, which is restricted to be bounded on $[0,\sfrac{1}{2}]$.
We also report the experiment when $\theta \in [0,\sfrac{4}{5}]^\setS$ below (i.e., when $p_{\text{max}} = 0.8$), in \Cref{fig:bb_app_large_payment}.
As expected, the population distributions $L_h$ are smoother and closer to uniform in this case.
In both cases, \myalgname{} behaves as expected: the exploitability of $T$ iterates of OMD is consistently low throughout training, suggesting that the $T$ step approximation objective remains close to a NE throughout training iterations (with exploitability $<0.02$).
The fact that the exploitability of the policy induced by the $T$ repeated iterations of OMD is due to the fact that in the beach bar experiments, tuning payments yields a NE distribution closer to uniform, which is also the initialization of OMD iterates.

\begin{figure}[H]
    \centering
    \begin{subfigure}{0.33\textwidth}
        \input{plots/beachbar_appendix/training}
        \centering
    \end{subfigure}
    \begin{subfigure}{0.33\textwidth}
        \centering
         \input{plots/beachbar_appendix/payments_high}
    \end{subfigure}%
    \begin{subfigure}{0.33\textwidth}
        \centering
        \includegraphics[scale=0.35]{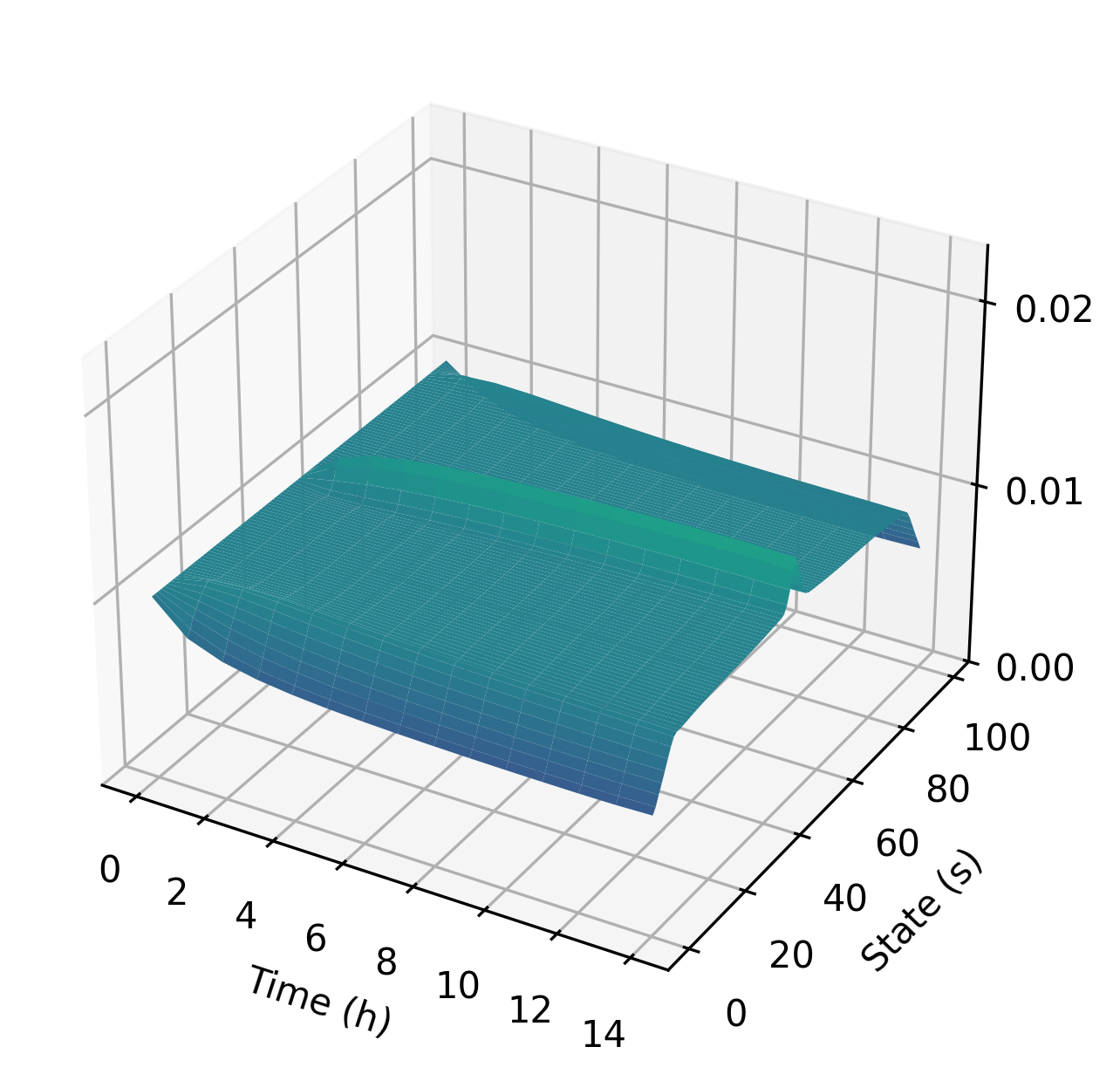}
    \end{subfigure}
    \caption{Payment design with \myalgname{} in $\setM_{\text{bb}}$ with larger payments. 
    \textbf{Left:} {\color{blue} objective} and {\color{red} exploitability} throughout training iterations. \textbf{Middle:} learned payment rule after training with \myalgname{}. \textbf{Right:}
    population flow in time after learning payments.}
    \label{fig:bb_app_large_payment}
\end{figure}

\subsection{Additional Results on Auctions}

We first analyze 3 further auctions with nonlinear utility functions for bidders.
Namely, we take (A1) presented in the main body of the paper where $H=4$, $\mu_0 = \operatorname{Uniform}(\setS)$, $\alpha_{\max}=0.8$, $|\setS|=|\setA|=100$ and bidders are single-minded with no evolution in valuations $s_h^i$ other than to transitions to $\perp$. 
\begin{enumerate}[leftmargin=*,noitemsep,topsep=0pt, label=\textbf{(A\arabic*)}, start=4]
    \item Risk-averse utilities formulated by
    \begin{align*}
        u_h(s,p) = \frac{1 - \exp\{-\beta (s-p)\}}{1 - \exp^{-\beta}},
    \end{align*}
    where we take $\beta=1$.
    \item Risk-seeking utilities formulated by
    \begin{align*}
        u_h(s,p) = \frac{\exp\{\beta (s-p)\} - 1}{\exp^{\beta} - 1},
    \end{align*}
    where we take $\beta=1$.
    \item Hyperbolic time discounting, which discounts future rewards, where the utility at time $h$ is given by:
    \begin{align*}
        u_h(s,p) = \frac{s - p}{1 + \lambda h},
    \end{align*}
    where we take $\lambda = 1$ as the time discount factor. 
\end{enumerate}

\begin{figure}[h]
\centering
\includegraphics{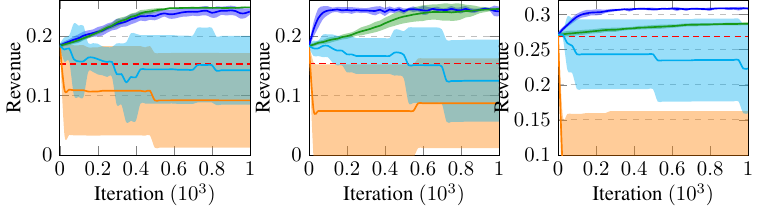}
\begin{tikzpicture}
\input{plots/legend_training_curves}
\end{tikzpicture}
\vspace{-.2cm}
\caption{$\objrev$ throughout iterations of \myalgname{} and baseline algorithms in settings with nonlinear utilities for bidders (A4-6), left to right.}
\label{fig:experiments:training_curves_utility}
\end{figure}

Across utility functions, \myalgname{} manages to beat all baselines.
In our experiments, we also observed significant qualitative changes in both the mean-field Nash equilibrium and the payment rule when nonlinear utility functions are used.
We show NE and payment rules suggested by the neural mechanism as NE in \Cref{fig:experiments:qualitiative_ne_utility}.
\begin{figure}[h]
\centering
\includegraphics{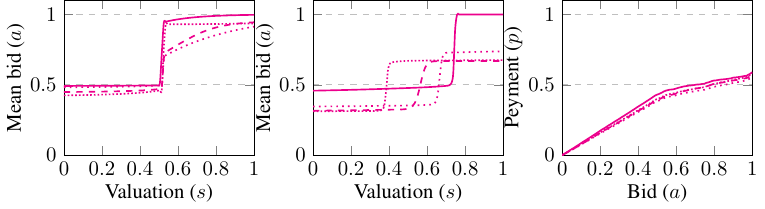}
\vspace{-.2cm}
\caption{\textbf{Left: } mean bids at NE at $\theta^*$ after training with \myalgname{} on the risk-seeking utility experiment (A5) for $h\in\{ 0,1,2,3 \}$.
\textbf{Middle: } mean bids at NE at $\theta^*$ after training with \myalgname{} on the hyperbolic time discounting utility experiment (A6) for $h\in\{ 0,1,2,3 \}$.
\textbf{Right: } payment function in setting (A5) at the bids induced by the NE policy at $\theta^*$ for $h\in\{ 0,1,2,3 \}$.}
\label{fig:experiments:qualitiative_ne_utility}
\end{figure}

\paragraph{Experiments regarding the impact of horizon.}
We report the impact of agent regeneration and time horizon by evaluating \myalgname{} on the following auction setups: 
\begin{enumerate}[leftmargin=*,noitemsep,topsep=0pt, label=\textbf{(A\arabic*)}, start=7]
    \item $H=6$, agents regenerate with probability $1$ (that is, they never transition to $\perp$ even when they win a round), linear utilities for bidders, $\alpha_{\max}=0.8$, $\mu_0(s) \propto \gamma^s$ for $\gamma = 0.9$, dynamic values with $w(s'|s) \propto \exp\{\sfrac{-(1.2 s-s')^2}{2\sigma^2}\}$ for $\sigma=0.2$.
    \item Long horizon $H=100$, $\alphamax=5$, linear utilities for bidders, agents regenerate with probability $0.015$,
    $\mu_0(s) \propto \gamma^s$ for $\gamma = 0.9$, dynamic values with $w(s'|s) \propto \exp\{\sfrac{-(1.2 s-s')^2}{2\sigma^2}\}$ for $\sigma=0.01$.
\end{enumerate}
The results are reported on \Cref{fig:experiments:training_curves_extended}.

\paragraph{General objectives.}
Finally, we explore the impact of optimizing over more general objectives other than revenue.
We define the objective
\begin{align*}
    g_{\text{mix}}(\theta, \vecpop) := g_{\text{rev}}(\theta, \vecpop) + g_{\text{efficiency}}(\theta, \vecpop),
\end{align*}
where we define $g_{\text{efficiency}}$ as:
\begin{align*}
    g_{\text{efficiency}}(\theta, \vecpop) := \sum_{h = 0}^{H-1}\sum_{(s, a) \in \setV \times \setA}L_h(s, a)\pwin( s, a, L_h, \alpha_h^\theta(\bids(L_h))) \utilityfn_h(s, \paymentfn_h^\theta\left(a, \bids(L_h)\right)).
\end{align*}
We modify experiment (A1) and evaluate our \myalgname{} on the following setting.
\begin{enumerate}[leftmargin=*,noitemsep,topsep=0pt, label=\textbf{(A\arabic*)}, start=9]
    \item  $H=4$, $\mu_0 = \operatorname{Uniform}(\setS)$, $\alpha_{\max}=0.8$, and single-minded bidders (after winning stay at state $\perp$) with no evolution in valuations $s_h^i$ otherwise.
    The objective function is $g_{\text{mix}}$.
\end{enumerate}
The results are reported on \Cref{fig:experiments:training_curves_extended}.

\begin{figure}[h]
\centering
\includegraphics{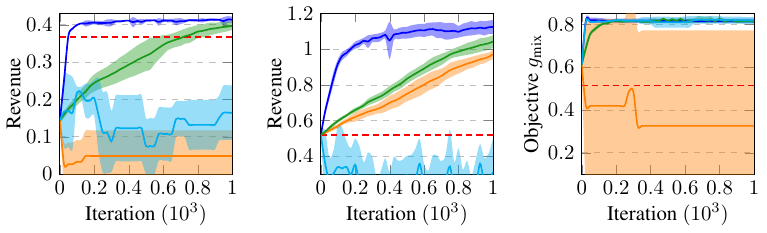}
\begin{tikzpicture}
\input{plots/legend_training_curves}
\end{tikzpicture}
\vspace{-.2cm}
\caption{\textbf{Left-middle:} $\objrev$ throughout iterations of \myalgname{} and baseline algorithms in settings (A7-8).
\textbf{Right:} $g_{\text{mix}}$ throughout iterations of \myalgname{} and baseline algorithms, in setting (A9).}
\label{fig:experiments:training_curves_extended}
\end{figure}

\paragraph{Experiments with static $\alpha_h$ and payment rules.}
Finally, to verify the impact of the mechanism having access to bids $\nu_h^{-\perp}$ at round $h$, we run a final experiment where the mechanism is independent of bids, which we call a \emph{static mechanism}.
In this case, we simply parameterize $p_h^\theta(a) = \operatorname{sigmoid}(\theta^{(1)}_{h,a})\cdot a$ and $\alpha_h^\theta(a) = \alphamax \frac{\exp\{\theta^{(2)}_{h}\}}{\sum_{h'}\exp\{\theta^{(2)}_{h'}\}}$, to ensure no more than $\alphamax$ hoods are sold and the payments never exceed the bid.
The parameter space is then $\theta := [\theta^{(1)}, \theta^{(2)}]$, for $\theta^{(1)} \in \mathbb{R}^{[H]\times \setA}, \theta^{(2)} \in \mathbb{R}^{[H]}$.
For static mechanisms, we observe much less significant improvement over the first price mechanism in general, which most likely originates from better allocation of goods over time when there are dynamics such as regeneration.
\begin{enumerate}[leftmargin=*,noitemsep,topsep=0pt, label=\textbf{(A\arabic*)}, start=10]
    \item $H=4$, static mechanism parameterization (independent of $\nu_h^{-\perp}$), each agent regenerates with probability $0.3$ at the end of every round, and has a linear utility function.
\end{enumerate}
The results in this setting are reported in \Cref{fig:experiments:training_curves_old}.

\paragraph{Experiments on approximating exploitability.}
We also report our attempts to measure the true $N$ player exploitability gap suggested by \Cref{theorem:approximation_auction} when $N$ is finite.
For the first price mechanism, we compute the MFG-NE on the auction setting (A1) introduced in the main paper.
Then, fixing $N=1000$, we simulate trajectories in the batched auction by setting the policies of 999 agents to the MFG-NE, and train PPO on the last bidder.
In this setting, we were not able to achieve a better mean reward for the last bidder than the MFG-NE.
This suggests the exploitability is close to $0$, we report the expected reward achieved by PPO throughout training in \Cref{fig:experiments:training_curves_old}. 
\begin{figure}[h]
\centering
\includegraphics{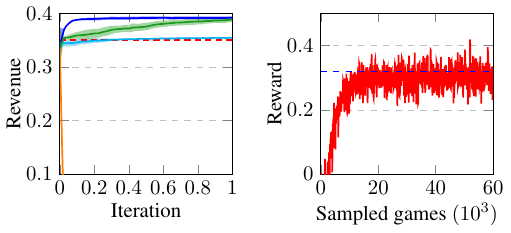}
\vspace{-.2cm}
\caption{\textbf{Left: } Revenue throughout training in the static mechanism setting (A10).
\textbf{Right: } PPO episodic rewards trained on the batched auction (A1) with $N=1000$ agents, all but one playing NE. Blue is the MFG best response expected reward, red is PPOs expected reward throughout training.}
\label{fig:experiments:training_curves_old}
\end{figure}

\end{document}